\documentclass[11pt,letter, english]{article}

\usepackage[margin=1in]{geometry}
\usepackage[utf8]{inputenc}
\usepackage{babel}
\usepackage{microtype}
\usepackage{url}

\usepackage{tocloft}
\usepackage{paralist, enumitem}
\setlist[enumerate]{itemsep=0pt}

\usepackage[linesnumbered,ruled,vlined]{algorithm2e}
\DontPrintSemicolon
\SetKwInOut{Input}{Input}
\SetKwInOut{Output}{Output}

\usepackage[modulo]{lineno}
\usepackage[compact]{titlesec}
\renewcommand{\paragraph}[1]{\medskip\noindent\textbf{#1}}

\usepackage{amsthm,amssymb, amsfonts, verbatim, nicefrac, hyphenat}
\usepackage{mathtools, etoolbox, thmtools, thm-restate, xparse}
\usepackage{dsfont}
\usepackage{bm}
\usepackage{mathrsfs}
\usepackage[mathscr]{euscript}

\usepackage{graphicx}
\usepackage{epsfig}
\usepackage{subcaption}
\usepackage{array}
\usepackage{multirow}
\usepackage{colortbl,makecell}

\usepackage[dvipsnames]{xcolor}
\definecolor{DarkRed}{rgb}{0.5,0.1,0.1}
\definecolor{DarkBlue}{rgb}{0.1,0.1,0.5}
\definecolor{ForestGreen}{rgb}{0.1333,0.5451,0.1333}
\definecolor{Red}{rgb}{0.9,0,0}

\usepackage{pifont}
\usepackage{soul}
\usepackage[normalem]{ulem}

\usepackage[
    style=alphabetic,
    backref=true,
    maxnames=99,
    maxalphanames=99,
    url=false,
    isbn=false,
    doi=true
]{biblatex}

\AtEveryBibitem{\clearname{editor}}
\AtEveryBibitem{\clearlist{publisher}}
\AtEveryBibitem{\clearfield{pages}}
\usepackage{nameref}
\usepackage[
    linktocpage=true,
    colorlinks,
    linkcolor=DarkRed,
    citecolor=ForestGreen,
    bookmarks,
    bookmarksopen,
    bookmarksnumbered
]{hyperref}

\usepackage[noabbrev,nameinlink,capitalize]{cleveref}

\usepackage{tikz}
\usetikzlibrary{arrows}
\usetikzlibrary{arrows.meta}
\usetikzlibrary{shapes}
\usetikzlibrary{backgrounds}
\usetikzlibrary{positioning}
\usetikzlibrary{decorations.markings}
\usetikzlibrary{patterns}
\usetikzlibrary{calc}
\usetikzlibrary{fit}

\tikzset{
vertex/.style={
    circle, black, fill=Yellow, line width=1pt,
    draw, minimum width=8pt, minimum height=8pt, inner sep=0pt
}}

\usepackage{tcolorbox}
\tcbuselibrary{skins,breakable}
\tcbset{enhanced jigsaw}

\usepackage[framemethod=TikZ]{mdframed}

\DeclareFontFamily{U}{mathx}{\hyphenchar\font45}
\DeclareFontShape{U}{mathx}{m}{n}{
      <5> <6> <7> <8> <9> <10>
      <10.95> <12> <14.4> <17.28> <20.74> <24.88>
      mathx10
}{}
\DeclareSymbolFont{mathx}{U}{mathx}{m}{n}
\DeclareMathSymbol{\bigtimes}{1}{mathx}{"91}

\newtheorem{theorem}{Theorem}
\newtheorem{lemma}{Lemma}[section]
\newtheorem{proposition}[lemma]{Proposition}

\newtheorem{claim}[lemma]{Claim}

\newtheorem{definition}[lemma]{Definition}

\newtheorem{openproblem}{Open Problem}

\newtheorem*{claim*}{Claim}
\newtheorem*{proposition*}{Proposition}
\newtheorem*{lemma*}{Lemma}
\newtheorem*{problem*}{Problem}

\newtheorem{remark}[lemma]{Remark}
\newtheorem{observation}[lemma]{Observation}
\newtheorem{ourProp}{Property}

\theoremstyle{definition}

\crefname{lemma}{Lemma}{Lemmas}
\crefname{claim}{Claim}{Claims}
\crefname{property}{property}{Property}
\creflabelformat{property}{(#1)#2#3}
\crefname{equation}{Eq}{Eq}
\creflabelformat{equation}{(#1)#2#3}
\crefname{enumi}{Step}{Steps}
\crefname{step}{Step}{Step}
\crefname{claim}{claim}{claims}

\newenvironment{abox}{
\begin{tcolorbox}[
    enlarge top by=5pt,
    enlarge bottom by=5pt,
    breakable,
    frame hidden,
    overlay broken = {
        \draw[line width=1pt, black]
        (frame.north west) rectangle (frame.south east);},
    overlay = {
        \draw[line width=1pt, black]
        (frame.north west) rectangle (frame.south east);},
    boxsep=0pt,
    left=4pt,
    right=4pt,
    top=10pt,
    arc=0pt,
    boxrule=1pt,toprule=1pt,
    colback=white
]}{\end{tcolorbox}}

\newenvironment{prop}{\begin{abox}\begin{ourProp}}{\end{ourProp}\end{abox}}

\DeclareMathOperator{\dom}{dom}
\DeclareMathOperator{\var}{var}
\DeclareMathOperator{\E}{\mathbb{E}}

\DeclareMathOperator{\vbl}{vbl}
\DeclareMathOperator{\Temp}{Temp}

\DeclarePairedDelimiter{\bracket}[]
\DeclarePairedDelimiter{\paren}()

\DeclarePairedDelimiter{\abs}{|}{|}

\DeclarePairedDelimiter{\set}{\{}{\}}

\DeclarePairedDelimiterXPP{\Ot}[1]{\widetilde{O}}(){}{#1}
\DeclarePairedDelimiterXPP{\Omgt}[1]{\widetilde{\Omega}}(){}{#1}
\DeclarePairedDelimiterXPP{\BigO}[1]{O}(){}{#1}
\DeclarePairedDelimiterXPP{\I}[1]{\boldsymbol{1}}(){}{#1}

\newcommand{\LOCAL}{\ensuremath{\mathsf{LOCAL}}\xspace}

\newcommand{\RCT}{\textsc{Rct}\xspace}
\newcommand{\MCT}{\textsc{Mct}\xspace}

\newcommand{\AL}{A_L}

\newcommand{\eps}{\varepsilon}
\newcommand{\poly}{\operatorname{\text{{\rm poly}}}}

\renewcommand{\Pr}{\mathbb{P}}
\NewDocumentCommand{\Prob}{sO{}E{_}{{}}m}{{\Pr}_{#3}
\IfBooleanTF{#1}{\bracket*{#4}}{\bracket[#2]{#4}}}

\newcommand{\Exp}{\mathbb{E}}

\newcommand{\dist}{\operatorname{dist}}

\newcommand{\lovasz}{Lov\'{a}sz\xspace}

\newcommand\given{ \nonscript\: \vert{} \nonscript\: \mathopen{} }

\renewcommand{\phi}{\varphi}

\newcommand{\alg}[1]{\textsf{#1}}

\renewcommand{\wp}{$\text{w.p.}$\xspace}
\newcommand{\whp}{$\text{w.h.p.}$\xspace}

\newcommand{\unhappy}{\mathrm{Unhappy}}
\newcommand{\swap}{\mathrm{Swappable}}
\newcommand{\All}{\mathrm{All}}
\renewcommand{\Big}{\operatorname{Big}}

\newcommand{\fT}{\mathscr{T}}
\newcommand{\fS}{\mathscr{S}}

\newcommand{\pious}{\text{$\Pi$-ous}\xspace}
\newcommand{\Gev}{G_{\operatorname{dep}}}

\newcommand{\cstB}{c_{\mathcal{B}'}}

\renewcommand{\leq}{\leqslant}
\renewcommand{\geq}{\geqslant}
\renewcommand{\le}{\leqslant}
\renewcommand{\ge}{\geqslant}

\renewcommand{\qed}{
\nobreak \ifvmode \relax \else
\ifdim\lastskip<1.5em \hskip-\lastskip
\hskip1.5em plus0em minus0.5em \fi \nobreak
\vrule height0.75em width0.5em depth0.25em
\fi
}

\AtBeginDocument{}

\newcommand{\trycolor}{\textsc{TryColor}\xspace}

\title{Sublogarithmic Distributed Vertex Coloring \\ with Optimal Number of Colors}
\author{
Maxime Flin \thanks{This work was supported in part by the Research Council of Finland, Grants 359104 and 363558. Part of this work was done while the author was working at Reykjavik University, funded by the Icelandic Research Fund, Grant 2310015-053.}\\
{\small Aalto University} \\ {\small maxime.flin@aalto.fi}
\and
Magn\'us M.\ Halld\'orsson \thanks{Supported by the Icelandic Research Fund, Grant 2511609.} \\
{\small Reykjavik University} \\ {\small mmh@ru.is}
\and 
Manuel Jakob \thanks{This research was funded in whole or in part by the Austrian Science Fund (FWF) \url{https://doi.org/10.55776/P36280}, \url{https://doi.org/10.55776/I6915}. For open access purposes, the author has applied a CC BY public copyright license to any author-accepted manuscript version arising from this submission.}\\
{\small TU Graz} \\ {\small m.jakob@tugraz.at}
\and 
Yannic Maus \footnotemark[3]\\
{\small TU Graz} \\ {\small yannic.maus@tugraz.at}
}
\date{}

\newcommand{\covd}{\text{CC}}

\renewcommand\qedsymbol{$\blacksquare$}
\let\oldnl\nl \newcommand{\nonl}{\renewcommand{\nl}{\let\nl\oldnl}}

\begin{document}
\thispagestyle{empty}
\maketitle

\begin{abstract}
    For any $\Delta$, let $k_\Delta$ be the maximum integer $k$ such that
    $(k+1)(k+2)\le \Delta$. 
    We give a distributed \LOCAL algorithm that, given an integer $k < k_\Delta$, computes a valid $\Delta-k$-coloring if one exists. The algorithm runs in $\Ot{ \log^4 \log n }$ rounds, which is within a polynomial factor of the $\Omega(\log\log n)$ lower bound, which already applies to the case $k=0$.
    It is also best possible in the sense that 
    if $k \ge k_\Delta$, the problem requires $\Omega(n/\Delta)$ distributed rounds [Molloy, Reed, '14, Bamas, Esperet '19].
    
     For $\Delta$ at most polylogarithmic, the algorithm is an exponential improvement over the current state of the art of $O(\log^{49/12} n)$ rounds. When $\Delta \ge (\log n)^{50}$, our algorithm achieves an even faster runtime of~$O(\log^* n)$ rounds.
\end{abstract}

\pagenumbering{roman}
\thispagestyle{empty}
\clearpage
\thispagestyle{empty}
\renewcommand{\cftbeforesecskip}{8pt}
\renewcommand{\cftbeforesubsecskip}{4pt}
\tableofcontents

\clearpage

\pagenumbering{arabic}
\setcounter{page}{1}

\section{Introduction}
Graph coloring is a problem of fundamental importance to combinatorics and computer science. Given a graph $G=(V,E)$ and an integer $c \geq 1$, a $c$-coloring of $G$ is a mapping from the vertices of $G$ to $\set{1, 2, \ldots, c}$ such that adjacent vertices of $G$ receive different colors. Determining the smallest $c$ for which a given graph admits a $c$-coloring -- a value known as the chromatic number denoted by $\chi(G)$ -- has been an enduring challenge in graph theory since its very beginning.
Computationally, it has long been known that computing exactly or even approximately the chromatic number of a graph is NP-hard \cite{Karp1972,Zuckerman2007}. In general, the chromatic number of a graph depends on the global structure of the graph rather than on purely local properties (see \cite[Chapter 3]{AlonSpencerBook}).

\paragraph{Distributed Graph Coloring.} In the \LOCAL\ model, 
introduced by Linial \cite{linial92}, the input graph $G$ is seen as a network in which $n=|V|$ nodes equipped with unique identifiers communicate with their neighbors in synchronous rounds and perform unlimited local computation, aiming to minimize the total number of rounds until each node has computed its output, e.g., its color in a graph coloring problem. 
In this setting, one classically aims to compute a $\Delta+1$-coloring where $\Delta$ is the maximum degree of the graph \cite{barenboimelkin_book}. Sequentially, the problem is trivial as one can greedily assign colors to nodes without ever getting stuck, or in other words, any valid partial coloring can always be completed to a coloring of the whole graph. Essentially all efficient distributed graph coloring algorithms exploit this greedy behavior of the problem, e.g., \cite{BEPSv3,CLP20,MT20,GK20,HKNT22,GG24}. The most important exceptions are several papers that aim for an efficient distributed implementation of Brooks' theorem \cite{brooks1941} for coloring with exactly $\Delta$ colors, e.g., \cite{PS95,GHKM18,FHM23,HM24,BBN26,BBN25}. 

\paragraph{Coloring with Fewer Colors.}
Coloring graphs whose chromatic number is smaller than~$\Delta$ using the optimal number of colors is challenging, even in the centralized setting. In a seminal paper, Molloy and Reed \cite{MR14} characterized the precise threshold at which the problem becomes NP-complete. Let $k_{\Delta}$ be the largest integer such that $(k_{\Delta}+1)(k_{\Delta}+2)\le \Delta$.\footnote{It can be verified that $k_\Delta=\lfloor
\sqrt{\Delta+1/4}-3/2\rfloor$ and thus 
$\sqrt{\Delta}-3<k_\Delta < \sqrt{\Delta}-1$ holds.} 
 They showed that for $c<\Delta-k_{\Delta}$ the decision problem whether a graph admits a $c$-coloring is NP-complete, while for $c\geq \Delta-k_{\Delta}$ the decision is based on whether the induced neighborhood $G[N[v]]$ of some node $v$ is $c$-colorable. Furthermore, for $c\geq \Delta-k_{\Delta}$ the coloring can be computed by a (centralized) polynomial-time algorithm when $\Delta$ is a constant. This result is highly non-trivial, spanning more than 60 dense pages \cite{MR14}.
 
 In the distributed setting, Bamas and Esperet showed that for $c=\Delta-k_{\Delta}$ computing a $c$-coloring of a graph with chromatic number $c$ requires $\Omega(n/\Delta)$ rounds \cite{BamasE19}. Based on the work of Molloy and Reed, they also gave a randomized algorithm for the \LOCAL model to $c$-color graphs when $c> \Delta-k_{\Delta}$ in a polylogarithmic number of rounds. In contrast, the state-of-the-art algorithms for $(\Delta+1)$-coloring and, recently also for $\Delta$-coloring, only require $\poly(\log \log n)$ randomized distributed time \cite{CLP20,RG20,GHKM18,FHM23,BBN25}. This discrepancy motivates the following question.  

\begin{quote}
    \emph{Can one color in a sublogarithmic number of rounds with fewer than $\Delta$ colors?}
\end{quote}

\paragraph{Our Results.}
In this paper, we answer the question in the affirmative. We show that one can compute a $c$-coloring in sublogarithmic time when one exists and $c\geq \Delta-k_{\Delta}+1$. Moreover, when $\Delta$ is asymptotically larger than some polylogarithm in $n$, our algorithm runs in $O(\log^*n)$ rounds, matching the state of the art for the significantly easier $(\Delta+1)$-coloring problem.  More precisely, we prove the following theorem:

\begin{restatable}{theorem}{thmmain}
\label{thm:col} \label{thm:main}
    For sufficiently large $\Delta$, and any $c \ge \Delta-k_{\Delta}+1$, there is a distributed randomized algorithm that
    takes a graph $G$ with maximum degree $\Delta$ as input, and does the
    following: either some vertex outputs a certificate
    that $G$ is not $c$-colorable, or the algorithm finds a $c$-coloring
    of $G$. The algorithm runs in $O(\log^* n)$ rounds when $\Delta \ge (\log n)^{50}$, and 
    in general in $\Ot{ \log^4\log n }$ rounds\footnote{In this paper, $\Ot{ f(n) }$ hides multiplicative factors of size $\poly \log f(n)$.}, with high probability.
\end{restatable}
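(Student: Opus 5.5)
Here is the plan. Write $k=\Delta-c\le k_\Delta-1$, so $(k+2)(k+3)\le \Delta$. The proof follows the Molloy--Reed structure made local. First each vertex $v$ checks whether $G[N[v]]$ is $c$-colorable. This takes one round to gather the neighborhood plus local computation. If some check fails, that vertex outputs the neighborhood as its certificate. Otherwise we invoke the Molloy--Reed structural characterization: when $c\ge\Delta-k_\Delta$ and every closed neighborhood is $c$-colorable, every vertex is either sparse or locally sparse in the usual sense, or it lies in a dense cluster. A dense cluster is an almost-clique of size about $\Delta$ that is "near" a $(c+1)$-clique but not equal to one. We compute an almost-clique decomposition in $O(1)$ rounds, using the standard $(\Delta+1)$-coloring toolkit, with parameter $\epsilon$ chosen so that clusters have at most $O(\sqrt\Delta)$ anti-degree/external degree relative to $k$.

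Sparse vertices and vertices with slack are handled as in the $\poly\log\log n$ and $O(\log^*n)$ $(\Delta+1)$-coloring algorithms (e.g.\ \cite{HKNT22,FHM23}). First we generate slack. A single round of random color trials gives each sparse vertex slack $\Omega(\epsilon^2\Delta)\gg k$, which reduces its problem to a list coloring with degree $+$ slack exceeding list size. Then we use multicolor trials, which give $O(\log^* n)$ rounds when $\Delta\ge \log^{50}n$. For small $\Delta$ we use shattering, and the residual components of size $\poly\log n$ are handled by a deterministic network-decomposition-based list coloring in $\Ot{\log^4\log n}$ rounds.

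The dense part is where $k$ colors must be saved. The plan is to color each almost-clique $C$ in a way that creates "slack" by reusing colors: we pair non-adjacent vertices inside $C$, or a vertex in $C$ with an outside neighbor, and give each pair the same color. Because $(k+2)(k+3)\le\Delta$ and no closed neighborhood is a large clique obstruction, the Molloy--Reed counting shows each cluster contains at least $k+1$ disjoint non-edges that can be colored identically without conflict. We find such matchings by a local random sampling step (sampling $\Theta(k)$ anti-edges per cluster, w.h.p.\ when $\Delta$ is large), color them first, and this leaves slack $k+1$ for the remainder of the cluster. Then the rest of the cluster is colored as in $\Delta$-coloring of dense parts: synchronized color trial followed by a put-aside set. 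The put-aside set is colored last, inside the cluster, using the saved slack.

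The main obstacle I expect is the near-extremal clusters. These are clusters that are almost a $(c+1)$-clique plus few outside edges, so the needed $k+1$ same-colored pairs barely exist, and they may need to use outside neighbors whose colors interact across clusters. There, randomness fails for small $\Delta$. I would try to reduce these clusters to a constant-radius local constraint problem. Each such cluster must agree with its neighbors on which external colors to "copy", and I would solve this problem on the cluster graph via a slack-based LLL or a deterministic rounding of radius $O(1)$, using Molloy--Reed's exact threshold argument to guarantee a solution exists locally. Finally, I would combine all the pieces and bound the round complexity: the shattered post-processing dominates at $\Ot{\log^4\log n}$, while for $\Delta\ge\log^{50}n$ every step succeeds w.h.p.\ in $O(\log^* n)$ rounds.
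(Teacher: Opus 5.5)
Your first step (the local check of $G[N[v]]$) matches the paper, but your treatment of the dense part has a genuine gap, and it also undermines how you color everything else.

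After the Molloy--Reed reduction to $F$ used in the paper, the dense structures are true cliques of size between $c-O(\sqrt{\Delta})$ and $c$. This is not an artifact: a $c$-clique in which every vertex has $k+1$ external neighbors is fully compatible with every closed neighborhood being $c$-colorable. Such a cluster has no non-edges, and a clique vertex can never share a color with its own neighbor. Pairing a clique vertex with a non-adjacent outside vertex gains nothing either, since in a $c$-clique every external neighbor's color is forced to reappear inside the clique anyway. So the claim that Molloy--Reed counting yields $k+1$ same-colorable pairs per cluster is false, and there is no slack to save for a put-aside set. A $c$-clique must use every color exactly once; the only question is whether the colors already placed on its external neighbors still admit such a bijection. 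Nothing in your treatment of the sparse and intermediate vertices (random trials, multicolor trials, deterministic list coloring of shattered components) prevents one color from landing in the external neighborhood of almost every vertex of some clique. That makes the clique uncolorable. This deadlock is the entire non-greedy difficulty. It is created by the earlier steps, it concerns all cliques rather than only ``near-extremal'' ones, and it cannot be repaired afterwards by the constant-radius step you sketch, which identifies neither the constraint problem nor an LLL criterion for it.

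The missing idea is an invariant on the coloring outside the cliques, enforced from the very first coloring step. This is the paper's CC property: in every step, for each uncolored clique $A_i$ and color $x$, fewer than $\Delta^{37/40}$ vertices of $A_i$ acquire a neighbor outside $A_i\cup\mathrm{All}_i\cup\mathrm{Big}_i^+$ colored $x$, so overall each color is blocked for at most $4\Delta/5$ of them. Maintaining it requires every non-clique vertex to try colors from a list of size at least its degree plus $U\Delta^{0.22}$, where $U$ bounds the external degree of still-uncolored cliques; then Lemma~32 of Molloy--Reed applies. This has two consequences:
\begin{itemize}
\item It dictates the order $S, B_H, A_H, B_L, A_L$: the slack $\sqrt{\Delta}/2$ of $B_L$ suffices only against $U=30\Delta^{1/4}$, i.e.\ once $A_H$ is colored.
\item It forces splitting $S$ into $S_1,S_2$. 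Slack generation guarantees only $\Theta(\sqrt{\Delta})$ repeated colors (the paper gets $1.05\sqrt{\Delta}$, i.e.\ slack $0.05\sqrt{\Delta}$), far below $10^8\sqrt{\Delta}\cdot\Delta^{0.22}$.
\end{itemize}
Relatedly, your ``slack $\Omega(\epsilon^2\Delta)\gg k$'' cannot coexist with clusters of external degree $O(\sqrt{\Delta})$. Both are governed by the same sparsity threshold, so sparse vertices get only $\Theta(\sqrt{\Delta})$ slack, tuned just to beat $k$.

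The cliques are then completed by a random permutation of the colors not used on $\mathrm{All}_i$, which leaves $O(\sqrt{\Delta})$ conflicted vertices per clique. These are repaired by swaps: CC guarantees $\Delta/10$ swappable partners, subsampling makes candidates safe across cliques, and Hall's theorem gives the matching inside each clique. For small $\Delta$, every step runs through shattering with CC bad events and a fresh CC budget in both phases. Your generic ``shatter and list-color the residue'' step lacks this as well.
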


\cref{thm:col} comes polynomially close to the lower bound of $\Omega(\log_{\Delta} \log n)$ rounds established for the $\Delta$-coloring problem \cite{LLL_lowerbound}. Moreover, this result uses the fewest number of colors possible due to the aforementioned lower bound from \cite{BamasE19}. 
While our approach builds on the framework of Molloy and Reed, it is conceptually and technically simpler in several key aspects; see \Cref{ssec:contributions,sec:tech-intro} for details.

By plugging the randomized algorithm from \Cref{thm:main} into the powerful distributed derandomization framework of \cite{GKM17,GHK18,RG20,GG24} we obtain more than a quadratic improvement for deterministic algorithms. Previously the best algorithm, using state-of-the-art subroutines \cite{GG24,RG20}, had complexity $O(\log^{49/12}n)$ \cite{BamasE19}.
\begin{restatable}{theorem}{TheoremDet}
    \label{thm:det-alg}
    For sufficiently large $\Delta$, and any $c \ge \Delta-k_{\Delta}+1$, there is a distributed deterministic algorithm that
    takes a graph $G$ with maximum degree $\Delta$ as input, and does the
    following: either some vertex outputs a certificate
    that $G$ is not $c$-colorable, or the algorithm finds a $c$-coloring
    of $G$. The algorithm runs in $\Ot{ \log^2 n }$ rounds. 
\end{restatable}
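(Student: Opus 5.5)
We derive \cref{thm:det-alg} from \cref{thm:main} through the standard derandomization framework of \cite{GKM17,GHK18,RG20,GG24}; we do not redesign the algorithm. The framework says that any randomized \LOCAL algorithm for a problem whose solutions can be checked locally, running in $T$ rounds with high probability, yields a deterministic algorithm with roughly $O(T\cdot \log^2 n)$ rounds. With network decompositions one can do better: each randomized phase of locality $T$ costs about $\Ot{T}$ colors of a decomposition of $G^{T}$ times the cluster diameter, so $T\cdot\poly\log n$ overall. The $\Ot{\log^2 n}$ target forces a finer route. The randomized algorithm of \cref{thm:main} has runtime $\poly(\log\log n)$ only because it shatters. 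A first phase with locality $O(\log^* n)$ or $\poly\log\log n$ colors almost all vertices. What remains is small components of size $\poly\log n$, each finished by a deterministic post-shattering procedure.

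The deterministic plan has three steps, carried out in order.

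\textbf{Step 1 (certificate detection).} Each vertex checks whether $G[N[v]]$ is $c$-colorable, which takes $O(1)$ rounds. If some neighborhood is not $c$-colorable, that vertex outputs its certificate. Otherwise, by the Molloy--Reed structure that \cref{thm:main} relies on, the coloring exists.

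\textbf{Step 2 (derandomizing the pre-shattering phase).} Each randomized step of the pre-shattering part of \cref{thm:main} (slack generation, the colorful-matching and clique steps, \trycolor rounds) is a $\poly\log\log n$-round procedure. Its success at each vertex is a local event with failure probability $1/\poly(\Delta)$, or at least one that can be bounded through pessimistic estimators of constant locality. We derandomize each step by the method of conditional expectations over a network decomposition of $G^{r}$, where $r=\poly\log\log n$ is the step's locality. Following \cite{GG24}, such a decomposition with $O(\log n)$ colors and diameter $\Ot{\log n}$ costs $\Ot{r\log^2 n}$. Alternatively, we apply the rounding-based framework of \cite{GG24}, which derandomizes $T$-round algorithms whose properties are local in $\Ot{T\log^2 n}$ rounds. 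The goal of this step is an explicit guarantee: every vertex left uncolored is "bad", and the bad set has the same structure a randomized run would produce.

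\textbf{Step 3 (post-shattering).} On the remaining instance we run the deterministic list-coloring completion that the paper uses after shattering. The $\poly\log n$ size bound no longer holds, but the completion runs in $\Ot{\log^2 n}$ on arbitrary $n$-vertex instances when combined with the same network decomposition.

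We expect the main obstacle to lie in Step 2. Several steps of \cref{thm:main}, notably the high-probability guarantees for $\Delta\ge(\log n)^{50}$ and the dense-clique handling, have failure events that are not simply local. Those events are then enforced by retry-and-shatter rather than by a per-vertex bound. Our plan for this is to restate each such step in the "low-degree, locally checkable with $1/\poly\Delta$ failure" form, using the paper's dense/sparse decomposition lemmas, so that the derandomization framework applies step by step. The total is $\poly\log\log n$ steps, each of cost $\Ot{\log^2 n}$ after factoring out the shared decomposition, which gives $\Ot{\log^2 n}$ overall.
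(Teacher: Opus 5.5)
Your Step 1 matches the paper. The rest has a genuine gap, and it comes from rejecting the route that works.

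The framework you quote in your first paragraph already suffices, provided it is applied once to the \emph{entire} algorithm of \cref{thm:main}, post-shattering included. It derandomizes any randomized algorithm of locality $T(n)$ whose output is checkable in $l(n)$ rounds in $O((T(n)+l(n))\cdot T_{ND}(n))$ rounds. The post-shattering phases are deterministic computations, so the whole algorithm has locality $T(n)=\Ot{\log^4\log n}$. To make this a fixed bound, run it with the a priori bound $\poly(\log n)$ on component sizes and stop after $T(n)$ rounds. If shattering fails, this appears as a locally detectable defect in the output (an uncolored vertex or a monochromatic edge), which happens with probability $1/\poly(n)$. Properness of the coloring is checked in $l(n)=1$ round. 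With $T_{ND}(n)=\Ot{\log^2 n}$ from \cite{GG24}, this gives $\Ot{\log^4\log n}\cdot\Ot{\log^2 n}=\Ot{\log^2 n}$, since $\Ot{\cdot}$ absorbs $\poly\log\log n$ factors. That, together with the certificate check, is the paper's whole proof. Your remark that the target ``forces a finer route'' is mistaken.

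The finer route you propose does not go through. In Step 2, conditional expectations or rounding can only control expectations of locally computable quantities, such as the number of violated bad events. For small $\Delta$ that expectation is about $n\exp(-\Delta^{1/40})\gg 1$. Nothing forces the remaining bad set to have $\poly(\log n)$-size components, because component size is not a locally checkable objective. So the claim that the bad set ``has the same structure a randomized run would produce'' has no justification. Step 3 then has to solve post-shattering LLLs on components of size up to $n$. There the only available deterministic tool, \cref{thm:deterministicLLLLOCAL}, costs $\Ot{\log^3 n}$ rather than $\Ot{\log^2 n}$. This cost also recurs for each of the paper's many interleaved pre-/post-shattering steps: the $O(\log\Delta)$ random color trials, the multi-color trial, slack generation, degree splitting, the synchronized color trial, and candidate subsampling. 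The paper has no single final ``list-coloring completion'' that you could pay for once. Finally, Step 2 is left as a plan (``restate each step''), which you yourself flag as the main obstacle.
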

The runtime of \Cref{thm:det-alg} is only roughly quadratically slower than the lower bound of $\Omega(\log_\Delta n)$ rounds established for the $\Delta$-coloring problem \cite{Chang2016a} and almost matches the natural barrier of $\Omega(\log^2n)$ rounds for the deterministic complexity of the problem. Surpassing it is believed to require fundamentally new techniques.
Molloy and Reed's, Bamas and Esperet's, and also our algorithm for computing a coloring with fewer than $\Delta$ colors rely in multiple places on solving instances of the constructive \lovasz Local Lemma (LLL). The only known method to do this deterministically in a distributed setting is via the aforementioned derandomization framework that inherently comes with an $\Omega(\log^2 n)$ cost; see \cite{GG24} for more details. Breaking this barrier for general LLLs is considered one of the biggest open problems in the field. Chang and Pettie conjectured in \cite{Changhierarchy19} that this can be done on bounded-degree graphs, but this question remains unanswered.

\subsection{Further Background on Distributed Graph Coloring}

\paragraph{Historic focus on Greedy problems.} 
Coloring was the central topic of the paper establishing the \LOCAL model \cite{linial92}. However, since this model requires that nodes decide on their colors only based on their local view, even solving greedy problems such as maximal independent set, maximal matching or $(\Delta+1)$-vertex-coloring is highly non-trivial. Linial showed that even coloring an $n$-cycle with a constant number of colors requires at least $\Omega(\log^*n)$ rounds for deterministic algorithms and Naor later extended this lower bound to the randomized setting \cite{Naor91}. It took roughly 30 years to understand that on general graphs those greedy problems could be solved deterministically in polylogarithmic distributed time thanks to a breakthrough by Rohzon and Ghaffari \cite{RG20}. Over the last 5 years, follow-up work \cite{GK20,BBHORS21,FFGKR23,GG24} on deterministic algorithms coupled with earlier randomized techniques \cite{BEPSv3,ghaffari16_MIS,CLP20,HKNT22}  led to exciting progress on our understanding of the complexity of greedy problems in the \LOCAL model.

\paragraph{Recent interest in the locality of non-greedy problems.\footnote{Intuitively, a non-greedy problem is one where not every partial solution can be extended to a full solution. One distinguishing feature is that greedy problems can be solved in $O(\log^* n)$ rounds on constant-degree graphs, while non-greedy problems require at least $\Omega(\log \log n)$ rounds. See \cite{JMS25} for a more formal discussion on the topic. }}
Despite rising interest in the locality of non-greedy problems, e.g., for the seminal works on (hypergraph) sinkless orientation problems  \cite{LLL_lowerbound,GS17,BMNSU25}, for various degree splitting problems \cite{Splitting17,HMN22,Davies23}, and for the more amenable non-greedy edge coloring problems, e.g.,  \cite{GKMU18,SuVu19,HN21,HMN22,CHLPU20,Davies23,JMS25}, little is known for non-greedy vertex coloring. The exception is the aforementioned coloring with exactly $\Delta$ colors. A well-known theorem of Brooks \cite{brooks1941} shows that any connected graph admits a $\Delta$-coloring unless it is a $(\Delta+1)$-clique or an odd cycle. There have been several works designing faster and faster algorithms for the problem \cite{PS95,GHKM18,FHM23,HM24,BBN25} culminating in the very recent result of \cite{BBN26} obtaining an optimal round complexity when $\Delta=O(1)$.
The only known works attempting to color with fewer than $\Delta$ colors are for very sparse graphs, i.e., triangle-free graphs \cite{CPS17}, but their randomized complexities remain at least logarithmic.  In a different direction, Barenboim designed a randomized distributed algorithm that computes an $O(n^{1/2+\eps}\chi(G))$-coloring of any graph $G$ \cite{B12}. 

\paragraph{The importance of the \lovasz Local Lemma.} 
The \lovasz Local Lemma (LLL) provides a powerful probabilistic framework for showing that, even when many local bad events may occur, there exists a global configuration that avoids all of them simultaneously. Informally, it can be viewed as a localized analogue of the union bound: if each bad event occurs with sufficiently small probability and depends on only a limited number of other events, then one can still guarantee the existence of an outcome in which no bad event occurs. In the \emph{constructive} version of the LLL, the goal is not only to prove existence but also to efficiently compute such an assignment.

Molloy and Reed’s algorithm for coloring with $c \ge \Delta - k_{\Delta}$ heavily relies on multiple instances of the constructive LLL (see Section~11 in~\cite{MR14}). In essence, the LLL is used to ensure that partial colorings retain properties similar to those of a random coloring, thereby guaranteeing that they can always be extended without creating deadlocks. In their distributed adaptation, Bamas and Esperet significantly reduced the number of required LLL instances and solved each of them using a logarithmic-time distributed LLL solver based on the Moser--Tardos framework~\cite{BamasE19,MoserTardos10,CPS17}. To date, no non--LLL-based approach is known for computing such colorings.

In the distributed setting, random variables and bad events are naturally associated with nodes of the communication network, and the objective mirrors that of the constructive LLL: to compute a global assignment that avoids all bad events. Consequently, understanding the distributed complexity of the LLL has become a central challenge in the field. In a seminal result, Chang and Pettie~\cite{Changhierarchy19} showed that the LLL is complete for sublogarithmic-time computation of local graph problems on constant-degree graphs. As a consequence, any such problem with sublogarithmic complexity can be solved in $\poly(\log\log n)$ rounds using the distributed LLL algorithm of Fischer and Ghaffari~\cite{FG17}. 

For general graphs, however, it remains a major open question which problems admit sublogarithmic-time distributed algorithms. The complexity of the general distributed LLL on general graphs is also widely open, despite recent progress on special cases~\cite{GHK18,Davies23} and results showing sublogarithmic average-time complexity per node~\cite{D25}. Moreover, the constructive LLL admits an $\Omega(\log\log n)$ randomized lower bound~\cite{LLL_lowerbound}. Our work not only yields sublogarithmic-time algorithms for coloring with fewer than $\Delta$ colors, but also provides sublogarithmic-time solutions for several LLL instances that, prior to our work, were only known to admit logarithmic-time distributed solutions.

\subsection{The Challenges and Our Technical Contributions}
\label{ssec:contributions}
Our approach follows the same high-level structure as that of Molloy and Reed~\cite{MR14}. As in their work, we first transform the input graph to expose its structural regularities and then decompose it into five components: a sparse region, two dense regions consisting of cliques, and two intermediate regions linking them. These components are colored in a carefully chosen order while maintaining probabilistic invariants that guarantee that the dense cliques can be completed in the final stages. We refer to \Cref{sec:tech-intro} for a more detailed overview of this framework.

\paragraph{Challenges in achieving sublogarithmic distributed algorithms.}
While the Molloy--Reed framework provides a conceptually clean approach, several aspects pose major challenges in the distributed setting:
\begin{enumerate}
    \item The method of~\cite{MR14} for coloring the sparse and intermediate regions relies on an iterative semi-random process. This procedure is technically involved—requiring a lengthy analysis—and inherently slow, as it performs $\Delta^{\Theta(1)}$ iterations, each requiring the solution of an LLL instance. Although Bamas and Esperet reduced this to $O(\log^{13/12}\Delta)$ iterations, this complexity remains far too large for achieving sublogarithmic distributed time.
    \item The probabilistic arguments throughout~\cite{MR14} are based on the \lovasz Local Lemma. Known distributed LLL solvers require $\Omega(\log n)$ rounds in general~\cite{BamasE19,MoserTardos10,CPS17}. While some LLL instances can be solved faster when the criteria are sufficiently relaxed~\cite{HMN22,Davies23}, others -- most notably those arising from the dense regions -- satisfy only polynomial LLL criteria and thus resist existing fast techniques.
\end{enumerate}
As a consequence, a direct distributed implementation of the Molloy--Reed framework inevitably leads to high round complexity. Notably, this difficulty is most pronounced for high-degree graphs, despite the fact that such graphs admit ultrafast distributed algorithms for simpler problems such as $\Delta$-coloring~\cite{FHM23}.

\paragraph{Our Contributions.}
We present a sublogarithmic-time distributed algorithm for coloring with $\Delta - k_\Delta + 1$ colors by refining and simplifying the Molloy--Reed framework.

\begin{enumerate}
    \item \textbf{A central slack invariant.}
    Our main conceptual contribution is the introduction of a central invariant based on \emph{slack}, ensuring that each vertex consistently has more available colors than competing constraints. The key feature of this invariant is its monotonicity: as the algorithm progresses, the loss of available colors is matched by a corresponding reduction in uncolored neighbors, so that slack does not deteriorate.

    This viewpoint allows us to avoid tracking the detailed evolution of color lists and degrees over time. Rather than carefully controlling distributions at each step, we rely on the invariant to ensure that partial colorings remain extendable, leading to a simpler and more robust approach to coloring the sparse and intermediate regions.

    \item \textbf{Faster handling of structured LLL instances.}
    We develop sublogarithmic-time solutions for the Lov\'asz Local Lemma instances arising in our setting. Our approach builds on the shattering framework, augmented with additional mechanisms---such as guard events---that ensure the residual instances remain well-behaved and avoid cascading dependencies.

    While the individual ingredients have appeared in prior work, their combination allows us to handle LLL instances that were not previously known to admit such fast distributed solutions.

    \item \textbf{Parallel coloring of dense cliques.}
    For dense regions, we introduce a subsampling technique that enables cliques to be colored in parallel by reducing the problem to independent matching tasks within each clique. This avoids the need for global coordination and simplifies the treatment of dense structures compared to prior approaches.

    \item \textbf{Ultrafast algorithms in the high-degree regime.}
    In graphs of sufficiently large degree, we can replace the core iterative coloring step with a reduction to a form of list coloring that admits very fast distributed algorithms. As all our probabilistic claims hold with subexponential error bounds, we also bypass the application of LLL altogether.
    This leads to a $O(\log^* n)$-time solution when $\Delta \ge \log^{50} n$.
\end{enumerate}

Conceptually, our results expand the small but growing class of explicit distributed LLL formulations that can be solved in sublogarithmic time on general graphs. Despite this progress, designing a general distributed LLL algorithm that efficiently handles all relevant instances across all degree regimes remains a major open problem in the field.

\subsection{Organization of the Paper}
In \cref{sec:tech-intro}, we present the main technical ideas underlying our approach, while \cref{sec:prelim} contains complementary background material. The algorithm is described formally in \cref{sec:high-level}, where we also prove \cref{thm:col,thm:det-alg}. The subroutines and their analyses for coloring $\Pi$-ous subgraphs, sparse vertices, and dense vertices are presented in \cref{sec:colorwithmuchslack,sec:sparse,sec:cliques}, respectively. The high-degree case is treated separately in \cref{sec:hideg}.

The substantial technical detail throughout the paper is driven by the goal of achieving sublogarithmic distributed runtime. If one were only concerned with the existence of such colorings, a polynomial-time centralized algorithm, or a polylogarithmic-time distributed solution, many sections could be significantly simplified.

\section{Technical Overview}
\label{sec:tech-intro}
We give an overview of the main technical ideas behind \cref{thm:col}, focusing on how to construct a $(\Delta - k_\Delta + 1)$-coloring in sublogarithmic distributed time.
In \cref{sec:tech-intro-LLL}, we attempt to give a bird's eye view of the techniques allowing us to solve each instance of the \lovasz Local Lemma in sublogarithmic time.

\subsection{\texorpdfstring{Color Coverage (CC) \& $\Pi$-ous subgraphs}{Color Coverages (CC) \& Pious subgraphs} }
Due to the impossibility of completing all partial solutions in a non-greedy coloring problem, the hardest part of the problem is coloring the  \emph{last} vertices. In a non-greedy coloring problem, one has to construct a coloring while ensuring that the remaining nodes can be colored later. Following Molloy and Reed, we color the dense vertices last and begin by reviewing the conditions necessary for coloring the cliques at the end. 

\paragraph{Color Swapping in Cliques.}
Consider a $(\Delta - k_{\Delta} + 1)$-clique in which every vertex is adjacent to $k_{\Delta}$ nodes outside the clique and all but one vertex have been properly colored. To color this remaining vertex~$v$, which we call \emph{unhappy}, Molloy and Reed swap its color with that of some other vertex in the clique. In order to do so without creating new conflicts, the vertex $u$ with which $v$ swaps its color must have no external neighbors with $v$'s color (and vice versa). See \cref{fig:swap}. 
However, it may happen that the coloring outside the clique does not permit any such swap (as in \cref{fig:empty}). 

\begin{figure}[ht!]
    \centering
    \begin{subfigure}{.45\linewidth}
    \begin{center}
        \scalebox{0.75}{    
            \begin{tikzpicture}[scale=1]
                \def\radius{2}
		          \def\outlen{2} \def\numnodes{6}
		
                \foreach \i in {1,...,\numnodes}{
			        \coordinate (P\i) at ({360/\numnodes * (\i - 1)}:\radius);
		          }
		          \draw[gray, dashed, very thin, fill=gray!10] (0,0) circle (2.5);
		
                \foreach \i in {1,...,\numnodes}{
                	\foreach \j in {1,...,\numnodes}{
                		\ifnum\i<\j
                	       \draw[gray!50, very thin] (P\i) -- (P\j);
                		\fi
                	}
                }
                
                \pgfmathsetmacro{\angle}{360/\numnodes * (2 - 1)}
		          \node[circle, draw=white, fill=white, inner sep=5pt] at ($(P2)+({\angle-36}:\outlen)$) {};

                \pgfmathsetmacro{\angle}{360/\numnodes * (3 - 1)}
		          \node[circle, draw=white, fill=white, inner sep=5pt] at ($(P3)+({\angle-12}:\outlen)$) {};
                \foreach \i in {1,...,\numnodes}{
			        \pgfmathsetmacro{\angle}{360/\numnodes * (\i - 1)}
			        \foreach \shift in {-36, -12, 12, 36}{
				        \draw[black, very thin] (P\i) -- ++({\angle+\shift}:\outlen);
			        }
		          }
		
                \node[circle, draw=black, fill=green!60!black, inner sep=5pt, line width=0.3pt] at (P1) {};
		          \node[circle, draw=black, fill=yellow, inner sep=5pt, line width=0.3pt] at (P2) {};
		          \node[circle, draw=black, fill=cyan!70!white, inner sep=5pt, line width=0.3pt] at (P3) {};
        		\node[circle, draw=black, fill=purple!70!blue, inner sep=5pt, line width=0.3pt] at (P4) {};
        		\node[circle, draw=black, fill=blue, inner sep=5pt, line width=0.3pt] at (P5) {};
        		\node[circle, draw=black, fill=red, inner sep=5pt, line width=0.3pt] at (P6) {};
        
                \draw[red, very thick, dotted] ($(P2)!0.5!(P3)$) ellipse [x radius=1.5cm, y radius=0.8cm];
        
                \pgfmathsetmacro{\angle}{360/\numnodes * (1 - 1)}
                		\node[circle, draw=black, fill=blue, inner sep=5pt] at ($(P1)+({\angle-12}:\outlen)$) {};

                \pgfmathsetmacro{\angle}{360/\numnodes * (4 - 1)}
                		\node[circle, draw=black, fill=blue, inner sep=5pt] at ($(P4)+({\angle+12}:\outlen)$) {};
                
                \pgfmathsetmacro{\angle}{360/\numnodes * (5 - 1)}
                		\node[circle, draw=black, fill=blue, inner sep=5pt] at ($(P5)+({\angle+36}:\outlen)$) {};
                
                \pgfmathsetmacro{\angle}{360/\numnodes * (5 - 1)}
                		\node[circle, draw=black, fill=red, inner sep=5pt] at ($(P5)+({\angle-36}:\outlen)$) {};

                        \node at (-2.9,1.5) {\huge$A_i$};
                        \node at (3.2,1.5) {\large \texttt{Swappable}};
                        \node at (0,-2.8) {\large \texttt{unhappy}};
                
                        \draw[red, very thick, dotted] (P5) circle [radius=0.45cm];
                
                        \draw[red, thick, -{Stealth[length=2mm, width=2mm]}] (2.2,1.5) -- (1.5,1.7);
                        \draw[red, thick, -{Stealth[length=2mm, width=2mm]}] (-0.1,-2.65) -- (-0.5,-2.1);
	        \end{tikzpicture}
        }        
    \end{center}
    \caption{\label{fig:swap}}
    \end{subfigure}
    \hfill
    \begin{subfigure}{.45\linewidth}
    \begin{center}
        \scalebox{0.75}{    \begin{tikzpicture}[scale=1]
		
        \def\radius{2}
		\def\outlen{2} \def\numnodes{6}
		
        \foreach \i in {1,...,\numnodes}{
			\coordinate (P\i) at ({360/\numnodes * (\i - 1)}:\radius);
		}
		
\draw[gray, dashed, very thin, fill=gray!10] (0,0) circle (2.5);
		
\foreach \i in {1,...,\numnodes}{
			\foreach \j in {1,...,\numnodes}{
				\ifnum\i<\j
				\draw[gray!50, very thin] (P\i) -- (P\j);
				\fi
			}
		}
		
\foreach \i in {1,...,\numnodes}{
			\pgfmathsetmacro{\angle}{360/\numnodes * (\i - 1)}
			\foreach \shift in {-36, -12, 12, 36}{
				\draw[black, very thin] (P\i) -- ++({\angle+\shift}:\outlen);
			}
		}
		
\node[circle, draw=black, fill=green!60!black, inner sep=5pt, line width=0.3pt] at (P1) {};
		\node[circle, draw=black, fill=yellow, inner sep=5pt, line width=0.3pt] at (P2) {};
		\node[circle, draw=black, fill=cyan!70!white, inner sep=5pt, line width=0.3pt] at (P3) {};
		\node[circle, draw=black, fill=purple!70!blue, inner sep=5pt, line width=0.3pt] at (P4) {};
		\node[circle, draw=black, fill=blue, inner sep=5pt, line width=0.3pt] at (P5) {};
		\node[circle, draw=black, fill=red, inner sep=5pt, line width=0.3pt] at (P6) {};
        
\pgfmathsetmacro{\angle}{360/\numnodes * (1 - 1)}
		\node[circle, draw=black, fill=blue, inner sep=5pt] at ($(P1)+({\angle-12}:\outlen)$) {};

\pgfmathsetmacro{\angle}{360/\numnodes * (4 - 1)}
		\node[circle, draw=black, fill=blue, inner sep=5pt] at ($(P4)+({\angle+12}:\outlen)$) {};

\pgfmathsetmacro{\angle}{360/\numnodes * (5 - 1)}
		\node[circle, draw=black, fill=blue, inner sep=5pt] at ($(P5)+({\angle+36}:\outlen)$) {};

\pgfmathsetmacro{\angle}{360/\numnodes * (5 - 1)}
		\node[circle, draw=black, fill=red, inner sep=5pt] at ($(P5)+({\angle-36}:\outlen)$) {};

\pgfmathsetmacro{\angle}{360/\numnodes * (2 - 1)}
		\node[circle, draw=black, fill=blue, inner sep=5pt] at ($(P2)+({\angle-36}:\outlen)$) {};

\pgfmathsetmacro{\angle}{360/\numnodes * (3 - 1)}
		\node[circle, draw=black, fill=blue, inner sep=5pt] at ($(P3)+({\angle-12}:\outlen)$) {};

        \node at (-2.9,1.5) {\huge$A_i$};
        \node at (0,-2.8) {\large \texttt{unhappy}};

        \draw[red, very thick, dotted] (P5) circle [radius=0.45cm];

        \draw[red, thick, -{Stealth[length=2mm, width=2mm]}] (-0.1,-2.65) -- (-0.5,-2.1);
	\end{tikzpicture}

 }        
    \end{center}
    \caption{\label{fig:empty}}
    \end{subfigure}
    \caption{A clique $A_i$ in a graph with $\Delta = 9$ and $k = 3$. On \cref{fig:swap}, the clique $A_i$ contains one unhappy vertex, in blue. It cannot swap its color with the purple or green vertex because they have a blue external neighbor. It cannot swap its color with the red vertex because it has a red external neighbor. The remaining vertices (in cyan and yellow) make the $\swap$ set for this unhappy vertex. On \cref{fig:empty}, the $\swap$ set is empty because too many vertices on the outside adopted the blue color.}
    \label{fig:clique}
\end{figure}
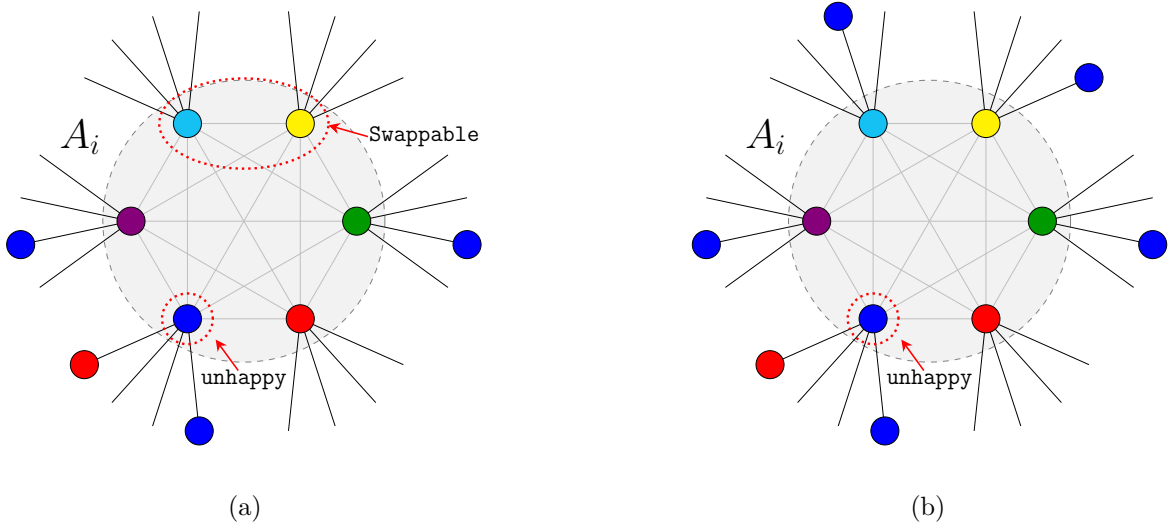

To circumvent this issue, Molloy and Reed constrain all earlier steps of the coloring algorithm so that the coloring outside each clique behaves sufficiently randomly.
Concretely, this means that for every clique and every color, only a small number of vertices are forbidden from using that color.
We formalize this requirement as the \emph{color coverage (henceforth CC) property}. Ensuring that CC holds until coloring the dense parts of the graph is essential to completing the coloring. See \Cref{ssec:CC} for the precise definition of CC. 

\paragraph{\texorpdfstring{$\Pi$-ous subgraphs}{Pious subgraphs}.}
The CC condition can be maintained if all earlier coloring steps are sufficiently random.
If each dense node has at most $U$ neighbors outside its clique, and each of them picks one random color out of $U\cdot \Delta^{0.22}$ colors, then few of these outside neighbors pick the same color, thereby maintaining the CC condition.
Thus, the key to ensuring the CC condition in each step of the algorithm is to maintain large lists of available colors $L(v)$ (colors that do not currently appear in the neighborhood of $v$) compared to the number of outside neighbors of the cliques. We capture this condition through \Cref{prop:Pi}, which we state in a simplified form as follows:

\renewcommand{\theourProp}{$\Pi$ (formalized in \cref{sec:proof-thm-main})} \begin{prop}
An induced subgraph~$H$ satisfies Property~$\Pi$ if 
\begin{enumerate}[label=(\alph*)]
    \item each uncolored dense vertex has at most~$U$ outside neighbors, and
    \item each vertex~$u \in H$ has a list $L(u)$ of at least $d_H(u) + U\cdot \Delta^{0.22}$ available colors. 
\end{enumerate}
\end{prop}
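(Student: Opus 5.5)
Property~$\Pi$ is a definition rather than an assertion, so there is nothing to derive. What I would do here is confirm that it is well posed, and record the one structural fact that makes it usable later: it is stable under the operations the algorithm performs. Throughout I take $d_H(u)$ to be the number of \emph{uncolored} neighbors of $u$ inside $H$. I take $L(u)$ to be the colors of the palette $\{1,\dots,c\}$ not used by any colored neighbor of $u$ in $G$. For condition~(a), "outside neighbors" means neighbors outside the clique of the dense vertex. With these readings, both (a) and (b) are local conditions on the current partial coloring, checkable in $O(1)$ rounds. This is what the distributed algorithm needs.

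The first check is heredity. Let $H' \subseteq H$ be an induced subgraph and keep the same partial coloring. Then $d_{H'}(u) \le d_H(u)$ and $L(u)$ is unchanged. Condition~(a) concerns only the dense vertices themselves and their cliques. Hence $H'$ still satisfies~$\Pi$. This lets later sections split a $\Pi$-ous subgraph into pieces, for example after shattering, without re-verifying the property.

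The second check is the slack monotonicity that the paper advertises as its central invariant. Suppose some vertex $w \in H$ receives a color $x \in L(w)$, and consider any remaining uncolored $u \in H$. If $w \notin N(u)$, then neither $L(u)$ nor $d_H(u)$ changes. If $w \in N(u)$, then $|L(u)|$ drops by at most one, since only $x$ can leave the list, and $d_H(u)$ drops by exactly one. So the quantity $|L(u)| - d_H(u) \ge U\cdot\Delta^{0.22}$ is preserved. Condition~(a) is unaffected, because coloring can only reduce the set of uncolored dense vertices.

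The delicate point is that colorings happening \emph{outside} $H$ are not covered by this argument. A neighbor outside $H$ that gets colored removes a color from $L(u)$ without decreasing $d_H(u)$. The definition is therefore meant to be applied to the current subgraph being colored, with vertices outside it held fixed. I would state this convention explicitly when the property is formalized in \cref{sec:proof-thm-main}.
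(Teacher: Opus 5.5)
You are right that this is a definition with nothing to prove. The paper gives no proof here and formalizes the box as \cref{prop:Pi} in \cref{sec:proof-thm-main}. Your slack-monotonicity check is the remark the paper makes right after the box. Your caveat about colorings outside $H$ matches how the paper uses the property: it certifies $\Pi$ for each piece at the moment that piece is colored. For example, \cref{lem:structuralDecomposition}(\ref{part:slack-BH}) holds ``regardless of how $S$ is colored,'' and \cref{lem:whittling}(2) holds for any extension of the coloring to $S_1$.

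What needs fixing is the reading of (a) you commit to. In the formalization, (a) says that every uncolored vertex of a clique $A_i$ has at most $U$ neighbors in $H - \All_i$. So it is relative to the subgraph $H$ being colored and excludes $\All_i$; it does not bound all neighbors outside the clique. This matters because (a) enters the analysis as the set-size parameter $Q=U$ in \cref{lem:bnd-notbig}. There the relevant vertices are the neighbors of $A_i$ that are about to be colored, hence lie in $H$, and that count toward CC, hence lie outside $A_i\cup\All_i\cup\Big_i^+$.

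Under your reading, $F[S_2]$ could not in general be $\Pi$-ous. Vertices of cliques in $A_H$ may have up to $10^8\sqrt{\Delta}$ external neighbors. Meanwhile (b) for $S_2$ can only rely on slack $0.05\sqrt{\Delta}$, which forces $U\le 0.05\Delta^{0.28}$. The point of the degree splitting in \cref{lem:ds,lem:whittling} is that a dense vertex has only $O(\Delta^{1/4})$ neighbors \emph{in $S_2$}, which is what permits $U_2=\alpha\Delta^{1/4}\log\Delta$. Likewise, for $B_L$ the paper takes $U_L=30\Delta^{1/4}$ from the external-degree bound of \cref{def:externalDegree}, which does not count $\All_i$, even though $\All_i\subseteq B_H\cup B_L$.

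With the correct reading your heredity claim still holds, but the justification changes. Condition (a) does depend on $H$, and it is preserved under passing to an induced subgraph $H'\subseteq H$ because $N(v)\cap V(H')\subseteq N(v)\cap V(H)$.
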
 
Intuitively, Property~$\Pi$ guarantees that every vertex has enough slack compared to the number of neighbors outside its dense structure.

Subgraphs satisfying~\Cref{prop:Pi} are called \emph{$\Pi$-ous} (pronounced \emph{pi}ous). 
We can color a $\Pi$-ous subgraph $H$ while maintaining CC by a combination of iterated random color trials (\RCT) and multi color trials (\MCT). Condition (b) of \Cref{prop:Pi} ensures that the list of available colors for each vertex of $H$ has $U\cdot \Delta^{0.22}$ \emph{slack} that remains and ensures a sufficiently large list regardless of how the coloring of $H$ proceeds. 
Slack does not decrease because coloring a neighbor removes both one competing vertex and one available color, leaving their difference—the slack—unchanged or increased.

Property $\Pi$ is strictly stronger than assumptions of \cite{MR14}. Their assumption (8.2) on the list size involves a max of two terms, not the sum, which means that slack is only guaranteed when degrees are high. 

\subsection{Overview of the Coloring Algorithm}

Our algorithm centers on identifying $\Pi$-ous subgraphs that can be colored while maintaining \cref{def:notbig}, thereby ensuring that the dense vertices can be colored last.
We build on the structural decomposition of Molloy and Reed, which nearly\footnote{The subgraphs induced by $B_H$ and $B_L$ are $\pious$ by construction, however the subgraph induced by $S$ is not $\pious$ right away; see \cref{sec:sparse} for more details.} partitions the sparser regions of the graph into $\Pi$-ous subgraphs \cite{MR14}.
More precisely, they reduced the problem of finding a $c$-coloring of $G$ to the problem of finding a $c$-coloring of a graph $F$ with a very specific structure. Bamas and Esperet also showed that $F$ can be constructed from $G$ in $O(1)$ rounds in the \LOCAL model, and that one can efficiently recover a $c$-coloring of $G$ from a $c$-coloring of $F$ in the distributed setting \cite{BamasE19}. 

\begin{abox}
    \paragraph{Structural Decomposition (formalized in \cref{lem:structuralDecomposition}).}
    The vertices of the graph $F$ are partitioned into five sets as follows:    
    \begin{itemize}
        \item \textbf{Sparse vertices ($S$):} vertices whose neighborhoods are far from being cliques, either having noticeably lower degree or many missing edges among their neighbors,
        \item \textbf{High-slack intermediates ($B_H$):} vertices outside the cliques that have many neighbors in cliques,
        \item \textbf{Cliques with large external degree ($A_H$):} true cliques of size between $c-\Theta(\sqrt{\Delta})$ and $c$ and whose vertices have at most $O(\sqrt{\Delta})$ external neighbors,
        \item \textbf{Low-slack intermediates ($B_L$):} vertices that still have many neighbors in cliques, but not as many as the vertices in $B_H$,
        \item \textbf{Cliques with small external degree ($\AL$):} the final cliques to be colored, each with even smaller external degree bounded by $O(\Delta^{1/4})$.
    \end{itemize}
\end{abox}

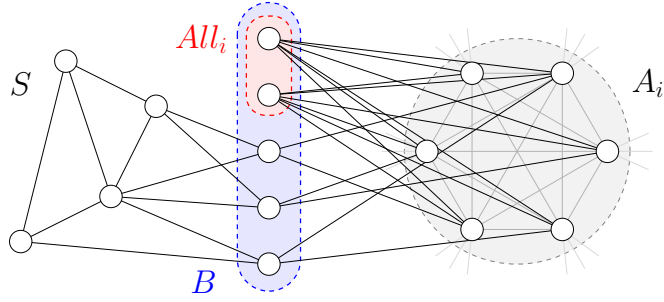
\begin{figure}[ht!]
    \begin{center}
        \scalebox{0.6}{\begin{tikzpicture}[scale=1]
\def\radius{2}
    \def\outlen{2}
    \def\numnodes{6}

\foreach \i in {1,...,\numnodes}{
        \coordinate (P\i) at ({360/\numnodes * (\i - 1)}:\radius);
    }

\def\xleft{-5.5}
    \foreach \j [count=\k from 1] in {2.5, 1.25, 0, -1.25, -2.5}{
        \coordinate (L\k) at (\xleft, \j);
    }

\coordinate (S1) at (-8, 1.0);
\coordinate (S2) at (-10, 2);
\coordinate (S3) at (-9, -1);
\coordinate (S4) at (-11, -2.0);

\draw[gray, dashed, very thin, fill=gray!10] (0,0) circle (2.5);

\foreach \i in {1,...,\numnodes}{
        \foreach \j in {1,...,\numnodes}{
            \ifnum\i<\j
                \draw[gray!60, very thin] (P\i) -- (P\j);
            \fi
        }
    }

\foreach \i in {1,...,\numnodes}{
    \pgfmathsetmacro{\angle}{360/\numnodes * (\i - 1)}
    \foreach \shift in {-24, 0, 24}{
        \draw[gray!30, thin] (P\i) -- ++({\angle+\shift}:{0.5*\outlen});
    }
}

\draw[blue, thick, dashed, rounded corners=20pt, fill=blue!10]
        (-6.2, 3.3) rectangle (-4.8, -3.1);

\draw[red, thick, dashed, rounded corners=12pt, fill=red!10]
        (-6.0, 3) rectangle (-5.0, 0.8);

\foreach \k in {1,2}{
        \foreach \i in {1,...,\numnodes}{
            \draw[black, thin] (L\k) -- (P\i);
        }
    }

\foreach \i in {2,5}{
        \draw[black, thin] (L3) -- (P\i);
    }
\foreach \i in {4,1}{
        \draw[black, thin] (L4) -- (P\i);
    }
\foreach \i in {2,6}{
        \draw[black, thin] (L5) -- (P\i);
    }
\draw[black, thin] (S1) -- (L3);
\draw[black, thin] (S1) -- (L4);
\draw[black, thin] (S3) -- (L3);
\draw[black, thin] (S3) -- (L5);
\draw[black, thin] (S3) -- (L4);

\draw[black, thin] (S4) -- (L5);
\foreach \i in {1,...,\numnodes}{
        \node[circle, draw=black, fill=white, inner sep=5pt] at (P\i) {};
    }

\foreach \k in {1,...,5}{
        \node[circle, draw=black, fill=white, inner sep=5pt] at (L\k) {};
    }

\foreach \i/\j in {1/2,2/3,3/4,1/3,2/4}{  \draw[black, thin] (S\i) -- (S\j);
}

\foreach \k in {1,...,4}{
    \node[circle, draw=black, fill=white, inner sep=5pt] at (S\k) {};
}
\node at (2.9,1.5) {\huge$A_i$};
    \node at (-7,2.5) {\color{red}\huge$All_i$};
    \node at (-6.95,-2.9) {\color{blue}\huge$B$};
    \node at (-11,1.5) {\huge$S$};
\end{tikzpicture}
 }        
    \end{center}
    \caption{A schematic illustration of the structural decomposition by Molloy and Reed. The vertices of $S$ are loosely connected, while the set $A_i$ is a clique. The intermediate set $B$ is too highly connected to $A$ to be considered sparse, but not densely enough to be itself part of the clique. For the coloring, we further divide the set $B$ into two subsets, $B_H$ and $B_L$ (see \Cref{lem:structuralDecomposition}); however, the figure shows $B$ as a single set for simplicity. Note that some vertices of $B$, here in the set $All_i$, can be connected to all the vertices of a clique (see \Cref{lem:structuralDecomposition} for details on the set $All_i$).}
    \label{fig:decomposition}
\end{figure}

While we follow the same overall structure as Molloy and Reed’s algorithm, the internal components differ significantly.
Most notably, the fact that we aim to color $\pious$ subgraphs means that we can use a simpler analysis.
First, we observe that $B_H$ and $B_L$ are $\pious$ by definition; we then provide a simple method to partition $S$ into two $\pious$ subgraphs; and finally design a highly parallelizable approach for coloring the cliques.
Coloring $S,B_H$ and $B_L$ is one of the most technically involved parts in \cite{MR14}. We shortly detail each of these steps.

\paragraph{Coloring nodes in $B_H$ and $B_L$.} By construction, nodes in $B_H$ and $B_L$ have many uncolored neighbors in cliques, which provides them with temporary slack; as a result, they satisfy \cref{prop:Pi} purely due to structural properties.
It is crucial that the vertices in $B_L$ are colored after the cliques in $A_H$. 
Although lists of vertices in $B_L$ are guaranteed to have size $\Omega(\sqrt{\Delta})$, this is insufficient for $B_L$ being $\Pi$-ous if $A_H$ is still uncolored.

\paragraph{Coloring the Sparse Vertices in $S$.}
It is well known that a single round of random color trials (\RCT) gives each vertex $v \in S$ slack $\Omega(\sqrt{\Delta})$. This process is called slack generation. Unfortunately, this is still insufficient for~\Cref{prop:Pi} requiring slack $U\cdot \Delta^{0.22}$ where $U=\Theta(\sqrt{\Delta})$.
We therefore split the remaining uncolored vertices~$S'$ into two groups $S_1$ and $S_2$. Coloring~$S_1$ first gives their vertices extra temporary slack (from neighbors in $S_2$), and when we color $S_2$ we can rely on the fact that dense vertices have few neighbors in $S_2$ making it easier to satisfy \Cref{prop:Pi}.

\paragraph{Coloring of Cliques.}
To achieve a sublogarithmic distributed runtime, we color cliques in $A_H$ and $\AL$ respectively in parallel.
For each clique~$A_i$, we assign a random permutation of the colors to its vertices avoiding any conflicts inside the clique.
With this process, only $O(\sqrt{\Delta})$ vertices per clique conflict with external neighbors.
Each such \emph{unhappy} vertex~$v$ seeks a  partner~$u$ in~$A_i$ so that both become happy if they swap their colors.
\cref{def:notbig} crucially guarantees a large supply of valid swap candidates.

To execute swaps in all cliques concurrently, we first independently downsample the candidate sets and remove those causing \emph{external conflicts}, i.e., swaps that cannot be executed safely in parallel with sampled candidate swaps in adjacent cliques. 
We show that, after this pruning step, sufficiently many candidates remain to also avoid \emph{internal conflicts}.

We set up a bipartite graph between unhappy vertices and their remaining candidates and show that it satisfies Hall’s condition, ensuring a perfect matching between unhappy nodes and swap candidates. We can then perform the corresponding swaps simultaneously without interference within the clique and due to the prior pruning also without  interference between different cliques. 

\subsection{Smaller Degrees and the \lovasz Local Lemma}
\label{sec:tech-intro-LLL}

All steps of our coloring algorithm are randomized and have a local failure probability of the form $\exp(-\Delta^\eps)$ for some universal constant $\eps \in (0,1]$.
Hence, when $\Delta \gg \log^{1/\eps} n$, our algorithm succeeds with high probability. 
However, when $\Delta \leq \poly(\log n)$, local failures can no longer be avoided.

The main tool at our disposal to handle local failures is the \lovasz Local Lemma. Since we aim for a sublogarithmic runtime, we cannot rely on the general framework by Moser and Tardos \cite{MoserTardos10,CPS17}. 
The only general approach currently known to achieve sublogarithmic round complexity is the \emph{shattering framework} \cite{BEPSv3}.
Unfortunately, there exists no general-purpose distributed LLL solver yielding sublogarithmic time on graphs of all degrees, 
in particular in the challenging regime where degrees are sublogarithmic but exceed $\poly(\log\log n)$.
\footnote{We cannot provably rule out that some of our LLLs could be solved via the sublogarithmic-time algorithm of \cite{GHK18}, but we do not know how to prove that they fall into the handled class of LLLs either.} 
For very low degree graphs, sublogarithmic-time algorithms are known~\cite{FG17,Davies23}. We therefore design dedicated constructions tailored to the different degree regimes.

\paragraph{The Shattering Framework.}
In this framework, a fast, randomized \emph{pre-shattering phase} resolves most of the graph -- for instance, fixing the colors of the majority of vertices -- leaving only small residual components of size $\poly(\log n)$.  
A subsequent \emph{post-shattering phase} then deterministically completes the solution by solving a smaller residual LLL problem, typically within $\poly(\log\log n)$ rounds.  

As one randomly colors vertices in the pre-shattering phase, one may be forced to uncolor certain vertices. For instance, if a color appears too frequently around a clique and \cref{def:notbig} is violated.
This may interfere with the progress measures of our algorithm, which are typically formulated as additional bad events.
For instance, uncoloring vertices may violate an event ensuring a constant-factor decrease in the uncolored degree. Such events are usually resolved during the post-shattering phase.

The key challenge, especially for non-greedy problems such as ours, is to design the pre-shattering phase in such a way that the post-shattering instance (1) is solvable and (2) has polylogarithmic size components, meaning that retractions (e.g., uncolorings) do not percolate to the large parts of the graph. 
We introduce several methods to deal with these challenges. 

\begin{compactitem}
    \item \emph{Fresh budgets} are used in the pre-shattering and the post-shattering whenever we do not want to see too many of something. By giving half of the total budget to each of the shattering steps, we remove the dependencies between the two phases.
    
    \item \emph{Marginal events} are added in the post-shattering phase to account for previously avoided bad events that could be reactivated by newly assigned variables.
    
    \item \textit{Guard events} impose additional constraints to preserve sufficient randomness for the post-shattering phase.
    They prevent the pre-shattering process from fixing too many variables or overcommitting to specific random outcomes, which would reduce the available randomness needed later. 

    \item \emph{Palette splitting} further separates randomness between pre- and post-shattering: one part of the color palette is used in pre-shattering, the other in post-shattering. Despite this restriction, the probabilistic guarantees of each step continue to hold.
\end{compactitem}

\paragraph{Example 1: Subsampling Swap Candidates.}
To illustrate those two ideas, consider the LLL that governs subsampling of swap candidates in cliques.  
Each unhappy vertex~$v$ independently samples potential partners with a fixed probability; with overwhelming probability, every vertex obtains sufficiently many candidates.  
The subtlety lies in showing that, after pruning candidates involved in conflicting swaps across adjacent cliques, enough valid partners remain.  
This is achievable, and the shattering analysis ensures that the components where the sampling fails are small.  

Since we only pick candidates in pre- and post-shattering using separate budgets, this increases the number of neighbors with a given color by a factor of at most two compared to having one single budget for both steps. Conveniently, the same analysis bounds the probability of breaking \cref{def:notbig} for the pre- and post-shattering.

Removing candidates does not make other candidates bad, however, resampling candidates to repair local failures can affect neighboring cliques, and potentially undo the progress made during pre-shattering.  To resolve this, we add an event in post-shattering for each clique adjacent to some other clique that resamples its candidates in post-shattering.

\paragraph{Example 2: Slack Generation.} To illustrate the latter two techniques, let us look at the slack generation LLL.  
Sparse nodes have many non-edges in their neighborhood, i.e., pairs of non-adjacent nodes. One can show that one round of random color trials provides a node with slack proportional to the number of non-edges by same-coloring the \emph{endpoints} of many of these non-edges. One can show that receiving sufficient slack forms an LLL. 

In a pre-shattering phase based on randomly coloring vertices, a node may also obtain too little slack. Now, if all of its neighbors are already colored,  it is impossible for it to obtain slack in the post-shattering phase. Uncoloring its neighbors would remove the slack of other nodes and in fact this may percolate  through the graph. Instead, we use a guarding event that ensures that in the pre-shattering phase few enough  neighbors of each node participate in the coloring procedure ensuring that enough neighbors (and also non-edges between them) remain uncolored for a tentative post-shattering phase. Of course, this event can also fail but one can verify that the guarding event prevents percolation of uncolorings. 

If enough neighbors (and also non-edges between them) remain uncolored in post-shattering, we can use the same random process with a disjoint palette to decouple the randomness of the pre-shattering and post-shattering phase. 

\paragraph{The Quest for a Sublogarithmic LLL Framework.}
We solve several LLLs by a combination of the mitigations above. Developing a general black-box framework that can solve all distributed LLLs of this form in sublogarithmic time remains an open challenge.  
Determining precisely which classes of LLL instances admit such runtimes is, in our view, one of the central open problems in the distributed complexity of the \lovasz Local Lemma.
\begin{openproblem}
Does there exist a sufficiently large constant $c$ such that all LLLs with dependency degree $d$ and  local failure probability $p \leq d^{-c}$ admit a $\poly(\log\log n)$ round algorithm? 
\end{openproblem}

\begin{openproblem}
Does there exist a constant $d_0$ such that all LLLs with dependency degree $d\geq d_0$ and  local failure probability $p \leq \exp{-\Omega(d^{\eps})}$ for some constant $\eps\in (0,1)$ admit a $\poly(\log\log n)$ round algorithm?
\end{openproblem}

\section{Preliminaries}
\label{sec:prelim}

\paragraph{Graphs.}
For a graph $G=(V,E)$, we denote by $\Delta(G)$ its maximum degree. For a set $S \subseteq V$, denote by $G[S]$ the subgraph of $G$ induced by $S$. The neighborhood of a vertex in a graph $G$ is denoted by $N_G(v) = \set{ u \in V \mid \set{u,v}\in E(G)}$, and, by extension, for any $X \subseteq V$, we write $N_G(X) = \bigcup_{v \in X} N(v)$ for the set of vertices adjacent to some vertex in $X$. We also use $N_G^{\leq t}(X)$ to denote the set of vertices within distance $t$ of a vertex in $X \subseteq V$. 
For a vertex $v\in G$ we denote the  degree of $v$ into $H$ by $\deg_H(v)=|N_G(v)\cap V(H)|$. When graphs are clear from the context we omit the respective indices. For $S \subseteq V$, we write $G - S$ for the graph $G[V \setminus S]$ induced by the vertices outside of $S$. For a vertex $v$ and set $S \subseteq V$, define $\dist(v, S) = \min_{u\in S} \dist(v, u)$. If $\mathcal{A}$ is a collection of sets of vertices, then let $\dist(v, \mathcal{A}) = \min_{S \in \mathcal{A}} \dist(v, S)$.

\paragraph{Colorings.}
For an integer $c \geq 1$ we denote $[c]=\{1,\ldots,c\}$. A partial $c$-coloring is a function $\phi : V \to [c] \cup {\bot}$ such that for all $\set{u,v}\in E$ are such that $\phi(u) \neq \phi(v)$ unless $\phi(u) = \bot$ or $\phi(v) = \bot$. The domain of $\phi$ is the set of colored vertices: $\dom\phi = \set{ u \in V : \phi(u) \neq \bot}$. Given a coloring $\phi$, a color is \emph{available} to $v \in V$ if none of its neighbors have that color under $\phi$. The set of available colors, or palette, is denoted by $L(v) = L_\phi(v) = [c] \setminus \set{ \phi(u) \mid u\in N(v) }$.
The \emph{slack} of a vertex $v$ with respect to a coloring $\phi$ and induced subgraph $H \subseteq G$ is the difference between the number of available colors and the number of uncolored neighbors in $H$, i.e., the slack is $s_{H,\phi}(v) =|L_\phi(v)| - |N_H(v) \setminus \dom\phi|$. 
We emphasize that the slack is always defined with respect to a subgraph $H$ and that it does not decrease as we color additional vertices of $H$.

\paragraph{Probabilities.}
When we say that an event holds ``with high probability in $n$'', abridged to \whp, we mean that it holds with probability at least $1 - n^{-c}$ for some desirably large constant $c > 0$. We often omit to mention $n$ explicitly as every such statement in this paper holds \whp in $n$, where $n$ is the number of vertices in the \LOCAL network.

\paragraph{Assumption on $\Delta$.}
As stated in \cref{thm:main}, we assume that $\Delta \geq \Delta_0$ where $\Delta_0$ is a sufficiently large universal constant; we frequently use this property throughout the paper to simplify inequalities. 

\subsection{\lovasz Local Lemma}

Let $(\Omega_i, \mathbb{P}_i)_{i=1}^n$ be probability spaces and $X_i : \Omega_i \to \mathbb{R}$ a family of mutually independent random variables for all $i\in [n]$. 
Let $\mathcal{B} = \{B_1, B_2, \dots, B_m\}$ be a family of undesirable (\emph{bad}) events in $\Omega = \prod_{i=1}^n \Omega_i$ and, for all $i\in[n]$, let $\var(B_i) \subseteq [n]$ be sets such that each indicator random variable $1(B_i)$ is a function of $\set{ X_j : j \in \var(B_i) }$.
Two events $B_i, B_j$ are adjacent in the \emph{dependency graph} if 
$\mathrm{var}(B_i) \cap \mathrm{var}(B_j) \neq \emptyset$. 
The classical \lovasz Local Lemma (LLL) \cite[Chapter 5]{AlonSpencerBook} states that if there exist parameters $p < 1$ and $d$ such that every event $B_i$ satisfies 
$\Prob{ B_i } \le p$, and
$ep(d+1) \le 1$, where $d$ is the maximum degree of the dependency graph, 
then there exists an assignment of the variables $X_i$ that avoids all events in~$\mathcal{B}$. 

In the \emph{distributed \lovasz Local Lemma} \cite{CPS17}, 
the input graph is the dependency graph: vertices are events of $\mathcal{B}$ and two events are connected by an edge iff they share a variable.
The objective is for all nodes to collaboratively find an assignment of the variables that avoids every event in~$\mathcal{B}$, 
using as few communication rounds as possible. In most applications, the input graph is not the dependency graph itself, but a round of communication on the dependency graph can be simulated by $O(1)$ rounds of \LOCAL on the input graph.

In the context of our coloring algorithms, the random variables correspond to each node's random choices such as its selected candidate color or whether it participates in a given trial. The bad events are the undesired outcomes of these choices, for example, for each clique there is a bad event that holds if its color coverage property is violated. 

\begin{theorem}[Deterministic LLL in \LOCAL, \cite{RG20,GG24}]
    \label{thm:deterministicLLLLOCAL}
    There is a constant $\eps > 0$ for which the following holds.
    There is a deterministic \LOCAL algorithm for the constructive Lovász Local Lemma with $n$ events under criterion $epd(1+\eps)<1$ that runs in $O(\log^* s)+ \Ot{\log^3 n}$ rounds if node IDs are from a space of size $s$. 
\end{theorem}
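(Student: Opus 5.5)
This theorem is cited from \cite{RG20,GG24} rather than proved here, so what follows sketches how it is obtained from those works.

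The plan is to derandomize a fast randomized LLL algorithm with the network-decomposition-based derandomization framework. First, run Linial's color reduction on the dependency graph (simulated in $O(1)$ rounds per step). This turns IDs from a space of size $s$ into IDs from a space of size $\poly(n)$, or in fact $\poly(\Delta)$-colorings of suitable powers of the graph, in $O(\log^* s)$ rounds. This step accounts for the additive $O(\log^* s)$ term; from then on we may assume IDs have $O(\log n)$ bits.

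Second, compute a network decomposition of a power $G^{r}$ of the dependency graph with $O(\log n)$ color classes and clusters of weak diameter $\Ot{\log n}$, deterministically. The choice of $r$ depends on the locality of the derandomized procedure. For this we invoke the Rozhoň--Ghaffari construction in the improved form of Ghaffari--Grunau, which runs in $\Ot{\log^2 n}$ rounds.

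Third, take the randomized algorithm we want to derandomize. A natural choice is one Moser--Tardos-style step, or a single round in which each variable is sampled independently. We replace the sampling by the method of conditional expectations, processing the decomposition color class by color class. Inside each cluster, the cluster leader fixes the variables of its cluster one at a time (or bit by bit), so that a pessimistic estimator does not increase. A natural estimator is $\sum_{B}\Pr[B \mid \text{fixed variables}]$ weighted appropriately, or a potential counting events whose conditional probability exceeds a threshold $p'$ slightly larger than $p$. Clusters of the same color are nonadjacent in $G^r$, so they can fix their variables in parallel without interfering. Each color class costs $\Ot{\log n}$ rounds for gathering the cluster and aggregating. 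Over $O(\log n)$ classes this gives $\Ot{\log^2 n}$ rounds per derandomized phase.

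The main obstacle is that one round of conditional expectations does not avoid all bad events. It only guarantees that the criterion is preserved for the conditional probabilities on the residual instance, at a multiplicative loss. This is exactly why the slack $epd(1+\eps)<1$ is required. Instead we run the framework for $O(\log n)$ phases, in the style of the ``partial assignment preserving the LLL criterion'' approach. In each phase a constant fraction of the variables (or of the remaining events) is fixed, while every remaining event's conditional probability grows by at most a factor $(1+\eps/\log n)$ or so. The total loss over all phases is then bounded by $1+\eps$. Making this precise is the delicate part: the events must be retained with their conditional probability bounded so that $e p' d<1$ still holds at every stage. Multiplying $O(\log n)$ phases by $\Ot{\log^2 n}$ rounds per phase gives the claimed $\Ot{\log^3 n}+O(\log^* s)$.
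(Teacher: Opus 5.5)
\paragraph{Gap in the multi-phase step.}
Your outer bookkeeping is in the right spirit: reduce IDs in $O(\log^* s)$ rounds, compute one network decomposition via \cite{GG24}, apply conditional expectations color class by color class, and pay $O(\log n)\cdot\Ot{\log^2 n}$ in total. But the step that has to carry the proof is missing, and as you state it, it cannot work.

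You need $O(\log n)$ phases of partial fixing in which every surviving event's conditional probability grows by at most a factor $1+\eps/\log n$. The method of conditional expectations does not give this. It only keeps one aggregate quantity (a pessimistic estimator) from increasing. To end with zero failures, that quantity must start below $1$, and neither of your candidates does:
\begin{itemize}
\item $\sum_B \Pr[B]$ can be as large as $np \gg 1$ in the LLL regime.
\item After a random partial assignment, the number of events whose probability exceeds $p(1+\eps/\log n)$ can be a constant fraction of all events.
\end{itemize}
Forcing a simultaneous per-event bound instead, via Markov and a union bound over the events containing a variable, costs a factor of order $d$, not $1+o(1)$.

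In fact, no choice achieves your invariant even for a single variable. Let $B_1=\{x\in S\}\cap C_1$ and $B_2=\{x\notin S\}\cap C_2$, where $C_1,C_2$ are events on other variables and $S$ is half the range of $x$. Every value of $x$ doubles $\Pr[B_1]$ or $\Pr[B_2]$, and $x$ has to be fixed in some phase. So the claim that ``the criterion is preserved on the residual instance with total loss $1+\eps$'' is unsupported. Near-tight criteria such as $epd(1+\eps)<1$ are only known to be derandomizable through the Moser--Tardos analysis.

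\paragraph{How the paper avoids this.}
The paper needs no per-phase invariant: it invokes the derandomization framework of \cite{RG20,GHK18,GKM17} once, for the Moser--Tardos algorithm \cite{MoserTardos10} as a whole.
\begin{itemize}
\item Thanks to the slack $\eps$, Moser--Tardos succeeds w.h.p.\ after $T=O(\log n)$ resampling iterations, so the expected number of events still violated at the end is below $1$.
\item The framework fixes the randomness by conditional expectations on this single quantity. Its terms are local, so this is a deterministic procedure of locality $O(T)$, and its final outcome has no violated event.
\item Executing it on a network decomposition of the $O(T)$-th power of the dependency graph, computed with \cite{GG24}, costs $O(\log n)\cdot\Ot{\log^2 n}=\Ot{\log^3 n}$ rounds.
\end{itemize}
The slack is thus used only to bound the running time of one randomized algorithm that succeeds w.h.p., not to absorb losses across phases. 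Replacing your multi-phase scheme by this single derandomization closes the gap.
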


\noindent
\Cref{thm:deterministicLLLLOCAL} is proven by using the powerful general derandomization framework of \cite{RG20,GHK18,GKM17} for the algorithm of Moser-Tardos \cite{MoserTardos10} and using the fastest $\Ot{ \log^2 n }$ algorithm for computing the required network decompositions \cite{GG24}.

\subsection{Our Shattering Framework}
\label{sec:our-shattering-framework}

Throughout this paper, we use LLL instances to compute certain good vertex labels, e.g., colors.
To solve LLL instances in sublogarithmic time, we use the shattering technique \cite{BEPSv3}. We describe in this section the general approach that we follow throughout this paper.

\paragraph{Sublogarithmic LLL Algorithm.}
Every LLL that we solve in this paper follows the same general approach: each vertex $v$ of the graph to be colored has a random variable $X_v$ (typically a random color) and the vertex is involved in $\poly(\Delta)$ many events. Each event occurs with probability at most $\exp(-\Delta^{1/40})$, and hence when $\Delta \geq (\log n)^{50}$, it suffices to sample the $X_v$ uniformly to obtain a globally satisfying assignment with high probability. 

When $\Delta$ is small, we split the random process as follows:
\begin{enumerate}[label=\Roman*:]
    \item In a \emph{pre-shattering} random process, each vertex $v$ samples its random variable $X_v$ (typically a random color). We have a collection $\mathcal{A}$ of bad events that can then occur.
    \item We \emph{retract} the variables held by the vertices $v$ within $O(1)$ hops in $G$ (the variable graph) of a vertex used by an occurring event in $\mathcal{A}$. The exact meaning of retracting $X_v$ depends on the concrete LLL instance but generally means that the value of $X_v$ will be ignored for the final output. 
    \item In a second \emph{post-shattering} random process, we sample variables $Y_v$ for a subset of the vertices that includes at least all the retracted vertices. For this step, we also have a set of bad events $\mathcal{B}$ that must form an LLL. A satisfying assignment of the $Y_v$ avoiding all bad events of the LLL is computed using the deterministic algorithm of \cref{thm:deterministicLLLLOCAL}.
\end{enumerate}
Each vertex $v$ produces its final value (e.g., color) by combining its $X_v$ and $Y_v$ values, usually as the value of $Y_v$ if it was defined and $X_v$ otherwise.  For this scheme to succeed and run fast, it is essential that:
\begin{enumerate}
    \item Every event in $\mathcal{A}$ occurs with low probability,
    \item The connected components of the dependency graph of $\mathcal{B}$ have $\poly(\log n)$ size, and
    \item The events $\mathcal{B}$ form an LLL.
\end{enumerate}
Our analysis focuses on proving those three points.
In general, the  round complexity of our pre-shattering and retraction phases is constant and the overall round complexity is dominated by applying \Cref{thm:deterministicLLLLOCAL} on the connected components of the dependency graph of $\mathcal{B}$. The runtime of $\poly(\log\log n)$ then follows from point 2, as the components have size $N = \poly(\log n)$ and \cref{thm:deterministicLLLLOCAL} ends after $\poly(N)$ rounds. 
To prove point 2, we use a shattering argument.

\paragraph{The Shattering Lemma.}
The following lemma, initially introduced by \cite{BEPSv3}, shows that when vertices of a graph of maximum degree $\Delta$ get selected with probability $1/\poly(\Delta)$, the connected components of the selected vertices are small. Recall that this is only used when $\Delta \leq \poly(\log n)$, so connected components have size at most $\poly(\log n)$. We use the following general purpose shattering statement.

\begin{lemma}[Shattering Lemma~\cite{FG17}]\label{lem:Shattering}
Let $G=(V, E)$ be a graph with maximum degree $\Delta$. Consider a process that generates a random subset $B \subseteq V$ such that $\Prob{ v \in B } \leq \Delta^{-C_1}$, for some constant $C_1 \geq 1$, and such that the random variables $\I{ v\in B }$ depend only on the randomness of nodes within at most $C_2$ hops from $v$, for all $v\in V$, for some constant $C_2\geq 1$.
Then, for any constant $C_3\geq 1$, satisfying  $C_1>C_3+ 4C_2 + 2$,  we have that any connected component in $G[B]$ has size at most $O(  \Delta^{2C_2}\log_{\Delta} n)$ with probability at least $1- n^{-C_3}$.
\end{lemma}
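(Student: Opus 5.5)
The plan is a union bound over connected vertex sets of size $t$ in the power graph $G^{2C_2+1}$, where I take $t = c\,\Delta^{2C_2}\log_\Delta n$ for a large constant $c$. The whole argument rests on one reduction: a large connected component of $G[B]$ forces a large set of vertices in $B$ whose membership events are mutually independent.

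First, suppose some connected component of $G[B]$ has more than $N = c\,\Delta^{2C_2}\log_\Delta n$ vertices. I greedily pick vertices $w_1, w_2, \dots$ from the component so that:
\begin{itemize}
\item pairwise distances in $G$ are at least $2C_2+1$;
\item the chosen set is connected in $G^{2(2C_2+1)}$.
\end{itemize}
Each pick can exclude at most $\Delta^{2C_2}+1$ vertices of the component, namely those in its $2C_2$-ball. Since the component is connected in $G$, the greedy procedure can always continue by choosing an unexcluded component vertex at distance exactly $2C_2+1$ from the current selected set. This yields a set $W$ of size at least $t = \lfloor N/(\Delta^{2C_2}+1)\rfloor \ge \frac{c}{2}\log_\Delta n$. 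Moreover, $W$ is connected in $H := G^{4C_2+2}$ and $2C_2+1$-independent in $G$.

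Second, fix any such $W$. Membership $v \in B$ depends only on randomness within $C_2$ hops of $v$. Because the $w_i$ are at distance greater than $2C_2$, these radius-$C_2$ neighborhoods are disjoint, so the events $\{w_i \in B\}$ are independent. Hence
\[
\Prob{W \subseteq B} \le \Delta^{-C_1 t}.
\]

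Third, I count candidate sets. The maximum degree of $H$ is at most $\Delta^{4C_2+2}$. The number of connected vertex sets of size $t$ in $H$ containing a fixed root is at most $(e\,\Delta_H)^{t} \le (e\,\Delta^{4C_2+2})^t$, by the standard tree-counting bound. With $n$ choices of root, a union bound gives
\[
\Pr[\text{bad}] \le n\,(e\,\Delta^{4C_2+2})^t\,\Delta^{-C_1 t} = n\,\Delta^{-t(C_1-4C_2-2-\log_\Delta e)}.
\]
Using $C_1 > C_3+4C_2+2$, the exponent is at least $t\cdot(C_3+\delta)$ for some margin $\delta$. Here $\log_\Delta e \le 1$ is absorbed by the slack, and I assume a strict gap; if the margin is tight, the extra factor can be absorbed by enlarging $c$ when $\Delta \ge 2$. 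With $t \ge \frac{c}{2}\log_\Delta n$, this yields $n\cdot n^{-(C_3+\delta)c/2} \le n^{-C_3}$ for suitable $c$.

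The main obstacle is the bookkeeping in the first step. I must ensure the greedy selection keeps $W$ connected in a bounded power of $G$ while excluding only $\Delta^{2C_2}$-many vertices per pick. I must also match the constants so that the degree of that power, here $4C_2+2$, fits exactly within the hypothesis $C_1 > C_3+4C_2+2$. I would handle this by choosing the separation distance $2C_2+1$ and the connectivity power $4C_2+2$ precisely as above.
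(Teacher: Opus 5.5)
Your proof is correct and follows essentially the standard argument of \cite{FG17}, which the paper cites without reproving. A large component $C$ of $G[B]$ yields a $(2C_2+1)$-separated set of size $\Omega(|C|/\Delta^{2C_2})$ that is connected in a power of $G$; membership of separated vertices in $B$ is independent, and a union bound over connected $t$-sets of the power graph finishes.

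The only blemish is the final bookkeeping. After subtracting $\log_\Delta e$, the hypothesis does not give an exponent of $C_3+\delta$. You only need $C_1-4C_2-2-\log_\Delta e>0$, which is automatic for $\Delta\ge 3$, and then a large $c$. For $\Delta=2$ your ``enlarge $c$'' remedy fails with $G^{4C_2+2}$. Instead, note that your greedy step puts each new vertex at distance exactly $2C_2+1$ from a chosen one, so $W$ is already connected in $G^{2C_2+1}$, which has a smaller degree bound.
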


To avoid technical redundancies, we use the following shattering lemma instead of \cref{lem:Shattering}. It takes advantage of the fact that, in all our application, our LLLs have one random variable per vertex.  So we consider two LLLs $\mathcal{A}$ and $\mathcal{B}$ whose variables are indexed by the vertices of a graph $G$ called the \emph{variable graph}. The set $\mathcal{A}'$ contains all the bad events that occur after the pre-shattering step and whose variables are retracted. Every event $B \in \mathcal{B}$ with a variable within distance $\cstB$ in $G$ of a retracted variable, i.e., in $\vbl(\mathcal{A}')$, is included in the post-shattering LLL which we call $\mathcal{B}'$. \cref{lem:our-shattering} states that the dependency graph $\Gev$ of the post-shattering LLL $\mathcal{B}'$ has small connected components if $\vbl(B)$ is \emph{locally embedded} in the variable graph $G$ for every event $B \in \mathcal{A} \cup \mathcal{B}$. This is formalized by \cref{part:our-shattering-diam,part:our-shattering-involved}. 

For simplicity, we state \cref{lem:our-shattering} in terms of collections of sets $\mathcal{A}, \mathcal{A}',\mathcal{B}, \mathcal{B}'$ instead of LLLs. When we use \cref{lem:our-shattering}, we replace every bad event $B$ with the set of vertices in $G$ that hold its variables. The fact that $\mathcal{A}'$ contain rare bad events from an LLL is formalized by \cref{part:our-shattering-pre-prob,part:our-shattering-vbl}.

\begin{lemma}
    \label{lem:our-shattering}
    Let $G=(V,E)$ be an $n$-vertex graph with maximum degree $\Delta$. 
    Consider $\mathcal{A}, \mathcal{B} \subseteq 2^V$ two collections of subsets of $V$. Suppose there exist constants $c_1, c_2, c_4 \geq 1$ and $\cstB \geq 0$ with $c_4 > 3c_1(4\cstB + 16) + c_2 + \cstB+1$ such that 
    \begin{enumerate}[label=(S\arabic*),leftmargin=1.5cm]
        \item \label[part]{part:our-shattering-diam}
        for all $S \in \mathcal{A} \cup \mathcal{B}$, the set $S$ has weak-diameter at most $c_1$ in $G$, and
        \item \label[part]{part:our-shattering-involved}
        for all $v\in V$, there are at most $\Delta^{c_2}$ sets $S \in\mathcal{A} \cup \mathcal{B}$ such that $v\in S$.
    \end{enumerate}
    Let $X_v$ be independent random variables indexed by the vertices of $G$.
    Consider a random process on the corresponding probability space that produces a collection $\mathcal{A}' \subseteq \mathcal{A}$ such that, for all $A\in \mathcal{A}$,
    \begin{enumerate}[start=3,label=(S\arabic*),leftmargin=1.5cm]
     \item \label[part]{part:our-shattering-vbl}
        the random variable $\I{ A \in \mathcal{A}' }$ is a function of the $\set{ X_v : v\in A }$, and 
        \item \label[part]{part:our-shattering-pre-prob}
        $\Prob{ A \in \mathcal{A}' } \leq \Delta^{-c_4}$~.
       
    \end{enumerate}
    Define
    \[
    \mathcal{B}' = \set{ B \in \mathcal{B} : \exists v\in B, \dist_G( v, \mathcal{A}' ) \leq c_{\mathcal{B}'} } \ .
    \]
     
    Let $\Gev$ be the (random) graph on vertex set $\mathcal{B}'$ with edges between $B$ and $B'$ iff $B \cap B' \neq \emptyset$.
    Then, the connected components of $\Gev$ have size at most $O(\Delta^{4c_1\cstB + 4c_1 + c_2}\log n)$, with high probability in $n$.
\end{lemma}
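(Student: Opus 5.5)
The plan is to reduce the statement to the Shattering Lemma (\cref{lem:Shattering}) applied to a suitable power graph of $G$. Let $R = c_{\mathcal{B}'} + c_1 + 1$ and consider $H = G^{r}$, the graph on $V$ joining vertices at $G$-distance at most $r$, where $r = 2c_1 + \cstB + 2$ is chosen so that two post-shattering events that intersect are reflected by adjacency in $H$ of their ``witnesses''. The maximum degree of $H$ is at most $\Delta_H \le \Delta^{r}$.

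\textbf{Step 1: marking vertices.} Call $u\in V$ \emph{marked} if $u$ lies in some $A\in\mathcal{A}'$. By \cref{part:our-shattering-involved,part:our-shattering-pre-prob} and a union bound, $\Pr[u \text{ marked}] \le \Delta^{c_2}\cdot \Delta^{-c_4}$. By \cref{part:our-shattering-vbl,part:our-shattering-diam}, the event ``$u$ marked'' depends only on $X_w$ for $w$ within distance $2c_1$ of $u$ in $G$ (any $A\ni u$ has weak diameter $\le c_1$). In $H$, this is within $O(1)$ hops (indeed $1$ hop once $r\ge 2c_1$). So the marked set $M$ satisfies the hypotheses of \cref{lem:Shattering} on $H$ with $C_2 = 1$ and $\Delta_H^{-C_1} \ge \Delta^{c_2 - c_4}$, i.e.\ $C_1 = (c_4-c_2)/r$. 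The numeric condition $c_4 > 3c_1(4\cstB+16)+c_2+\cstB+1$ is exactly what I expect to make $C_1 > C_3 + 4C_2+2$ for a desirable $C_3$ (this constant bookkeeping, choosing $r$ so the inequality lines up, is routine but fiddly). Conclude that every connected component of $H[M]$ has size $O(\Delta_H^{2}\log n)$ w.h.p.

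\textbf{Step 2: lifting components from marked vertices to $\Gev$.} Each $B\in\mathcal{B}'$ has a vertex within $G$-distance $\cstB$ of some $A\in\mathcal{A}'$, hence within distance $\cstB + c_1$ of \emph{every} vertex of that $A$; pick such a marked witness $m(B)$. If $B\cap B'\neq\emptyset$, then $m(B)$ and $m(B')$ are at $G$-distance at most $2(\cstB + c_1) + 2c_1$ (using the weak diameters of $B$, $B'$), so with $r$ at least this value they are adjacent in $H$. Hence the image under $m$ of a connected component $\mathcal{C}$ of $\Gev$ lies inside a single connected component of $H[M]$. Finally, the number of $B\in\mathcal{B}'$ with a given witness is at most the number of sets of $\mathcal{B}$ meeting the $G$-ball of radius $\cstB+c_1$ around it, i.e.\ at most $\Delta^{\cstB+c_1+1}\cdot\Delta^{c_2}$ by \cref{part:our-shattering-involved}. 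Multiplying, $|\mathcal{C}| \le \Delta^{O(c_1\cstB + c_1) + c_2}\log n$, matching the claimed $O(\Delta^{4c_1\cstB+4c_1+c_2}\log n)$ after fixing $r$ precisely.

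The main obstacle is not conceptual but the bookkeeping: picking the power $r$ so that both (i) intersecting post-shattering events map to $H$-adjacent witnesses and (ii) the dependency radius and probability exponent satisfy the constraint of \cref{lem:Shattering} under the given hypothesis on $c_4$, while getting the exponent in the final bound. If the weak-diameter distances in Step 2 come out larger than anticipated, I would instead allow $C_2>1$ in \cref{lem:Shattering} with a smaller power of $G$, trading exponents accordingly.
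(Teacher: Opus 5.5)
Your overall architecture matches the paper's: pass to a power of $G$, map each $B\in\mathcal{B}'$ to a witness so that intersecting sets get $H$-adjacent witnesses, bound the multiplicity of this map via (S2), and invoke \cref{lem:Shattering}. The gap is in the part you deferred as routine bookkeeping. That bookkeeping is the entire quantitative content of the lemma, and with your choices it does not close.

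\textbf{Your literal choice.} You take the marked set to be $\bigcup\mathcal{A}'$ and the witness to be an arbitrary vertex of some $A\in\mathcal{A}'$. Then you need $r\ge 2\cstB+4c_1$; your first choice $r=2c_1+\cstB+2$ is too small for the distance you compute afterwards. The multiplicity is about $\Delta^{\cstB+c_1+1+c_2}$, so the size bound is $\Delta^{2r}\cdot\Delta^{\cstB+c_1+1+c_2}$. Already at $\cstB=0$ this gives $\Delta^{9c_1+c_2+1}$ instead of the claimed $\Delta^{4c_1+c_2}$. The precondition of \cref{lem:Shattering} with $C_2=1$ reads $c_4-c_2>r(C_3+6)$. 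With the paper's $C_3=10$ and $\cstB=0$ this is $c_4-c_2>64c_1$, while the hypothesis only guarantees $c_4-c_2>48c_1+1$.

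\textbf{Your tightest variant.} Suppose instead the witness is a vertex of $\bigcup\mathcal{A}'$ within distance $\cstB$ of $B$. Then $r=2\cstB+2c_1$, the multiplicity is about $\Delta^{\cstB+c_2}$, and the size bound is $\Delta^{5\cstB+4c_1+c_2}$. For $c_1=1$ and $\cstB\ge 1$ this exceeds the claimed $\Delta^{4\cstB+4+c_2}$. The precondition becomes $c_4-c_2>(2\cstB+2)(C_3+6)$. For $c_1=1$ the hypothesis only gives $c_4-c_2>13\cstB+49$, which is insufficient for every $\cstB\ge 1$ with $C_3=10$, and for $\cstB>35$ even with $C_3=1$.

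The problem is structural. Absorbing $\cstB$ into the graph power costs a factor $C_3+6$ on $\cstB$ plus an extra $\Delta^{\cstB}$ of multiplicity. The hypothesis does not budget for either when $c_1$ is small.

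\textbf{The fix, which is what the paper does.} Put the witness inside $B$: let $f(B)\in B$ be a vertex with $\dist_G(f(B),\mathcal{A}')\le\cstB$.
\begin{itemize}
\item If $B\cap B'\neq\emptyset$, then $\dist_G(f(B),f(B'))\le 2c_1$, so $H=G^{2c_1}$ suffices.
\item Since $f(B)\in B$, the multiplicity is just $\Delta^{c_2}$ by (S2).
\item The marked set becomes $N_G^{\le\cstB}(\bigcup\mathcal{A}')$. It satisfies $\Pr[v\text{ marked}]\le\Delta^{\cstB+1}\Delta^{c_2-c_4}\le\Delta(H)^{-(c_4-c_2-\cstB-1)/(3c_1)}$.
\item Whether $v$ is marked depends only on randomness within $C_2=\cstB+1$ hops of $H$.
\end{itemize}
With $C_3=10$, the condition $C_1>C_3+4C_2+2$ is literally the hypothesis $c_4>3c_1(4\cstB+16)+c_2+\cstB+1$. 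The component bound is $O(\Delta(H)^{2\cstB+2}\log n)\cdot\Delta^{c_2}=O(\Delta^{4c_1\cstB+4c_1+c_2}\log n)$, exactly as claimed. Your fallback remark about allowing $C_2>1$ with a smaller power points in this direction. The change that actually makes it work is moving the witness from $\bigcup\mathcal{A}'$ into $B$ itself.
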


\begin{proof}
    Let $H = G^{2c_1}$, namely the graph on vertex set $V$ with an edge $\set{u,v}$ iff $\dist_G(u, v) \leq 2c_1$. 
    For every $B \in \mathcal{B}'$, let $f(B) \in B$ be the vertex such that $\dist_G(f(B), \mathcal{A}') \leq \cstB$, which exist by definition of $\mathcal{B}'$. Observe that $f$ is an adjacency-preserving map from $\Gev$ to $H[N_G^{\leq \cstB}(\mathcal{A}')]$: by \ref{part:our-shattering-diam}, if $B \cap B' \neq \emptyset$ (i.e., they are adjacent in $\Gev$), then $f(B)$ and $f(B')$ are connected by an edge in $H$. 
    Let $C$ be the largest connected component of $\Gev$. The image of $C$, denoted $f(C)$, is a connected subgraph of $H[N_G^{\leq \cstB}(\mathcal{A}')]$. On the other hand, by \ref{part:our-shattering-involved} each vertex of $H$ is the image of at most $\Delta^{c_2}$ vertices of $C$ through $f$, and so $|C| \leq \Delta^{c_2}|f(C)|$. It therefore suffices to upper bound the size of the largest connected component of $H[N_G^{\leq \cstB}(\mathcal{A}')]$.

    By \ref{part:our-shattering-involved}, \ref{part:our-shattering-pre-prob}, and the union bound over all sets $A$ with $v$ as a vertex, we have that 
    \[ 
    \Prob*{ v \in \bigcup_{A\in \mathcal{A}'} A } \leq \Delta^{c_2 - c_4} \ . 
    \]
    By the union bound over the vertices of $N_G^{\leq \cstB+1}(v)$, we deduce that 
    \[
    \Prob*{ \dist_G(v, \mathcal{A'}) \leq \cstB } 
    \leq \Delta^{\cstB + 1} \cdot \Delta^{c_2 - c_4} 
    \leq \Delta(H)^{(\cstB + 1 + c_2 - c_4)/(3c_1)} \ . 
    \]
    By \ref{part:our-shattering-vbl}, whether $A \in \mathcal{A}'$ depend only on the random variables $X_u$ of vertices in $A$, which has diameter one in $H$. Whether $v\in N_G^{\leq \cstB}( \mathcal{A}' )$ depends on whether $A \in \mathcal{A}'$ for all the $A$ with a vertex in $N^{\leq \cstB}_G(v)$, hence it depends only on the random values within $\cstB+1$ hops in $H$. Let $C_1 = (c_4 - \cstB - c_2 - 1)/(3c_1)$, $C_2 = \cstB+1$ and $C_3 = 10$; it is easy to verify that $C_1 > C_3 + 4C_2 + 2$ for our choice of $c_1, c_2, c_4, \cstB$. By \cref{lem:Shattering}, \whp, all the components of $H[N^{\leq \cstB}_G(\mathcal{A}')]$ have size at most $O( \Delta(H)^{2\cstB + 2} \log n ) = O( \Delta^{4c_1\cstB + 4c_1} \log n )$, which implies the claimed bound on the size of the largest connected component in $\Gev$.
\end{proof}

\section{Top-Level Algorithm and Proofs of \texorpdfstring{\cref{thm:main,thm:det-alg}}{Theorems~\ref{thm:main},~\ref{thm:det-alg}}}
\label{sec:high-level}
The goal of this section is to present the top-level algorithm for proving \Cref{thm:main,thm:det-alg}. 
\thmmain*

We assume that each vertex knows its incident edges and the values of $\Delta$ and $c$.

The general problem reduces to the coloring of a graph $F$ with a highly specific structure (see Theorem 5 in \cite{MR14}). In \Cref{sec:decomposition}, we present a structural decomposition of $F$ that forms the base of our coloring algorithm. In \Cref{ssec:CC}, we detail the crucial \emph{color coverage} property that we need to maintain throughout our algorithm in order to find suitable color swaps when completing the coloring for the cliques of $F$. 
In \Cref{sec:proof-thm-main}, we present the proof of \Cref{thm:main}, building on \Cref{lem:colorWithMuchSlack,lem:sparse,lem:coloringCliques} that state how fast we can color the various parts of $F$. These lemmas are proven in \Cref{sec:colorwithmuchslack,sec:sparse,sec:cliques}.

\subsection{Graph Decomposition}
\label{sec:decomposition}
The coloring problem of \Cref{thm:main} reduces to coloring a graph $F$ with a specific structure. This structural decomposition, introduced in \cite{MR14} and efficiently computable in \LOCAL\ by \cite{BamasE19}, is technically involved. Since later proofs rely on several additional structural properties, their original presentation becomes cumbersome. We therefore adapt the decomposition to our setting and restate it in the following (still technical) lemma, including all necessary properties. 

\begin{restatable}[Structural Decomposition]{lemma}{lemStructDecomposition}
    \label{lem:structuralDecomposition}
    There is a decomposition of $F$ into sets of vertices $S$, $B_H$, and $B_L$,  and into two collections of cliques $A_H$ and $\AL$ that are colored in the order presented below (see \cref{alg:highlevel}). Additional properties that the sets satisfy independently of the coloring in previous steps are stated afterwards.
    \begin{enumerate}
        \item \label[part]{part:decomp-S}
        $S$: Each $v\in S$ has $\deg_{S}(v)<\Delta-3\sqrt{\Delta}$ or it has $\deg_S(v)\leq \Delta$ and at least $9\cdot 10^5\cdot \Delta^{3/2}$ non-adjacent pairs of neighbors within $S$.     
        \item\label[part]{part:slack-BH}
        $B_H:$ Each $v\in B_H$ satisfies $|L(v)|\geq \deg_{B_H}(v)+\Delta^{3/4}$, regardless of how $S$ is colored.   
        \item \label[part]{part:A-ext}
        $A_H:$ The external degree (cf.~\cref{def:externalDegree}) of each $A_i\in A_H$ is bounded by $10^8\sqrt{\Delta}$.     
        \item \label[part]{part:slack-BL}
        $B_L:$ Each $v\in B_L$ satisfies $|L(v)|\geq \deg_{B_L}(v)+ \tfrac{1}{2}\sqrt{\Delta}$, regardless of how $S, B_H, A_H$ are colored.   
        \item \label[part]{part:AL-ext}
        $\AL$: The external degree of each $A_i\in \AL$ is bounded by $30\Delta^{1/4}$ .   
        \end{enumerate}
        Additionally we have: 
        \begin{enumerate}[label=(\alph*)]
        \item \label[part]{part:A-size}
        Every $A_i$ is a clique with $c-10^8\sqrt{\Delta}\le|A_i|\le c$.   
        \item \label[part]{part:All} 
        $\mathrm{All}_i\subseteq B_L\cup B_H$ Vertices in $\mathrm{All}_i \subseteq B$  are adjacent to all of $A_i$. $|\mathrm{All}_i|= c-|A_i|$ holds.  
        \item \label[part]{part:Big}
        $\Big_i^+\subseteq  B_L\cup B_H$: A vertex \( v \notin A_i \cup \All_i \) lies in \( \Big_i^+ \) if it has at least \( 2\Delta^{9/10} \) neighbors in \( A_i \).
        The set \( \Big_i^+ \) is a clique, and each \( v \in \Big_i^+ \) has at most \( \tfrac{3}{4}\Delta + 10^8\sqrt{\Delta} \) neighbors in \( A_i \).
    \end{enumerate}
\end{restatable}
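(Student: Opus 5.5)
This lemma mostly repackages the Molloy--Reed reduction (Theorem~5 of \cite{MR14}) and its \LOCAL implementation in \cite{BamasE19}. The plan is therefore to take the graph $F$ and the partition they produce, and then check each listed property, re-deriving constants where our thresholds differ from theirs.

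\textbf{Step 1: start from the reduced graph $F$.} In $F$, every vertex is either sparse or lies in a near-clique. Each near-clique has been turned into a true clique $A_i$ with $c-O(\sqrt{\Delta})\le|A_i|\le c$. The vertices adjacent to all of $A_i$ form $\All_i$, with $|\All_i|=c-|A_i|$; these are the extra vertices that Molloy--Reed add so that $A_i\cup\All_i$ has exactly $c$ vertices. This gives \cref{part:A-size,part:All}.

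\textbf{Step 2: define the five sets.}
\begin{itemize}
  \item $S$ consists of the vertices satisfying the sparsity condition of \cref{part:decomp-S}. Its non-edge threshold is the one used by \cite{MR14} for their sparse set, with our explicit constant $9\cdot10^5$.
  \item Split the cliques by external degree: $\AL$ contains the cliques whose external degree is at most $30\Delta^{1/4}$, and $A_H$ contains the rest. The bound $10^8\sqrt{\Delta}$ for $A_H$ follows from the near-clique definition, since each vertex has at most $\Delta-|A_i|+1=O(\sqrt{\Delta})$ outside neighbours. This gives \cref{part:A-ext,part:AL-ext}.
  \item The remaining non-clique, non-sparse vertices form $B$. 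Split $B$ according to how many neighbours a vertex has in cliques: $B_H$ contains those with at least $\Delta^{3/4}+\Delta-c$ neighbours in cliques of $A_H\cup\AL$ (or, more precisely, in cliques colored after $B_H$); $B_L$ contains the rest, which still have $\Omega(\sqrt{\Delta})$ neighbours in $\AL$.
\end{itemize}

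\textbf{Step 3: verify the slack bounds \cref{part:slack-BH,part:slack-BL}.} These follow by counting. A vertex $v\in B_H$ has $|L(v)|\ge c-(\text{colored neighbours})$. While $B_H$ is being colored, none of its clique neighbours are colored yet, so
\[
|L(v)|\ge c-\deg(v)+\deg_{B_H}(v)+\deg_{A}(v)\ge \deg_{B_H}(v)+\Delta^{3/4},
\]
using $c\ge\Delta-\sqrt{\Delta}$ and the choice of threshold. The same count with $\AL$ in place of all cliques gives the $\frac12\sqrt{\Delta}$ slack for $B_L$. This is why $B_L$ must be colored after $A_H$ and uses only its $\AL$-neighbours.

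\textbf{Step 4: the $\Big_i^+$ property \cref{part:Big}.} For a non-$\All_i$ vertex $v$ with at least $2\Delta^{9/10}$ neighbours in $A_i$, use the Molloy--Reed argument: if two such vertices were non-adjacent, we could either enlarge $\All_i$ or find a cheaper reduction. In \cite{MR14} this contradiction is built into the construction of $F$, which is arranged so that $\Big_i^+$ is a clique; a vertex of $F$ with more than $\frac34\Delta$ neighbours in $A_i$ would instead have been absorbed into $\All_i$.

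\textbf{Main obstacle.} I expect Step~4 and the exact constants to be the hardest part. They require going into the internal construction of $F$ in \cite{MR14} (their reduction steps and their choice $3\Delta/4$), rather than a black-box citation. I would first try to cite the corresponding structural lemmas of \cite{MR14,BamasE19} directly, and only re-derive the counting where our thresholds $\Delta^{3/4}$, $30\Delta^{1/4}$ and $10^8\sqrt{\Delta}$ differ from theirs.
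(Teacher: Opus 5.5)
There is a genuine gap in your treatment of the $B_L$ slack (part 4), which is the one property that does not follow from counting. You split $B$ by number of clique neighbours and then assert that the vertices left in $B_L$ ``still have $\Omega(\sqrt{\Delta})$ neighbours in $\AL$''. Nothing in your argument supports this. A vertex with fewer than $\Delta^{3/4}+\Delta-c$ clique neighbours could a priori have all of them in cliques of $A_H$, which are colored before $B_L$, and would then have essentially no slack when $B_L$ is processed. Even granting it, $\Omega(\sqrt{\Delta})$ would not suffice quantitatively: your count needs $\deg_{\AL}(v)\ge \deg(v)-c+\tfrac12\sqrt{\Delta}$, which for $c=\Delta-k_\Delta+1$ is about $\tfrac32\sqrt{\Delta}$.

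The missing ingredient is the structural property (f) of Lemma~12 in \cite{MR14} (Lemma~4.5 of \cite{BamasE19}). It says that if $v\in B$ has at least $c-\Delta^{3/4}$ neighbours in $F-\bigcup_j A_j$, then there is a single clique $A_i$ such that:
\begin{itemize}
\item $v$ has at most $c-\sqrt{\Delta}+9$ neighbours in $F-A_i$; and
\item every vertex of $A_i$ has at most $30\Delta^{1/4}$ neighbours in $F-A_i$.
\end{itemize}
The second condition puts $A_i$ in $\AL$. Since $A_i$ and $B_L$ are still uncolored when $B_L$ is processed, $v$ then has at most $c-\sqrt{\Delta}+9-\deg_{B_L}(v)$ colored neighbours, which gives $|L(v)|\ge\deg_{B_L}(v)+\sqrt{\Delta}-9$.

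This is why the paper defines $B_H$ as the $B$-vertices with at most $c-\Delta^{3/4}$ neighbours in $F-\bigcup_j A_j$: that threshold matches the hypothesis of (f). It also makes the $B_H$ count immediate without your implicit assumption $\deg_F(v)\le\Delta$. That assumption is not available, since $F$ is only guaranteed maximum degree at most $10^9\Delta$. For the same reason, the $10^8\sqrt{\Delta}$ external-degree bound should be cited from property (b) rather than derived from a degree-$\Delta$ count.

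A second, smaller gap is part 1. You define $S$ by the property itself and say the constant $9\cdot10^5$ is the one Molloy--Reed use. In fact the cited decomposition only guarantees $900\Delta^{3/2}$ non-adjacent pairs, and it does not state $\deg_S(v)\le\Delta$. The paper takes the degree bound from the proof in \cite{Reed98}. It obtains the larger constant by going back into the construction:
\begin{itemize}
\item vertices of $S$ start out $10^6\sqrt{\Delta}$-sparse;
\item Modification~2 only adds edges between Big-neighbours;
\item so either $v$ has many Big-neighbours, each contributing about $\Delta^{9/10}/2$ non-edges, or only $o(\Delta^{3/2})$ edges were added to its neighbourhood.
\end{itemize}
Your closing paragraph therefore misplaces where re-derivation is needed. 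The thresholds $\Delta^{3/4}$, $30\Delta^{1/4}$ and $10^8\sqrt{\Delta}$ are Molloy--Reed's own; the sparsity constant is the one that differs. Your handling of parts 3, 5, (a), (b) and the $\mathrm{Big}_i^+$ property amounts to citing properties (a)--(c) and (g) of their lemma. That is what the paper does, although the ``enlarge $\mathrm{All}_i$ or find a cheaper reduction'' mechanism you sketch is not how (g) arises.
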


\begin{definition}
\label{def:externalDegree}
The \emph{external degree} of a clique $A_i$ is the maximum number of neighbors of a $v\in A_i$ outside of $A_i\cup \mathrm{All}_i$. 
\end{definition}

The proof of \Cref{lem:structuralDecomposition} appears in \Cref{app:decomposition}.

\subsection{The Color Coverage (CC) Property}
\label{ssec:CC}
In this section, we formally introduce the \emph{color coverage} property, which we abbreviate \covd, that was outlined in \cref{sec:tech-intro}. Informally speaking we wish to ensure that the coloring outside of cliques $A_i$ looks random enough. Formally, we ask that each color is available to a constant fraction of the nodes on the inside of the cliques.

While $\All_i$ contains the nodes that are connected to all nodes in $A_i$, the nodes in $\Big_i^+$ are connected to most nodes (but still not too many) of $A_i$. Hence, coloring a single node in $\Big_i^+$ may bring $A_i$  close to not satisfying \cref{def:notbig}. But due to \Cref{lem:structuralDecomposition} the nodes in $\Big_i^+$ form a clique  for $A_i$; hence, no two nodes in $\Big_i^+$ can receive the same color.

\renewcommand{\theourProp}{CC} 
\begin{prop}
    \label{def:notbig} Consider some arbitrary step $j$ of a coloring algorithm. 
    Let $\covd_j(i,x)$ be the number of vertices in $A_i$ that have a neighbor $v \not\in A_i \cup \All_i \cup \Big_i^+$ that gets colored with $x$ in step $j$. We say that CC is \emph{within budget for a \underline{single coloring step}} if $\covd_j(i,x) < \Delta^{37/40}$ holds for every uncolored clique $A_i$ and for every color $x$. 
    
    We say that \emph{CC holds for a clique $A_i$} at a point in time if CC has been within budget for every coloring step applied so far.
\end{prop}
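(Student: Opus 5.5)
Property~CC is a definition rather than an assertion, so there is no claim to prove. What needs checking is that it is well posed and that it delivers what the surrounding text promises. My plan has three steps.

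\emph{Well-posedness.} For a fixed coloring step $j$, a clique $A_i$ and a color $x$, the quantity $\covd_j(i,x)$ is determined by the set of vertices outside $A_i\cup\All_i\cup\Big_i^+$ that receive color $x$ in step $j$. It is therefore a deterministic function of that step's outcome. This is the form needed later: each event ``$\covd_j(i,x)\ge\Delta^{37/40}$'' depends only on the random choices of vertices within $O(1)$ hops of $A_i$, so it fits the locality requirements \ref{part:our-shattering-diam}--\ref{part:our-shattering-involved} of \cref{lem:our-shattering}. Moreover, ``CC holds for $A_i$'' is a conjunction over the steps performed so far, so it is monotone: once a step violates the budget, CC fails for $A_i$ from then on.

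\emph{Cumulative consequence.} The top-level algorithm (\cref{alg:highlevel}) uses a bounded number $T$ of coloring steps before the cliques. If CC holds for $A_i$ when the cliques are reached, then for every color $x$, the vertices colored $x$ outside $A_i\cup\All_i\cup\Big_i^+$ block $x$ at no more than $T\cdot\Delta^{37/40}=o(\Delta)$ vertices of $A_i$ in total. I would obtain this simply by summing $\covd_j(i,x)$ over the steps $j$.

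\emph{The excluded sets.} This is the point that justifies excluding $\All_i$ and $\Big_i^+$ from the count. By \cref{lem:structuralDecomposition}\ref{part:All}, a vertex of $\All_i$ colored $x$ removes $x$ from the palette of every vertex of $A_i$. Since $|\All_i|=c-|A_i|$, this removes exactly the surplus colors and affects all of $A_i$ uniformly, so it does not reduce the number of swap candidates for any remaining color. By \ref{part:Big}, $\Big_i^+$ is a clique, so for each color $x$ at most one vertex of $\Big_i^+$ is colored $x$. That vertex blocks $x$ at no more than $\tfrac34\Delta+10^8\sqrt\Delta$ vertices of $A_i$.

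Combining the three contributions, each color $x$ not used on $\All_i$ is blocked at no more than $\tfrac34\Delta+o(\Delta)$ vertices of $A_i$. Since $|A_i|\ge c-10^8\sqrt\Delta$ by \ref{part:A-size}, a constant fraction of $A_i$ remains free of outside blockers for $x$. This is exactly the informal ``each color is available to a constant fraction of the nodes inside the clique'' stated before the definition.

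The only delicate point is making sure the constant $T$ and the second-order terms are absorbed for $\Delta\ge\Delta_0$. This is routine given the exponent gap $37/40<1$.
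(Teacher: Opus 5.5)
Since the statement is a definition, the only content to verify is its intended consequence, \cref{obs:CC}. Your cumulative argument is essentially the paper's proof of that observation. You sum the per-step budget over all steps, add at most $\tfrac34\Delta+10^8\sqrt{\Delta}$ from the unique vertex of the clique $\Big_i^+$ colored $x$, and disregard $\All_i$ because the colors assigned to $A_i$ avoid those used on $\All_i$.

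One correction: the number of coloring steps is not a constant. \alg{ColorWithMuchSlack} runs $\Theta(\log\Delta)$ iterations of \cref{alg:rct-single-iteration}, and it is invoked four times (on $S_1$, $S_2$, $B_H$, $B_L$). Moreover, every pre- and post-shattering phase is charged as a separate step with its own $\Delta^{37/40}$ budget. The correct total is therefore $O(\log\Delta\cdot\Delta^{37/40})$, or $O(\log^* n\cdot\Delta^{37/40})$ in the high-degree regime.

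Your conclusion $\tfrac34\Delta+o(\Delta)\le 4\Delta/5$ still holds. The reason, however, is $\log\Delta\ll\Delta^{3/40}$, not a bounded step count, and this is exactly why the per-step budget must lie polynomially below $\Delta$. A minor point as well: $\All_i$ removes at most $c-|A_i|$ colors, not exactly that many, since its vertices need not receive distinct colors.
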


\noindent
Essentially, the $\Delta^{37/40}$ budget ensures that across all steps, each color remains sufficiently under-used around every clique, enabling valid swaps later. The under-usage across all steps is formalized in the following observation. 

\begin{observation}
\label{obs:CC}
If we maintain \cref{def:notbig} throughout the coloring process for a clique $A_i$, then for all colors $x$ and 
at any point in the coloring process, at most $4 \Delta/5$ vertices of $A_i$ have a neighbor outside of $A_i\cup \mathrm{All}_i$ with color $x$.
\end{observation}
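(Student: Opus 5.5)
The plan is to split the neighbors of $A_i$ that lie outside $A_i\cup\All_i$ into two groups: those outside $\Big_i^+$, which are controlled by the per-step CC budget, and those inside $\Big_i^+$, which are controlled by the clique structure from \cref{lem:structuralDecomposition}. Fix a clique $A_i$ and a color $x$. The vertices of $A_i$ that have a neighbor outside $A_i\cup\All_i$ colored $x$ fall into two sets. The first set $X_1$ consists of those with an $x$-colored neighbor $v\notin A_i\cup\All_i\cup\Big_i^+$. The second set $X_2$ consists of those with an $x$-colored neighbor in $\Big_i^+$. I will bound $|X_1|$ and $|X_2|$ separately and add the bounds.

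\textbf{Bounding $X_1$.} Every vertex of $X_1$ is counted in $\covd_j(i,x)$ for the step $j$ in which the relevant neighbor received color $x$. By \cref{def:notbig}, each step contributes fewer than $\Delta^{37/40}$. So $|X_1| < T\cdot\Delta^{37/40}$, where $T$ is the number of coloring steps applied while $A_i$ is uncolored.

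The main obstacle is that this bound is only useful if $T$ is small. The algorithm uses iterated random color trials, so $T$ need not be constant. I first check whether a "step" here means one of the constant number of top-level phases (coloring $S$, $B_H$, $A_H$, $B_L$, plus a constant number of pre- and post-shattering substeps each). In that case $T=O(1)$, and $|X_1|\le O(\Delta^{37/40})\le\Delta/20$ for $\Delta\ge\Delta_0$. If the algorithm instead has up to $\poly\log\Delta$ steps, then $T\Delta^{37/40}=o(\Delta)$ still holds. So in either reading I only need $T=o(\Delta^{3/40})$, which I will assert from the structure of \cref{alg:highlevel}.

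\textbf{Bounding $X_2$.} The set $\Big_i^+$ is a clique, so at most one vertex $w\in\Big_i^+$ has color $x$. By \cref{part:Big}, $w$ has at most $\tfrac34\Delta+10^8\sqrt\Delta$ neighbors in $A_i$. Hence $|X_2|\le \tfrac34\Delta+10^8\sqrt\Delta$.

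\textbf{Combining.} Adding the two bounds gives
\[
|X_1|+|X_2|\le \tfrac34\Delta+10^8\sqrt\Delta+o(\Delta)\le \tfrac45\Delta
\]
for $\Delta$ sufficiently large, since the gap $\Delta/20$ absorbs the lower-order terms.
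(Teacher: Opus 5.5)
Your proposal is correct and follows the paper's proof essentially verbatim. You split the $x$-colored outside neighbors into those outside $\Big_i^+$, bounded by summing the per-step budget $\Delta^{37/40}$ over all steps, and the at most one $x$-colored vertex of the clique $\Big_i^+$, which contributes at most $\tfrac34\Delta+10^8\sqrt{\Delta}$. The $\Delta/20$ gap then absorbs the lower-order terms.

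One correction on the step count: your first reading ($T=O(1)$) is not the right one. Every \RCT iteration and every pre-/post-shattering phase gets its own budget, so $T=O(\log\Delta)$, as the paper states. Your fallback requirement $T=o(\Delta^{3/40})$ already covers this case.
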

\begin{proof}
Our algorithm has either $O(\log^* \Delta)$ or $O(\log \Delta)$ coloring steps\footnote{In several of our coloring steps we compute a partial coloring by setting up an LLL that we solve via the shattering framework. That means that we color vertices in a pre-shattering and a post-shattering phase, both of which count as a separate coloring step in the context of this observation. In other words, the pre-shattering and post-shattering phases in these LLLs obtain a separate CC budget which simplifies the analysis significantly. }, depending on the approach. 
Summing up the $\Delta^{37/40}$ budget of every individual step, results in $O(\log \Delta\cdot \Delta^{37/40})$ vertices in $A_i$ that have a $v \not\in A_i \cup \mathrm{All}_i \cup \Big_i^+$ that is permanently colored with color $x$ (in any previous step). 
Consider a fixed color $x$.
By \Cref{lem:structuralDecomposition}-\ref{part:Big}, $\Big_i^+$ forms a clique, and hence at most one node of $\Big_i^+$ can be colored $x$, and this vertex has at most $\tfrac34\Delta + 10^8\sqrt{\Delta}$ neighbors in $A_i$. Overall, the number of vertices $v \not\in A_i \cup \All_i$ adjacent to a vertex colored $x$ is bounded above by  
\begin{align*}
O(\log \Delta\cdot \Delta^{37/40})+\tfrac34\Delta + 10^8\sqrt{\Delta}\leq 4\Delta/5~.
& \qedhere\end{align*}
\end{proof}
 
\paragraph{Bounding $\covd_j(i,x)$.}
Suppose that we have a collection of at most $\Delta$ subsets of $V(F)$. Each set contains at most $Q$ vertices. No vertex lies in more than $2\Delta^{9/10}$ sets. We conduct a random experiment where each vertex is marked with probability at most $1/(Q \times \Delta^{1/5})$. The vertices are not necessarily marked independently, but the experiment has the following property\footnote{We remark that the exact value of the constant $1/5$ is not crucial. We can choose it as small as we like, but then the bound $\Delta^{37/40}$ inches closer to $\Delta$ and the exponent in the error probability gets smaller.}:
\begin{quote}
    (P7.1) For any set of $\ell\ge 1$ vertices, the probability that all are marked is at most $1/(Q \times \Delta^{1/5})^\ell$. 
\end{quote}
We use the following Lemma 32 of \cite{MR14} verbatim. Note that condition (P7.1) does not require independence between vertices of different sets, and also not for vertices inside the same set. 

\begin{lemma}[Lemma 32 of \cite{MR14}] \label{lem:bnd-notbig}
The probability that at least $\Delta^{37/40}$ sets contain at least one marked vertex is at most $\exp(-\Delta^{1/40})$.
\end{lemma}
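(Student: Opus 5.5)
The plan is a union bound over marked vertices, organised by how many sets each vertex lies in. A plain union bound over which sets are hit fails. Lower-bounding the number of distinct marked vertices only by $\Delta^{37/40}/(2\Delta^{9/10})=\Delta^{1/40}/2$ and paying $\binom{\Delta Q}{\ell}(Q\Delta^{1/5})^{-\ell}\approx(\Delta^{4/5}/\ell)^\ell$ gives nothing useful, because $\ell$ is far below $\Delta^{4/5}$. The fix is to weight each vertex $v$ by $w(v)$, the number of sets containing $v$. We have $1\le w(v)\le 2\Delta^{9/10}$ for every vertex in the union, and double counting gives $\sum_v w(v)\le \Delta Q$. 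The number of sets containing a marked vertex is at most $\sum_{v \text{ marked}} w(v)$. So it suffices to bound the probability that this weighted sum reaches $\Delta^{37/40}$.

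\textbf{Step 1 (dyadic classes).} For $j=0,\dots,\lceil\log_2(2\Delta^{9/10})\rceil$, let $V_j$ be the set of vertices with $w(v)\in[2^j,2^{j+1})$, and let $n_j=|V_j|$. The double count gives $n_j\le \Delta Q/2^j$. There are $O(\log\Delta)$ classes. If the weighted sum of marked vertices is at least $\Delta^{37/40}$, then by pigeonhole some class $j$ has at least
\[
k_j:=\left\lceil \frac{\Delta^{37/40}}{2^{j+1}\cdot C\log\Delta}\right\rceil
\]
marked vertices, for a suitable absolute constant $C$. Since $2^{j}\le 2\Delta^{9/10}$, this gives $k_j\ge \Delta^{1/40}/(4C\log\Delta)$.

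\textbf{Step 2 (one class).} Fix $j$. Union bound over all $k_j$-subsets of $V_j$ and apply (P7.1) to each subset; no independence is needed. This gives
\[
\Pr[\ge k_j \text{ marked in } V_j]\le \binom{n_j}{k_j}(Q\Delta^{1/5})^{-k_j}\le\left(\frac{e\,\Delta Q}{2^j k_j\, Q\Delta^{1/5}}\right)^{k_j}=\left(\frac{e\Delta^{4/5}}{2^jk_j}\right)^{k_j}.
\]
By the choice of $k_j$ we have $2^jk_j\ge\Delta^{37/40}/(2C\log\Delta)$. The base is therefore at most $2eC\log\Delta\cdot\Delta^{-1/8}\le\Delta^{-1/9}$ for large $\Delta$. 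The probability is then at most $\exp(-\tfrac19 k_j\ln\Delta)\le\exp(-c'\Delta^{1/40})$, since the $\log\Delta$ lost in $k_j$ is recovered by the $\ln\Delta$ in the exponent.

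\textbf{Step 3 (combine, main obstacle).} A union bound over the $O(\log\Delta)$ classes finishes the proof, up to constants. I expect the main obstacle to be getting exactly $\exp(-\Delta^{1/40})$ rather than $\exp(-c'\Delta^{1/40})$. The constant $c'$ comes from the ratio of $\ln\Delta$ to $\log_2\Delta$, from the $1/9$, and from $C$. To remove it, I would first try to sharpen the pigeonhole step. Assign class $j$ a target weight proportional to $2^{-\delta(J-j)}\Delta^{37/40}$, where $J$ is the top class; these targets form a convergent geometric series, so the overall $\log\Delta$ factor disappears. The top classes then need only $k\approx\Delta^{1/40}/2$ marked vertices, with base at most $\Delta^{-1/8+o(1)}$, so the exponent is $\approx\tfrac{1}{16}\Delta^{1/40}\ln\Delta\gg\Delta^{1/40}$. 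The lower classes need far more marked vertices and only get easier. The margin $\ln\Delta/16\to\infty$ absorbs every constant and the union bound over the $O(\log\Delta)$ classes, given that $\Delta$ is sufficiently large.
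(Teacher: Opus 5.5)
\paragraph{Comparison with the paper.} The paper does not prove this statement. It quotes it verbatim as Lemma~32 of \cite{MR14} and uses it as a black box, so there is no internal proof to compare with. Your argument is a correct self-contained proof. It rests on three ingredients:
\begin{itemize}
\item bounding the number of hit sets by $\sum_{v\ \mathrm{marked}} w(v)$;
\item the double count $\sum_v w(v)\le \Delta Q$, which gives $n_j\le \Delta Q/2^j$;
\item a union bound over $k_j$-subsets that uses only (P7.1), which matches how the paper applies the lemma to dependent markings.
\end{itemize}
Your computation of the base, $e\Delta^{4/5}/(2^jk_j)\le 2eC\log\Delta\cdot\Delta^{-1/8}$, is right; the whole proof runs on the slack $\Delta^{37/40-4/5}=\Delta^{1/8}$. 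Steps~1--2 already give $\exp(-\Omega(\Delta^{1/40}))$, which is all the paper ever needs. Every application feeds this bound either into a $\Delta^{-\omega(1)}$ LLL/shattering criterion or into an inverse-polynomial union bound when $\Delta\ge(\log n)^{50}$.

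\paragraph{A correction to Step~3.} Step~3 also works, but the lower classes do not ``only get easier'' for every $\delta$. For class $j=J-i$ the base is at most $(2e/\alpha)\,2^{\delta i}\Delta^{-1/8}$, where $\alpha=1-2^{-\delta}$, and this \emph{grows} with $i$.
\begin{itemize}
\item Since $2^J=\Theta(\Delta^{9/10})$, you need $\tfrac{9}{10}\delta<\tfrac18$, that is, $\delta<5/36$.
\item For $\delta$ near $1$, the bottom class has base $\gg 1$ and the bound is vacuous.
\item With, say, $\delta=1/10$, the required count $k_{J-i}\ge(\alpha/4)\,2^{(1-\delta)i}\Delta^{1/40}$ grows with $i$ while the base stays $O(\Delta^{-1/40})$, so the argument closes.
\item The top class needs $\alpha\Delta^{1/40}/2$ marked vertices rather than $\Delta^{1/40}/2$. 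This is harmless, because the $\ln\Delta$ margin absorbs the constant.
\end{itemize}

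\paragraph{A simpler route to the exact constant.} You can avoid the geometric targets entirely. Group vertices by $w(v)\in[\Delta^{(i-1)/40},\Delta^{i/40})$ instead of dyadically; a width ratio below $\Delta^{1/8}$ per class is all the base needs. Then:
\begin{itemize}
\item there are only $37$ classes, so the pigeonhole loses a constant factor rather than $\log\Delta$;
\item each class needs $\Omega(\Delta^{1/40})$ marked vertices against a base of $O(\Delta^{-1/10})$;
\item the bound $\exp(-\Omega(\Delta^{1/40}\ln\Delta))\le\exp(-\Delta^{1/40})$ follows directly for large $\Delta$.
\end{itemize}
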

The high level idea of using \Cref{lem:bnd-notbig} to show that CC is maintained for cliques in each step of our coloring procedure is as follows.  Most of our coloring steps are based on some variant of random color trials, in which each node picks one (or multiple) candidate colors and then permanently retains one of its candidate colors if none of its neighbors chose it as a candidate. To prove that such a process maintains CC, fix a color $x\in[c]$,  a clique $A_i$, and define a set $N_v$ for each vertex $v$ of the clique containing its external neighbors, except those in $\Big_i^+$. Let $Q$ be an upper bound on the sizes of the sets. As we exclude vertices in $\Big_i^+$ (those with more than $2\Delta^{9/10}$ neighbors in $A_i$) no vertex lies in more than $2\Delta^{9/10}$ sets. We consider a vertex $u\in \bigcup N_v$ marked if color $x$ is among its candidate colors. Clearly, the final colors picked by vertices are not independent, but the choices of candidate colors is.  Hence, if the list of each vertex is of size $Q\cdot \Delta^{1/5}$ the properties of \Cref{lem:bnd-notbig} are satisfied. We obtain that CC is maintained for color $x$ and clique $A_i$ with probability $\exp(-\Delta^{1/40})$. 

Note that in this process the events whether color $x$ appears in the neighborhood of two vertices  $v\neq v'$ may not be independent as $v$ and $v'$ may have a common external neighbor. The strength of \Cref{lem:bnd-notbig} is that it can still deal with this situation. 

\subsection{Full Algorithm and Proofs of \texorpdfstring{\cref{thm:main,thm:det-alg}}{Theorems~\ref{thm:main},~\ref{thm:det-alg}}}

\label{sec:proof-thm-main}

The proof of \Cref{thm:main} appears at the end of this section. We first focus on coloring the graph $F$ from which we can efficiently recover a coloring of $G$. See \Cref{alg:highlevel} for pseudocode and the order in which we process vertices of $F$. 
\begin{algorithm} \caption{Overall $\Delta-k$ Coloring Algorithm}
        \label{alg:highlevel}
        \Input{Graph $F$ from \cref{lem:structuralDecomposition} with the vertex-partition $S,B_H,A_H, B_L,\AL$}
        \Output{Coloring of $F$}
        \BlankLine
        \alg{ColorSparse}($S$)
        
        \alg{ColorWithMuchSlack}($B_H$) 
        
        \alg{ColorCliques}($A_H$)   
        
        \alg{ColorWithMuchSlack}($B_L$)
        
        \alg{ColorCliques}($\AL$)\end{algorithm}

The guarantees required and provided by subroutines \alg{ColorWithMuchSlack}, \alg{ColorSparse}, and \alg{ColorCliques} are stated in \Cref{lem:colorWithMuchSlack,lem:sparse,lem:coloringCliques} below, and proven in \Cref{sec:colorwithmuchslack,sec:sparse,sec:cliques} respectively. We begin with the definition of $\Pi$-ous subgraphs that are central for coloring all non-clique vertices.  It is used to show that any probabilistic coloring of these vertices is random enough compared to the external degree of the still uncolored cliques. 

\paragraph{$\Pi$-ous subgraphs.}
A subgraph $H$ is $\Pi$-ous if it satisfies the following property: 
\renewcommand{\theourProp}{$\Pi$} \begin{prop}\label{prop:Pi}
An induced subgraph~$H$ of~$F$ satisfies property~$\Pi$ if there exists $U \ge \Delta^{1/4}$ s.t.
\begin{enumerate}[label=(\alph*)]
\item every uncolored vertex in each clique~$A_i$ has at most~$U$ neighbors in $H - \mathrm{All}_i$, and
\item each vertex~$u \in H$ has a list of available colors satisfying $|L(u)| \ge \deg_H(u) + U\cdot \Delta^{0.22}$.
\end{enumerate}
\end{prop}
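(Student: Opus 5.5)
Since Property~$\Pi$ is a definition, there is nothing to derive from earlier lemmas. The content I would verify is the one the paper relies on immediately: Property~$\Pi$ is \emph{stable under partial coloring of $H$}. Concretely, suppose $H$ satisfies Property~$\Pi$ with some $U \ge \Delta^{1/4}$, and we extend the current coloring by properly coloring an arbitrary subset $X \subseteq V(H)$ with colors from the vertices' lists. I would show that the induced subgraph $H' = H - X$ of still-uncolored vertices again satisfies Property~$\Pi$ with the \emph{same} $U$. This is the monotonicity of slack announced in the overview, and it is what allows an iterated procedure (\RCT/\MCT) to be analysed one step at a time without re-checking Property~$\Pi$.

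\textbf{Part (a).} This is immediate. Every vertex of a clique $A_i$ that is uncolored after the step was uncolored before it. Moreover $N(v) \cap (H' - \All_i) \subseteq N(v) \cap (H - \All_i)$, so $v$ still has at most $U$ neighbors in $H' - \All_i$.

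\textbf{Part (b).} I would write the slack of $u \in H'$ as $s(u) = |L(u)| - \deg_H(u)$ and compare its value before and after coloring $X$. Let $X_u = X \cap N(u)$ be the newly colored neighbors of $u$ inside $H$. The new list $L'(u)$ loses exactly the colors $\{\phi(w) : w \in X_u\}$, so $|L'(u)| \ge |L(u)| - |X_u|$. On the other side, $\deg_{H'}(u) = \deg_H(u) - |X_u|$. Subtracting gives
\[
|L'(u)| - \deg_{H'}(u) \;\ge\; |L(u)| - \deg_H(u) \;\ge\; U\cdot\Delta^{0.22}.
\]
The inequality can be strict when several neighbors receive the same color, or a color not in $L(u)$.

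The one subtlety, and the step I would check most carefully, is that $L(u)$ must only be affected by colorings of vertices of $H$. If vertices outside $H$ are colored concurrently, they can remove colors without reducing $\deg_H(u)$, and the argument above breaks. So I would state the invariant under the proviso that only vertices of $H$ are colored between the two time points. This matches how \alg{ColorWithMuchSlack} is applied to $B_H$ and $B_L$ separately in \cref{alg:highlevel}.

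Under that proviso, Property~$\Pi$ holds for the uncolored remainder at every step. The list-size guarantee $|L(u)| \ge U\Delta^{0.22}$ therefore persists throughout, which is exactly what feeds \cref{lem:bnd-notbig} with $Q = U$ to maintain CC in each step.
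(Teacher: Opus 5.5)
Your proposal is correct and matches the paper: $\Pi$ is a definition with no proof of its own, and the invariant you verify is exactly the one the paper relies on. The paper states it informally in the preliminaries (slack ``does not decrease as we color additional vertices of $H$'') and in the overview (each colored neighbor removes one competitor and at most one color). Your proviso that only vertices of $H$ are colored matches how the paper handles $S_2$ and $B_L$, whose lists are guaranteed separately ``regardless of how'' earlier parts are colored.
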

Property (b) ensures that every vertex in $H$ has slack $U\cdot \Delta^{0.22}$, and importantly this slack remains as other vertices of the subgraph $H$ are colored. This allows us to color $H$ with $O(\log\Delta)$ iterated random color trial while maintaining Property CC.
In \Cref{sec:colorwithmuchslack} we prove the following lemma for coloring $\Pi$-ous subgraphs.

\begin{restatable}[\alg{ColorWithMuchSlack}]{lemma}{LemColoringWithMuchSlack}
\label{lem:colorWithMuchSlack}
Let $H$ be a \pious subgraph of $F - \bigcup_i A_i$. There is a \LOCAL algorithm that, w.h.p., extends the coloring to all the vertices of $H$ in $\Ot{ \log\Delta \cdot \log^3 \log n }$ rounds while maintaining Property CC.
\end{restatable}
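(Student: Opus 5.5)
We plan to color $H$ by $O(\log \Delta)$ phases of random color trials (\RCT), each implemented through the shattering framework of \cref{sec:our-shattering-framework}. The slack guaranteed by \cref{prop:Pi}(b) is monotone: coloring a neighbor in $H$ removes at most one color from $L(u)$ and one uncolored neighbor. Hence throughout the process every uncolored $u\in H$ satisfies $|L(u)| \ge \deg_{H'}(u) + U\Delta^{0.22}$, where $H'$ denotes the uncolored part of $H$.

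\textbf{One phase.} Each uncolored vertex $u$ samples a uniformly random color from a set of $U\Delta^{0.22}/2$ colors taken from $L(u)$. We restrict to this size (a subset of $L(u)$ of size about $U\Delta^{0.22}$, or split the trial) so that every color is a candidate with probability at most $2/(U\Delta^{0.22})$. It keeps the color if no neighbor in $H$ sampled the same one. The probability that $u$ succeeds is at least a constant, because the number of uncolored neighbors is at most $|L(u)|$ minus the slack, and the standard \RCT\ calculation applies.

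\textbf{CC in one phase.} For CC we apply \cref{lem:bnd-notbig} with $Q=U$. By \cref{prop:Pi}(a), each uncolored clique vertex has at most $U$ neighbors in $H-\All_i$; excluding $\Big_i^+$ gives sets of size $\le Q$, each vertex lying in at most $2\Delta^{9/10}$ sets. A vertex is marked if it samples color $x$, which happens independently with probability $\le 2/(U\Delta^{0.22}) \le 1/(Q\Delta^{1/5})$, so (P7.1) holds. Thus $\covd_j(i,x)<\Delta^{37/40}$ fails with probability $\exp(-\Delta^{1/40})$.

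\textbf{Progress.} By Chernoff-type (or Talagrand) concentration, we also require each phase to shrink uncolored degrees by a constant factor, with failure probability $\exp(-\Omega(\deg))$. Once degrees drop below $\Delta^{\eps}$ (or the slack dominates the degree), a few multi-color trials (\MCT), in which each vertex tries $\Theta(\log)$ colors, color everything. This is enough because the slack $U\Delta^{0.22}\ge\Delta^{0.47}$ then greatly exceeds the remaining degree. Each MCT sample of $t$ colors keeps the marking probability per color at $t/(U\Delta^{0.22})\le 1/(Q\Delta^{1/5})$ for $t\le\Delta^{0.02}$.

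\textbf{Small $\Delta$.} When $\Delta\ge \log^{50} n$, all these events hold \whp\ by a union bound. Otherwise, each phase is an LLL. We run the pre-shattering trial, then retract the colors of vertices near violated CC or progress events; uncoloring never reduces slack, so the $\Pi$ property survives. We then solve the residual LLL, with a fresh CC budget (a separate coloring step per footnote of \cref{obs:CC}), on components of size $\poly\log n$ given by \cref{lem:our-shattering}. These components are solved with \cref{thm:deterministicLLLLOCAL} in $\Ot{\log^3\log n}$ rounds per phase, which gives the total $\Ot{\log\Delta\cdot\log^3\log n}$.

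\textbf{Main obstacle.} We expect the hardest part to be the degree-reduction events. Low-degree vertices have weak concentration, so progress events with failure probability $\exp(-\Delta^{\Omega(1)})$ cannot be stated per vertex directly. We would try to handle this by declaring a vertex ``done'' once its uncolored degree is below $U\Delta^{0.2}$, since its slack already suffices for it. We would also check that retractions do not invalidate progress events of non-retracted neighbors; for this we add marginal events in the post-shattering LLL.
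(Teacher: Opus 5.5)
\textbf{Verdict.} Your architecture matches the paper's proof: $O(\log\Delta)$ shattered random-color-trial phases that shrink uncolored degrees by a constant factor down to a $\Delta^{\varepsilon}$ threshold, then a multi-color trial. CC is certified in every pre- and post-shattering step, each with its own budget, by \cref{lem:bnd-notbig} with $Q=U$. However, one step fails as written.

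\textbf{The failing step: sampling from a restricted set of colors.} You restrict $u$ to a set of only $U\Delta^{0.22}/2$ colors of $L(u)$. \cref{prop:Pi} allows $\deg_{H'}(u)\gg U\Delta^{0.22}$, and nothing prevents the restricted sets of $u$ and its neighbors from nearly coinciding (e.g.\ if everyone keeps the smallest colors of similar lists). Then about $\deg_{H'}(u)/(U\Delta^{0.22})\gg 1$ neighbors are expected to hit $u$'s color, so $u$ almost never succeeds. You thereby lose the constant-factor degree drop that bounds the number of phases by $O(\log\Delta)$. The ``standard \textsc{Rct} calculation'' you invoke (a union bound over active neighbors, each colliding with $u$'s color with probability at most $1/|L(u)|$) is valid only when $u$ samples from all of $L(u)$. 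The restriction also buys nothing. Full-list sampling already gives per-color probability
\[
1/|L(u)|\le 1/(\deg_{H'}(u)+U\Delta^{0.22})\le 1/(U\Delta^{1/5}),
\]
which is exactly how the paper verifies (P7.1). Drop the restriction and this part goes through.

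\textbf{Smaller points.}
\begin{itemize}
\item Your ``main obstacle'' is handled exactly as in your Progress paragraph. The paper defines the progress event $E(v)$ only when $\deg\ge\Delta^{1/10}$, with target $\max\{(1-1/180)\deg,\Delta^{1/10}\}$, so every event fails with probability $\exp(-\Omega(\Delta^{1/10}))$.
\item The alternative ``done'' threshold $U\Delta^{0.2}$ leaves only a $\Delta^{0.02}$ gap between $\deg\cdot t$ and the slack. A one-shot \textsc{Mct} with $t\le\Delta^{0.02}$ then gets a failure bound of at best about $\exp(-\Delta^{0.02}/e)$, which is only $n^{-1/e}$ at $\Delta=\log^{50}n$.
\item Uncoloring can lower a neighbor's slack by one. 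What you actually need, and what holds, is that the post-retraction coloring extends the start-of-phase coloring, so the slack is still at least $U\Delta^{0.22}$.
\end{itemize}

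\textbf{Where your route differs from the paper.} Your final \textsc{Mct} uses $t\le\Delta^{0.02}$ colors, so $Q=U$ keeps working via \cref{prop:Pi}(a). The paper uses $T=\Delta^{1/10}$ colors, whose per-color probability is too large for $Q=U$. It therefore takes $Q=\Delta^{1/10}$ and must also drive the degrees of clique vertices into $H$ below $\Delta^{1/10}$; this is why \cref{lem:ColorWithMuchSlackIteration} ranges over $V(R)\cup\bigcup_i A_i$. Your variant avoids progress events for clique vertices. It still succeeds with probability $1-\Delta^{-\Omega(\Delta^{0.02})}$ once degrees are below, say, $\Delta^{1/10}$.
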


Note that when $\Delta = n^{\eps}$, \cref{lem:colorWithMuchSlack} does not guarantee a sublogarithmic runtime.
For $\Delta \geq (\log n)^{50}$, we adapt the state-of-the-art $(\deg+1)$-list-coloring algorithm of \cite{HKNT22} for coloring a $\Pi$-ous subgraph while maintaining CC; see \Cref{sec:hideg} for details. 
Using \cref{lem:colorWithMuchSlack} for low-degree graphs incurs an additional $O(\log\Delta) = O(\log\log n)$ factor in the runtime, but greatly simplifies the analysis.

\begin{restatable}{lemma}{ThmHighDegree}
    \label{thm:d1LC}
    Let $H$ be a $\pious$ subgraph of $F  - \bigcup_i A_i$ and assume $\Delta \ge (\log n)^{50}$.
    There is a \LOCAL algorithm that list-colors all vertices of $H$ in $O(\log^* n)$ rounds while maintaining Property CC.
\end{restatable}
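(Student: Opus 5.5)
The plan is to adapt the $(\deg+1)$-list-coloring algorithm of \cite{HKNT22} to $H$, using the fact that when $\Delta \ge (\log n)^{50}$ every local failure probability $\exp(-\Delta^{1/40})$ is at most $\exp(-\log^{5/4} n)$. A union bound over all $\poly(n)$ events and all $O(\log^* n)$ steps then makes every step succeed with high probability, with no shattering and no LLL. The argument has two parts: (i) slack-based progress, so that $H$ is fully colored in $O(\log^* n)$ rounds, and (ii) each individual coloring step stays within the CC budget $\Delta^{37/40}$ of \cref{def:notbig}, via \cref{lem:bnd-notbig}.

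For (i), I use Property~\ref{prop:Pi}(b). Every $u\in H$ has slack at least $s := U\Delta^{0.22} \ge \Delta^{0.47}$, and this slack never decreases while $H$ is colored. This slack is what drives the multi-color trial (\MCT) phase of \cite{HKNT22}.

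\begin{compactitem}
\item First, I run $O(1)$ rounds of random color trials (\RCT). Each vertex picks a uniformly random color from $L(u)$ and keeps it if no neighbor picked the same color. Standard concentration shows that the uncolored degree of each vertex drops until it is at most a $1/\poly(\Delta)$ fraction of its slack, or at least until the ratio $|L(u)|/\deg_{\text{uncol}}(u)$ becomes polynomially large.
\item Then I run \MCT. In each round a vertex tries $t_i$ colors, with $t_i$ growing as a tower function, since the ratio list-size over uncolored-degree grows exponentially each round. After $O(\log^* \Delta)$ rounds every vertex is colored, with failure probability $\exp(-\Delta^{\Omega(1)})$ per vertex per round.
\item Vertices of low degree have slack $\ge \Delta^{0.47}$ that dwarfs their degree, so they fall directly into the \MCT regime.
\item Because the probabilities are already subexponential in $\Delta$, we do not need the $\Delta$-independent part of \cite{HKNT22}, which handles degrees below $\log n$. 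The total is $O(\log^* n)$ rounds.
\end{compactitem}

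For (ii), I fix a clique $A_i$ still uncolored and a color $x$, and apply \cref{lem:bnd-notbig} with $N_v$ equal to the neighbors of $v\in A_i$ in $H - (\All_i\cup \Big_i^+)$.

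\begin{compactitem}
\item By \ref{prop:Pi}(a), each set has size at most $Q=U$.
\item A vertex lies in at most $2\Delta^{9/10}$ sets, because vertices outside $\Big_i^+$ have fewer than $2\Delta^{9/10}$ neighbors in $A_i$.
\item Mark a vertex if $x$ is among its candidate colors in the current step.
\end{compactitem}

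Candidates are drawn independently across vertices, so condition (P7.1) reduces to a per-vertex bound: the probability that $x$ is a candidate must be at most $1/(U\Delta^{1/5})$. In \RCT this probability is $1/|L(u)| \le 1/(U\Delta^{0.22})$, which is fine. The main obstacle is \MCT, where a vertex samples $t$ colors, so the probability becomes $t/|L(u)|$. I would cap the number of trial colors at $t \le |L(u)|/(U\Delta^{1/5})$. Since $|L(u)| \ge U\Delta^{0.22}$, this still permits $t \ge \Delta^{0.02}$, which suffices for the tower growth as long as the relevant ratio is measured against uncolored degree.

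I expect the delicate step to be the end of \MCT. There, many rounds with large $t$ are needed, and vertices whose list is barely $U\Delta^{0.22}$ cannot afford huge $t$. My first attempt is the following:
\begin{compactitem}
\item once uncolored degrees fall below $\Delta^{0.01}$, sample $t=\Delta^{0.02}$ colors per round;
\item the failure probability is then roughly $(\deg/|L|)^{t}\le \exp(-\Delta^{0.02})$, so a constant number of such rounds finishes everything w.h.p.
\end{compactitem}
Each round, pre- and post-correction, is a separate coloring step with its own budget, and the number of steps is $O(\log^* n)$. So \cref{obs:CC} applies and CC is maintained.
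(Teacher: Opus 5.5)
Your CC bookkeeping is correct and matches the paper's. You mark a vertex when $x$ is among its trial colors, take $Q=U$ from \cref{prop:Pi}(a), and cap the number of trial colors at $|L(u)|/(U\Delta^{1/5})\approx\Delta^{0.02}\ge\log n$; this is exactly how the paper handles \RCT and \MCT. Your final phase, once uncolored degrees are below $\Delta^{0.01}$, is also fine.

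The genuine gap is in part (i), getting into that regime. \cref{prop:Pi}(b) gives only \emph{additive} slack $s=U\Delta^{0.22}\ge\Delta^{0.47}$, while $\deg_H(u)$ may be $\Theta(\Delta)$. Nothing in Property $\Pi$ prevents $H$ from containing near-cliques of size $\Theta(\Delta)$ whose vertices have slack only $s$. Take a clique of $m$ uncolored vertices with a common list of $m+s$ colors, $s\ll m$:
\begin{itemize}
\item One round of \RCT colors at most as many vertices as there are distinct colors tried, roughly $(1-1/e)m$. So the ratio $s/m$ grows only by a constant factor per round.
\item \MCT does not help. A trial color survives only if no other vertex tries it, which has probability about $e^{-t}$, so a vertex succeeds with probability at most about $te^{-t}$.
\end{itemize}
Hence your claim that $O(1)$ rounds of \RCT make $|L(u)|/\deg_{\mathrm{uncol}}(u)$ polynomially large is false. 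Uncoordinated trials need $\Theta(\log(m/s))=\Theta(\log\Delta)$ rounds before the tower growth of \MCT can start. Since the lemma allows $\Delta$ as large as $n^{\eps}$, this is $\Theta(\log n)$, which is exactly the cost of \cref{lem:colorWithMuchSlack} that this lemma is meant to avoid.

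The missing ingredient is the sparse/dense machinery of \cite{HKNT22}, which the paper uses essentially as a black box:
\begin{itemize}
\item Pad $H$ with dummy vertices so that $\deg_{H'}(v)=|L(v)|-1$, making it a genuine $(\deg+1)$-list-coloring instance with all degrees $\Omega(U\Delta^{0.22})$.
\item Compute the almost-clique decomposition $V_{sp},C_1,\dots,C_q$.
\item The initial random trial gives sparse vertices slack proportional to their degree.
\item In each almost-clique $C_i$, a \emph{synchronized color trial} colors all but a few vertices in one shot: a leader assigns the vertices distinct random colors from its palette.
\item Only then is slack a constant fraction of uncolored degree, so SlackColor ($O(\log^* n)$ rounds of \MCT) finishes.
\end{itemize}
Consequently CC must also be checked for the synchronized trial, which your proposal never considers. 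A vertex of $C_i$ receives a fixed color with probability $O(1/|C_i|)$, and the padding guarantees $|C_i|=\Omega(U\Delta^{0.22})$. So \cref{lem:bnd-notbig} again applies with $Q=U$.
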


\medskip\noindent
We cannot use \cref{lem:colorWithMuchSlack} directly to color the vertices of $S$ because they do not have enough initial slack to satisfy \cref{prop:Pi}. As in \cite{MR14}, we begin by solving an LLL to provide them with $\Omega( \sqrt{\Delta} )$ slack which is still not sufficient for \Cref{prop:Pi}. We then color the vertices of $S$ with \cref{lem:colorWithMuchSlack} (or \cref{thm:d1LC}) in two batches to ensure that \cref{prop:Pi} holds, thereby that CC is maintained. Details can be found in \cref{sec:sparse}.

\begin{restatable}[\alg{ColorSparse}]{lemma}{LemColorSparse}
\label{lem:sparse}
    Let $c \geq \Delta - k_\Delta + 1$ and $S$ be the set of vertices in $F$ as in \cref{lem:structuralDecomposition}-(\ref{part:decomp-S}).
    There is a $\Ot{ \log^4 \log n }$-round distributed algorithm that, \whp, $c$-colors the vertices of $S$ while maintaining CC. When $\Delta \geq (\log n)^{50}$, it runs in $O(\log^* n)$ rounds.
    \label{lem:tough-cookie}
\end{restatable}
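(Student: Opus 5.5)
We first run a slack-generation step and then color $S$ in two batches, each of which is made $\Pi$-ous, following the outline in \cref{sec:tech-intro}.

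\textbf{Step 1: slack generation.} Each vertex of $S$ joins with constant probability $p$ and tries a uniformly random color from $[c]$. It keeps the color if no neighbor tried the same color.

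For a vertex $v$ with $\deg_S(v)<\Delta-3\sqrt\Delta$, slack is immediate, since $c\ge \Delta-k_\Delta+1\ge \Delta-\sqrt\Delta+1$. For a vertex with $9\cdot 10^5\Delta^{3/2}$ non-edges in its neighborhood, the standard computation applies. The expected number of non-adjacent neighbor pairs that both keep the same color is $\Omega(\Delta^{3/2}/\Delta)=\Omega(\sqrt\Delta)$. Talagrand-type concentration then gives slack $\Omega(\sqrt\Delta)$ except with probability $\exp(-\Delta^{\Omega(1)})$. The constant $9\cdot 10^5$ should be what makes this beat the $k_\Delta<\sqrt\Delta$ deficit with room to spare.

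CC for this step follows from \cref{lem:bnd-notbig}. The marking probability is $p/c$, and every external-neighbor set has size $Q\le 10^8\sqrt\Delta$, so $pQ\Delta^{1/5}/c\ll 1$.

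\textbf{Step 2: splitting and the two batches.} The uncolored set $S'$ is split randomly into $S_1$ (probability $1-q$) and $S_2$ (probability $q$), with $q=\Delta^{-0.3}$ or similar.

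For $S_2$, each dense vertex then has about $q\Delta\ll\sqrt\Delta$ neighbors in $S_2$, so condition (a) of \cref{prop:Pi} holds with $U=\max(\Delta^{1/4},2q\Delta)$. Since $\deg_{S_2}(u)\approx q\deg(u)$, the list of $u\in S_2$ exceeds $\deg_{S_2}(u)$ by roughly $(1-q)\deg_{S'}(u)+\Omega(\sqrt\Delta)$. Choosing $q$ so that $U\Delta^{0.22}\le\Omega(\sqrt\Delta)$ requires $q\Delta^{1.22}\lesssim\sqrt\Delta$, i.e.\ $q\approx\Delta^{-0.72}$, which gives $U=\Delta^{1/4}$.

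For $S_1$, note that $U$ may be as large as the number of $S$-neighbors of a clique vertex, which is at most the external degree, about $10^8\sqrt\Delta$. Then $U\Delta^{0.22}\approx\Delta^{0.72}$, so each $u\in S_1$ needs that much slack. This slack comes temporarily from its uncolored neighbors in $S_2$ as long as $u$ has $\gtrsim\Delta^{0.72}/q$ neighbors in $S'$. That requires $q\ge\Delta^{-0.28}$, which conflicts with the requirement on $q$ for $S_2$.

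I therefore expect the real split to be by structure rather than uniformly. $S_2$ should consist of vertices with few dense neighbors, or the relevant $U$ should be taken per subgraph. I would also handle vertices with small uncolored degree separately, because they already have slack exceeding their degree by $\Omega(\sqrt\Delta)$. Getting these parameters consistent is the main obstacle, and I would try first to take $S_1$ as the vertices of $S'$ adjacent to many clique vertices. Each batch is then colored by \cref{lem:colorWithMuchSlack}, or by \cref{thm:d1LC} when $\Delta\ge(\log n)^{50}$.

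\textbf{Step 3: LLL and runtime.} For $\Delta\ge(\log n)^{50}$, all failure probabilities $\exp(-\Delta^{1/40})$ are polynomially small, so union bounds suffice and the algorithm takes $O(\log^*n)$ rounds.

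Otherwise, Step 1 and the splitting are solved by shattering via \cref{lem:our-shattering}. I would use a guard event limiting the number of pre-shattering participants around each vertex, so that enough uncolored neighbors and non-edges remain for a post-shattering retry. The retry uses a disjoint half of the palette and its own CC budget. Post-shattering components have $\poly\log n$ size and are solved with \cref{thm:deterministicLLLLOCAL} in $\Ot{\log^3\log n}$ rounds. Combined with the $\Ot{\log\Delta\log^3\log n}=\Ot{\log^4\log n}$ cost of \cref{lem:colorWithMuchSlack}, this gives the claimed bound.
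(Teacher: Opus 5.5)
There is a genuine gap in Step~2. You never exhibit a split for which both $F[S_1]$ and $F[S_2]$ are $\Pi$-ous, and the conflict that makes you abandon the uniform split comes from a miscount.

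\textbf{The uniform split does work.} For condition (a) on $S_2$ you bound the number of $S_2$-neighbors of an uncolored clique vertex by $q\Delta$. But such a vertex has at most $10^8\sqrt{\Delta}$ neighbors in $S$ to begin with: this is its external degree, and $\mathrm{All}_i\subseteq B$. You use exactly this bound in your $S_1$ paragraph. Take a degree-splitting LLL whose events also cover clique vertices, so that every vertex with $\deg_{S'}\ge \alpha q^{-1}\log\Delta$ gets $\Theta(q\deg_{S'})$ neighbors in $S_2$. Then a clique vertex has at most $\max\{O(q\sqrt{\Delta}),\,\alpha q^{-1}\log\Delta\}$ neighbors in $S_2$. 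The $S_2$ requirement $U_2\Delta^{0.22}\le 0.05\sqrt{\Delta}$ becomes, up to constants and a $\log\Delta$ factor, $q\lesssim \Delta^{-0.22}$. This is compatible with your $S_1$ requirement $q\gtrsim\Delta^{-0.28}$. The paper takes $q=2\Delta^{-1/4}$, $U_1=10^8\sqrt{\Delta}$ and $U_2=\alpha\Delta^{1/4}\log\Delta$.

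For condition (b) on $S_2$, use $|L(v)|\ge\deg_{S'}(v)+0.05\sqrt{\Delta}$ after slack generation. This gives $|L(v)|-\deg_{S_1}(v)\ge \deg_{S_2}(v)+0.05\sqrt{\Delta}$ however $S_1$ is colored.

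The structural split you suggest would not help. Knowing that each $S_2$ vertex has few clique neighbors says nothing about how many $S_2$-neighbors a single clique vertex has. So condition (a) would again force $U\approx\sqrt{\Delta}$, and hence slack of order $\Delta^{0.72}$, which $S_2$ does not have.

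\textbf{Low-degree vertices of $S_1$.} The second missing ingredient concerns $u\in S_1$ with small $\deg_{S'}(u)$; the paper's threshold is $\Delta/40$. Such $u$ gets too little temporary slack from $S_2$. The $\Omega(\sqrt{\Delta})$ slack you invoke is far below the required $U_1\Delta^{0.22}=10^8\Delta^{0.72}$.

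The paper adds to slack generation the guarantee that every vertex of $S$ has at most $19\Delta/20$ colored neighbors. Then $\deg_{S'}(u)<\Delta/40$ gives $|L(u)|\ge c-19\Delta/20\ge \deg_{S'}(u)+\Delta/40-k_\Delta$, which is $\Omega(\Delta)$ slack. With constant activation probability this is just a Chernoff bound. However, it must be part of the slack-generation statement and must survive the shattering retry. The paper enforces it with events $E_1$ and $E_1'$. These cap the pre-shattering colored degree at $\max\{(11/20)\deg_S,\Delta/20\}$ and the post-shattering one at $\max\{(4/5)\deg_{S\setminus R},\Delta/20\}$.

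With these two fixes, the rest of your plan matches the paper. That is: CC via \cref{lem:bnd-notbig}, shattering with guard events and a disjoint palette, then \cref{lem:colorWithMuchSlack} or \cref{thm:d1LC} on $S_1$ and then on $S_2$.
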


\noindent
As explained above, to color the cliques, we require that they satisfy \cref{def:notbig}. Since the cliques from $\AL$ are colored later, we must be careful about maintaining \cref{def:notbig} for those as we color $A_H$. We prove \cref{lem:coloringCliques} in \cref{sec:cliques}.

\begin{restatable}[\alg{ColorCliques}]{lemma}{LemColoringCliques}
\label{lem:coloringCliques}
Let $c \geq \Delta - k_\Delta + 1$ 
and let $A'$ be a subset of the cliques $A_i$ from \cref{lem:structuralDecomposition} such that 
\begin{enumerate}
    \item all $A_i\in A'$ satisfy \cref{def:notbig}, and
    \item \label[part]{part:color-cliques-uncolored}
    every uncolored vertex in some $A_i \notin A'$ has at most $30\Delta^{1/4}$ external neighbors.
\end{enumerate}
Then, there is a $\Ot{\log^3 \log n }$-round \LOCAL algorithm that, \whp, $c$-colors all the cliques in $A'$ while maintaining \cref{def:notbig} for all the uncolored cliques outside $A'$. For $\Delta\geq (\log n)^{50}$, it runs in $O(1)$ rounds. 
\end{restatable}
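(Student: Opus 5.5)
The plan follows the outline in the technical overview: random permutation inside each clique, then parallel swaps via a matching. Each clique $A_i\in A'$ has uncolored vertices $U_i\subseteq A_i$. Colors that already appear on $\All_i$ are forbidden to all of $A_i$. Since $|\All_i|=c-|A_i|$ by \cref{lem:structuralDecomposition}-\ref{part:All} and $\All_i$ is colored (it lies in $B_L\cup B_H$, and those sets are colored before the cliques that need them), the colored part of $A_i$ together with $\All_i$ leaves a palette $P_i$ with $|P_i|\ge |U_i|$. If $\All_i$ repeats colors, there is extra slack.

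\textbf{Step 1 (random assignment).} Each clique draws a uniformly random bijection (or injection) from $U_i$ to $P_i$. This gives no internal conflicts. A vertex $v\in U_i$ is \emph{unhappy} if its color appears on an external neighbor. This includes neighbors in other cliques that are tentatively colored in the same round, so we resolve ties by treating both endpoints as unhappy.

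Each $v$ has at most $10^8\sqrt\Delta$ external neighbors, and each of its colors is a fixed color with probability about $1/|P_i|$. Here $|P_i|\ge\Omega(\Delta)$ when $U_i$ is large; the small-$U_i$ case needs separate care and is trivial if $|U_i|$ is tiny. So the expected number of unhappy vertices is $O(\sqrt\Delta)$ per clique. Concentration (permutation Chernoff / Talagrand) gives at most $C\sqrt\Delta$ unhappy vertices, except with probability $\exp(-\Delta^{\Omega(1)})$. We also require that the colors used by $A_i$'s vertices spread so that CC budgets of other cliques outside $A'$ stay within $\Delta^{37/40}$. For this we apply \cref{lem:bnd-notbig}: marking a vertex with color $x$ has probability $O(1/\Delta)$, and the external degree of uncolored vertices outside $A'$ is at most $30\Delta^{1/4}$ by hypothesis~\ref{part:color-cliques-uncolored}. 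So $Q=30\Delta^{1/4}$ and $1/(Q\Delta^{1/5})\gg 1/\Delta$, and the negative-correlation condition (P7.1) holds for a uniform permutation.

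\textbf{Step 2 (swap candidates).} For an unhappy $v$ with color $x$, a candidate $u\in A_i$ with color $y$ must satisfy two conditions: $u$ has no external neighbor colored $x$, and $v$ has no external neighbor colored $y$. By \cref{obs:CC}, at most $4\Delta/5$ vertices of $A_i$ see $x$ externally. Only $O(\sqrt\Delta)$ colors are seen by $v$. So $\Omega(\Delta)$ candidates remain, and this is exactly what CC buys. Each unhappy vertex samples candidates independently with probability $\Delta^{-1/2-\delta}$, giving $\Delta^{1/2-\delta}\gg$ (number of unhappy vertices)$\cdot\Delta^{-\delta}$.

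We then prune each sampled pair whose swap could conflict with a sampled swap in an adjacent clique. This happens when $u$ or $v$ has an external neighbor $w$ in a different clique whose sampled partner swap gives $w$ the color that $u$ or $v$ would receive. The probability of this is small, because each external neighbor is involved in few sampled pairs. Finally we verify Hall's condition in the bipartite graph between unhappy vertices and surviving candidates. Each unhappy vertex keeps $\ge\Delta^{1/2-\delta}/2$ candidates, and candidate sets are nearly independent random subsets of size-$\Omega(\Delta)$ pools. So Hall's condition holds with the usual union bound over subsets of size at most $C\sqrt\Delta$. We also need disjointness between candidates and unhappy vertices. A perfect matching is computed locally inside each clique, which takes $O(1)$ rounds, and all swaps are executed simultaneously.

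\textbf{Step 3 (rounds / LLL).} All failure events (too many unhappy vertices, too few candidates, Hall violation, CC violation for outside cliques) have probability $\exp(-\Delta^{1/40})$ and depend on the randomness of cliques within $O(1)$ hops. When $\Delta\ge(\log n)^{50}$, a union bound finishes in $O(1)$ rounds. Otherwise we use the shattering framework of \cref{sec:our-shattering-framework}. Failed cliques, and cliques near failures, retract their permutations and candidate samples. The CC budget is split into fresh budgets for pre- and post-shattering, as in Example~1. Marginal events are added for cliques adjacent to resampled cliques, since newly sampled swaps may create conflicts. \cref{lem:our-shattering} bounds the components by $\poly\log n$, and \cref{thm:deterministicLLLLOCAL} solves them in $\Ot{\log^3\log n}$ rounds.

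The main obstacle is the pruning of external conflicts together with Hall's condition. The events are correlated across cliques that share external neighbors, and resampled cliques in post-shattering can invalidate swaps of already-finished neighbors. I would try to make candidates conflict-free against all \emph{possible} tentative colors on external neighbors with sampled pairs, which keeps the events local.
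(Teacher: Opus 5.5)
Your outline (one random permutation per clique, swappable sets of size $\Omega(\Delta)$ from CC, subsampling at rate $\Delta^{-1/2-\delta}$, pruning candidates unsafe with respect to all sampled candidates of neighboring cliques, then a matching inside each clique) is the paper's. However, the step you single out as the main obstacle is a genuine gap, and your proposed remedy does not close it.

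Suppose permutations and candidate sets are retracted and resampled together over one region. Then a clique $A_i$ on the boundary of that region keeps its $\gamma$ and its candidates, while an adjacent clique $A_j$ draws a fresh permutation. Two things break. First, under your rule that both endpoints of a monochromatic edge are unhappy, $A_i$ acquires new unhappy vertices that have no candidates at all. Second, a retained candidate $u$ of $v\in\unhappy_i$ can stop being swappable: some external neighbor $w\in A_j$ of $v$ may now carry $\gamma(u)$ and have no reason to move. Your suggested remedy, safety against the colors neighbors can reach through their sampled pairs, is exactly the paper's notion of a safe candidate. It handles swaps, but not a neighbor whose permutation is itself redrawn, and no marginal bad event can supply candidates to a clique that is not resampling.

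The paper resolves this by sequencing. The synchronized color trial is shattered on its own. Every monochromatic edge between a retained and a resampled clique is oriented toward the resampled one, and edges to already colored vertices toward the clique vertex. Hence retained cliques never gain unhappy vertices, and $\gamma$ is final everywhere before any candidate is drawn. Candidate sampling is then shattered separately: post-shattering resamples only candidate sets, safety is measured with respect to $\fT^{pre}\cup\fS'$, and a marginal event $P_1(i)$ is added for retained cliques adjacent to resampled ones. That event is rare only because $B_1'(i)$ guarantees every outside vertex has at most $\Delta^{1/10}$ neighbors in each retained $T_v^{pre}$. This is the read-$k$ parameter used in \cref{claim:b1Detail}. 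The rule-(iii) indicators are correlated through shared external neighbors, so your remark that each external neighbor is involved in few sampled pairs gives the expectation but not the concentration an LLL event needs.

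The CC bookkeeping also has two holes. First, you apply \cref{obs:CC} to $A_i\in A'$ in Step 2. But the external neighbors of $A_i$ include vertices of other cliques of $A'$ that just received Step-1 colors, so CC must be maintained through Step 1 for the cliques of $A'$ too, not only for those outside $A'$. This is \cref{lem:sct}(b); \cref{lem:bnd-notbig} with $Q=10^8\sqrt\Delta$ suffices, since a vertex gets a given color with probability at most $2/\Delta$. Second, the swaps move colors onto different vertices, so your CC bound for the permutation says nothing about the final coloring seen by uncolored cliques outside $A'$. Since the final matching is not a simple function of independent trials, post-swap CC violation is not directly an event you can bound. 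The paper bounds a superset instead (\cref{lem:subsampling}(3)): mark a vertex if it has a sampled candidate of color $x$, or is a sampled candidate of the vertex of color $x$ in its clique. Given $\gamma$, this has probability at most $p$ per vertex, independently, so (P7.1) holds with $Q=30\Delta^{1/4}$ whenever $p\le\Delta^{-9/20}/30$, which your rate satisfies.

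Finally, your union bound for Hall's condition is workable but unnecessary. If every unhappy vertex keeps at least $\Delta^{17/40}/40$ candidates and every other vertex lies in at most $\Delta^{17/40}/80$ candidate sets, then $|N(S)|\ge 2|S|$ for every $S$ deterministically, which keeps the failure events simple and local.
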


\begin{proof}[Proof of \Cref{thm:main}] 
Building on \cite[Theorem~5]{MR14}, Bamas and Esperet \cite[Theorem 4.1]{BamasE19} show that if a graph $G$ is not $c$-colorable for $c\geq\Delta-k_{\Delta}+1$, then there exists a vertex $v\in V(G)$ such that the graph induced by $\{v\}\cup N(v)$ has chromatic number $>c$. This can be tested in $O(1)$ rounds in the \LOCAL model exploiting the unbounded local computations. 

Otherwise, we compute a $c$-coloring of $G$ as follows. First, we compute the graph $F$ with vertex partition $S$, $B_H$, $B_L$ and two sets of cliques $A_H$ and $\AL$ via \Cref{lem:structuralDecomposition}. Then we $c$-color $F$ (see below), and recover a $c$-coloring of $G$ from the $c$-coloring of $F$ in the same way as in \cite{BamasE19}.  They show (in \cite{BamasE19}) that this can be done in time proportional to the complexity of the \emph{degree+$\Omega(\sqrt{\Delta})$-list coloring} problem. This runs in $\Ot{ \log^{5/3}\log n }$ in general and in $O(\log^* n)$ rounds for $\Delta \ge (\log n)^{14})$, with the algorithms of \cite{GG24,HKNT22}.

For the rest of the proof we focus on computing a $c$-coloring of $F$ by coloring the nodes in  $S$, $B_H$, $A_H$, $B_L$, and $\AL$ in the prescribed order (see Algorithm~\ref{alg:highlevel}). We first focus on the case when $\Delta\leq (\log n)^{50}$; the case of larger $\Delta$ is handled thereafter. 

\medskip\noindent
\textbf{Coloring S.} The sparse nodes are colored via \Cref{lem:sparse}. 

\medskip\noindent
\textbf{Coloring $B_H$.} 
We argue that the graph induced by $B_H$ is \pious with $U = U_H = 10^8\sqrt{\Delta}$, and color it with \cref{lem:colorWithMuchSlack} when $\Delta \leq (\log n)^{50}$ and with \cref{thm:d1LC} otherwise.
By \Cref{lem:structuralDecomposition}-(\ref{part:slack-BH}), each vertex of $B_H$ has $\Delta^{3/4} \geq U\Delta^{1/4} = \Theta( \Delta^{0.72} )$ slack in $F[B_H]$ regardless of how nodes in $S$ are colored. All cliques are uncolored at this point and each vertex in each $A_i$ has at most $U_H$ neighbors in $B_H$ by \cref{lem:structuralDecomposition}-(\ref{part:A-ext},\ref{part:AL-ext}).

\medskip\noindent
\textbf{Coloring $A_H$.} The collection of cliques in $A_H$ is colored via \Cref{lem:coloringCliques}. We can apply the lemma as their \cref{def:notbig} was maintained by earlier coloring steps (\cref{lem:sparse,lem:colorWithMuchSlack,thm:d1LC}). We emphasize that, by \cref{lem:coloringCliques}, \cref{def:notbig} of cliques in $\AL$ still holds after coloring cliques in $A_H$.

\medskip\noindent
\textbf{Coloring $B_L$.} 
We argue that $F[B_L]$ is \pious subgraph with $U = U_L = 30\Delta^{1/4}$ and color it via \Cref{lem:colorWithMuchSlack} when $\Delta \leq (\log n)^{50}$ and via \cref{thm:d1LC} otherwise. 
By \cref{lem:structuralDecomposition}-(\ref{part:AL-ext}), the external degree of cliques in $\AL$ (cliques in $A_H$ are already colored at this point) is bounded above by $U_L$. 
By \Cref{lem:structuralDecomposition}-(\ref{part:slack-BL}), each node of $B_L$ has at least $\sqrt{\Delta}/2$ slack in $F[B_L]$, which is larger than the $U_L\cdot \Delta^{0.22} = 30\Delta^{0.47}$ slack required by \cref{prop:Pi}.

\medskip\noindent
\textbf{Coloring $\AL$.} The cliques in $\AL$ are also colored via \Cref{lem:coloringCliques}, which can be done as \cref{def:notbig} holds for each of these cliques. Note that \cref{lem:coloringCliques}-(\ref{part:color-cliques-uncolored}) vacuously holds at this step because all the uncolored cliques are in $A' = \AL$.

\medskip\noindent
\paragraph{Runtime.}
The runtime is dominated by the time required to color $S$, $B_H$ and $B_L$. 
For $\Delta \leq (\log n)^{50}$, it is $\Ot{ \log\Delta \cdot \log^3\log n} = \Ot{ \log^4 \log n }$ by \cref{lem:colorWithMuchSlack,lem:sparse}, 
while for  $\Delta \geq (\log n)^{50}$, it is $O(\log^* n)$ by \cref{thm:d1LC,lem:sparse}.
\end{proof}

\TheoremDet*
\begin{proof}[Proof of \Cref{thm:det-alg}]
    If the graph $G$ is not $c$-colorable, there exists a vertex $v$ for which $\set{v} \cup N(v)$ is not $c$-colorable \cite{MR14,BamasE19}, which can be deterministically detected in $O(1)$ rounds of \LOCAL. 
    When $G$ is $c$-colorable, we apply the derandomization framework developed in \cite{GKM17,GHK18,RG20}. It states that any randomized algorithm with runtime $T(n)$ for a problem whose solution can be verified in $l(n)$ rounds can be derandomized in $O((T(n)+l(n))\cdot T_{ND}(n))$ rounds where $T_{ND}(n)$ is the runtime to compute a so-called $(O(\log n), O(\log n))$-network decomposition, see any of these works for details. Here, it is applied to the randomized algorithm of \cref{thm:col} with $T(n) = \Ot{ \log^4 \log n }$ and $l(n) = 1$ because checking if the resulting coloring is proper only requires a single round. By using the algorithm of \cite{GG24} for computing a network decomposition in $\Ot{ \log^2 n }$ rounds, we obtain \Cref{thm:det-alg}.
\end{proof}

\section{Coloring With Much Slack}
\label{sec:colorwithmuchslack}
The goal of this section is to prove \Cref{lem:colorWithMuchSlack}, that is, we aim to color an induced subgraph $H$ of $F-\bigcup_iA_i$ that is $\Pi$-ous while at the same time maintaining CC for all uncolored cliques. 
Recall that $H$ is \pious if each node has sufficient slack in $H$ compared to a given upper bound $U\geq \Delta^{1/4}$ on the external degree of uncolored cliques. 
The coloring process consists of two primary components. The first one is a random color trial for all uncolored vertices, applied for $O(\log\Delta)$ iterations, where each iteration lowers the uncolored degree of each vertex by a constant factor in expectation. The properties of the resulting partial coloring after a single color trial are summarized in \Cref{lem:ColorWithMuchSlackIteration} below. The second component summarized in \Cref{lem:MCT} below is then a multi color trial that colors all remaining vertices of the graph instance to be colored. 
  
\begin{lemma}[Random Color Trial]
\label{lem:ColorWithMuchSlackIteration}
    Let $H$ be an uncolored \pious subgraph of $F - \bigcup_i A_i$ and let $R$ be the graph induced by the remaining uncolored nodes of $H$ after running \cref{alg:rct-single-iteration}.
    Then, \whp, 
    \begin{enumerate}
        \item \label{part:rct-degree-H} $\deg_R(v) \leq \max\set{ (1-1/180) \deg_{H}(v), \Delta^{1/10} }$ for all $v \in V(R) \cup \bigcup_i A_i$, and
        \item \label{part:rct-degree-A} \cref{def:notbig} is maintained for all uncolored cliques.
    \end{enumerate}
    \cref{alg:rct-single-iteration} runs in $\Ot{ \log^3 \log n }$ rounds.
\end{lemma}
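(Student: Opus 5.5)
We analyze a single random color trial in which every uncolored vertex $u\in H$ samples a color $x_u$ uniformly at random from $L(u)$ and keeps it iff no neighbor in $H$ sampled the same color. We treat the two guarantees as bad events of an LLL, bound their probabilities by $\exp(-\Delta^{\Omega(1)})$, and resolve them with the shattering framework of \cref{sec:our-shattering-framework}. When $\Delta\ge(\log n)^{50}$ a union bound over all events already suffices.

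\textbf{Degree reduction.} Fix $v$ with $\deg_H(v)\ge \Delta^{1/10}$, where $v\in V(H)$ or $v$ lies in a clique. For each neighbor $u\in N_H(v)$, property (b) of \cref{prop:Pi} gives $|L(u)|\ge \deg_H(u)+U\Delta^{0.22}$. Hence the probability that some neighbor of $u$ picks $x_u$ is at most $\sum_{w\in N_H(u)}1/|L(u)|\le \deg_H(u)/|L(u)|<1$. We need this to be bounded away from $1$ by a constant. For that we use the standard trick: with $|L(u)|\ge \deg_H(u)+s$, at least half of the neighbors of $v$ either have $|L(u)|\ge 2\deg_H(u)$ or have large slack. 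We compute $\Pr[x_u \text{ retained}]\ge \prod_{w}(1-1/|L(w)|)$, which should be at least $1/e$ when $|L(w)|\ge \deg_H(w)+1$ for all $w$. Then $\E[\#\text{colored neighbors}]\ge \deg_H(v)/e$, which comfortably exceeds the required $\deg_H(v)/180$. Concentration follows from Talagrand's inequality or a martingale in the style of \cite{HKNT22}. The number of colored neighbors is $O(1)$-certifiable and $1$-Lipschitz in the choices within two hops. This gives failure probability $\exp(-\Omega(\deg_H(v)))\le \exp(-\Omega(\Delta^{1/10}))$.

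\textbf{CC.} Fix a clique $A_i$ and a color $x$. For each $w\in A_i$, let its set be its external neighbors in $H$ outside $\All_i\cup\Big_i^+$, of size at most $Q:=U$ by (a). A vertex is marked if it sampled $x$, which happens with probability at most $1/|L(u)|\le 1/(U\Delta^{0.22})\le 1/(Q\Delta^{1/5})$, independently across vertices, so (P7.1) holds. A vertex outside $\Big_i^+$ lies in fewer than $2\Delta^{9/10}$ sets, and there are at most $\Delta$ sets. By \cref{lem:bnd-notbig}, $\covd(i,x)\ge \Delta^{37/40}$ with probability at most $\exp(-\Delta^{1/40})$. Since colored vertices are a subset of marked ones, this also bounds the actual coloring.

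\textbf{Shattering, the main obstacle.} For $\Delta<(\log n)^{50}$, the events have probability $\exp(-\Delta^{1/40})\le\Delta^{-c_4}$ only when $\Delta$ exceeds a large constant, which is fine since $\Delta\ge\Delta_0$. Each event depends on variables within $O(1)$ hops, and each vertex is in $\poly(\Delta)$ events, so \ref{part:our-shattering-diam}--\ref{part:our-shattering-pre-prob} hold. In pre-shattering we run the trial with half the palette as its own CC step. We then retract, i.e. uncolor, every vertex near a violated event. We also add a guard event ensuring that not too many neighbors get uncolored, since uncoloring a neighbor can raise $\deg_R(v)$. In post-shattering, the retracted vertices and those near them rerun the trial on the other half of the palette as a separate CC step. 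Retracting helps CC, since it only removes colors. \Cref{lem:our-shattering} gives components of size $\poly(\log n)$, and \cref{thm:deterministicLLLLOCAL} solves each in $\Ot{\log^3\log n}$ rounds.

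The delicate part is making the degree-decrease events well defined on the post-shattering instance. Their probability must stay small despite pre-shattering colorings, which requires marginal events for vertices adjacent to components. Halving the palette must also preserve the $1/e$ retention bound, which requires the slack (b) to survive halving up to constants; this is where the constant $180$ has room.
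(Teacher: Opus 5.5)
Your single-trial CC estimate matches the paper's use of \cref{lem:bnd-notbig}. Two parts of the proposal, however, would fail as written.

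\textbf{Palette splitting breaks the hypotheses you rely on.} Property (b) of \cref{prop:Pi} only guarantees $|L(u)|\ge\deg_H(u)+U\Delta^{0.22}$, and $\deg_H(u)$ may be $\Theta(\Delta)\gg U\Delta^{0.22}$. Any half of $L(u)$ then has fewer colors than $u$ has uncolored neighbors, so the hypothesis $|L(w)|\ge\deg_H(w)+1$ behind your retention bound is lost. With a fixed split $[c]=C_1\cup C_2$ it can even happen that $L(u)\cap C_1=\emptyset$, since earlier colored neighbors may block all of $C_1$. Then $u$ cannot be colored in pre-shattering, and your CC estimate $1/|L(u)\cap C_1|\le 1/(U\Delta^{1/5})$ is void. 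This is not a constant-factor loss that the $1/180$ can absorb: $E(v)$ is no longer rare, and the shattering argument collapses.

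The split is also unnecessary. The paper reruns the same trial (\cref{alg:RCT}) in post-shattering on $H'$, the uncolored vertices within distance $2$ of $\vbl(\mathcal{A}')$, using the lists available at that moment. $H'$ is an induced uncolored subgraph of the $\Pi$-ous $H$, and slack never decreases as vertices of $H$ are colored. Hence \cref{lem:ev,lem:rct-cc-post-prob} apply to $H'$ verbatim whatever pre-shattering did, and the separate CC budget is granted by the per-step definition in \cref{def:notbig}.

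The ``delicate part'' you leave open is the actual content of the proof, and it needs neither guard nor marginal events. Impose $E(v)$ in post-shattering for every $v\in\vbl(\mathcal{A}')\cup N(\vbl(\mathcal{A}'))$, measured relative to $\deg_{H'}(v)$. Resampling has radius $2$ while these $v$ are within distance $1$, so all uncolored neighbors of such $v$ lie in $H'$, and hence $\deg_R(v)\le(1-1/180)\deg_{H'}(v)\le(1-1/180)\deg_H(v)$. Every other $v$ has no retracted neighbor and did not trigger $E(v)$, and post-shattering only colors more vertices. For CC, retraction removes every violation, and the post-shattering LLL contains $E_{CC}(i)$ for each clique adjacent to $H'$.

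\textbf{The single-trial degree analysis is also wrong as written.} The bound $\prod_{w\in N(u)}(1-1/|L(w)|)\ge 1/e$ does not follow from $|L(w)|\ge\deg_H(w)+1$. The vertex $u$ may have $\Theta(\Delta)$ low-degree neighbors whose lists have size only about $U\Delta^{0.22}$, which makes the product $\exp(-\Delta^{\Omega(1)})$. A constant retention probability does hold, but only after averaging over $x_u$. One way is Jensen together with $\sum_{w\in N(u)}\sum_{x\in L(u)}\mathbf{1}[x\in L(w)]/|L(w)|\le\deg_H(u)$. Another, as the paper does, is to activate vertices with probability $p_a=1/4$ and take a union bound.

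Moreover, under your symmetric rule the number of colored neighbors of $v$ is not $O(1)$-Lipschitz. A single vertex at distance $2$ from $v$ that changes its color can uncolor all of its common neighbors with $v$ that picked the new color. This is why the paper lets a vertex discard its color only when a \emph{lower-ID} neighbor picked it. Then whether the $i$-th active neighbor of $v$ (in ID order) keeps its color depends only on its own color and those of lower-ID vertices. Its own color is not revealed by the outcomes of the earlier neighbors, so the conditional success probability stays at least $1/2$, and Azuma's inequality (\cref{lem:BEPS}) gives the $\exp(-\Omega(\Delta^{1/10}))$ bound.
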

\begin{restatable}[Multi Color Trial]{lemma}{lemMCT}
\label{lem:MCT}\label{lem:mct-nb}
Let $H$ be an uncolored induced subgraph in which every vertex has $\Delta^{9/20}$ slack and each uncolored vertex in each $A_i$ has at most $\Delta^{1/10}$ neighbors in $H$.
There is a $\Ot{ \log^3 \log n}$-round \LOCAL algorithm that, \whp, colors $H$ while maintaining \cref{def:notbig} for every uncolored clique $A_i$.
When $\Delta\geq (\log n)^{50}$, the algorithm runs in $O(1)$ rounds.
\end{restatable}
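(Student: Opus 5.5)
The plan is a multi color trial (\MCT) in which every vertex of $H$ tries many colors at once, followed by the shattering framework of \cref{sec:our-shattering-framework} when $\Delta \le (\log n)^{50}$. Write $d(v)=\deg_H(v)$; the slack hypothesis gives $|L(v)| \ge d(v)+\Delta^{9/20}$. Each uncolored $v\in H$ samples a set $C_v\subseteq L(v)$ by including each color independently with probability $p_v = t/|L(v)|$, where $t=\Delta^{1/20}$ (any small polynomial power works). Then $v$ keeps an arbitrary color of $C_v$ that lies in no $C_u$ for $u\in N_H(v)$. We use a palette split: the pre-shattering phase samples only from a random or fixed half of $L(v)$, and the post-shattering phase uses the other half. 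Both halves keep slack $\Omega(\Delta^{9/20})$ relative to the degree they face, because $|L(v)|\ge \Delta^{9/20}$ and any half still has size $\ge |L(v)|/2$.

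\textbf{Step 1: one vertex fails rarely.} For a fixed $v$ and color $x\in L(v)$, the color $x$ is blocked if some neighbor samples it. The expected number of blocking samples over all $x$ is $\sum_{u\in N_H(v)} t \le t\, d(v)$. So a color of $C_v$ is blocked with probability at most roughly $t\,d(v)/|L(v)| \cdot (1/t)$ per color on average, and the fraction of $L(v)$ that is blocked is at most $d(v)t/|L(v)|$. This fraction can be close to $t$, which is a problem when $d(v)\gg\Delta^{9/20}$. So I will take $p = 1/\Delta^{0.01}$-scaled trials instead. With $|C_v|\approx \Delta^{9/20}\cdot|L(v)|^{-1}\cdot\ldots$ the cleanest route is to choose $p_v$ so that $|C_v|\approx \Delta^{2/5}$ in expectation, while each neighbor occupies at most $\Delta^{2/5}\cdot d(v)/|L(v)|$ colors.

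Fix the sampled multiset of neighbor colors, which has size $\Theta(\Delta^{2/5}d/|L|)$ w.h.p. The colors not taken by neighbors then form a $\ge 1-d/|L|\cdot O(\Delta^{2/5})/\ldots$ fraction. Rather than tune this, I will use the standard MCT analysis of \cite{HKNT22}: when slack is at least $\Delta^{9/20}$ and degrees can be as large as $\Delta$, one first runs $O(\log^*\Delta)$ rounds of trials with geometrically growing numbers of colors, $t_{i+1}=2^{t_i}$. After each round the uncolored degree drops relative to the slack, and once it is below slack$/\,\Delta^{\Omega(1)}$ a single trial with $\Delta^{\Omega(1)}$ colors succeeds with probability $1-\exp(-\Delta^{\Omega(1)})$. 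Chernoff bounds with bounded dependence, where colored neighbors only remove colors that are matched by fewer competitors, give every failure event probability $\exp(-\Delta^{\Omega(1)})$.

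\textbf{Step 2: CC.} For each uncolored clique $A_i$ and color $x$, apply \cref{lem:bnd-notbig} with the sets $N_v = N_H(v)\setminus(\All_i\cup\Big_i^+)$ for $v\in A_i$. These sets have size $Q\le\Delta^{1/10}$ by hypothesis, and a vertex is marked if $x\in C_u$. Candidate choices are independent, so (P7.1) holds provided each color is sampled with probability at most $1/(\Delta^{1/10}\cdot\Delta^{1/5})$. This holds since $p_u\cdot(\text{\#trial colors}) / |L(u)| \le \Delta^{\Omega(1)}/\Delta^{9/20}\le\Delta^{-3/10}$ when the number of trial colors is at most $\Delta^{3/20}$. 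The bad event for $(A_i,x)$ therefore has probability $\exp(-\Delta^{1/40})$. This is where the $\Delta^{1/10}$ hypothesis is used, and it forces the number of simultaneous trial colors to stay below $\Delta^{3/20}$. That bound is compatible with the constraint in Step 1.

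\textbf{Step 3: shattering.} If $\Delta\ge(\log n)^{50}$, all events hold w.h.p. by a union bound, and the $O(\log^*\Delta)$ trials reduce to $O(1)$ rounds because the slack $\Delta^{9/20}$ already exceeds the degree on the remaining graph after one trial with polynomially many colors. Otherwise, the pre-shattering phase consists of the trials above on the first half-palettes, and we retract every vertex near an occurring failure or CC event. We apply \cref{lem:our-shattering} with $\mathcal{A}$ being these events, which have weak diameter $O(1)$, are $\poly(\Delta)$ per vertex, and occur with probability $\exp(-\Delta^{1/40})\le\Delta^{-c_4}$. The post-shattering LLL $\mathcal{B}'$ consists of the same events on the second half-palettes with a fresh CC budget, and it satisfies $epd<1$. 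By \cref{lem:our-shattering}, its components have size $\poly(\log n)$, so \cref{thm:deterministicLLLLOCAL} finishes in $\Ot{\log^3\log n}$ rounds. The main obstacle I expect is Step 1 for vertices with $d(v)\gg\Delta^{9/20}$. The single-shot trial does not succeed with probability $1-\exp(-\Delta^{\Omega(1)})$ there, and the multi-round degree reduction must be carried out inside the shattering framework without percolation. I would first try to absorb it by noting that the $\exp(-\Delta^{\Omega(1)})$ degree-drop events are of the same local form as the failure events, so they can be added to $\mathcal{A}$ and $\mathcal{B}$.
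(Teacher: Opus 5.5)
Your Step~1 is never established, and the route you sketch for it does not work. You read the hypothesis as allowing $\deg_H(v)$ of order $\Delta$ against only $\Delta^{9/20}$ slack. You correctly see that a single trial then cannot succeed with probability $1-\exp(-\Delta^{\Omega(1)})$.

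But the $O(\log^*\Delta)$ schedule of \cite{HKNT22} with $t_{i+1}=2^{t_i}$ only helps once $|L(v)|$ is a large multiple of $t\cdot d(v)$. For $t\ge 2$ and $d(v)$ comparable to $|L(v)|$, the roughly $t\,d(v)$ colors sampled by neighbors can cover all but an $e^{-\Omega(t)}$ fraction of $L(v)$, so trying more colors does not help. Bringing the degree from $\Theta(\Delta)$ below the slack needs on the order of $\log\Delta$ constant-factor degree-reduction rounds. Each of these is its own shattering instance when $\Delta\le(\log n)^{50}$, which yields $\widetilde{O}(\log\Delta\cdot\log^3\log n)$, not $\widetilde{O}(\log^3\log n)$.

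The palette split fails in the same regime: half of $L(v)$ has about $(d(v)+\Delta^{9/20})/2<d(v)$ colors, so neither half has slack. Your claim that the case $\Delta\ge(\log n)^{50}$ ends after $O(1)$ rounds is unsupported for the same reason.

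The missing ingredient is that the lemma is only needed, and only proven in the paper, when $H$ itself has maximum degree at most $\Delta^{1/10}$. This is the hypothesis of \cref{lem:mct-coloring}. It holds where the lemma is invoked: in \cref{lem:colorWithMuchSlack}, the $O(\log\Delta)$ iterations of \cref{lem:ColorWithMuchSlackIteration} first bring the uncolored degree of every vertex of $H$ below $\Delta^{1/10}$, not only that of clique vertices.

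With that bound a single MCT suffices. Each vertex samples $T=\Delta^{1/10}$ colors uniformly from $L(v)$. Its neighbors sample at most $\Delta^{1/5}$ colors in total, so each of $v$'s samples is blocked with probability at most $\Delta^{1/5}/|L(v)|\le\Delta^{-1/4}$. Hence $v$ stays uncolored with probability at most $\Delta^{-T/4}\le\exp(-\Delta^{1/10})$.

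Your Step~2 then goes through as written with $Q=\Delta^{1/10}$, since $T/|L(u)|\le\Delta^{-7/20}\le\Delta^{-3/10}$. The shattering is one-shot: retract $\vbl(\mathcal{A}')$, rerun the same MCT on $\vbl(\mathcal{A}')$ with a fresh CC budget, and apply \cref{lem:our-shattering} with $\cstB=0$. No palette split is needed, because coloring vertices never decreases slack, so the same per-vertex bound holds in post-shattering.
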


Finally, these two components are combined to establish the main lemma of this section. We start the coloring with $O(\log \Delta)$ iterations of the random color trial from \Cref{lem:ColorWithMuchSlackIteration}, which decrease the maximum uncolored degree to $\Delta^{1/10}$. Then, the coloring is completed by a single iteration of multi color trial from \Cref{lem:mct-coloring}.

\LemColoringWithMuchSlack*
\begin{proof}
    We color the graph $H$ which satisfies \Cref{prop:Pi}, by iteratively applying \Cref{alg:rct-single-iteration} for $T=c\log\Delta$ times, where $c>0$ is a sufficiently large constant. We then finish the coloring of $H$ using \Cref{lem:MCT}.
    To analyze the first iterative process, let $H_j$ be the subgraph induced by the uncolored vertices of $H$ after the $j$-th call of \cref{alg:rct-single-iteration}, and let $H_0 = H$. Consider a vertex $v \in V(H) \cup \bigcup_i A_i$. According to \cref{lem:ColorWithMuchSlackIteration}, for every iteration $j$, one of the two conditions must hold: either the uncolored degree drops immediately such that $|N_{H_j}(v)| \leq \Delta^{1/10}$ or the degree drops by a constant factor such that $|N_{H_j}(v)| \leq (1 - 1/180)|N_{H_{j-1}}(v)|$. If the first condition holds for some $j$, then $v$ has fewer than $\Delta^{1/10}$ uncolored neighbors in $H$ and already satisfies the required degree bound. We therefore assume that the second condition holds for all $j$. Since the maximum degree of $H$ (as a subgraph of $F$) is at most $10^9\Delta$ (by \cref{lem:structuralDecomposition}), after $T=c\log\Delta$ iterations the uncolored degree of $v$ is at most
    \[
        10^9\Delta\cdot(1-1/180)^{c\log\Delta} 
        \leq 10^9 \exp\paren*{ \ln\Delta - \frac{c\cdot\log\Delta}{180} }
        \leq 10^9 \Delta^{1 - c/180}
    \]
    For sufficiently large constant $c$, this is at most $\Delta^{1/10}$. Now let $H'$ be the subgraph of the initial graph $H$ that contains all the yet uncolored vertices after $T$ iterations of \cref{lem:ColorWithMuchSlackIteration}. Then the graph $H'$ fulfills the preconditions of \Cref{lem:MCT}, which require that every vertex has $\Delta^{9/20}$ slack and each uncolored vertex in each $A_i$ has at most $\Delta^{1/10}$ neighbors in the remaining uncolored graph. The application of \Cref{lem:MCT} to $H'$ completes the coloring of $H$. The \Cref{def:notbig} is preserved for all uncolored cliques as it is guaranteed by the outcome of \cref{lem:ColorWithMuchSlackIteration,lem:MCT}. The runtime consists of the $O( \log\Delta )$ iterations from \Cref{lem:ColorWithMuchSlackIteration} that take $\Ot{\log^3\log n}$ rounds each and $\Ot{\log^3\log n}$ rounds for the one application of \Cref{lem:MCT}. This completes the proof. 
\end{proof}

\subsection{Iterated Random Color Trial with \cref{def:notbig} (\Cref{lem:ColorWithMuchSlackIteration})}\label{sec:rct}

The goal of this section is to prove \Cref{lem:ColorWithMuchSlackIteration}. We present a $\poly(\log\log n)$ algorithm that partially colors the graph $H$ while maintaining \cref{def:notbig} and additionally reduces the uncolored degree of vertices in $H$ and $\bigcup_iA_i$ below $\Delta^{1/10}$. 

The analysis focuses on a single iteration of Random Color Trial (\RCT). We show that in each iteration of the \RCT procedure the uncolored degree of a vertex drops by a factor of $(1-1/180)$ compared to the previous iteration while maintaining \cref{def:notbig}. We set up an LLL that, if solved correctly, yields a partial coloring guaranteeing this drop in the uncolored degree. The random process underlying our LLL is described in \cref{alg:RCT}; we call the color $\psi(v)$ the candidate color of $v$. 

\begin{algorithm}
    \caption{Random Color Trial (\RCT)\label{alg:RCT}}
    \Input{$H'$ an induced uncolored subgraph of $H$}
    \Output{A partial coloring of $H'$}
    Each vertex of $H'$ gets activated independently with probability $p_a = 1/4$\;
    Each activated vertex $v$ independently samples a uniform available color $\psi(v)$\;
    \tcp{Let $N_<(v) = \set{ u\in N(v) : ID(u) < ID(v) }$}
    All vertices $v$ with $\psi(v) \in \psi(N_{<}(v))$ discard $\psi(v)$ (i.e., set $\psi(v)$ to $\bot$)\label{line:rct-retain} 
\end{algorithm}

\noindent
In \cref{alg:RCT}, every vertex of $H'$ has one random variable that consists of a pair with a random activation bit, set to one \wp $p_a$, and the uniform available color $\psi(v)$. So we henceforth abuse notation slightly and associate vertices with their random variables.
We introduce the following bad event, for every vertex $v$ with at least $\Delta^{1/10}$, define
\begin{quote}
    $E(v)$: the uncolored degree into $H$ of $v$ is reduced by a factor less than $(1 - 1/180)$, 
\end{quote}
and for each uncolored clique $A_i$, we have
\begin{quote}
    $E_{CC}(i)$: \cref{def:notbig} is not maintained for some color w.r.t.\ $A_i$.
\end{quote}
\begin{observation}
    \label{obs:rct-random-variables}
    $\vbl(E(v)) \subseteq N_H^{\leq 2}(v)$ and $\vbl(E_{CC}(i)) \subseteq N_F^{\leq 2}(A_i) \cap V(H)$.
\end{observation}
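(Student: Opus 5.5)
The statement to prove is \cref{obs:rct-random-variables}: $\vbl(E(v)) \subseteq N_H^{\leq 2}(v)$ and $\vbl(E_{CC}(i)) \subseteq N_F^{\leq 2}(A_i)\cap V(H)$. The plan is to trace, in \cref{alg:RCT}, which random variables determine whether a given vertex ends up permanently colored with a given color. Recall that each vertex $u$ of $H'$ holds a single random variable $X_u$: the pair (activation bit, candidate color $\psi(u)$).

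The key local fact is the following. Whether $u$ retains a color, and which color $x$, is determined by $X_u$ together with $\set{X_w : w \in N_<(u)\cap V(H')}$. Indeed, $u$ keeps $\psi(u)$ exactly when it is activated and no neighbor $w\in N_<(u)$ was activated with $\psi(w)=\psi(u)$. The available lists $L(\cdot)$ are fixed before the trial, so they contribute no randomness. Since $H'$ is an induced subgraph of $H$, the relevant variables all lie in $N_H^{\leq 1}(u)$.

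\textbf{The event $E(v)$.} This event is a function of which vertices of $N_H(v)$ get colored, because the uncolored degree of $v$ into $H$ after the trial is $|N_H(v)|$ minus the number of colored neighbors. By the local fact, whether $u\in N_H(v)$ is colored depends on variables in $N_H^{\leq 1}(u)\subseteq N_H^{\leq 2}(v)$. This holds even when $v\notin H$, e.g.\ $v\in A_i$: in that case we read $N_H^{\leq 2}(v)$ as $v$'s $H$-neighbors together with their $H$-neighbors, so paths enter $H$ after the first step. The threshold test is a deterministic function of these indicators, which gives the first inclusion.

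\textbf{The event $E_{CC}(i)$.} For each color $x$, $\covd_j(i,x)$ counts vertices of $A_i$ having a neighbor $u\notin A_i\cup\All_i\cup\Big_i^+$ that gets colored $x$ in this step. Only vertices of $H$ get colored, so the relevant $u$ lie in $N_F(A_i)\cap V(H)$. Whether such a $u$ gets colored $x$ depends on $X_u$ and on $X_w$ for $w\in N_H(u)$. Every such $w$ is at distance at most $2$ from $A_i$ in $F$ and belongs to $V(H)$. Taking the union over all colors $x$ and all relevant $u$ gives the second inclusion.

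I do not expect a real obstacle here. The only care needed is the convention for $N_H^{\leq 2}(v)$ when $v\notin V(H)$, which I would state explicitly as above, and the remark that the lists $L(u)$ are deterministic at the time of the trial.
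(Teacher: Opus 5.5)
Your argument is correct and is the locality reasoning the paper leaves implicit. The paper states this as an observation without proof and later uses the same locality in \cref{lem:rtc_post_prob}: a vertex's retained color is determined by its own trial and those of its lower-ID neighbors in $H'$, so $E(v)$ reads only variables within two hops of $v$ and $E_{CC}(i)$ only variables within two hops of $A_i$ inside $V(H)$. Your explicit convention for $N_H^{\leq 2}(v)$ when $v\in A_i$ clarifies notation the paper uses loosely.
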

Let us argue now that no matter the coloring of $H$, since it is \pious those bad events are rare.

\begin{lemma}\label{lem:ev}
    Let $H'$ be an induced uncolored subgraph of a \pious subgraph $H$ of $F - \bigcup_i A_i$, and $v \in V(H) \cup \bigcup_i A_i$ with $\deg_{H'}(v) \geq \Delta^{1/10}$. 
    After \Cref{alg:RCT}, $E(v)$ occurs \wp at most $\exp(-\Omega( \Delta^{1/10} ) )$.
\end{lemma}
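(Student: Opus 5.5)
Fix $v$ with $d := \deg_{H'}(v) \ge \Delta^{1/10}$. I will show that a constant fraction of its $H'$-neighbors get permanently colored in expectation, and then get concentration via a martingale or Talagrand-type inequality.

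Call a neighbor $u \in N_{H'}(v)$ \emph{successful} if it is activated and keeps $\psi(u)$, i.e.\ no $w \in N_<(u)\cap V(H')$ has $\psi(w)=\psi(u)$. Condition on $u$ being activated with $\psi(u)=x$. Each $w \in N_{H'}(u)$ is activated with probability $1/4$ and then picks $x$ with probability at most $1/|L(w)|$ (zero if $x\notin L(w)$). Property~$\Pi$(b) only bounds $|L(w)|$ in terms of $w$'s own degree, so I average over $x$ instead:
\[
\Pr[\text{$u$ fails}\mid u \text{ active}] \le \frac{1}{|L(u)|}\sum_{w\in N_{H'}(u)}\frac{1}{4}\sum_{x} \frac{\mathbb{1}[x\in L(u)\cap L(w)]}{|L(w)|} \le \frac{\deg_{H'}(u)}{4|L(u)|}\le \frac14 .
\]
The last step uses $|L(u)| \ge \deg_H(u) \ge \deg_{H'}(u)$, which holds since $H'\subseteq H$ and slack does not decrease. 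Hence each $u$ succeeds with probability at least $\frac14\cdot\frac34=\frac{3}{16}$, and $\mathbb{E}[\#\text{successes}] \ge 3d/16$. This is well above $d/180$, leaving room for the concentration loss. The degree into $H$ is measured as uncolored neighbors in $H'$ at this stage, so a drop of $d/180$ is what is needed.

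For concentration, let $Z$ be the number of successful neighbors. I would avoid arguing about $Z$ directly and write $Z = Z_1 - Z_2$. Here $Z_1$ counts activated neighbors, a sum of $d$ independent indicators, so Chernoff gives $Z_1 \ge d/4 - d^{2/3}$ with probability $1-\exp(-\Omega(d^{1/3}))$. $Z_2$ counts activated neighbors that discard their color. Its expectation is at most $d/16$ by the computation above.

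The main obstacle is the upper tail of $Z_2$. It depends on colors of vertices at distance two, and a single vertex $w$ can cause many neighbors of $v$ to fail. I plan to certify it with Talagrand's inequality in the version for certifiable functions. Each trial (activation bit plus color) of a vertex in $N^{\le 2}(v)$ can change $Z_2$ by at most the number of neighbors of $v$ that picked that vertex's color. I would handle this by instead counting failing neighbors $u$ whose color is not repeated among other neighbors of $v$. Changing one trial then changes the count by at most $2$, and a failure at $u$ is certified by the two trials of $u$ and its conflicting neighbor $w$.

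The neighbors of $v$ whose color is repeated within $N_{H'}(v)$ are at most the number of pairs sharing a color, which has expectation $O(d^2/\min|L|)$. This may not be small if lists are only about $d$, so I would fold it back into the same Talagrand argument: the count of \emph{successful} neighbors whose color is unique in $N(v)$ is $2$-Lipschitz and $2$-certifiable. Its expectation is still at least $d/16$, using again the slack averaging for uniqueness, which gives probability at least $3/4$ by the same computation. Talagrand then gives deviation $d/32$ with probability $\exp(-\Omega(d))$, up to the usual $\sqrt{\mathbb{E}}$ correction terms. Since $d\ge\Delta^{1/10}$, the failure probability is $\exp(-\Omega(\Delta^{1/10}))$, and the count of at least $d/32 \ge d/180$ successes yields the claimed degree reduction. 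For $v\in A_i$ the same argument applies verbatim, as only lists of vertices in $H$ were used.
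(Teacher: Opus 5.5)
\textbf{There is a gap in the concentration step.} Your first-moment computation is correct: each $u\in N_{H'}(v)$ is active and keeps its color with probability at least $3/16$. The argument breaks at the quantity you finally feed to Talagrand, the number $W$ of successful neighbors whose color is unique in $N_{H'}(v)$, because $\mathbb{E}[W]\ge d/16$ is false in general. The averaging you invoke for uniqueness gives
$\Pr[\text{some other active } w'\in N_{H'}(v) \text{ picks } \psi(u)]\le\sum_{w'}\frac{1}{4\max(|L(u)|,|L(w')|)}$.
This is at most $1/4$ only if the lists of $v$'s neighbors have size at least about $d=\deg_{H'}(v)$. Property~$\Pi$(b) only ties $|L(u)|$ to $\deg_H(u)$, which is exactly the caveat you raised yourself earlier.

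Here is a concrete counterexample allowed by $\Pi$ with $U=\Delta^{1/4}$.
\begin{itemize}
\item Let $v$ have $d=\Delta/2$ neighbors, each adjacent only to $v$ in $H$.
\item Give all these neighbors the same list $L_0$ of size $\lceil\Delta^{0.47}\rceil+1$.
\item Give $v$ the largest ID.
\end{itemize}
Then about $d/4$ active neighbors choose among roughly $\Delta^{0.47}$ colors. With overwhelming probability no color is unique, so $W=0$, even though every active neighbor keeps its color (its set $N_<$ is empty).

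Separately, $W$ is not $2$-certifiable. Both ``$u$ keeps its color'' and ``$\psi(u)$ is unique in $N(v)$'' assert that no other vertex chose $\psi(u)$, and no $O(1)$ trials can certify that. Rewriting $W$ as a difference of certifiable counts brings back the unbounded Lipschitz constant of $Z_2$ (one $w$ spoiling many neighbors of $v$), which is what you were trying to avoid.

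\textbf{The missing idea is to expose the randomness in the order in which the algorithm resolves conflicts.}
\begin{enumerate}
\item Reveal the activation bits first; they are independent of the colors. Condition on $v$ and every $u\in N_{H'}(v)$ with $\deg_{H'}(u)\ge\Delta^{1/10}$ having between $\tfrac18\deg_{H'}(u)$ and $\tfrac12\deg_{H'}(u)$ active neighbors. By Chernoff and a union bound, this fails with probability $\exp(-\Omega(\Delta^{1/10}))$.
\item Reveal the colors in increasing ID order. When the $i$-th active neighbor $u_i$ of $v$ is reached, the colors of all its active lower-ID neighbors are already fixed. There are at most $\deg_{H'}(u_i)/2\le|L(u_i)|/2$ of them, or fewer than $\Delta^{1/10}\ll|L(u_i)|$ if $u_i$ has low degree. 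Meanwhile $\psi(u_i)$ is still uniform on $L(u_i)$.
\item Hence $u_i$ keeps its color with conditional probability at least $1/2$, whatever the history.
\item The Azuma--Hoeffding bound for sums of $\{0,1\}$ variables with conditional means at least $1/2$ then gives: at least a quarter of the $\ge d/8$ active neighbors get colored, with probability $1-\exp(-\Omega(d))$.
\end{enumerate}
This needs no Lipschitz or certifiability condition, and it is the paper's argument.
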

\begin{proof}
    Let $u\in \set{v} \cup N_{H'}(v)$ with $\deg_{H'}(u) \geq \Delta^{1/10}$.
    Let $X_u$ be the random variable that counts the number of activated vertices in the neighborhood $N_{H'}(u)$ of $u$. Its expected value is given by $\E[X_u] = p_a \cdot \deg_{H'}(u)$. We show the concentration of $X_u$ by Chernoff's bound (\cref{lem:chernoff}):
    \begin{align*}
        \Prob*{ |X_u - p_a\deg_{H'}(u)| \ge p_a\deg_{H'}(u)/2 } 
\leq 2\exp\paren*{ -\frac{p_a\cdot \deg_{H'}(u)}{12} } 
        \leq \exp\paren*{ -\Omega(\Delta^{1/10}) } \ .
    \end{align*}
    where the last inequality uses the lower bound on the degree of $u$.
    Let $E^{\mathrm{all}}(v)$ be the event that every vertex $u \in \set{v} \cup N_{H'}(v)$ with $\deg_{H'}(u) \geq \Delta^{1/10}$ has between $(p_a/2)\deg_{H'}(u)$ and $2p_a\deg_{H'}(u)$ active neighbors. Since $\overline{ E^{\mathrm{all}}(v) }$ is a low-probability event, we assume $E^{\mathrm{all}}(v)$ holds.

    Let $u_1, u_2, \ldots, u_d$ be the active neighbors of $v$ ordered by increasing IDs. Let $Z_i$ be the indicator random variable equal to one iff $u_i$ gets colored by \cref{alg:RCT}. It is uniquely determined by the random variables $R_k := \psi(N_<(u_{k}) \setminus \bigcup_{j < k} N_<(u_j))$ for all $k \leq i$. We claim that
    \[
    \E\left[ Z_i \given E^{\mathrm{all}}(v), R_1, R_2, \ldots, R_{i-1} \right] = \mu_i
    \geq 1/2 \ .
    \]
    There are two cases.
    If $\deg_{H'}(u) < \Delta^{1/10}$, then since $H'$ is \pious, $u$ has at least $\Delta^{1/4}$ available colors. Hence, it gets colored with probability at least $1 - \Delta^{-3/20} \geq 1/2$.
    If $\deg_{H'}(u) \ge \Delta^{1/10}$, then under $E^{\mathrm{all}}(v)$, it has at most $2p_a\deg_{H'}(u) = \deg_{H'}(u)/2$ active neighbors. Since $H'$ is \pious, $u$ has at least as many available colors as uncolored neighbors, and thus retains its color with probability at least $1/2$ by the union bound.
    
    Let $Z = \sum_i Z_i$ be the number of colored active neighbors of $v$ after \cref{alg:RCT}. By \Cref{lem:BEPS} with $\mu = \sum_i \mu_i \geq d/2$ for all $i\in[d]$ and $t = d/4$, we get that 
    \[  
        \Prob{ Z < d/4 \given E^{\mathrm{all}}(v) }
\leq \exp\paren*{ -\frac{d}{32} }
        \leq \exp\paren*{ -\Omega( \Delta^{1/10} ) } \ ,
\]
    where the last inequality uses that $d \geq (p_a/2)\deg_{H'}(v)$ under $E^{\mathrm{all}}(v)$ and that $\deg_{H'}(v) \geq \Delta^{1/10}$. Note that when $Z \geq d/4$, the uncolored degree of $v$ after \cref{alg:RCT} is $\deg_{H'}(v) - Z \leq (1 - p_a/8)\deg_{H'}(v)$. So, overall, we have that
    \[
    \Prob{ E(v) } 
    \leq \Prob{ \overline{ E^{\mathrm{all}}(v) } }
    + \Prob{ Z < d/4 \given E^{\mathrm{all}}(v) }
    \leq \exp\paren*{ -\Omega( \Delta^{1/10} ) }
    \ . \qedhere
    \]
\end{proof}

\begin{lemma}
    \label{lem:rct-cc-post-prob}
    Let $H'$ be an induced uncolored subgraph of a \pious subgraph $H$ of $F - \bigcup_i A_i$.
    For every uncolored clique $A_i$, after \cref{alg:RCT}, $E_{CC}(i)$ holds \wp at most $\Delta \cdot \exp(-\Delta^{1/40})$.
    \label{lem:cc-bad-event-for-rct}
\end{lemma}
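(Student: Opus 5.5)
We fix a color $x \in [c]$ and bound the probability that $\covd_j(i,x) \ge \Delta^{37/40}$ by $\exp(-\Delta^{1/40})$ using \cref{lem:bnd-notbig}. A union bound over the at most $c \le \Delta$ colors then gives the claimed bound $\Delta\cdot\exp(-\Delta^{1/40})$ on $\Prob{E_{CC}(i)}$.

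\textbf{Setting up the set system.} Fix $x$ and the uncolored clique $A_i$. For each $v \in A_i$, let $N_v$ be the set of neighbors of $v$ in $H' \setminus (\All_i \cup \Big_i^+)$. Since $|A_i| \le c \le \Delta$, there are at most $\Delta$ sets. By \cref{prop:Pi}(a), each set has size at most $U$, so we take $Q = U$. A vertex $u$ lying in more than $2\Delta^{9/10}$ sets has at least $2\Delta^{9/10}$ neighbors in $A_i$. Such a $u$ is outside $A_i \cup \All_i$ because $H' \subseteq F - \bigcup_i A_i$ and $\All_i$ is excluded, so it belongs to $\Big_i^+$ by \cref{lem:structuralDecomposition}-\ref{part:Big}, and hence was removed. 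Thus no vertex lies in more than $2\Delta^{9/10}$ sets.

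\textbf{Marking.} We mark $u \in \bigcup_v N_v$ if $u$ is activated and its candidate color satisfies $\psi(u) = x$. Any vertex that ends up colored $x$ in this step is marked, so $\covd_j(i,x)$ is at most the number of sets containing a marked vertex. The pairs (activation bit, candidate color) are sampled independently across vertices. Moreover, by \cref{prop:Pi}(b), each $u$ has $|L(u)| \ge U\Delta^{0.22}$, so
\[
\Prob{u \text{ marked}} \le \frac{p_a}{|L(u)|} \le \frac{1}{U\Delta^{0.22}} \le \frac{1}{Q\Delta^{1/5}} .
\]
One subtlety needs checking: the lists $L(u)$ are those with respect to the coloring before this \RCT call, and $|L(u)|$ still satisfies this bound. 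This holds because the subgraph $H'$ consists of uncolored vertices of the \pious $H$, and \pious-ness refers to the lists at the time $H$ is declared \pious. Lists only shrink when neighbors in $H$ are colored, and property (b) gives $|L(u)| \ge \deg_H(u) + U\Delta^{0.22}$ initially. Coloring neighbors in $H$ removes at most one color each, so $|L(u)| \ge U\Delta^{0.22}$ persists. By independence, for any $\ell$ distinct vertices the probability that all are marked is at most $(Q\Delta^{1/5})^{-\ell}$, which is exactly (P7.1).

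\textbf{Conclusion.} \cref{lem:bnd-notbig} gives $\Prob{\covd_j(i,x) \ge \Delta^{37/40}} \le \exp(-\Delta^{1/40})$, and the union bound over colors finishes the proof. The main point requiring care is the exclusion argument for $\Big_i^+$, namely that every high-multiplicity vertex is indeed excluded. The rest is routine.
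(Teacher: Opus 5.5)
Your proposal is correct and follows essentially the same route as the paper: fix a color $x$, apply \cref{lem:bnd-notbig} with $Q=U$ to the external neighborhoods of the vertices of $A_i$ with $\All_i\cup\Big_i^+$ removed (so that no vertex lies in more than $2\Delta^{9/10}$ sets), bound each candidate-color probability by $1/(U\Delta^{0.22})$ using \cref{prop:Pi}(b) and independence, then take a union bound over the at most $\Delta$ colors. Your remarks on the activation bit and on why $|L(u)|\ge U\Delta^{0.22}$ still holds under the current partial coloring make explicit a step the paper leaves implicit. The paper instead writes the bound as $1/(\deg_{H'}(v)+U\Delta^{0.22})$, which relies on the same slack-invariance fact.
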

\begin{proof}
    Consider any uncolored clique $A_i$ and any color $x$. The relevant collection of subsets is formed by the external neighborhoods of each vertex $v \in A_i$. By \Cref{lem:structuralDecomposition}-\ref{part:A-size}, we have $|A_i| < \Delta$. We ignore vertices of $\All_i \cup \Big_i^+$ for their contribution is not accounted for in \cref{def:notbig}. Hence, no relevant vertex of $H$ belongs to more than $2\Delta^{9/10}$ such subsets. Since $H'$ is \pious, each vertex $v$ of $H'$ picks color $x$ as candidate color \wp at most $1/(\deg_{H'}(v)+ U \cdot \Delta^{0.22})) \leq 1/(Q\cdot \Delta^{1/5})$ for $Q = U$. As candidate color choices are independent for different vertices, the probability that $\ell$ vertices pick color $x$ is at most $1/(Q\cdot \Delta^{1/5})^{\ell}$. Therefore, by \Cref{lem:bnd-notbig}, \cref{def:notbig} fails for clique $A_i$ and color $x$ \wp at most $\exp(-\Delta^{1/40})$. The lemma follows by the union bound over a total of at most $\Delta$ different colors. 
\end{proof}

Given this, \cref{alg:rct-single-iteration} follows our shattering framework (see \cref{sec:our-shattering-framework}) to produce a coloring in which degree in $H$ have dropped.

\begin{algorithm}[H]
    \caption{Single iteration of degree reduction in $H$ \label{alg:rct-single-iteration}}

    \Input{An uncolored \pious graph $H$}
    
    \Output{A proper partial list-coloring of $H$}
    
    Run \cref{alg:RCT} on $H$ and let $\mathcal{A}'$ be the set of events $E(v)$ or $E_{CC}(i)$ that occur.
    \label{line:it-rct-run-rct}\;
    
    Retract the colors of (i.e., uncolor) the vertices in $\vbl( \mathcal{A}' )$ \label{line:retraction}\;

    Solve the following LLL using the deterministic algorithm from \cref{thm:deterministicLLLLOCAL}: \label{line:rct-single-it-post-LLL}\;
    \nonl \hspace{0.3cm} \textbf{Random process: }Run \cref{alg:RCT} on $H'$ where $H'$ is the induced subgraph of $H$ of uncolored vertices $v$ with $\dist_F(v, \vbl( \mathcal{A}' ) ) \leq 2$\;
    \nonl \hspace{0.3cm} \textbf{Bad events: } 
    \begin{compactenum}
        \item[] $E(v)$ for all vertices $v \in N( \vbl( \mathcal{A}' ) ) \cup \vbl( \mathcal{A}' )$ with $\deg_{H'}(v) \geq \Delta^{1/10}$, and
        \item[] $E_{CC}(i)$ for all uncolored clique $A_i$ adjacent to a vertex in $H'$.
    \end{compactenum}

    Output the partial list-coloring $\phi$ where $\phi(v)$ is the color given in \cref{line:it-rct-run-rct} for vertices $v\notin H'$, and $\phi(v)$ is the color given in \cref{line:rct-single-it-post-LLL} for the vertices of $v\in H'$.
\end{algorithm}

\begin{remark}As detailed in \Cref{def:notbig}, there is a fresh budget of $CC(i,x)<\Delta^{\frac{37}{40}}$ for each color $x$, each clique $A_i$, and also for each coloring step of the algorithm. In particular, in \Cref{line:it-rct-run-rct,line:rct-single-it-post-LLL} of \Cref{alg:rct-single-iteration} the events $E_{CC}(i)$ get a separate budget of $\Delta^{\frac{37}{40}}$ each. 
\end{remark}

\begin{lemma}\label{lem:rtc_post_prob}
    Suppose $\Delta \leq (\log n)^{50}$.
    Let $\Gev$ be the dependency graph of the LLL of \cref{line:rct-single-it-post-LLL} in \cref{alg:rct-single-iteration}. With high probability, its largest connected component has size at most $\poly(\log n)$.
\end{lemma}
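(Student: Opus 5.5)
We derive this directly from \cref{lem:our-shattering}, applied with variable graph $G = F$ (whose maximum degree is $O(\Delta)$ by \cref{lem:structuralDecomposition}). Take $\mathcal{A}$ to be the collection of variable sets $\vbl(E(v))$ over all vertices $v$ with $\deg_H(v) \ge \Delta^{1/10}$, together with the sets $\vbl(E_{CC}(i))$ over all uncolored cliques $A_i$. Take $\mathcal{B}$ to be the collection of variable sets of every potential post-shattering event, namely $E(v)$ for every vertex $v$ and $E_{CC}(i)$ for every clique; to be safe, we let these variable sets be the full neighbourhoods given by \cref{obs:rct-random-variables}. The random collection $\mathcal{A}'$ consists of the sets of events that occur after \cref{line:it-rct-run-rct}. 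The post-shattering LLL of \cref{line:rct-single-it-post-LLL} involves only events $E(v)$ with $v$ within distance $1$ of $\vbl(\mathcal{A}')$. It also involves events $E_{CC}(i)$ for cliques adjacent to $H'$, where every vertex of $H'$ lies within distance $2$ of $\vbl(\mathcal{A}')$. Hence each event in that LLL has a variable within constant distance of $\vbl(\mathcal{A}')$, and so belongs to $\mathcal{B}'$ for a suitable constant $\cstB$ (say $\cstB = 4$). The dependency graph of the post-shattering LLL is therefore a subgraph of $\Gev$, and the lemma's component bound $O(\Delta^{O(1)}\log n)$ is $\poly(\log n)$ because $\Delta \le (\log n)^{50}$.

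Next we verify the structural conditions. For \ref{part:our-shattering-diam}: by \cref{obs:rct-random-variables}, $\vbl(E(v)) \subseteq N^{\le 2}(v)$, which has weak diameter at most $4$. Also $\vbl(E_{CC}(i)) \subseteq N^{\le 2}(A_i)$, and since $A_i$ is a clique this set has weak diameter at most $5$. So $c_1 = 5$ works. For \ref{part:our-shattering-involved}: a vertex $u$ lies in $\vbl(E(v))$ only if $v \in N^{\le 2}(u)$, giving $O(\Delta^2)$ sets. It lies in $\vbl(E_{CC}(i))$ only if some vertex of $A_i$ is within distance $2$ of $u$; there are $O(\Delta^2)$ such vertices, each in one clique. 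Hence $\Delta^{c_2}$ sets suffice with $c_2 = 3$, absorbing constants since $\Delta$ is large. Condition \ref{part:our-shattering-vbl} holds by construction, because whether an event occurs is determined by the activation bits and candidate colours of the vertices in its variable set.

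The step I expect to need the most care is \ref{part:our-shattering-pre-prob}. There we need $\Pr[A \in \mathcal{A}'] \le \Delta^{-c_4}$ with $c_4 > 3c_1(4\cstB+16) + c_2 + \cstB + 1$, a fixed constant (around $500$ for our choices). By \cref{lem:ev} and \cref{lem:rct-cc-post-prob}, each event of $\mathcal{A}$ occurs with probability at most $\exp(-\Omega(\Delta^{1/40}))$. For $\Delta \ge \Delta_0$ sufficiently large this is at most $\Delta^{-c_4}$ for any fixed constant $c_4$, since the stretched exponential beats every polynomial.

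One subtlety remains. \cref{lem:ev} bounds the probability of $E(v)$ only for vertices with degree at least $\Delta^{1/10}$ into the subgraph on which \cref{alg:RCT} is run, and in the pre-shattering phase that subgraph is $H$ itself. This matches our definition of $\mathcal{A}$, so the bound applies. With all four conditions established, \cref{lem:our-shattering} gives that, with high probability in $n$, every connected component of $\Gev$, and hence of the post-shattering dependency graph, has size $\poly(\log n)$.
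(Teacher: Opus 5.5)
Your proposal is correct and follows essentially the same route as the paper: you apply \cref{lem:our-shattering} with $G=F$, the variable sets from \cref{obs:rct-random-variables}, $c_1=5$, and \cref{lem:ev,lem:rct-cc-post-prob} to verify the probability condition. The only differences are harmless choices of constants ($\cstB=4$ and $c_2=3$ instead of the paper's $3$ and $2$; your $c_2=3$ absorbs the constant factor in $\Delta(F)$ more cleanly) and restricting $\mathcal{A}$ to vertices of degree at least $\Delta^{1/10}$, which matches the hypothesis of \cref{lem:ev} more carefully than the paper's wording does.
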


\begin{proof}
    In order to bound the size of the largest connected component of $\Gev$ we use \cref{lem:our-shattering}; next we argue that the lemma applies. We take the whole graph $F$ as graph $G$ from \cref{lem:our-shattering}, because every vertex $v$ of $H$ has one variable encoding its color choice.  The LLLs (formally sets of sets) $\mathcal{A}$ and $\mathcal{B}$ required to use the lemma are both given by all collections of $\vbl(E(v))$ and $\vbl(E_{CC}(i))$ for all $v\in V(H)$ and uncolored cliques $A_i\in A_L\cup A_H$. The set $\mathcal{A}'$ consists of the events $E(v)$ and $E_{CC}(i)$ that hold after \cref{line:it-rct-run-rct} of \cref{alg:rct-single-iteration}. Let $\mathcal{B}'$ with $c_{\mathcal{B}'} = 3$ be as in the statement of \Cref{lem:our-shattering}. Let $c_1=5, c_2=2$, and $c_4=3c_1(4\cstB+16)+c_2+\cstB+2$.
    We show that the preconditions \Cref{part:our-shattering-diam,part:our-shattering-involved,part:our-shattering-pre-prob,part:our-shattering-vbl} of \Cref{lem:our-shattering} hold. 

    \Cref{part:our-shattering-diam} holds with $c_1=5$ due to \cref{obs:rct-random-variables} (and because $F$ includes the cliques): a vertex $v$ holds variables used only by events $E(u)$ with $u\in N_H^{\leq 2}(v)$ (which has diameter at most 4) or events $E_{CC}(i)$ with $A_i$ with $\dist_G(v, A_i) \leq 2$ (which has diameter at most 5 in $F$). \Cref{part:our-shattering-involved} requires a bound $\Delta(F)^{c_2}$ for some constant $c_2$ on the number of events in which each variable appears and it holds with $c_2=2$ as each variable is only used by events with variables in its $2$-hop neighborhood.  \Cref{part:our-shattering-vbl} requires that for each event $A\in \mathcal{A}$ we can determine $1(A\in \mathcal{A}')$ by only evaluating the variables in $\vbl(A)$ which immediately holds by the definition of $\mathcal{A}'$. \Cref{part:our-shattering-pre-prob} requires an upper bound of $1/\Delta(F)^{c_4}$ on the probability for each bad event $A\in \mathcal{A}$ to be contained in $\mathcal{A}'$ where $c_1$ is an arbitrary constant satisfying $c_4>3c_1(4\cstB+16)+c_2+\cstB+1$. By \cref{lem:ev,lem:rct-cc-post-prob} this probability is at most $\Delta(G)\cdot \exp(-\Delta(G)^{1/40})$. As $\Delta(G)=\Theta(\Delta(F))$  \cref{part:our-shattering-pre-prob} holds for $c_4$ (as chosen above) as long as $\Delta(G)$ is larger than a sufficiently large absolute constant $\Delta_0$. The set of events included in the LLL of \cref{line:rct-single-it-post-LLL} are all included in $\mathcal{B}'$ (with $\cstB$ defined above) as vertices with distance at most $2$ to $\vbl(\mathcal{A}')$ participate in the post-shattering random process and any event $E(v)$ or $E_{CC}(i)$ adjacent to such a vertex participates in the post-shattering as well; note that the event $E_{CC}(i)$ for a clique $A_i$ that has the closest vertex $w$ in $H'$ in distance $2$ does not need to participate despite $w\in \vbl(E_{CC}(i))$ as the event  is avoided regardless the color choice of $w$.
    Hence by \cref{lem:our-shattering}, the connected components of $\Gev$ have size at most $O(\Delta(F)^{4c_1\cstB + 4c_1 + c_2}\log n) =\poly(\log n)$ with high probability.
\end{proof}
\begin{remark}
    Determining and verifying the minimal admissible value of the constant $\cstB$ in the above proof is somewhat tedious. The preconditions required for the shattering argument in \Cref{lem:our-shattering} remain valid even for larger choices of $\cstB$, since the probability that any given event is contained in $\mathcal{A}'$ decreases exponentially in $\Delta$. Consequently, \Cref{part:our-shattering-pre-prob} holds for any constant $c_4$ satisfying
    $c_4 > 3c_1(4\cstB + 16) + c_2 + \cstB + 1$,
    provided that $\Delta$ is larger than a sufficiently large constant $\Delta_0 = \Delta_0(\cstB)$.
    
    In the subsequent sections, we use several analogous applications of \Cref{lem:our-shattering}. For the sake of clarity and ease of verification, we do not always optimize the choice of $\cstB$, but instead adopt values that simplify the argument.
    
    Finally, we note that our retraction algorithm is deliberately somewhat aggressive. For instance, if $E_{CC}(i)$ holds for a clique $A_i$, we retract all color choices of vertices in $\vbl(E_{CC}(i))$, including those at distance two from $A_i$, even though it would suffice to retract only the color choices of vertices in $N(A_i)$. This uniform retraction rule helps streamline what is otherwise an already intricate procedure.
\end{remark}

\begin{proof}[Proof of \Cref{lem:ColorWithMuchSlackIteration}]
    If $\Delta \geq (\log n)^{50}$, then by \cref{lem:ev,lem:rct-cc-post-prob}, none of the bad events occur at \cref{line:it-rct-run-rct} of \cref{alg:rct-single-iteration}. Hence, (1) and (2) of \Cref{lem:ColorWithMuchSlackIteration} already hold after \cref{line:it-rct-run-rct} (i.e., $\mathcal{A}' = \emptyset$) and we skip the post-shattering phase. We henceforth assume that $\Delta \leq (\log n)^{50}$.

    For a vertex $v \in N_H(\vbl(\mathcal{A}'))$, note that each of its neighbors belong to $N_H^{\leq 2}( \vbl(\mathcal{A}') )$, hence every uncolored neighbor of $v$ in $H'$ samples a random color in the run of \cref{alg:RCT} in \cref{line:rct-single-it-post-LLL}. 
    By \cref{lem:ev,lem:rct-cc-post-prob}, after coloring some of the vertices of $H$ during \cref{line:it-rct-run-rct}, the set of events described in \cref{line:rct-single-it-post-LLL} of \cref{alg:rct-single-iteration} indeed forms an LLL whose dependency degree is bounded by $\poly \Delta$ via \Cref{obs:rct-random-variables}, and thus an assignment of colors for vertices of $\vbl( \mathcal{A}' )$ that avoids its bad events can be found by \cref{thm:deterministicLLLLOCAL}.
    Running \cref{alg:RCT} takes $O(1)$ rounds, and by \cref{obs:rct-random-variables} so does \cref{line:retraction}. By \cref{lem:rtc_post_prob}, the connected components of the dependency graph of the LLL solved in \cref{line:rct-single-it-post-LLL} have size at most $N = \poly(\log n)$. Hence, \cref{thm:deterministicLLLLOCAL} runs in $\Ot{ \log^3 N } = \Ot{ \log^3 \log n }$.

    Let us now conclude by verifying that both properties claimed by \cref{lem:ColorWithMuchSlackIteration} hold after nodes adopt colors as specified in the last line of \Cref{alg:rct-single-iteration}. If $v$ has fewer than $\Delta^{1/10}$ uncolored neighbors, then (1) vacuously holds, so let us assume the contrary. Consider a vertex $v \notin N( \vbl(\mathcal{A}') ) \cup \vbl( \mathcal{A}' )$. Then after \cref{line:it-rct-run-rct}, the degree of $v$ has decreased by a factor of $(1 - 1/180)$ or dropped below $\Delta^{1/10}$ (otherwise $E(v)$ holds and $v\in N( \vbl(\mathcal{A}') ) \cup \vbl( \mathcal{A}' )$) and none of the colors in $N(v)$ are retracted (otherwise $v$ has a neighbor in $\vbl(\mathcal{A}')$).
    Consider now a vertex $v\in N( \vbl( \mathcal{A}' ) ) \cup \vbl( \mathcal{A}' )$. 
    The post-shattering LLL (described \cref{line:rct-single-it-post-LLL}) includes the event $E(v)$, so the uncolored degree of $v$ in $H'$ drops by a $(1 - 1/180)$ factor, hence it at most $(1 - 1/180)\deg_H(v)$. Since $H'$ contains all the uncolored neighbors of $v$, the uncolored degree of $v$ at the end of the algorithm is upper bounded by the uncolored degree of $v$ in $H'$.

    After retracting colors in $\vbl(\mathcal{A}')$, none of the $E_{CC}(i)$ holds anymore and thus \cref{def:notbig} is maintained by the pre-shattering phase. During post-shattering, only vertices of $N_H^{\leq 2}( \vbl( \mathcal{A}' ) )$ get colored. Hence, \Cref{def:notbig} continues to hold for the $A_i$ that are not adjacent to some vertex in $N_H^{\leq 2}( \vbl( \mathcal{A}' ) )$. If $A_i$ is adjacent to such a vertex, then CC is maintained by the post-shattering coloring (with a fresh budget) because $E_{CC}(i)$ would otherwise hold.
\end{proof}

\subsection{Finishing Off The Coloring via MCT (Lemma~\ref{lem:MCT})}
In this section, we analyze the subsequent coloring step after computing the partial coloring in \Cref{sec:rct}. This step of computing the complete coloring of the considered vertices involves a single-round \MCT (equivalent to Lemma 35 of \cite{MR14}). In short, each vertex picks a set of colors and compares them to the sets of its neighbors. If there is a color in this set that was picked by none of the neighbors, then the vertex retains that color. See \cref{alg:MCT}.
\begin{algorithm}
    \caption{MultiColorTrial \label{alg:MCT}}
    \Input{An uncolored graph $H'$ and a parameter $T$}
    \Output{A partial coloring of $H'$}
    Every vertex $v\in V(H')$ samples a set $S(v)$ of $T$ uniform available colors with repetitions\;
    \lIf*{$\exists \chi \in S(v) \setminus S(N(v))$}{Color $v$ with $\chi$
    }\lElse{$v$ remains uncolored}
\end{algorithm}
We introduce the following events for every uncolored vertex $v \in V(H)$ and every uncolored clique $A_i\in A_H\cup A_L$:
\begin{quote}
    $E'(v)$: $v$ does not get colored, 

    $E_{CC}(i)$: \cref{def:notbig} is not maintained for some color w.r.t.\ $A_i$.
\end{quote}

\begin{observation}
    \label{obs:mct-random-variables}
    $\vbl(E'(v)) \subseteq N_H^{\leq 2}(v)$ and $\vbl(E_{CC}(i)) \subseteq N_F^{\leq 2}(A_i) \cap V(H)$.
\end{observation}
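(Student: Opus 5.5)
The statement is \cref{obs:mct-random-variables}: $\vbl(E'(v)) \subseteq N_H^{\leq 2}(v)$ and $\vbl(E_{CC}(i)) \subseteq N_F^{\leq 2}(A_i) \cap V(H)$. I would prove it by directly inspecting which random choices of \cref{alg:MCT} determine each event, mirroring \cref{obs:rct-random-variables}. The only random variables are the sets $S(u)$ for $u \in V(H')$, where $H' \subseteq H$ is the uncolored graph being processed. Each $S(u)$ is sampled from $L(u)$, which is fixed by the coloring before the step and is therefore not random. So it suffices to identify the vertices $u$ whose $S(u)$ can influence each event.

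For $E'(v)$, whether $v$ gets colored is determined by whether $S(v) \setminus S(N_{H'}(v))$ is nonempty. This depends only on $S(v)$ and on $S(u)$ for $u \in N_{H'}(v)$. Hence $\vbl(E'(v)) \subseteq N_H^{\leq 1}(v) \subseteq N_H^{\leq 2}(v)$. The stated bound has slack; it keeps the same form as in the RCT case, which is convenient for the shattering lemma.

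For $E_{CC}(i)$, the event is determined by which color, if any, each vertex $w \in N_F(A_i) \cap V(H)$ adopts, since $\covd_j(i,x)$ counts vertices of $A_i$ that have a neighbor outside $A_i \cup \All_i \cup \Big_i^+$ colored $x$ in this step. By the previous paragraph, the color adopted by $w$ (the rule picks a color of $S(w)\setminus S(N(w))$, say the smallest such) is a function of $S(w)$ and of $S(u)$ for $u \in N_H(w)$. All such $u$ lie within distance $2$ of $A_i$ in $F$ and belong to $V(H)$. Therefore $\vbl(E_{CC}(i)) \subseteq N_F^{\leq 2}(A_i) \cap V(H)$.

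There is no real obstacle here. The one point to state explicitly is that the final color choice within $S(v)\setminus S(N(v))$ is made deterministically, or using only $v$'s own randomness, so that no further dependence is introduced.
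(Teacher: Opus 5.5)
Your proof is correct and is essentially the direct inspection the paper intends: the paper states this as an observation without proof, parallel to \cref{obs:rct-random-variables}, relying on the fact that under \cref{alg:MCT} each vertex's outcome is a function of the sets $S(\cdot)$ on its closed neighborhood in $H$. Hence $E'(v)$ is determined within one hop of $v$, and $E_{CC}(i)$, which depends only on the colors adopted by external neighbors of $A_i$, is determined within two hops of $A_i$; your two explicit remarks (palettes are fixed before the step, and the choice of $\chi \in S(v)\setminus S(N(v))$ uses no outside randomness) are exactly the points worth stating.
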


Similarly to \cref{sec:rct}, we argue that under the conditions of \cref{lem:MCT}, both events are unlikely to occur when we run \cref{alg:MCT}.

\begin{lemma}
    \label{lem:mct-coloring}
    Let $H$ be an uncolored graph with maximum degree at most $\Delta^{1/10}$. For every $v\in V(H)$ with $\Delta^{9/20}$ slack in $H$, after \cref{alg:MCT} with $T = \Delta^{1/10}$, the event $E'(v)$ holds \wp at most $\exp(-\Delta^{1/10})$.
\end{lemma}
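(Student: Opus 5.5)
Fix $v$ and condition on all the color sets $S(u)$ chosen by the neighbors $u\in N_H(v)$. Since $H$ has maximum degree at most $\Delta^{1/10}$ and each neighbor picks $T=\Delta^{1/10}$ colors, the union $S(N(v))$ has at most $\Delta^{1/10}\cdot\Delta^{1/10}=\Delta^{1/5}$ colors. The sets are sampled independently across vertices, so after this conditioning $S(v)$ is still $T$ independent uniform samples from $L(v)$. The event $E'(v)$ is exactly the event that all $T$ of these samples land in $S(N(v))\cap L(v)$.

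Now bound $|L(v)|$ from below using the slack hypothesis. The slack of $v$ in $H$ is $|L(v)|-\deg_H(v)\ge\Delta^{9/20}$, so $|L(v)|\ge\Delta^{9/20}$. Each sample therefore falls into $S(N(v))$ with probability at most
\[
\frac{\Delta^{1/5}}{\Delta^{9/20}}=\Delta^{-1/4}.
\]
By independence of the $T$ samples,
\[
\Pr[E'(v)\mid S(N(v))]\le \left(\Delta^{-1/4}\right)^{T}=\exp\left(-\tfrac14\Delta^{1/10}\ln\Delta\right)\le\exp\left(-\Delta^{1/10}\right),
\]
where the last step uses $\ln\Delta\ge4$, valid for $\Delta\ge\Delta_0$. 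Averaging over the conditioning gives the claimed unconditional bound.

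The only points needing care are that sampling is with repetition, so the samples are genuinely i.i.d.\ and a simple product bound applies, and that $L(v)$ is the palette at the start of \cref{alg:MCT}, which is unaffected by the neighbors' trial sets. There is no real obstacle here; the argument is a direct computation once one conditions on the neighbors' choices.
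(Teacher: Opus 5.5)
Your proof is correct and matches the paper's argument essentially step for step. Both condition on the neighbors' trial sets and bound their union by $\Delta^{1/5}$. Both then use $|L(v)|\ge\Delta^{9/20}$ from the slack and multiply over the $T$ independent samples to obtain $\Delta^{-T/4}\le\exp(-\Delta^{1/10})$. You also state explicitly that $\ln\Delta\ge 4$ is needed for the last inequality, which the paper leaves implicit.
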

\begin{proof}
    Since each vertex has at most $\Delta^{1/10}$ uncolored neighbors, there are also at most $T\Delta^{1/10} = \Delta^{1/5}$ different candidate colors appearing in its neighborhood. Fix $S(N(v))$ arbitrarily. Each of the $T$ colors that $v$ picks uniformly at random from its list of available colors belongs to this set \wp at most $\Delta^{1/5}/|L(v)|$. Since $L(v)$ contains at least $\Delta^{9/20}$ colors, the probability that $v$ is uncolored after \Cref{alg:MCT} is at most $\left(\Delta^{1/5}/|L(v)|\right)^{T}\leq \Delta^{-T/4}\leq \exp(-\Delta^{1/10})$.
\end{proof}

\begin{lemma}
    \label{lem:mct-cc}
    Let $H$ be an uncolored subgraph of $F - \bigcup_i A_i$ such that every uncolored $v\in A_i$ has at most $\Delta^{1/10}$ neighbors in $H$.
    For every uncolored clique $A_i$, after \cref{alg:MCT}, $E_{CC}(i)$ holds \wp at most $\Delta \cdot \exp(-\Delta^{1/40})$.
\end{lemma}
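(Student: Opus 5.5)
The plan is to mirror the proof of \cref{lem:rct-cc-post-prob} and reduce the statement to \cref{lem:bnd-notbig}. I fix an uncolored clique $A_i$ and a color $x$. The collection of sets consists of one set $N_v$ per vertex $v \in A_i$: the set of neighbors of $v$ in $H$ outside $\All_i \cup \Big_i^+$. Vertices of $\All_i \cup \Big_i^+$ are dropped because \cref{def:notbig} does not count their contribution. By \cref{lem:structuralDecomposition}-\ref{part:A-size} there are at most $|A_i| \le c \le \Delta$ sets. Since vertices of $\Big_i^+$ are excluded, each remaining vertex has fewer than $2\Delta^{9/10}$ neighbors in $A_i$, so it lies in at most $2\Delta^{9/10}$ sets. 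By hypothesis each uncolored $v\in A_i$ has at most $\Delta^{1/10}$ neighbors in $H$, so I take $Q = \Delta^{1/10}$.

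A vertex $u \in \bigcup_v N_v$ is \emph{marked} if $x \in S(u)$. Whenever $u$ is colored $x$ by \cref{alg:MCT}, we have $x\in S(u)$, so $u$ is marked. Hence the number of vertices of $A_i$ with a neighbor colored $x$ in this step, namely $\covd(i,x)$, is at most the number of sets containing a marked vertex.

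It remains to check (P7.1). The sets $S(u)$ are sampled independently across vertices, so for any $\ell$ distinct vertices the probability that all are marked is the product of the individual probabilities. By a union bound over the $T$ samples, each single probability is at most $T/|L(u)|$. I need $T/|L(u)| \le 1/(Q\Delta^{1/5}) = \Delta^{-3/10}$, that is, $|L(u)| \ge T\Delta^{3/10}$.

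This list-size condition is the step I expect to be the main obstacle. In the intended use from \cref{lem:MCT}, we have $T=\Delta^{1/10}$ and every vertex has slack at least $\Delta^{9/20}$, so $|L(u)| \ge \Delta^{9/20}$. Then
\[
T/|L(u)| \le \Delta^{1/10 - 9/20} = \Delta^{-7/20} \le \Delta^{-3/10},
\]
which suffices. The statement of the present lemma does not list the slack hypothesis explicitly, so I would invoke it from the setting of \cref{lem:MCT}, under which the lemma is applied. If one insists on the bare statement, one can instead take $Q$ to be the actual list-size-dependent quantity, as in \cref{lem:rct-cc-post-prob}.

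With (P7.1) established, \cref{lem:bnd-notbig} shows that at least $\Delta^{37/40}$ sets contain a marked vertex with probability at most $\exp(-\Delta^{1/40})$. Therefore $\covd(i,x) < \Delta^{37/40}$ except with that probability. A union bound over the at most $c\le\Delta$ colors $x$ gives the claimed bound $\Delta\cdot\exp(-\Delta^{1/40})$ on $\Pr[E_{CC}(i)]$.
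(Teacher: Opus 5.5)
Your argument is correct and is essentially the paper's proof. You use the same sets of external neighbors outside $\All_i\cup\Big_i^+$, the same $Q=\Delta^{1/10}$, and the same marking bound $T/|L(u)|\le\Delta^{-7/20}\le 1/(Q\Delta^{1/5})$; the paper, like you, silently relies on the $\Delta^{9/20}$ slack from the setting of \cref{lem:MCT}. It then finishes, as you do, with \cref{lem:bnd-notbig} and a union bound over colors.

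Drop your fallback remark, though. Without some lower bound on list sizes the bare statement is false: take $\Delta^{37/40}$ independent vertices of $H$, each with list $\{x\}$ and each adjacent to a distinct vertex of $A_i$; all of them are colored $x$. Moreover, $Q$ must bound the set sizes, so it cannot absorb a list-size dependence.
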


\begin{proof}
    We again apply \Cref{lem:bnd-notbig} to upper bound the probability.  Fix a clique $A_i$ and a color and form the sets consisting of external neighbors not in $\Big_i^+$ as before.  Set $Q=\Delta^{1/10}$. The probability that a specific color is picked by neighbor is bounded above by $T/|L(v)|=1/\Delta^{7/20}\leq 1/(Q\cdot\Delta^{1/5})$. Due to the independence of picking candidate colors between different vertices the probability that a specific color is picked by $\ell$ vertices is at most $1/(Q\cdot\Delta^{1/5})^{\ell}$. No vertex lies in more than $2\Delta^{9/10}$ sets, because in that case it would belong to some $\Big_i^+$. Now we can apply \Cref{lem:bnd-notbig} and get that for a fixed clique and color the probability that CC fails is at most $\exp(-\Delta^{1/40})$. With a union bound over all colors, $E_{CC}(i)$ holds \wp at most $\Delta\cdot \exp(-\Delta^{1/40})$.
\end{proof}

\begin{algorithm}
    \caption{\MCT Coloring}\label{alg:mctcoloring}
    \nonl\textbf{Pre-shattering:}\;
    Run \cref{alg:MCT} with $T = \Delta^{1/10}$.\;
    Let $\mathcal{A}'$ be the set of events $E'(v)$ or $E_{CC}(i)$ that occur.
    \label{line:MCT-retractionBase}\;
    Retract the colors of (i.e., uncolor) vertices in $\vbl(\mathcal{A}')$ \label{line:MCT-retraction}\;
    
    \BlankLine
    \nonl\textbf{Post-shattering:}\; 
    Solve the following LLL using the deterministic algorithm from \cref{thm:deterministicLLLLOCAL}:\; \label{line:MCTPostShattering}
    \nonl\hspace{0.3cm} \textbf{Random process:} Run \cref{alg:MCT} with $T = \Delta^{1/10}$ on $\vbl(\mathcal{A}')$\;
    \nonl\hspace{0.3cm} \textbf{Bad events: } 
    \begin{compactenum}
        \item[] $E'(v)$ for every uncolored vertex $v$, and
        \item[] $E_{CC}(i)$ for each uncolored clique $A_i$ adjacent to a vertex in $\vbl(\mathcal{A}')$
    \end{compactenum}
\end{algorithm}

\lemMCT*
\begin{proof}
    If $\Delta \geq (\log n)^{50}$, then by \cref{lem:mct-coloring,lem:mct-cc}, none of the bad events occur after running \cref{alg:MCT} in pre-shattering, \whp. Hence, $H$ is colored and CC is maintained \whp. We henceforth assume that $\Delta \leq (\log n)^{50}$.

    Let $\Gev$ be the dependency graph of the $E'(v)$ and $E_{CC}(i)$ and $B$ be the set of events of the post-shattering LLL in \Cref{line:MCTPostShattering} of \Cref{alg:mctcoloring}. We first reason how to apply \Cref{lem:our-shattering} to bound the size of each connected component of $\Gev$. We take the whole graph $F\supseteq H$ as graph $G$ from \cref{lem:our-shattering}, because every vertex of $H$ has one variable in \cref{alg:rct-single-iteration}. The LLLs (formally sets of sets) $\mathcal{A}$ and $\mathcal{B}$ required to use the lemma are both given by all collections of $\vbl(E'(v))$ and $\vbl(E_{CC}(i))$ for all $v\in V(H)$ and uncolored cliques $A_i\in A_L\cup A_H$. The set $\mathcal{A}'$ consists of the events $E'(v)$ and $E_{CC}(i)$ that hold after \cref{line:MCT-retractionBase} of \cref{alg:mctcoloring}. Let $\mathcal{B}'$ with $\cstB= 0$ be as in the statement of \Cref{lem:our-shattering}. Let $c_1=5, c_2=2$,  and $c_4=3c_1(4\cstB+16)+c_2+\cstB+2$.     
    We show that the preconditions 
   \Cref{part:our-shattering-diam,part:our-shattering-involved,part:our-shattering-pre-prob,part:our-shattering-vbl} of \Cref{lem:our-shattering} hold. 

    \Cref{part:our-shattering-diam} holds via \Cref{obs:mct-random-variables} and we also obtain that each variable is used by at most $\Delta(F)^{c_2}$ events yielding \Cref{part:our-shattering-involved}. \Cref{part:our-shattering-vbl} follows by the definition of $\mathcal{A}'$.   In order to prove \cref{part:our-shattering-pre-prob}, observe that by \cref{lem:mct-coloring,lem:mct-cc} the probability of each $A\in \mathcal{A}$ to be contained in $\mathcal{A}'$, i.e., $E'(v)$ or $E_{CC}(i)$ hold, is upper bounded by $\exp\paren{-\Omega(\Delta(G)^{1/10})}$ and $\Delta(G)\cdot \exp(-\Delta(G)^{1/40})$ respectively. Both of these values decrease exponentially in $\Delta(G)=\Theta(\Delta(F))$ and hence for $\Delta(G)\geq \Delta_0$ for an absolute constant $\Delta_0$ can be upper bounded by $1/\Delta(F)^{c_4}$ where $c_4$ is stated above. The set of events included in the LLL of \cref{line:MCTPostShattering} of \Cref{alg:mctcoloring} are all included in $\mathcal{B}'$ (with the value of $\cstB$ defined above). Hence by \cref{lem:our-shattering}, the connected components of $\Gev$ have size at most $N=O(\Delta(F)^{4c_1\cstB + 4c_1 + c_2}\log n) =\poly(\log n)$ with high probability.
    
    Since the pre-shattering only colors vertices and that the slack does not decrease, \cref{lem:mct-coloring} continues to apply to bound the probability of $E'(v)$ in the post-shattering LLL. Likewise, \cref{lem:mct-cc} applies to bound the probability for $E_{CC}(i)$ in the post-shattering LLL. \Cref{obs:mct-random-variables} bounds the dependency degree and hence, for sufficiently large $\Delta$, \Cref{line:MCTPostShattering} poses an LLL that we can solve with \Cref{thm:deterministicLLLLOCAL} in $\Ot{\log^3 N}=\Ot{\log^3\log n}$ rounds; all other parts of the algorithm are executed in $O(1)$ rounds. At the end, all nodes of $H$ are colored as all events $E'(v)$ are either avoided already after the pre-shattering phase or after the post-shattering phase; note that any uncolored node after the pre-shattering phase is contained in $\vbl(\mathcal{A}')$. After retracting colors in $\vbl(\mathcal{A}')$, none of the $E_{CC}(i)$ holds anymore and thus also \cref{def:notbig} is maintained by the pre-shattering phase. During post-shattering, only vertices of $\vbl( \mathcal{A}')$ get colored. Hence, \Cref{def:notbig} continues to hold for the $A_i$ that are not adjacent to some vertex in $\vbl( \mathcal{A}')$. If $A_i$ is adjacent to such a vertex, then \Cref{def:notbig} is maintained by the post-shattering coloring (with a fresh budget) as it is implied by avoiding $E_{CC}(i)$.
      
\end{proof}

\section{Coloring the Sparse Nodes}
\label{sec:sparse}
In this section,  we show how to color the sparse nodes $S$ while maintaining $CC$. Recall that this is the first step of \Cref{alg:highlevel}.

\LemColorSparse*

\subsection{Algorithm for Sparse Nodes}
The first step is similar to \cite{MR14}: to generate \emph{slack} for the nodes of $S$ in the form of sufficient color reuse.
Initially, every vertex in $S$ tries a random color, resulting in about $\sqrt{\Delta}$ colors repeated in each neighborhood. A color is repeated in $N(v)$ if at least two vertices have it. When $\Delta \geq (\log n)^{50}$, every vertex gains $(1 + \Omega(1))\sqrt{\Delta}$ slack \whp.
When $\Delta \leq \log^{50} n$, our algorithm differs significantly from \cite{MR14}, with an involved shattering argument; see \Cref{lem:alg-slackgen} and \Cref{sec:slackGeneration}  for details. 
We henceforth call $S' \subseteq S$ the set of vertices uncolored after slack generation.

If the remaining subgraph $F[S']$ satisfied property $\Pi$, we could complete the task with algorithm \alg{ColorWithMuchSlack}.
However, though the slack obtained is significant, it can still be small relative to the external degree of cliques
(see \cref{lem:structuralDecomposition}(\ref{part:A-ext})).

The solution is to 
color the nodes of $F[S']$ in two batches. We split $S'$ into subgraphs $S_1$ and $S_2$ via a \emph{degree-splitting} LLL that guarantees that degrees into $S_1$ and $S_2$ are similar to what a random split would provide.
In \Cref{lem:whittling}, we show $F[S_1]$ is $\pious$ (thanks to the uncolored neighbors in $S_2$), and that $F[S_2]$ is $\pious$ regardless of the coloring in $S_1$ (as long as it extends the coloring produced by slack generation).
We can then color $S'$ by 
using $\alg{ColorWithMuchSlack}$ twice: first to color $S_1$ and then $S_2$. 

\begin{algorithm}
    \caption{\alg{ColorSparse} \label{alg:tough-cookie}} 

    \alg{SlackGeneration} \label{line:color-sparse-slackgen} \;
        
Partition $S'$ into sets $S_1$ and $S_2$ using the degree splitting algorithm of \cref{lem:ds} when $\Delta \leq (\log n)^{50}$ and by sampling every vertex of $S'$ into $S_2$ \wp $p = 2\Delta^{-1/4}$ otherwise 
\label{line:whittling} \;

    \alg{ColorWithMuchSlack}($S_1$) \;
    \alg{ColorWithMuchSlack}($S_2$)
\end{algorithm}

\noindent
We begin by stating the properties of the \alg{SlackGeneration} step.
We defer the proof of \cref{lem:alg-slackgen} to \cref{sec:slackGeneration} to preserve the flow of the paper.
\begin{restatable}{lemma}{LemShatteringSlackGen}
    \label{lem:alg-slackgen}
There is a randomized \LOCAL algorithm that computes in $\Ot{ \log^3 \log n }$ rounds a partial coloring of the subgraph $F$ induced by $S$ such that:
\begin{enumerate}[label=\emph{(\alph*)}]
    \item \label[part]{part:slackgen-repeated-color}
     For every $v \in S$, either $v$ has fewer than $\Delta - 3\sqrt{\Delta}$ neighbors in $S$ or at least $1.05\sqrt{\Delta}$ colors appear at least twice in $N(v) \cap S$;   
    \item \cref{def:notbig} is maintained; and
    \item Every vertex in $S$ has at most $19\Delta/20$ colored neighbors.
\end{enumerate}
If $\Delta \geq (\log n)^{50}$, then the algorithm ends after $O(1)$ round.
\label{lem:slackgen}
\end{restatable}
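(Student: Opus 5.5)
The plan is to run a single random color trial on $S$ with a small activation probability, and to prove (a)--(c) as LLL-type bad events. When $\Delta \ge (\log n)^{50}$ these events are avoided \whp\ directly by a union bound. For smaller $\Delta$ I would use the shattering framework of \cref{sec:our-shattering-framework}, with a guard event and palette splitting.

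\textbf{Random process.} Split the palette $[c]$ into $P_1$, of size about $c/2$, used in pre-shattering, and $P_2$ for post-shattering. Each $v\in S$ activates independently with a small constant probability, say $p_a = 1/40$. An activated vertex picks a uniform color from $P_1$ and keeps it if no neighbor picked the same color.

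\textbf{Bad events and their probabilities.}
\begin{itemize}
\item For (c), the number of colored neighbors is at most the number of activated neighbors. By Chernoff this exceeds $2p_a\Delta \ll 19\Delta/20$ only with probability $\exp(-\Omega(\Delta))$.
\item For (b), every activated vertex picks a fixed color $x$ with probability at most $2/c$. The external degree $Q$ of each clique is at most $10^8\sqrt{\Delta}$, so this probability is far below $1/(Q\Delta^{1/5})$. Condition (P7.1) holds because choices are independent, so \cref{lem:bnd-notbig} plus a union bound over colors gives failure probability $\Delta\exp(-\Delta^{1/40})$, exactly as in \cref{lem:rct-cc-post-prob}.
\item For (a), consider $v$ with $\deg_S(v)\ge \Delta-3\sqrt{\Delta}$. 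By \cref{lem:structuralDecomposition}-(\ref{part:decomp-S}) it has at least $9\cdot10^5\Delta^{3/2}$ non-adjacent pairs in $N(v)\cap S$. Let $X$ count pairs $\{u,w\}$ of non-adjacent neighbors that both retain the same color, with no third neighbor of $v$ receiving it. Each such pair has probability $\Theta(p_a^2/c)$, so $\E[X] \ge \Theta(p_a^2)\cdot 9\cdot 10^5\sqrt{\Delta}$. The constant $9\cdot10^5$ is what lets this dominate $1.05\sqrt{\Delta}$ despite the small $p_a$.
\end{itemize}

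\textbf{Concentration for (a).} This is the main obstacle. $X$ is a function of many variables with lipschitz-type dependencies, so I would use Talagrand's inequality in the certifiable form, as in Molloy--Reed slack generation. Changing one vertex's choice affects $X$ by $O(1)$, and $X\ge s$ is certified by $O(s)$ choices. This gives $\Pr[X < 1.05\sqrt{\Delta}] \le \exp(-\Omega(\Delta^{1/4}))$ or better.

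\textbf{Shattering for small $\Delta$.} Take $\mathcal{A}$ to be the events above together with a guard event $G(v)$. The guard $G(v)$ says that $v$ has at least $2p_a\Delta$ activated neighbors, or that fewer than a constant fraction of its non-edges have both endpoints unactivated.
\begin{itemize}
\item Retract colors in $\vbl(\mathcal{A}')$.
\item Rerun the same trial with palette $P_2$ on the uncolored vertices near retracted ones, and solve it with \cref{thm:deterministicLLLLOCAL}.
\item The guard ensures every vertex entering post-shattering still has $\Omega(\Delta^{3/2})$ non-edges among uncolored neighbors. It also ensures the post-shattering expectation, computed with the disjoint palette $P_2$, again exceeds $1.05\sqrt{\Delta}$, and that retractions do not cascade: a retraction only uncolors vertices, and colors from $P_1$ and $P_2$ never collide.
\end{itemize}
CC gets a fresh budget in each phase. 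For (c), each phase contributes at most $2p_a\Delta$ colored neighbors, so the total stays below $19\Delta/20$.

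The component-size bound follows from \cref{lem:our-shattering}, with weak diameters $O(1)$ and failure probabilities $\exp(-\Delta^{\Omega(1)}) \le \Delta^{-c_4}$. The post-shattering LLL then has $\poly(\log n)$-size components, which gives $\Ot{\log^3\log n}$ rounds.
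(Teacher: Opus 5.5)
Your overall architecture is the same as the paper's proof. You activate part of $S$, run one trial on half the palette, and guard against too many activated neighbors and against losing the non-edges among unactivated neighbors (the paper's $E_1$ and $E_3$). You then retract $\vbl(\mathcal{A}')$, rerun with the disjoint half-palette on uncolored vertices within distance $2$ of the retractions, and bound components with \cref{lem:our-shattering}. This needs repeated-color events on $N^{\le 1}(\vbl(\mathcal{A}'))$ and colored-degree events out to distance $3$. Using $p_a=1/40$ instead of the paper's $1/2$ (pre-shattering) and $3/4$ (post-shattering) is harmless, and it even makes (c) immediate. The step that does not go through as written is the concentration for (a), which you correctly single out as the crux.

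\emph{Why Talagrand does not apply to $X$.} Your $X$ is not $O(1)$-certifiable. To certify that a pair $\{u,w\}$ keeps its common color $\psi$, and that no other neighbor of $v$ picked $\psi$, a certificate must fix the choices of essentially all neighbors of $u$, $w$ and $v$. Any single unfixed one switching to $\psi$ destroys the pair. So $r=\Theta(\Delta)$, and \cref{lem:talagrand} with $\mathbb{E}[X]=\Theta(\sqrt{\Delta})$ gives nothing. Bounded differences fails too, since $X$ depends on $\Theta(\Delta^2)$ choices.

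\emph{The missing decomposition.} What you need is the decomposition behind \cref{lem:prob-slackgen}. Let $T$ count the colors picked by both endpoints of some non-edge in $N(v)$, and let $D$ count those colors for which some such endpoint loses its color. Then:
\begin{itemize}
\item $T$ and $D$ are both $2$-Lipschitz.
\item $T\ge s$ is certified by $2s$ choices, and $D\ge s$ by $3s$ choices (two endpoints plus one conflicting neighbor).
\item $Z=T-D$ is a lower bound on the number of repeated colors.
\item $\mathbb{E}[Z]$ is at least the expected number of your successful pairs, and $\mathbb{E}[T]\le p_a^2\overline{m}/|P_1|$ is within a constant factor of it.
\end{itemize}
Concentrating $T$ and $D$ separately to within $\mathbb{E}[Z]/10$ then gives $Z\ge\frac45\mathbb{E}[Z]$ with probability $1-\exp(-\Omega(\sqrt{\Delta}))$.

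\emph{A caveat for your parameters.} With $p_a=1/40$ you cannot simply quote \cref{lem:prob-slackgen} on the activated set. Its lossy constant $3\cdot 10^4$ would guarantee fewer than $0.04\sqrt{\Delta}$ repeated colors. You must run the $T-D$ argument yourself, using your sharper estimate of $\mathbb{E}[Z]$, which is of order $10^3\sqrt{\Delta}$.
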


\noindent
For the rest of this section, $L(v)$ denotes the set of colors available after slack generation (\cref{line:color-sparse-slackgen} in \cref{alg:tough-cookie}). The main implication of \cref{lem:slackgen} is the following lower bound on the number of available colors.  We emphasize that for \cref{obs:listsize}, it is crucial that the number of repeated colors guaranteed by \cref{lem:slackgen}(a) is $(1 + \Omega(1))\sqrt{\Delta}$.

\begin{observation}
    All $v\in S'$ have $|L(v)| \geq \deg_{S'}(v) + 0.05\sqrt{\Delta}$.
    \label{obs:listsize}
\end{observation}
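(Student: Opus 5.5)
We prove \cref{obs:listsize} by a counting argument on the palette of a vertex $v\in S'$ after \alg{SlackGeneration}. We split into the two cases of \cref{lem:structuralDecomposition}-(\ref{part:decomp-S}), as refined by \cref{lem:slackgen}(a). Throughout, we use that every neighbor of $v$ that is not in $S$ is still uncolored after slack generation, since $S$ is colored first. Consequently, the only colors missing from $L(v)$ are those used on $N(v)\cap S$. We also use that $c\ge \Delta-k_\Delta+1 > \Delta-\sqrt{\Delta}+2$, which follows from $k_\Delta<\sqrt{\Delta}-1$.

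\textbf{Case 1: $\deg_S(v) < \Delta-3\sqrt{\Delta}$.} Let $a$ be the number of colored neighbors of $v$ in $S$. Then $|L(v)| \ge c - a$ and $\deg_{S'}(v) = \deg_S(v) - a$. Hence
\[
|L(v)| - \deg_{S'}(v) \ \ge\ c - \deg_S(v) \ >\ \Delta-\sqrt{\Delta} - (\Delta - 3\sqrt{\Delta}) \ =\ 2\sqrt{\Delta} \ \ge\ 0.05\sqrt{\Delta}.
\]

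\textbf{Case 2: $\deg_S(v)\le \Delta$ and, by \cref{lem:slackgen}(a), at least $1.05\sqrt{\Delta}$ colors appear at least twice in $N(v)\cap S$.} Let $a$ again be the number of colored neighbors of $v$ in $S$. Each repeated color accounts for at least two of these colored neighbors but removes only one color from the palette. Therefore the number of distinct colors in $N(v)\cap S$ is at most $a - 1.05\sqrt{\Delta}$, and so $|L(v)| \ge c - a + 1.05\sqrt{\Delta}$. Combined with $\deg_{S'}(v) = \deg_S(v) - a \le \Delta - a$, this gives
\[
|L(v)| - \deg_{S'}(v) \ \ge\ c - \Delta + 1.05\sqrt{\Delta} \ >\ -\sqrt{\Delta} + 1.05\sqrt{\Delta} \ =\ 0.05\sqrt{\Delta}.
\]

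The only delicate step is the lower bound $c-\Delta > -\sqrt{\Delta}$. This is exactly where $c\ge\Delta-k_\Delta+1$ enters, and it is why the repeated-color count must exceed $\sqrt{\Delta}$ by a constant factor rather than merely reach it. One more point needs checking: by \cref{lem:structuralDecomposition}-(\ref{part:decomp-S}), every $v\in S$ falls into one of the two cases. Vertices with $\deg_S(v)\ge\Delta-3\sqrt{\Delta}$ have $\deg_S(v)\le\Delta$, and \cref{lem:slackgen}(a) supplies the repeated colors for them. If $\deg_{S'}(v)$ is taken within $F$ rather than only within $S$, the same computation applies, because non-$S$ neighbors are uncolored and contribute equally to both sides of the comparison.
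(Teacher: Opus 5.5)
Your proof is correct and follows the paper's own argument. It uses the same case split on whether $\deg_S(v)<\Delta-3\sqrt{\Delta}$, the same accounting that each of the $1.05\sqrt{\Delta}$ repeated colors from the slack-generation lemma saves one palette color, and the same use of $c\ge\Delta-\sqrt{\Delta}+1$ together with $\deg_S(v)\le\Delta$. Your closing remark about non-$S$ neighbors is unnecessary, since $S'\subseteq S$ means $\deg_{S'}(v)$ never counts them in the first place.
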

\begin{proof}
  There are $c \geq \Delta - k_\Delta + 1 \ge \Delta - \sqrt{\Delta}+1$ colors. 
  If $\deg_S(v) < \Delta - 3\sqrt{\Delta}$, then $c \geq \deg_S(v) + 2\sqrt{\Delta}$ and the claim follows as each color lost in $L(v)$ corresponds to a colored neighbor (thus not in $S'$). If $\deg_S(v) \geq \Delta - 3\sqrt{\Delta}$, then by \cref{lem:slackgen}(a), it has at least $1.05\sqrt{\Delta}$ repeated colors in $N(v)$. Simple accounting shows that $L(v)$ contains at least 
  \[
  c - \deg_{S \setminus S'}(v) + 1.05\sqrt{\Delta} 
  \geq \Delta - \deg_{S \setminus S'}(v) + 0.05\sqrt{\Delta}
  \geq \deg_{S'}(v) + 0.05\sqrt{\Delta} \ ,
  \]
  where the first inequality uses the definition of $c$ and the latter uses that $F[S]$ has maximum degree $\Delta$.
\end{proof}

\medskip\noindent
The following \emph{degree splitting} result is used as a subroutine to implement line 3 of \cref{alg:tough-cookie}. Similar results were given in \cite{HMN22}, but not the exact claim we need.

\begin{restatable}{lemma}{LemDegreeSplitting}
    There is a universal constant $\alpha > 0$ for which the following holds.
Let $p \in (0,1)$, let $H$ be a graph with maximum degree $\Delta \leq (\log n)^{50}$ and $S' \subseteq V(H)$.
    There is a $\Ot{\log^3\log n}$-round algorithm that partitions $S'$ into $S_1$ and $S_2$ such that, \whp, every vertex $v\in V(H)$ with $\deg_{S'}(v) \ge \alpha p^{-1}\log \Delta$ has
    \begin{enumerate}
        \item at most $4p\deg_{S'}(v)$ neighbors in $S_2$, and
        \item at least $p\deg_{S'}(v)/2$ neighbors in $S_2$. 
    \end{enumerate}
    \label{lem:ds}
\end{restatable}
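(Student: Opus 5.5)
The natural random process puts each vertex of $S'$ into $S_2$ independently with probability $p$, and into $S_1$ otherwise. We then cast the guarantee as an LLL with one variable per vertex of $S'$ and solve it with the shattering framework of \cref{sec:our-shattering-framework}.

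For every $v\in V(H)$ with $\deg_{S'}(v)\ge \alpha p^{-1}\log\Delta$, define the bad event
\[
D(v):\quad |N(v)\cap S_2|\notin \bigl[p\deg_{S'}(v)/2,\; 4p\deg_{S'}(v)\bigr].
\]
Its variables lie in $N(v)\cap S'$, so its weak diameter is at most $2$. Each variable appears in at most $\Delta$ events, so \ref{part:our-shattering-diam} and \ref{part:our-shattering-involved} hold with $c_1=2$ and $c_2=1$. The quantity $|N(v)\cap S_2|$ is a sum of independent Bernoulli($p$) variables with mean $\mu=p\deg_{S'}(v)\ge\alpha\log\Delta$. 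Chernoff (\cref{lem:chernoff}) gives
\[
\Pr[D(v)]\le 2\exp(-\mu/12)\le\Delta^{-\alpha/12}.
\]
Choosing $\alpha$ large makes this at most $\Delta^{-c_4}$ for the constant $c_4$ required by \cref{lem:our-shattering}, and also makes the post-shattering instance satisfy the criterion of \cref{thm:deterministicLLLLOCAL}. Note the failure probability is only polynomial in $\Delta$, not exponential, which is why $\alpha$ must be tied to $c_4$.

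The algorithm has three stages. First, in pre-shattering, every vertex samples its membership. Let $\mathcal{A}'$ be the set of events $D(v)$ that occur. Second, we retract the memberships of all vertices in $N^{\le 1}(\vbl(\mathcal{A}'))\cap S'$; call this set $R$. Third, in post-shattering, the vertices of $R$ resample with fresh randomness.

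The main obstacle is that resampling $R$ affects events $D(u)$ that were satisfied before. A vertex $u$ may have only part of its neighborhood in $R$, and I see two ways to handle this.

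The first option uses marginal events. For such a $u$, define the post-shattering event $D'(u)$: the number of $S_2$-neighbors in $N(u)\cap R$ deviates from $p|N(u)\cap R|$ by more than $\frac{1}{4}p\deg_{S'}(u)$ plus a small additive term. Combined with the pre-shattering count on $N(u)\setminus R$, which we can require to be concentrated through an analogous marginal pre-shattering event, this still lands in the interval $[p\deg/2,4p\deg]$. The factor-of-two slack on each side of the interval is what leaves room for this decomposition. For the concentration to be exponential in $\alpha\log\Delta$ when $|N(u)\cap R|$ is small, I would phrase the deviation additively in terms of $p\deg_{S'}(u)$, using the Chernoff upper tail with deviation $t\ge\mu$.

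The second option is simpler: enlarge the retraction to every $S'$-neighbor of any $u$ with $N(u)\cap\vbl(\mathcal{A}')\neq\emptyset$. That is, retract all of $N^{\le 2}(\vbl(\mathcal{A}'))\cap S'$, and include $D(u)$ in the post-shattering LLL for every $u$ with a neighbor in the retracted set. But an event $D(w)$ with only partial overlap still reappears one layer further out, so the marginal formulation from the first option is needed anyway. I would adopt the first option.

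Post-shattering events are included whenever they lie within distance $\cstB=2$ of $\vbl(\mathcal{A}')$. By \cref{lem:our-shattering}, the components of the post-shattering dependency graph have size $\poly(\Delta)\log n=\poly(\log n)$, using $\Delta\le(\log n)^{50}$. \Cref{thm:deterministicLLLLOCAL} then solves each component in $\Ot{\log^3\log n}$ rounds, and the pre-shattering and retraction stages take $O(1)$ rounds. Finally, we verify that every vertex avoids the bad outcome. A vertex far from $\mathcal{A}'$ keeps its pre-shattering neighborhood and avoids $D(v)$ there. A vertex near $\mathcal{A}'$ is covered by its post-shattering event (full or marginal), combined with its pre-shattering marginal bound.
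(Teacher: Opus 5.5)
\textbf{There is a genuine gap}, and it sits exactly at the step you flag as the main obstacle.

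\emph{Why the marginal-event fix does not work.} For a boundary vertex $u$, your lower bound needs the pre-shattering count $|(N(u)\setminus R)\cap S_2|$ to be within $\tfrac14 p\deg_{S'}(u)$ of $p|N(u)\setminus R|$. You propose to enforce this with ``an analogous marginal pre-shattering event''. But $R=N^{\le 1}(\vbl(\mathcal{A}'))\cap S'$ is itself a function of the pre-shattering sample. So this is not an event on a fixed set of variables that Chernoff can bound. Nor can it be required for all subsets of $N(u)$ simultaneously: for $p<3/4$ it fails for the subset $N(u)\cap S_1$.

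Worse, the sample on $N(u)\cap R$ is biased, precisely because some nearby $D(x)$ fired. Here is a concrete failure:
\begin{itemize}
\item $D(u)$ does not fire, and $u$ has about $0.6\,p\deg_{S'}(u)$ neighbors in $S_2$.
\item Half of them lie in a part of $N(u)\cap R$ of size $\deg_{S'}(u)/10$. That part is $S_2$-dense because it lies in $N(x)$ for some $x$ whose $D(x)$ fired through its upper tail.
\item Resampling $R$ at density $p$ leaves $u$ with about $0.4\,p\deg_{S'}(u)<p\deg_{S'}(u)/2$ neighbors in $S_2$.
\item Yet your post-shattering event $D'(u)$ on $N(u)\cap R$ is avoided.
\end{itemize}
Excluding this would need a union bound over all possible traces $N(u)\cap R$ together with a correlation argument, and neither is supplied. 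Your second option, as you note yourself, only pushes the problem one layer outward.

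\emph{The missing idea: make retraction and resampling one-sided.} This is what the paper does.
\begin{itemize}
\item Pre-shattering uses the tighter band $|\deg_{S_2}(v)-p\deg_{S'}(v)|\le (p/2)\deg_{S'}(v)$. Let $X$ be the set of vertices whose event fires.
\item Retraction only \emph{removes} $X\cup N(X)$ from $S_2$.
\item Post-shattering lets only the vertices of $N^{\le 4}(X)$ currently outside $S_2$ join $S_2$ with probability $p$.
\item The post-shattering bad event is the full event $E'_d(v)$, with the wide band $[p/2,4p]$, for every $v\in N^{\le 5}(X)$.
\end{itemize}

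Monotonicity then does all the work.
\begin{itemize}
\item A vertex that lost $S_2$-neighbors lies in $N^{\le 2}(X)$. Every one of its neighbors outside $S_2$ resamples. Chernoff on the resampled part therefore gives the lower bound, however the pre-shattering sample looked, since the fixed part consists only of $S_2$-members and can only help.
\item For the upper bound, the fixed part has at most $\tfrac32 p\deg_{S'}(v)$ members (none if $v\in X$). With at most $2p\deg_{S'}(v)$ additions, the total stays within $4p\deg_{S'}(v)$.
\item A vertex farther out only gains $S_2$-neighbors, so only its upper bound is at stake, and the same $\tfrac32+2\le 4$ slack absorbs it.
\end{itemize}
No marginal events over random sets are needed, and the retraction does not cascade.

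The rest of your proposal matches the paper: the $\Delta^{-\Omega(\alpha)}$ failure bound with $\alpha$ tied to $c_4$, and the use of \cref{lem:our-shattering} and \cref{thm:deterministicLLLLOCAL}.
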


We emphasize that the guarantee of \cref{lem:ds} applies to all vertices of $F$ with sufficiently many neighbors in $S'$. In particular, it splits the degree of vertices in $A_i$ towards $S'$ as well.
The proof is deferred to \cref{sec:proof-ds} and follows the shattering framework. 

\begin{lemma}
\label{lem:whittling}
There is a $\Ot{ \log^3 \log n }$ round \LOCAL algorithm that partitions $S'$ into sets $S_1$ and $S_2$ such that 
\begin{enumerate}
    \item $F[S_1]$ is $\pious$, and
    \item $F[S_2]$ is $\pious$ regardless of how the coloring is extended to the vertices of $S_1$.
\end{enumerate}
When $\Delta \geq (\log n)^{50}$, the algorithm ends after $O(1)$ rounds.
\end{lemma}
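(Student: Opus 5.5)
We prove \cref{lem:whittling} by applying \cref{lem:ds} with $p = 2\Delta^{-1/4}$ when $\Delta \le (\log n)^{50}$; otherwise we sample each vertex of $S'$ into $S_2$ independently with probability $p$, where Chernoff and a union bound give the same two degree guarantees \whp in $O(1)$ rounds. The threshold $\alpha p^{-1}\log\Delta = O(\Delta^{1/4}\log\Delta)$ is far below the degrees that matter. A vertex with smaller degree into $S'$ will be handled directly by the trivial bound $\deg_{S_2}(v)\le \deg_{S'}(v) = O(\Delta^{1/4}\log \Delta)$. We then verify property $\Pi$ for each part with a suitable $U$.

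\textbf{$F[S_2]$ is $\Pi$-ous.} We take $U = \max\{\Delta^{1/4}, 8\Delta^{1/4}\cdot 10^8\sqrt{\Delta}\cdot \Delta^{-1/2}\}$; concretely, $U = O(\Delta^{1/4}\log\Delta)$ suffices. For (a), an uncolored vertex of a clique $A_i$ has at most $10^8\sqrt{\Delta}$ external neighbors, hence at most that many in $S'\setminus \All_i$. Note that $S\cap \All_i = \emptyset$, since $\All_i \subseteq B_L\cup B_H$. After splitting, such a vertex has at most $\max\{4p\cdot 10^8\sqrt{\Delta},\ \alpha p^{-1}\log\Delta\} = O(\Delta^{1/4}\log\Delta)$ neighbors in $S_2$.

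For (b), \cref{obs:listsize} gives $|L(v)| \ge \deg_{S'}(v) + 0.05\sqrt{\Delta}$ for the lists after slack generation. However the coloring of $S_1$ is extended, each colored neighbor in $S_1$ removes at most one color, so $|L(v)| \ge \deg_{S_2}(v) + 0.05\sqrt{\Delta}$ still holds. We need $0.05\sqrt{\Delta} \ge U\Delta^{0.22} = O(\Delta^{0.47}\log\Delta)$, which holds for large $\Delta$.

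\textbf{$F[S_1]$ is $\Pi$-ous.} Here the clique bound is only $U = 10^8\sqrt{\Delta}$, so we need slack $10^8\Delta^{0.72}$. This comes from the uncolored neighbors in $S_2$: $|L(v)| \ge \deg_{S'}(v) + 0.05\sqrt\Delta = \deg_{S_1}(v) + \deg_{S_2}(v) + 0.05\sqrt\Delta$. For vertices with $\deg_{S'}(v)\ge \alpha p^{-1}\log\Delta$, the lower guarantee of \cref{lem:ds} gives $\deg_{S_2}(v)\ge p\deg_{S'}(v)/2 = \Delta^{-1/4}\deg_{S'}(v)$. This is at least $10^8\Delta^{0.72}$ only when $\deg_{S'}(v) \ge 10^8\Delta^{0.97}$.

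\textbf{Main obstacle.} The difficulty is vertices of $S'$ with moderate degree into $S'$ (below about $\Delta^{0.97}$). I would handle these with the spare palette. Every $v\in S$ has at most $19\Delta/20$ colored neighbors after slack generation, by \cref{lem:slackgen}(c). So if $\deg_{S'}(v) \le \Delta^{0.97}$, then $|L(v)| \ge c - 19\Delta/20 \ge \Delta/30$, which dwarfs $\deg_{S_1}(v) + 10^8\Delta^{0.72}$.

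A vertex with large $\deg_{S'}(v)$ gets its slack from $S_2$, and a vertex with small $\deg_{S'}(v)$ gets it from its palette. The cutoff must be checked to cover both regimes; a cutoff of $\Delta/100$ works comfortably, since then $\Delta^{-1/4}\cdot\Delta/100 \gg \Delta^{0.72}$. The same two-case argument also covers vertices below the \cref{lem:ds} threshold. Condition (a) for $S_1$ holds trivially with $U = 10^8\sqrt\Delta$. The runtime is that of \cref{lem:ds}, namely $\Ot{\log^3\log n}$, or $O(1)$ in the high-degree case.
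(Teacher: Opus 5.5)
Your proposal is correct and follows essentially the same route as the paper. You apply \cref{lem:ds} with $p=2\Delta^{-1/4}$ and take $U=10^8\sqrt{\Delta}$ for $S_1$. The slack for $S_1$ comes from the $S_2$-neighbors when $\deg_{S'}(v)$ is large, and from the $19\Delta/20$ colored-neighbor bound of \cref{lem:slackgen}(c) when it is small; the paper uses the cutoff $\Delta/40$ where you use $\Delta/100$. For $S_2$ you use \cref{obs:listsize} together with $U=\Theta(\Delta^{1/4}\log\Delta)$, as the paper does.

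One slip: your first explicit formula for $U$ in the $S_2$ case omits the threshold term $\alpha p^{-1}\log\Delta$. Your subsequent choice $U=\alpha\Delta^{1/4}\log\Delta$, which is the paper's $U_2$, fixes it.
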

\begin{proof}
    Let $p = 2\Delta^{-1/4}$.
    When $\Delta \geq (\log n)^{50}$, we sample vertices of $S'$ into $S_2$ with probability $p$, by the Chernoff Bound and Union Bound, the guarantees of \cref{lem:ds}~(1,2) hold with high probability.
    When $\Delta \leq (\log n)^{50}$, we split the vertices of $S'$ into $S_1$ and $S_2$ using the algorithm of \cref{lem:ds} with $H = F$, $S'$ as itself, and $p = 2\Delta^{-1/4}$. We henceforth focus on proving that (1) and (2) hold.

    Let $U_1=10^8\cdot \sqrt{\Delta}$, and $U_2 = \alpha \Delta^{1/4}\log\Delta$ where $\alpha$ is the constant from \Cref{lem:ds}.
    We claim that $F[S_1]$ is \pious with $U = U_1$ and $F[S_2]$ is \pious with $U = U_2$, even after we extend the coloring to $S_1$.
    
    \medskip\noindent
    \textbf{Proof of 1.}  
    A node with $\deg_{S'}(v)<\Delta/40$, has a list of size $\Omega( \Delta )$. Indeed, it has at most $19\Delta/20$ colored neighbors (by \cref{lem:slackgen}(c)) thus $|L(v)| \geq c - 19\Delta/20 \geq \Delta/20 - k_\Delta \geq \deg_{S'}(v) + \Delta/40 - k_\Delta \geq \deg_{S'}(v) + 4\Delta^{3/4} \geq \deg_{S'}(v) + 4U_1\Delta^{0.22}$. A node with $\deg_{S'}(v)\geq \Delta/40$ has at least $\deg_{S'}(v)/\Delta^{1/4} \geq \Delta^{3/4}/40$ neighbors in $S_2$ by \Cref{lem:ds}(2). 
    Thus, we obtain by \Cref{obs:listsize} that
    \begin{align*}
        |L(v)| 
        \geq \deg_{S'}(v)
        = \deg_{S_1}(v) + \deg_{S_2}(v)
        \geq \deg_{S_1}(v) + \Delta^{3/4}/40.
    \end{align*}
    For sufficiently large $\Delta$, it holds that $\Delta^{3/4}/40 > U_1\cdot\Delta^{0.22}$, thereby proving Part (b) of \cref{prop:Pi}.
    As for Part (a), recall that a vertex in $A$ has at most $U_1 = 10^8\sqrt{\Delta}$ neighbors in $S$ (by \cref{lem:structuralDecomposition}(\ref{part:A-ext})). 
    
    \medskip\noindent
    \textbf{Proof of 2.}
    Consider any extension of the coloring to $S_1$ and let $L_2(v)$ denote the list of colors still available to $v\in S_2$.
    By \Cref{obs:listsize}, after coloring the nodes in $S_1$ in an arbitrary manner, the list size of a node $v\in V(S_2)$ is at least 
    \begin{align*}
        |L_2(v)|
        \geq|L(v)|-\deg_{S_1}(v)
        \geq \deg_{S_2}(v)+0.05\sqrt{\Delta}
        \geq \deg_{S_2}(v)+U_2\cdot \Delta^{0.22}~, 
    \end{align*}
    where the last inequality holds as for sufficiently large $\Delta$. Part (a) of \cref{prop:Pi} trivially holds for a vertex $u\in \bigcup_i A_i$ with $\deg_{S'}(u) < \alpha\Delta^{1/4}\log\Delta = U_2$. Otherwise, the splitting lemma applies to $u$ and since $\deg_{S'}(u) \leq 10^8 \sqrt{\Delta}$, it has at most $4p\deg_{S'}(u)\leq 8\deg_{S'}(u)/\Delta^{1/4} \leq 8\cdot 10^8\Delta^{1/4}< U_2$ neighbors in $S_2$.
\end{proof}

\medskip\noindent
We conclude by putting all the statements together to prove the main result about coloring $S$.

\begin{proof}[Proof of \cref{lem:tough-cookie}]
By \cref{lem:whittling}, we can use \cref{lem:colorWithMuchSlack} to color $F[S_1]$ and $F[S_2]$. It takes $\Ot{ \log\Delta \cdot \log^3\log n }$ rounds when $\Delta \leq (\log n)^{50}$. When $\Delta \geq (\log n)^{50}$, we use \cref{thm:d1LC} instead to color in $O(\log^* n)$ rounds.
\end{proof}

\subsection{Degree Splitting (Proof of \texorpdfstring{\cref{lem:ds}}{Lemma~\ref{lem:ds}})}
\label{sec:proof-ds}

We partition $S'$ into $S_1$ and $S_2$ with \cref{alg:ds} and argue that the claims of \cref{lem:ds} are verified with high probability. The proof follows the shattering framework: first, we analyze the probability of bad events occurring during pre-shattering, deduce that the connected components of the dependency graph are $\poly(\log n)$-sized in post-shattering, and finally verify that the post-shattering instance is indeed an LLL. 

\begin{algorithm}
    \caption{Degree Splitting (\cref{lem:ds})}
    \Input{a graph $H$ with maximum degree $\Delta$, and a set $S' \subseteq V(H)$}
    \Output{a partition $S_1, S_2$ of $S'$ \label{alg:ds}}

    \BlankLine
    \nonl\textbf{Pre-Shattering}\;
    Each vertex of $S'$ joins $S_2$ independently \wp $p$.\;
    For each $v$ with $\deg_{S'}(v) \geq \alpha p^{-1}\log\Delta$, consider the event
    {\medskip\begin{enumerate}[leftmargin=1.5cm]
    \item[$E_d(v)$:] $\abs{ \deg_{S_2}(v) - p\deg_{S'}(v) } > (p/2)\deg_{S'}(v)$\;
    \end{enumerate}}\label{line:degreeSplittingPreShattering}

    \BlankLine
    \nonl\textbf{Retractions}\;
    Let $X$ be the set of vertices $v$ for which $E_d(v)$ hold\;
    Remove from $S_2$ every vertex adjacent to $X$, i.e., $S_2 \gets S_2 \setminus (X \cup N(X))$\;

    \BlankLine
    \nonl\textbf{Post-Shattering}\;
    We solve the following LLL using \label{line:dspostshattering}\cref{thm:deterministicLLLLOCAL}\;
    \nonl Every vertex $v\in N^{\leq 4}(X) \cap (S' \setminus S_2)$ independently joins $S_2$ \wp $p$\;
    \nonl For each $v\in N^{\leq 5}(X)$ with $\deg_{S'}(v) \geq \alpha p^{-1}\log\Delta$, we have the bad event
    {\begin{enumerate}[leftmargin=1.5cm]
        \item[$E_d'(v)$:] $\deg_{S_2}(v) > 4p\deg_{S'}(v)$ or $\deg_{S_2}(v) < (p/2)\deg_{S'}(v)$.
    \end{enumerate}}
\end{algorithm}

\medskip\noindent

The events $E_d(v)$ and $E'_d(v)$ are determined by the random choices of their neighbors. 
\begin{observation}
    \label{obs:ds-random-variables}
    $\vbl(E_d(v)) \subseteq N_H^{\leq 1}(v)$ and $\vbl(E'_{d}(v)) \subseteq N_H^{\leq 1}(v)$.
\end{observation}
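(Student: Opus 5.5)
The statement is \cref{obs:ds-random-variables}: both $E_d(v)$ and $E'_d(v)$ depend only on the random variables of $v$ and its neighbors. I would check it by unfolding the definitions of the two events in \cref{alg:ds} and identifying which random bits decide them.

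For the pre-shattering event $E_d(v)$, the only randomness is the independent coin of each vertex $u\in S'$ deciding whether $u$ joins $S_2$. The quantity $\deg_{S'}(v)$ is deterministic. The quantity $\deg_{S_2}(v)=\sum_{u\in N_H(v)\cap S'}\I{u\in S_2}$ is a sum of indicators of the coins of neighbors of $v$. Hence $\I{E_d(v)}$ is a function of $\set{X_u : u\in N_H(v)}\subseteq N_H^{\le 1}(v)$. Including $v$ itself is harmless; it only makes the containment an upper bound.

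For the post-shattering event $E'_d(v)$, the membership of each $u$ in $S_2$ is determined as follows. Either $u$ survived pre-shattering and retraction, in which case its membership is fixed before the post-shattering process, or $u\in N^{\le 4}(X)\cap(S'\setminus S_2)$, in which case it flips a fresh independent coin. In the post-shattering LLL, the outcome of pre-shattering is treated as fixed input. The only random variables are therefore the fresh coins, and $\deg_{S_2}(v)$ is again a sum over $u\in N_H(v)\cap S'$ of quantities that are either constants or the fresh coin of $u$. Thus $\vbl(E'_d(v))\subseteq N_H^{\le 1}(v)$.

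The only point needing care is the one just made: the post-shattering event must be read with pre-shattering choices conditioned on. Otherwise the retraction rule, which depends on $X$ and hence on 2-hop randomness, would enlarge the variable set.
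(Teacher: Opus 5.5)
Your proposal is correct and matches the paper, which justifies the observation in one line by noting that both events are determined by the random membership choices of $v$'s neighbors. Your remark that $E'_d(v)$ must be read with the pre-shattering outcome (and hence the retraction set $X$) held fixed, so that only the fresh post-shattering coins count as variables, is the convention the paper implicitly uses for its post-shattering LLL.
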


The probability that a bad event $E_d(v)$ occurs during pre-shattering is low by a direct application of the Chernoff Bound.

\begin{claim}
    \label{claim:ds-Ev}
    For all $v$, the event $E_d(v)$ occurs \wp at most $2\exp( - p \deg_{S'}(v)/12 )$.
\end{claim}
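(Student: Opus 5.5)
The claim is a direct Chernoff bound on a sum of independent indicators. Fix a vertex $v$ and write $\deg_{S_2}(v) = \sum_{u \in N(v)\cap S'} Y_u$, where $Y_u$ is the indicator that $u$ joins $S_2$ during pre-shattering. The joining decisions are made independently, each with probability $p$. Hence the $Y_u$ are independent Bernoulli$(p)$ variables, and
\[
\mu := \E[\deg_{S_2}(v)] = p\deg_{S'}(v).
\]
The event $E_d(v)$ is exactly the event that this sum deviates from its mean by more than $\mu/2$, that is, $\abs{\deg_{S_2}(v) - \mu} > \delta\mu$ with $\delta = 1/2$.

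I would then apply the two-sided multiplicative Chernoff bound (\cref{lem:chernoff}, as used in the proof of \cref{lem:ev}). For $\delta \in (0,1)$ it gives
\[
\Prob{\abs{X-\mu} \geq \delta\mu} \leq 2\exp(-\delta^2\mu/3).
\]
With $\delta = 1/2$ the right-hand side is $2\exp(-\mu/12) = 2\exp(-p\deg_{S'}(v)/12)$, which is exactly the claimed bound. Strictly greater deviation is a subset of the event ``$\geq$'', so the bound applies to $E_d(v)$ as defined.

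There is no real obstacle here. The only points to check are, first, that the indicators are genuinely independent, which holds because pre-shattering samples each vertex of $S'$ independently and no retraction has occurred yet. Second, the constants in the Chernoff form used by the paper must yield the factor $1/12$; the standard bounds $\exp(-\delta^2\mu/3)$ for the upper tail and $\exp(-\delta^2\mu/2)$ for the lower tail together give at most $2\exp(-\mu/12)$. For later use I would also note a consequence: under the threshold $\deg_{S'}(v) \geq \alpha p^{-1}\log\Delta$, this probability is at most $2\Delta^{-\alpha/12}$, which is $\Delta^{-c_4}$ for a large enough $\alpha$, as needed for \cref{lem:our-shattering}.
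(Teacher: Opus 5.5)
Your proposal is correct and is exactly the direct Chernoff argument the paper has in mind. With $\mu = p\deg_{S'}(v)$ and $\delta = 1/2$, the paper's form of the upper tail, $\exp(-\delta^2\mu/(2+\delta)) = \exp(-\mu/10)$, and the lower tail, $\exp(-\mu/8)$, are each at most $\exp(-\mu/12)$, which gives the claimed $2\exp(-p\deg_{S'}(v)/12)$.
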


\medskip\noindent
From this, we argue that the connected components of the post-shattering dependency graph are small.

\begin{claim}
    \label{claim:degreeSplittingShattering}
    Suppose $\Delta \leq (\log n)^{50}$.
    The connected components of the dependency graph $\Gev$ of \Cref{line:dspostshattering} of \cref{alg:ds}  have size at most $\poly(\log n)$ with high probability.
\end{claim}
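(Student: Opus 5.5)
We apply \cref{lem:our-shattering} with the variable graph $G=H$. Every vertex of $S'$ holds a single variable, namely its membership bit in $S_2$. The collections are
\[
\mathcal{A}=\set{ N_H^{\leq 1}(v) : \deg_{S'}(v)\ge \alpha p^{-1}\log\Delta }, \qquad \mathcal{B}=\set{ N_H^{\leq 1}(v) : \deg_{S'}(v)\ge \alpha p^{-1}\log\Delta },
\]
that is, the variable sets of the events $E_d(v)$ and $E'_d(v)$. By \cref{obs:ds-random-variables}, each such set has weak diameter at most $2$, so \cref{part:our-shattering-diam} holds with $c_1=2$. Each vertex lies in at most $\Delta+1\le\Delta^2$ sets of each family, so \cref{part:our-shattering-involved} holds with $c_2=3$. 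We let $\mathcal{A}'$ be the set of events $E_d(v)$ that occur during pre-shattering. Then \cref{part:our-shattering-vbl} holds because $E_d(v)$ is determined by the bits of $N(v)\cap S'$.

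The only quantitative step is \cref{part:our-shattering-pre-prob}, for which we use \cref{claim:ds-Ev}. Since $\deg_{S'}(v)\ge \alpha p^{-1}\log\Delta$, we get
\[
\Prob{E_d(v)}\le 2\exp(-\alpha\log\Delta/12)=2\Delta^{-\alpha/(12\ln 2)}.
\]
We first fix $\cstB$, which determines the required $c_4 > 3c_1(4\cstB+16)+c_2+\cstB+1$. We then choose the universal constant $\alpha$ large enough that $\alpha/(12\ln 2)-1\ge c_4$. This is exactly where the freedom in $\alpha$ from \cref{lem:ds} is used. Note that this bound is only polynomial in $\Delta$, not subexponential, so the constant in the degree threshold must be tuned to $\cstB$; no other step needs this care.

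Next we pick $\cstB$ so that every post-shattering event lies in $\mathcal{B}'$. The events of \cref{line:dspostshattering} are the $E'_d(v)$ with $v\in N^{\leq 5}(X)$, where $X$ consists of the centers $v$ of the events in $\mathcal{A}'$. Such a $v$ has itself, an element of $\vbl(E'_d(v))$, within distance $5$ of $X\subseteq\vbl(\mathcal{A}')$. Hence $\dist_H(v,\vbl(\mathcal{A}'))\le 5$, and $\cstB=5$ suffices. \cref{lem:our-shattering} then bounds the components of the dependency graph of the post-shattering LLL by $O(\Delta^{4c_1\cstB+4c_1+c_2}\log n)$. Since $\Delta\le(\log n)^{50}$, this is $\poly(\log n)$ with high probability.

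Finally, $\Gev$ joins two post-shattering events exactly when their variable sets intersect. Such events are vertices of the intersection graph from \cref{lem:our-shattering}, restricted to events in $\mathcal{B}'$, so the same component bound applies.
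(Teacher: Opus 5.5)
Your proof is correct and follows the paper's argument. Both apply \cref{lem:our-shattering} with the sets $\vbl(E_d(v))$ and $\vbl(E'_d(v))$, $c_1=2$ and $\cstB=5$, and verify \cref{part:our-shattering-pre-prob} via \cref{claim:ds-Ev}. Your treatment of that last step is more careful than the paper's, which calls the failure probability exponentially small in $\Delta$: as you note, near the degree threshold \cref{claim:ds-Ev} only gives $2\Delta^{-\Theta(\alpha)}$, so the universal constant $\alpha$ must be chosen large relative to $c_4$.
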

\begin{proof}
    In order to bound the size of the largest connected component of $\Gev$ we use \cref{lem:our-shattering}; next we argue that the lemma applies. We take the whole graph $F$ as graph $G$ from \cref{lem:our-shattering}, because every vertex $v$ of $S'\subseteq V(F)$ has one variable encoding whether it is contained in $S_2$ or not. The LLLs (formally sets of sets) $\mathcal{A}$ and $\mathcal{B}$ required to use the lemma are both given by all collections of $\vbl(E_d(v))$  for all $v\in V(H)$ satisfying $\deg_{S'}(v) \geq \alpha p^{-1}\log\Delta$ and $\vbl(E'_d(v))$ for all $v\in V(H)$ satisfying $\deg_{S'}(v) \geq \alpha p^{-1}\log\Delta$ respectively. The set $\mathcal{A}'$ consists of the events $E_d(v)$ that hold after \cref{line:degreeSplittingPreShattering} of \cref{alg:ds}. Let $\mathcal{B}'$ with $\cstB = 5$ be as in the statement of \Cref{lem:our-shattering}. Let $c_1=2$ and $c_2=1$.
    We show that the preconditions 
       \Cref{part:our-shattering-diam,part:our-shattering-involved,part:our-shattering-pre-prob,part:our-shattering-vbl} of \Cref{lem:our-shattering} hold. 
        
    \Cref{part:our-shattering-diam} holds due to \cref{obs:ds-random-variables} with $c_1=2$. \Cref{part:our-shattering-involved} requires a bound $\Delta(F)^{c_2}$ for some constant $c_2$ on the number of events in which each variable appears. It  holds with $c_2 = 1$ as a vertex $v$ holds variables used only by events $E_d(u)$ or $E'_d(u)$ with $u\in N_H^{\leq 1}(v)\subseteq N_F^{\leq 1}(v)$.   \Cref{part:our-shattering-vbl} requires that for each event $A\in \mathcal{A}$ we can determine $1(A\in \mathcal{A}')$ by only evaluating the variables in $\vbl(A)$ which immediately holds by the definition of $\mathcal{A}'$. \Cref{part:our-shattering-pre-prob} requires an upper bound of $1/\Delta(F)^{c_4}$ on the probability for each bad event $A\in \mathcal{A}$ to be contained in $\mathcal{A}'$ where $c_1$ is an arbitrary constant satisfying $c_4>3c_1(4\cstB+16)+c_2+\cstB+1$. By \cref{claim:ds-Ev} this probability is exponentially small in $\Delta(G)=\Theta(\Delta(F))$ and hence \cref{part:our-shattering-pre-prob} holds for $c_4$ (as chosen above) as long as $\Delta(G)$ is larger than a sufficiently large absolute constant $\Delta_0$. The set of events included in the LLL of \cref{line:rct-single-it-post-LLL} are all included in $\mathcal{B}'$. Hence by \cref{lem:our-shattering}, the connected components of $\Gev$ have size at most $O(\Delta(F)^{4c_1\cstB + 4c_1 + c_2}\log n) =\poly(\log n)$ with high probability.
\end{proof}

\medskip\noindent
And we now prove that the post-shattering events indeed form an LLL.

\begin{claim}
    \label{claim:degreeSplittingpostShatteringLLL}
    In post-shattering, the probability that $E_d'(v)$ occurs is at most $2\Delta^{-\alpha/24}$, and dependency degree of the $\set{E_d'(v): v \in N^{\leq 5}(X)}$ is at most $\Delta^2$.
\end{claim}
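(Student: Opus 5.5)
The claim has two parts, a probability bound and a dependency bound, and I will prove them separately. The guiding observation is that after the retractions, the post-shattering random process around a vertex $v \in N^{\leq 5}(X)$ looks like a fresh random sample. Some neighbors of $v$ in $S'$ may already sit in $S_2$ from pre-shattering and were not retracted. The remaining neighbors in $S'\setminus S_2$ that lie in $N^{\leq 4}(X)$ are resampled independently with probability $p$.

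\textbf{Dependency bound.} By \cref{obs:ds-random-variables}, $\vbl(E'_d(v)) \subseteq N_H^{\leq 1}(v)$. So $E'_d(v)$ and $E'_d(u)$ can share a variable only if $\dist_H(u,v)\le 2$. There are at most $\Delta+\Delta(\Delta-1)\le \Delta^2$ such $u\neq v$, which gives the stated dependency degree.

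\textbf{Probability bound.} Fix $v\in N^{\leq 5}(X)$ with $d:=\deg_{S'}(v)\ge \alpha p^{-1}\log\Delta$, and split into two cases.

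\emph{Case $v\in N^{\leq 3}(X)$.} Every neighbor of $v$ lies in $N^{\leq 4}(X)$. Neighbors in $X\cup N(X)$ were removed from $S_2$ by the retraction, and every neighbor in $S'\setminus S_2$ is resampled. So each $S'$-neighbor of $v$ is either resampled with probability $p$ or still in $S_2$ from pre-shattering, in which case it is not in $N^{\le 1}(X)$.
\begin{itemize}
\item If $v\notin X$, then $E_d(v)$ failed, so before retraction $\deg_{S_2}(v)\le (3p/2)d$.
\item If $v\in X$, all of $N(v)$ is retracted. The neighbors are then fully resampled, $\deg_{S_2}(v)\sim\mathrm{Bin}(d,p)$, and Chernoff gives both tails with probability $2\exp(-pd/12)\le 2\Delta^{-\alpha/12}$.
\item If $v\notin X$, the retained pre-shattering neighbors contribute at most $(3p/2)d$. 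The fresh ones contribute a binomial with mean at most $pd$, so exceeding $4pd$ requires the fresh part to exceed $(5/2)pd$. Chernoff bounds this by $\exp(-\Omega(pd))$.
\end{itemize}
For the lower tail when $v\notin X$: the pre-shattering count was at least $(p/2)d$ before retraction, but retraction may have removed some of these. Each removed vertex is in $N(X)$, hence resampled. A clean way to handle this is to let $R$ be the set of resampled $S'$-neighbors and $K$ the set of retained ones, so $|R|+|K|=d$. If $|K|\ge (p/2)d$ the lower tail cannot occur. Otherwise $|R|\ge d/2$, and $\mathrm{Bin}(|R|,p)<(p/2)d\le p|R|$ can be handled by Chernoff, but this requires a factor-$2$ deviation slack. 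I would instead argue that the count is at least $\mathrm{Bin}(d-|K|,p)+|K|$, which stochastically dominates $\mathrm{Bin}(d,p)$. Its lower tail below $(p/2)d$ then has probability at most $\exp(-pd/8)$.

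\emph{Case $v\in N^{\le 5}(X)\setminus N^{\le 3}(X)$.} No neighbor of $v$ was retracted. Every neighbor lies outside $N^{\le 1}(X)$, so $\deg_{S_2}(v)$ starts within $[(p/2)d,(3p/2)d]$ if $v$ was not in $X$, and it can only increase. The upper tail is handled as above by the domination argument. In general, both tails follow from the same two observations: retained members never exceed $(3p/2)d$, and the total count stochastically dominates $\mathrm{Bin}(d,p)$ restricted to the affected part.

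Taking the weakest exponent, $pd/24\ge (\alpha/24)\log\Delta$, yields probability at most $2\Delta^{-\alpha/24}$. This satisfies the LLL criterion together with the $\Delta^2$ dependency bound once $\alpha$ is a large enough constant.

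The main obstacle is the bookkeeping of which neighbors are resampled versus retained in the mixed case. The resolution is the stochastic-domination argument combined with the pre-shattering guarantee that $E_d(v)$ failed for $v\notin X$.
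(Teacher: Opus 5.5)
Your proof is correct and follows essentially the same route as the paper. Both condition on the pre-shattering outcome and bound the upper tail by the retained neighbors (at most $(3p/2)d$ in yours, $2pd$ in the paper's) plus a Chernoff bound on the freshly sampled ones; both observe that vertices outside $N^{\le 3}(X)$ see no retraction and can only gain $S_2$-neighbors; and both get the $\Delta^2$ dependency bound from $\vbl(E'_d(v)) \subseteq N^{\le 1}(v)$. The one difference is the lower tail for $v \in N^{\le 3}(X)$: you note that $|K| + \mathrm{Bin}(d-|K|,p)$ stochastically dominates $\mathrm{Bin}(d,p)$, where the paper instead splits on whether the retained count already reaches $pd/2$, and your version gives the slightly cleaner bound $\exp(-pd/8)$.
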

\begin{proof}
    Every vertex $v \in N^{\leq 5}(X)$ has at most $\deg_{S'}(v)$ neighbors taking part in the post-shattering sampling. In expectation at most $p\deg_{S'}(v)$ of them join $S_2$ and, by the Chernoff Bound, the probability to have more than $2p\deg_{S'}(v)$ join $S_2$ is at most $\exp(-p\deg_{S'}(v)/12)$. If $v\in X$, then none of its neighbors remain in $S_2$ after retractions. Otherwise, $E(v)$ did not hold, and hence $v$ has at most $2p\deg_{S'}(v)$ neighbors in $S_2$ before the post-shattering sampling. Thus, the probability that a vertex has more than $4p\deg_{S'}(v)$ neighbors in $S_2$ is upperbounded by $\exp(-p\deg_{S'}(v)/12) \leq \Delta^{-\alpha/12}$.

    A vertex $v \notin N^{\leq 3}(X)$ does not see any retraction in its neighborhood and has enough neighbors in $S_2$ already (it would otherwise belong to $X$).
    Let $v\in N^{\leq 3}(X)$. Observe that $N(v) \cap S' \subseteq N^{\leq 4}(X) \cap S'$, hence all its neighbors in $S' \setminus S_2$ participate in the post-shattering sampling. Call $d$ its number of neighbors in $S_2$ before the post-shattering (and after retractions). It has at $p(\deg_{S'}(v) - d)$ neighbors that join $S_2$ in the post-shattering, hence by the Chernoff Bound, it gains fewer than $p(\deg_{S'}(v) - d)/2$ neighbors in $S_2$ during the post-shattering with probability at most $\exp(-p(\deg_{S'}(v) - d)/12)$. Hence, the degree in $S_2$ is smaller than $d + p(\deg_{S'}(v) - d)/2 = p\deg_{S'}(v)/2 + (1 - p/2)d$ \wp at most $\exp(-p(\deg_{S'}(v) - d)/12)$. If $d \geq p\deg_{S'}(v)/2$, then we already have that $v$ has enough neighbors in $S_2$. Otherwise, after the post-shattering sampling it has fewer than $p\deg_{S'}(v)/2$ \wp at most $\exp(-p\deg_{S'}(v)/24) \leq \Delta^{-\alpha/24}$.

    By adding up the error probability of both ways, the event $E_d'(v)$ holds \wp at most $2\Delta^{-\alpha/24}$.
    Whether $E'(v)$ is a function of the samplings of all $u\in N(v)$. Hence $E_d'(v)$ is independent from every $E_d(u)$ where $u \notin N^{\leq 2}(v)$. So the dependency degree is bounded above by $\Delta^2$.
\end{proof}

\medskip\noindent
Now we can put all of this together to prove our degree splitting lemma.

\LemDegreeSplitting*

\begin{proof}[Proof of \cref{lem:ds}]
    Pre-shattering and retractions require $O(1)$ rounds of \LOCAL.
    \Cref{claim:degreeSplittingShattering} shows that each connected component in the post-shattering phase is \whp of size at most $\poly(\log n)$. \Cref{claim:degreeSplittingpostShatteringLLL} shows that it is an LLL with a sufficiently good polynomial criterion to apply \Cref{thm:deterministicLLLLOCAL} when $\alpha$ and $\Delta$ are large enough. Thus, the post-shattering LLL is solved in $\Ot{ \log^3\log n }$ rounds.
    
    It remains to show that the solution meets the requirements of the lemma. 
    
    \medskip\noindent
    \textit{Sufficiently few neighbors in $S_2$ (Part 1).} 
    Let $v\in N^{\leq 5}(X)$. Since $E'(v)$ does not hold after post-shattering, it has at most $4p\deg_{S'}(v)$ neighbors in $S_2$. If $v \notin N^{\leq 5}(X)$, then none of its neighbors participate in the sampling during post-shattering (recall that only nodes of $N^{\leq 4}(X)$ participate). Since $E(v)$ did not hold after pre-shattering, it has at most $2p\deg_{S'}(v)$ neighbors in $S_2$ after post-shattering as well.

    \medskip\noindent
    \textit{Enough neighbors in $S_2$ (Part 2):} 
    Let $v\in N^{\leq 5}(X)$. Since $E'(v)$ does not hold after post-shattering, it has at least $p\deg_{S'}(v)/2$ neighbors in $S_2$.
    Let $v\notin N^{\leq 5}(X)$. Since $v\notin X$, after the pre-shattering sampling, $v$ has at least $p\deg_{S'}(v)/2$ neighbors in $S_2$. Retractions only occur in $X \cup N(X)$, hence $v$ is not affected. The number of neighbors in $S_2$ can only increase during post-shattering, hence it has enough neighbors in $S_2$ at the end of the algorithm.
\end{proof}

\subsection{Slack Generation (Proof of \texorpdfstring{\cref{lem:alg-slackgen}}{Lemma~\ref{lem:alg-slackgen}})}
\label{sec:slackGeneration}

This section describes the algorithm that is executed in \cref{line:color-sparse-slackgen} of \cref{alg:tough-cookie}. The guarantees it provides are described by \cref{lem:alg-slackgen}.

At the heart of \cref{lem:alg-slackgen} is the following key observation about random color trials, formally stated in \cref{lem:prob-slackgen}: the coloring is such that $\Omega( \overline{m}/\Delta )$ colors are repeated in the neighborhood of a vertex with $\overline{m}$ missing edges. Crucially, as long as $\overline{m} \geq c\Delta\log \Delta$, this fails with probability at most $\Delta^{-\Theta(c)}$, giving rise to an LLL. This was already used by \cite{MR14} (see Lemma 41). We say that a color is \emph{repeated} if it is used at least two times.

We use the following variant of the result in which a set $A$ of active nodes try colors from a custom set of $q$ colors, which we prove in \cref{app:slackgen}.

\begin{restatable}{proposition}{lemSlackGeneration}
    \label{lem:prob-slackgen}
    There exists a universal constant $c_0$ for which the following holds.
    Let $\Delta$ and $q$ be positive integers with $q \ge \Delta/3$. 
Let $A \subseteq V(G)$ be a subset of nodes such that $F[A]$ has maximum degree at most $\Delta$, and $C$ a set of $q$ colors.  
    Consider a node $v \in V$ with at least $\overline{m}$ non-edges in $F[N(v) \cap A]$ and such that $\overline{m} \geq c_0 \cdot q$. If every vertex of $A$ samples a random color $\chi(v) \in C$ and retains it only if $\chi(v) \notin \chi(N(v) \cap A)$, then
    \begin{quote}
        node $v$ has at least  $\frac{ \overline{m} / q }{ 3 \cdot 10^{4} }$ repeated colored in $N(v)$
    \end{quote}
    with probability at least $1-\exp(-\Omega(\overline{m}/q))$.
\end{restatable}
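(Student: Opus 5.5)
We plan to follow the classical slack-generation argument (cf.\ Lemma~41 of \cite{MR14}): count non-edges in $N(v)\cap A$ whose two endpoints end up with the same retained color, show the expected count is $\Omega(\overline{m}/q)$, and then prove concentration with a bounded-differences inequality, exploiting that $\overline{m}/q$ is large.

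\textbf{Step 1: reduce to a bounded structure.} Since $F[A]$ has maximum degree at most $\Delta$, we have $|N(v)\cap A|\le\Delta$, so $\overline{m}\le\Delta^2/2$. First we would greedily extract from the non-edges of $F[N(v)\cap A]$ a large collection $M$ of non-edges that is sparse: for instance, one where each vertex lies in at most $O(\overline{m}/\Delta)$ pairs of $M$ and $|M|=\Omega(\overline{m})$. If this sparsification turns out to be unnecessary, we would work with all non-edges directly. For a non-edge $\{u,w\}$ we define the \emph{good} event that $\chi(u)=\chi(w)=x$ for some $x$, that both retain $x$, and that no other vertex of $N(v)\cap A$ receives $x$. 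Each good pair certifies a distinct repeated color, because the color appears on exactly these two vertices of $N(v)$.

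\textbf{Step 2: expectation.} We have $\Pr[\chi(u)=\chi(w)]=1/q$. Conditioned on this common color $x$, retention and exclusivity fail only if one of at most $3\Delta$ vertices (the neighbors of $u$ or $w$ in $A$, together with $N(v)\cap A$) also picks $x$. This has probability at most $(1-1/q)^{3\Delta}\ge e^{-9-o(1)}$, using $q\ge\Delta/3$. Distinct good pairs use distinct colors, but a color can be claimed by several pairs only if it appears on at least three vertices. This is excluded by the exclusivity condition, so good pairs are automatically color-disjoint. Hence the expected number of good pairs is at least $e^{-10}|M|/q\ge \overline{m}/(10^4q)$, which comfortably exceeds the target $\overline{m}/(3\cdot10^4q)$.

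\textbf{Step 3: concentration (main obstacle).} The number $Z$ of good pairs is a function of the independent colors of vertices in $N^{\le 2}(v)\cap A$, of which there are up to $\Delta^2$. A naive McDiarmid bound with Lipschitz constant $O(1)$ per vertex gives only $\exp(-\Omega(\mu^2/\Delta^2))$, which is too weak when $\overline{m}$ is near $c_0q$. Instead we would use Talagrand's inequality in its certifiable form (or the Molloy--Reed version). Changing one vertex's color changes $Z$ by at most $2$: it can destroy or create goodness only for pairs using its old or new color, and exclusivity forces at most one such pair per color. Moreover, the event $Z\ge s$ is certified by $O(s)$ color choices, namely the $2s$ endpoints. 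The lower tail of Talagrand's inequality then gives $\Pr[Z<\mu/2]\le\exp(-\Omega(\mu))=\exp(-\Omega(\overline{m}/q))$, provided $\mu$ exceeds a large constant. This is exactly where $\overline{m}\ge c_0q$ is used. The delicate point is that exclusivity is a \emph{negative} condition, so it is not certified by a few variables. If this obstacle bites, we would split $Z=Z_1-Z_2$, where $Z_1$ counts monochromatic non-edges retained in the weaker sense, and $Z_2$ counts colors appearing at least three times in $N(v)$ with at least two retentions. Both of these are certifiable, $\Exp[Z_2]$ is a small constant fraction of $\Exp[Z_1]$, and we would apply Talagrand separately to each, giving the upper tail for $Z_2$ and the lower tail for $Z_1$.
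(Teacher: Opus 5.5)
Your Steps 1--2 are fine. The sparsification is unnecessary, and your good pairs are exactly the paper's ``successful'' non-edges, giving the same $e^{-10}\overline{m}/q$ expectation bound.

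Step 3, however, has a genuine gap. You treat exclusivity as the only non-certifiable ingredient, but retention is just as negative. ``$u$ retains $\chi(u)$'' means that \emph{no} neighbor of $u$ in $A$ chose $\chi(u)$, and certifying this requires revealing up to $\Delta$ colors. So $Z\ge s$ is not certified by the $2s$ endpoints. Your fallback has the same defect: $Z_1$ (retained monochromatic non-edges) and $Z_2$ (``at least two retentions'') both contain retention, so in general they are only $O(\Delta)$-certifiable. Talagrand then yields $\exp(-\Omega(\mu/\Delta))$, which is $\exp(-O(1))$ because $\mu=O(\overline{m}/q)=O(\Delta)$.

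The bookkeeping is also off. A color retained by $k\ge 3$ pairwise non-adjacent vertices of $N(v)$ contributes $\binom{k}{2}$ to $Z_1$ but only $1$ to $Z_2$. Hence $Z_1-Z_2$ is not a lower bound on $Z$ or on the number of repeated colors.

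The missing idea, which is what the paper does, is never to certify retention, only its failure; a failure is witnessed by a single conflicting neighbor. Define two counts:
\begin{itemize}
\item $T$, the number of colors $\psi$ chosen by both endpoints of some non-edge of $F[N(v)\cap A]$. This is $2$-certifiable.
\item $D$, the number of such colors for which some $\psi$-monochromatic non-edge has an endpoint that does not retain. This is $3$-certifiable: the two endpoints plus the conflicting neighbor.
\end{itemize}
Both counts are $2$-Lipschitz, and $\E[D]\le\E[T]\le\overline{m}/q$. The difference $T-D$ counts colors that have a monochromatic non-edge all of whose monochromatic non-edges are retained. So $T-D$ is at most the number of repeated colors.

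Each of your good pairs contributes a distinct color to $T-D$, hence $\E[T-D]\ge e^{-10}\overline{m}/q$. Exclusivity is used only for this expectation bound, never inside a concentrated quantity. Now apply Talagrand to the lower tail of $T$ and the upper tail of $D$, each with deviation $\E[T-D]/10$. This gives $T-D\ge\frac45\E[T-D]$ with probability $1-\exp(-\Omega(\overline{m}/q))$.
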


\noindent
To implement \cref{lem:alg-slackgen} when $\Delta$ is small, we resort to the shattering technique. To ensure that \cref{lem:prob-slackgen} applies after shattering, we use two disjoint sets of colors $C_1 = \set{1, 2, \ldots, \Delta/2}$ and $C_2 = \set{\Delta/2+1, \ldots, c}$ for pre-shattering and post-shattering respectively. In post-shattering, some vertices will only have part of their neighbors sample colors. We thus need to ensure that the post-shattering instance retains enough of the original sparsity. This can be achieved by activating nodes randomly as \cref{lem:prob-sparsity-subsampling} describes.

\begin{restatable}{proposition}{PropSparsityPreserving}\label{lem:prob-sparsity-subsampling}
    Suppose $H$ is a graph of maximum degree $\Delta$ and let $v$ be a vertex such that $H[N(v)]$ contains $\overline{m}$ anti-edges. Sample every vertex into a set $A$ with probability $p \geq 8/\Delta$. The number of anti-edges induced by $H[N(v) \cap A]$ is at least $p^2 \overline{m}/2$ w.p.\ at least $1 - \exp(- \Omega( p\overline{m}/\Delta ))$.
\end{restatable}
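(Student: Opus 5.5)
The plan is to write $Z$ for the number of anti-edges of $H[N(v)]$ with both endpoints in $A$, and to show that $Z$ concentrates around $\E[Z] = p^2\overline{m}$. Each anti-edge $\{x,y\}$ survives exactly when both $x$ and $y$ are sampled. This happens with probability $p^2$, so the lower bound $p^2\overline{m}/2$ is half the expectation. The difficulty is that $Z$ is a degree-two polynomial in the independent indicators $\I{x\in A}$, $x \in N(v)$. A single vertex $x$ can lie in up to $\Delta$ anti-edges, so plain Chernoff does not apply and a two-stage argument is needed.

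First, I would reduce to a structured subfamily. Every vertex of $N(v)$ has anti-degree at most $\Delta$ inside $N(v)$. A greedy or averaging argument then yields a set $X \subseteq N(v)$ and, for each $x \in X$, a set $Y_x$ of anti-neighbours of $x$ satisfying the following. We have $\sum_x |Y_x| \ge \overline{m}/2$ and $|Y_x| \le \Delta$. Moreover, the pairs $\{x,y\}$ with $y\in Y_x$ are distinct anti-edges; for instance, orient each anti-edge toward its endpoint of larger ID and let $Y_x$ be the out-neighbourhood of $x$. In fact no loss occurs here: every anti-edge is counted exactly once, $\sum_x |Y_x| = \overline{m}$, and $|Y_x|\le\Delta$.

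Second, I would expose $A$ in two rounds by splitting the randomness. Write $Z = \sum_{x} \I{x\in A}\,|Y_x\cap A|$. The cleanest route is to bound each $|Y_x \cap A|$ from below for the heavy $x$, and then count how many heavy $x$ are sampled. Call $x$ heavy if $|Y_x| \ge \overline{m}/(4\Delta)$. Since each $|Y_x|\le\Delta$ and there are at most $\Delta$ vertices in $N(v)$, the light vertices contribute at most $\Delta\cdot \overline{m}/(4\Delta)=\overline{m}/4$. Hence the heavy vertices carry at least $3\overline{m}/4$ of $\sum_x|Y_x|$. For a heavy $x$, Chernoff gives $|Y_x\cap A| \ge (1-\delta)p|Y_x|$ with probability $1-\exp(-\Omega(p|Y_x|)) \ge 1-\exp(-\Omega(p\overline{m}/\Delta))$. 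A union bound over at most $\Delta$ heavy vertices costs a factor $\Delta$, which is absorbed into the exponent once $p\overline{m}/\Delta$ is large compared with $\log\Delta$. The regime where $p\overline{m}/\Delta$ is small is trivial, since there the claimed probability bound is vacuous.

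Third, I would condition on these events and consider $W=\sum_{x \text{ heavy}} \I{x\in A}\, w_x$ with $w_x := (1-\delta)p|Y_x| \le p\Delta$. Here I must check that this conditioning is legitimate, because $x$ may itself lie in some $Y_{x'}$. To avoid this correlation I would split $N(v)$ randomly, or by ID, into two halves, or better use a weighted Chernoff bound that tolerates the dependence. The simplest fix is to condition only on the $|Y_x\cap A|$ events and to observe that $\I{x\in A}$ influences each of them by at most one, which is negligible. $W$ is then a weighted sum of independent Bernoullis with weights at most $p\Delta$ and mean at least $(1-\delta)p^2\cdot 3\overline{m}/4$. The weighted Chernoff bound (scale by $p\Delta$) gives $W\ge p^2\overline{m}/2$ except with probability $\exp(-\Omega(p^2\overline{m}/(p\Delta)))=\exp(-\Omega(p\overline{m}/\Delta))$, as required.

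I expect the main obstacle to be exactly this dependence between the first-stage and second-stage variables, which arises because the anti-edge structure is not bipartite. The orientation gives each anti-edge a unique tail. A clean option is to randomly 2-colour $N(v)$ into tails and heads and keep only anti-edges oriented from a tail to a head; in expectation this keeps a quarter of them, and then the two stages use disjoint variables. The price is a constant factor that must be recovered. If it is not recoverable, I would instead apply a Kim–Vu or Janson-type bound for the lower tail of $Z$, since Janson's inequality gives $\Pr[Z\le \E Z/2]\le \exp(-\Omega(\E[Z]^2/\bar\Delta))$ with $\bar\Delta \lesssim p^3\overline{m}\Delta$. This yields $\exp(-\Omega(p\overline{m}/\Delta))$ directly, and is probably the intended route.
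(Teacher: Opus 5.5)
Your closing suggestion is exactly the paper's proof, and it is the part of your proposal that actually establishes the statement. Let $Z$ be the number of surviving anti-edges, so $\E[Z]=p^2\overline{m}$. Two distinct intersecting anti-edges share one vertex and span three, and each anti-edge meets at most $2\Delta$ others, so Janson's correlation term is $K\le \overline{m}\Delta p^3$. Taking $t=\E[Z]/2$ gives failure probability at most $\exp(-p^2\overline{m}/(8+4p\Delta))\le \exp(-p\overline{m}/(5\Delta))$. This last step is where the hypothesis $p\ge 8/\Delta$ enters: it lets the $2\mu=2p^2\overline{m}$ term be absorbed into $p^3\overline{m}\Delta$. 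Your estimate $\bar\Delta\lesssim p^3\overline{m}\Delta$ uses this without saying so. You should lead with this argument rather than present it as a fallback.

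The two-stage Chernoff argument that takes up most of the proposal does not prove the proposition as stated. The dependence you worry about is not the problem, because no conditioning is needed. Let $G$ be the event that every heavy $x$ has $|Y_x\cap A|\ge w_x$. On $G$ you have $Z\ge W=\sum_{x\text{ heavy}}\I{x\in A}\,w_x$, which is a sum of independent terms with deterministic weights. Hence $\Pr[Z<p^2\overline{m}/2]\le \Pr[\overline{G}]+\Pr[W<p^2\overline{m}/2]$. (Your tail/head split, by contrast, keeps only a quarter of the anti-edges in expectation and cannot reach the constant $1/2$.)

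The genuine defect is the union bound over up to $\Delta$ heavy vertices, which gives only $\Pr[\overline{G}]\le \Delta\exp(-c\,p\overline{m}/\Delta)$. This is $\exp(-\Omega(p\overline{m}/\Delta))$ only when $p\overline{m}/\Delta\ge C\log\Delta$. Your claim that the complementary regime is trivial because the bound is vacuous there is false. The guarantee $1-\exp(-c'p\overline{m}/\Delta)$ is nontrivial for every positive value of $p\overline{m}/\Delta$. When $p\overline{m}/\Delta$ lies between a large constant and $C\log\Delta$, your bound on $\Pr[\overline{G}]$ exceeds $1$ and proves nothing.

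In the paper's applications, $p\in\{1/2,3/4\}$ and $\overline{m}\ge 9\cdot 10^5\Delta^{3/2}$, so $p\overline{m}/\Delta=\Omega(\sqrt{\Delta})$ and this loss would be harmless. However, the proposition is stated for all parameters. Janson's inequality handles all of them uniformly precisely because it avoids any per-vertex union bound.
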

\noindent
It follows immediately 
that \cref{lem:prob-sparsity-subsampling} holds in expectation. One obtains concentration through, e.g., Janson's inequality. 
The proof of \cref{lem:prob-sparsity-subsampling} is deferred to \cref{app:slackgen}.

\medskip\noindent
We can now describe the algorithm for \cref{lem:alg-slackgen} run on the set of vertices $S$. It begins by activating nodes by sampling them into a set $R\subseteq S$ \wp 1/2 , and runs the random experiment of \cref{lem:prob-slackgen} on activated nodes only.
When $\Delta \geq (\log n)^{50}$, the claims of \cref{lem:alg-slackgen} already hold with high probability. When $\Delta$ is small, the algorithm begins by identifying failures (see below), it then retracts some colors (\cref{line:slackgen-retracts}), and finally use a deterministic algorithm for fixing the remaining parts of the graph (\cref{line:slackgen-post-shattering}). 

To identify the aforementioned failures we define the following events. 
Define the following event for each $v\in S$: 
 {\medskip\begin{enumerate}[leftmargin=1.5cm]
        \item[$E_1(v)$:] $v$ has more than $\max\set{(11/20)\deg_S(v), \Delta/20}$  neighbors in $R$, 
    \end{enumerate}
    For each $v\in S$ with $\deg_S(v) \geq \Delta - 3\sqrt{\Delta}$ consider the following events
    \begin{enumerate}[leftmargin=1.5cm]
        \item[$E_2(v)$:] it has fewer than $3\sqrt{\Delta}$ repeated colors in $F[N(v) \cap R]$, or
        \item[$E_3(v)$:] it has fewer than $9 \cdot 10^5\Delta^{3/2}/8$ anti-edges in $F[N(v) \cap (S \setminus R)]$.
    \end{enumerate}
For each clique $A_i\in A_L\cup A_H$ define the following event:
   \begin{enumerate}[leftmargin=1.5cm]  
       \item[$E_{CC}(i)$:] \cref{def:notbig} is violated for clique $A_i$
   \end{enumerate} 
   }

\begin{algorithm}[H]
    \caption{Implementation of Slack Generation (\cref{lem:alg-slackgen}) \label{alg:slackgen}}
    \nonl\textbf{Pre-Shattering:}\;
    Each vertex of $S$ joins $R$ \wp $1/2$ \label{line:slackgen-activate}\;
    Each vertex in $R$ tries a random color in $C_1=[\Delta/2]$ 
    \label{line:slackgen-RCT}\;

\nonl\textbf{Retractions:}\;
    Let $\mathcal{A}'$ be the set of events $E_1(v)$, $E_2(v)$, $E_3(v)$, or $E_{CC}(i)$ (defined above) that occur.\label{line:slackgen-bad-events}
    
    Retract the colors of (i.e., uncolor) the vertices in $\vbl( \mathcal{A}' )$ and remove them from $R$\label{line:slackgen-retracts}\;

    \BlankLine
    \nonl\textbf{Post-Shattering:}\;
    Solve the following LLL using the deterministic algorithm of \cref{thm:deterministicLLLLOCAL}:\; 
    \nonl Every vertex in $\{v\mid \dist_F(v, \vbl( \mathcal{A}' ) ) \leq 2\}\cap (S \setminus R)$ tries a color in $C_2 = \set{\Delta/2+1, \ldots, \Delta-k}$ with probability $3/4$. \newline For each $v\in N^{\leq 3}(\vbl(\mathcal{A}'))$, we have the bad event {
    \begin{enumerate}[leftmargin=1.5cm]
      
        \item[$E_1'(v)$:] $v$ has more than $\max\set{ (4/5) \cdot \deg_{S \setminus R}(v), \Delta / 20 }$ colored neighbors in $S \setminus R$.
    \end{enumerate}
    For every $v\in N^{\leq 1}(\vbl(\mathcal{A}'))$ with $\deg_{S}(v) \geq \Delta - 3\sqrt{\Delta}$, we also have the following bad event
    \begin{enumerate}[leftmargin=1.5cm]
    \item[$E'_2(v)$:] $v$ has fewer than $1.05\sqrt{\Delta}$ colors repeated in $F[N(v) \cap (S \setminus R)]$
        \end{enumerate}
        For each  $A_i\in A_L\cup A_H$ with $\vbl(E_{CC}(i))\cap N^{\leq 1}(\vbl(\mathcal{A'}))\neq \emptyset$ define the following event:
   \begin{enumerate}[leftmargin=1.5cm]  
       \item[$E_{CC}(i)$:] \cref{def:notbig} is violated for clique $A_i$.
   \end{enumerate} 
        
        } \label{line:slackgen-post-shattering}
\end{algorithm}

\medskip\noindent
Let us begin by proving that events from the pre-shattering are unlikely.

\begin{claim}
    \label{claim:slackgen-pre}
    For all $v$ in $S$, each event $E_i(v)$, $i \in [3]$ occurs \wp at most $\exp(- \Omega( \Delta^{1/40} ) )$. For each $A_i\in A_L\cup A_H$ the event $E_{CC}(i)$ occurs with probability at most $\Delta\cdot \exp(- \Omega( \Delta^{1/40} ) )$.
\end{claim}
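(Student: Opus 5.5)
I will bound each of the four events separately, in the pre-shattering random process of \cref{alg:slackgen}: each vertex of $S$ joins $R$ independently \wp $1/2$, and each vertex of $R$ tries a uniform color of $C_1=[\Delta/2]$.

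\textbf{Event $E_1(v)$.} The number of neighbors of $v$ in $R$ is a sum of $\deg_S(v)$ independent Bernoulli$(1/2)$ variables, with mean $\deg_S(v)/2$. If $\deg_S(v)\ge \Delta/20$, the Chernoff bound (\cref{lem:chernoff}) gives a deviation of $\deg_S(v)/20$ above the mean with probability $\exp(-\Omega(\deg_S(v)))=\exp(-\Omega(\Delta))$. If $\deg_S(v)<\Delta/20$, then $E_1(v)$ cannot hold at all, since $v$ has at most $\deg_S(v)<\Delta/20$ neighbors in $R$.

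\textbf{Event $E_3(v)$.} The set $S\setminus R$ is an independent sample with probability $1/2\ge 8/\Delta$. By \cref{lem:structuralDecomposition}-\ref{part:decomp-S}, a vertex with $\deg_S(v)\ge\Delta-3\sqrt\Delta$ has $\overline m\ge 9\cdot10^5\Delta^{3/2}$ anti-edges in its $S$-neighborhood. \Cref{lem:prob-sparsity-subsampling} then yields at least $\overline m/8$ anti-edges in $N(v)\cap(S\setminus R)$, except with probability $\exp(-\Omega(\overline m/\Delta))=\exp(-\Omega(\sqrt\Delta))$.

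\textbf{Event $E_2(v)$.} This is the main step. I first apply \cref{lem:prob-sparsity-subsampling} to $R$ (also a $1/2$-sample), obtaining at least $9\cdot10^5\Delta^{3/2}/8$ anti-edges in $F[N(v)\cap R]$, again except with probability $\exp(-\Omega(\sqrt\Delta))$. I then condition on $R$ satisfying this. Since the color choices are independent of $R$, I can apply \cref{lem:prob-slackgen} with $A=R$, $C=C_1$, and $q=\Delta/2\ge\Delta/3$. This gives at least
\[
\frac{\overline m/q}{3\cdot10^4}\ \ge\ \frac{9\cdot10^5\Delta^{3/2}/8}{(\Delta/2)\cdot 3\cdot10^4}\ =\ 7.5\sqrt\Delta\ \ge\ 3\sqrt\Delta
\]
repeated colors, with failure probability $\exp(-\Omega(\sqrt\Delta))$. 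The precondition $\overline m\ge c_0q$ holds for large $\Delta$. The only real care needed is the constant bookkeeping, which the numbers above confirm, together with checking that the two sources of randomness (membership in $R$ and color choice) are decoupled, so that a union bound over the two failure modes suffices.

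\textbf{Event $E_{CC}(i)$.} I argue exactly as in \cref{lem:rct-cc-post-prob}. Fix a clique $A_i$ and a color $x$, and take the sets to be the external neighborhoods of the vertices of $A_i$, excluding $\All_i\cup\Big_i^+$. Then each set has size $Q\le 10^8\sqrt\Delta$, there are at most $\Delta$ sets, and no vertex lies in more than $2\Delta^{9/10}$ of them. Mark a vertex if it lies in $R$ and picks $x$. Marks are independent across vertices, each with probability at most $2/\Delta\le 1/(Q\Delta^{1/5})$ for large $\Delta$, so property (P7.1) holds. \Cref{lem:bnd-notbig} then gives failure probability $\exp(-\Delta^{1/40})$ for this pair, and a union bound over the at most $\Delta$ colors gives $\Delta\exp(-\Delta^{1/40})$.

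In all four cases the probability is at most $\exp(-\Omega(\Delta^{1/40}))$ (respectively $\Delta\cdot\exp(-\Omega(\Delta^{1/40}))$ for $E_{CC}(i)$), since each bound above is at least that small.
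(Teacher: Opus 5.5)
Your proposal is correct and follows the paper's proof: Chernoff for $E_1(v)$; \cref{lem:prob-sparsity-subsampling} with $p=1/2$ for $E_3(v)$; for $E_2(v)$, the same subsampling step on $R$, then conditioning on $R$ and applying \cref{lem:prob-slackgen} with $C=C_1$; and \cref{lem:bnd-notbig} as in \cref{lem:rct-cc-post-prob} for $E_{CC}(i)$. Two small differences from the paper are harmless: you treat $\deg_S(v)<\Delta/20$ separately for $E_1(v)$, and you use $q=\Delta/2$ to get $7.5\sqrt\Delta$ repeated colors where the paper gets $3.75\sqrt\Delta$; both only make the paper's argument more precise.
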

\begin{proof}
    For the same reason as in \cref{lem:cc-bad-event-for-rct}, 
    $\Prob{ E_{CC}(i)}  \leq \Delta\exp(- \Omega( \Delta^{1/40} ) )$. Also,  since vertices try colors with probability $1/2$, the Chernoff Bound directly implies that $\Prob{ E_1(v) } \leq \exp(-\Omega(\Delta))$. 

    Let us now consider some fixed $v$ with $\deg_S(v) \geq \Delta - 3\sqrt{\Delta}$ and bound the probability of $E_2(v)$ and $E_3(v)$.
    By \cref{lem:structuralDecomposition}(\ref{part:decomp-S}), the vertex $v \in S$ has $\overline{m} \geq 9 \cdot 10^5 \Delta^{3/2}$ anti-edges in its neighborhood. By \cref{lem:prob-sparsity-subsampling} on the subgraph of $F$ induced by $S$ and $p=1/2$, there are fewer than $9 \cdot 10^5 \Delta^{3/2} / 8$ of anti-edges in $F[N(v) \cap R]$ \wp at most $\exp(-\Omega(\Delta^{1/2}))$. Suppose that $F[N(v) \cap R]$ has $9 \cdot 10^5 \cdot \Delta^{3/2}/8$ induced anti-edges. By \cref{lem:prob-slackgen} (with $C = C_1$, $q \geq \Delta/3$ and $\overline{m} = 9 \cdot 10^5 \Delta^{3/2} / 8$), after vertices of $R$ try a random color from $C_1$, the number of pairs of non-adjacent neighbors colored the same in $N(v) \cap R$ is smaller than 
    $\frac{9 \cdot 10^5 \Delta^{3/2}/8}{ 3\cdot 10^4 \cdot \Delta } \geq 3\sqrt{\Delta}$ \wp at most $\exp(-\Omega(\Delta^{1/2}))$. Hence, the event $E_2(v)$ occurs w.p.\ at most $\exp(-\Omega(\Delta^{1/2}))$.

    To bound the probability of $E_3(v)$, observe that each vertex of $S$ joins the complement of $R$ independently \wp $1/2$. By \cref{lem:prob-sparsity-subsampling} on $H=F[S]$ where nodes are sampled in $A = S \setminus R$ \wp $p = 1/2$, the graph $F[N(v) \cap (S \setminus R)]$ contains fewer than $9 \cdot 10^5 \cdot \Delta^{3/2}/8$ anti-edges \wp at most $\exp(-\Omega(\sqrt{\Delta}))$. 
\end{proof}

\medskip\noindent
Next, we argue about the size of the connected components in the post-shattering instance. 

\begin{claim}
    \label{claim:slackgen-shattering}
    Suppose $\Delta \leq (\log n)^{50}$.
    Let $\Gev$ be the dependency graph of the LLL of \cref{line:slackgen-post-shattering} in \cref{alg:slackgen}. With high probability, its largest connected component has size at most $\poly(\log n)$.
\end{claim}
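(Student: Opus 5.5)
We apply \cref{lem:our-shattering} with variable graph $G=F$, exactly as in the proofs of \cref{lem:rtc_post_prob} and \cref{claim:degreeSplittingShattering}. Each vertex $v\in S$ holds one random variable; it encodes whether $v$ joins $R$, its trial color in $C_1$, and its post-shattering activation bit and color in $C_2$.

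\textbf{Setting up the two families.} Let $\mathcal{A}$ consist of the sets $\vbl(E_1(v)),\vbl(E_2(v)),\vbl(E_3(v))$ for $v\in S$, together with $\vbl(E_{CC}(i))$ for every clique $A_i$. Let $\mathcal{B}$ consist of the sets $\vbl(E_1'(v)),\vbl(E_2'(v))$ and the post-shattering $\vbl(E_{CC}(i))$. Let $\mathcal{A}'$ be the events that occur at \cref{line:slackgen-bad-events}. Indicator-wise, $E_1,E_1'$ depend only on $N^{\leq 1}(v)$. The events $E_2,E_3,E_2'$ depend on $N^{\leq 2}(v)$: they depend on the colors and retention within $N(v)$, and retention depends on neighbors of neighbors. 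The event $E_{CC}(i)$ depends on $N_F^{\leq 2}(A_i)\cap S$.

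\textbf{Checking the hypotheses.} \Cref{part:our-shattering-diam} then holds with $c_1=5$, since cliques have diameter one in $F$ and $F$ contains them. \Cref{part:our-shattering-involved} holds with $c_2=2$ or $3$, since each variable lies only in events whose center or clique is within distance $2$. The maximum degree of $F$ is $O(\Delta)$ by \cref{lem:structuralDecomposition}, so there are $\poly(\Delta)$ such events. \Cref{part:our-shattering-vbl} holds by definition of $\mathcal{A}'$. \Cref{part:our-shattering-pre-prob} follows from \cref{claim:slackgen-pre}: every event of $\mathcal{A}$ occurs with probability at most $\Delta\exp(-\Omega(\Delta^{1/40}))\leq \Delta^{-c_4}$ for any fixed constant $c_4$ once $\Delta\geq\Delta_0$.

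\textbf{Containment of the post-shattering events.} Next I would check that every post-shattering event belongs to $\mathcal{B}'$ for a constant $\cstB$, say $\cstB=4$.
\begin{itemize}
\item $E_1'(v)$ is only posed for $v\in N^{\leq 3}(\vbl(\mathcal{A}'))$, so $v$ itself is within distance $3$.
\item $E_2'(v)$ is only posed for $v\in N^{\leq 1}(\vbl(\mathcal{A}'))$.
\item $E_{CC}(i)$ is posed only when $\vbl(E_{CC}(i))$ meets $N^{\leq 1}(\vbl(\mathcal{A}'))$.
\end{itemize}
So each included event has a variable within distance $\cstB$ of $\vbl(\mathcal{A}')$.

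One subtlety is that $\mathcal{B}'$ in \cref{lem:our-shattering} is defined by distance to the sets of $\mathcal{A}'$, meaning their vertices, which here are exactly $\vbl(\mathcal{A}')$. The other is that the events actually included form a subset of $\mathcal{B}'$; taking a subset can only shrink components. Finally, choose $c_4>3c_1(4\cstB+16)+c_2+\cstB+1$.

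\textbf{Conclusion and main obstacle.} \cref{lem:our-shattering} then bounds the components by $O(\Delta^{4c_1\cstB+4c_1+c_2}\log n)=\poly(\log n)$, using $\Delta\leq(\log n)^{50}$. The main care point is the bookkeeping of radii. The events $E_2,E_3$ and the post-shattering $E_2'$ involve retention decisions, and these reach distance two. So one must verify that the weak-diameter bound $c_1$ and the count $c_2$ still cover them, and that $\cstB$ is large enough that no included event escapes $\mathcal{B}'$. As the remark after \cref{lem:rtc_post_prob} notes, overshooting $\cstB$ costs nothing beyond enlarging $\Delta_0$.
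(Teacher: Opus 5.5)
Your proposal is correct and matches the paper's proof essentially step for step. Like you, the paper applies \cref{lem:our-shattering} with $G=F$, $c_1=5$ and $c_2=2$, takes $\mathcal{A}$ and $\mathcal{B}$ to be the variable sets of the pre- and post-shattering events, and uses \cref{claim:slackgen-pre} to obtain \cref{part:our-shattering-pre-prob}. The only difference is that the paper uses $\cstB=3$ where you use $\cstB=4$, which is harmless, and your explicit check that every post-shattering event lies in $\mathcal{B}'$ is more detailed than the paper's.
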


\begin{proof}
   In order to bound the size of the largest connected component of $\Gev$ we use \cref{lem:our-shattering}; next we argue that the lemma applies. We take the whole graph $F$ as graph $G$ from \cref{lem:our-shattering}, because every vertex $v$ of $H$ has one variable encoding whether it is contained in $R$ and its color choice in \cref{alg:slackgen}. The LLLs (formally sets of sets) $\mathcal{A}$ required to use the lemma are given by the collections of $\vbl(E_1(v)), \vbl(E_2(v)), \vbl(E_3(v))$ and $\vbl(E_{CC}(i))$ for the respective $v\in V(H)$ and uncolored cliques $A_i\in A_L\cup A_H$. The LLL for $\mathcal{B}$ is given by $\vbl(E'_1(v)), \vbl(E'_2(v))$ and $\vbl(E_{CC}(i))$ for the respective nodes $v\in V(H)$ and uncolored cliques in $A_i\in A_L\cup A_H$. The set $\mathcal{A}'$ consists of the events $\vbl(E_1(v)), \vbl(E_2(v)), \vbl(E_3(v))$ and $\vbl(E_{CC}(i))$ that hold after \cref{line:slackgen-RCT} of \cref{alg:slackgen}. Let $\mathcal{B}'$ with $\cstB = 3$ be as in the statement of \Cref{lem:our-shattering}. Let $c_1=5, c_2=2$, and $c_4=3c_1(4\cstB+16)+c_2+\cstB+2$.     
    We show that the preconditions 
   \Cref{part:our-shattering-diam,part:our-shattering-involved,part:our-shattering-pre-prob,part:our-shattering-vbl} of \Cref{lem:our-shattering} hold. 

    \Cref{part:our-shattering-diam} holds with a similar reasoning as in \cref{obs:rct-random-variables} (and because $F$ includes the cliques) for $c_1=5$. \Cref{part:our-shattering-involved} holds for $c_2=2$ as a variable is only contained in events with variables in its $2$-hop neighborhood.   \Cref{part:our-shattering-vbl} follows directly with the definition of $\mathcal{A}'$. \Cref{part:our-shattering-pre-prob} requires an upper bound of $1/\Delta(F)^{c_4}$ on the probability for each bad event $A\in \mathcal{A}$ to be contained in $\mathcal{A}'$ where $c_1$ is an arbitrary constant satisfying $c_4>3c_1(4\cstB+16)+c_2+\cstB+1$. By \cref{claim:slackgen-pre} this probability is at most $\Delta(G)\cdot \exp(-\Delta(G)^{1/40})$. As $\Delta(G)=\Theta(\Delta(F))$  \cref{part:our-shattering-pre-prob} holds for $c_4$ (as chosen above) as long as $\Delta(G)$ is larger than a sufficiently large absolute constant $\Delta_0$. The set of events included in the LLL of \cref{line:slackgen-post-shattering} are all included in $\mathcal{B}'$ with the $\cstB$ defined above; a smaller choice of $\cstB$ is possible, but it is more tedious to verify that all events are then included in $\mathcal{B}'$. Hence by \cref{lem:our-shattering}, the connected components of $\Gev$ have size at most $O(\Delta(F)^{4c_1\cstB + 4c_1 + c_2}\log n) =\poly(\log n)$ with high probability.

\end{proof}

\medskip\noindent
The technical heart of this proof is the following claim, which argues that \cref{lem:prob-slackgen} still applies in the post-shattering instance. This comes from the introduction of $E_3(v)$ to ensure that 
pre-shattering preserved enough sparsity for the post-shattering step.

\begin{claim}
    \label{claim:slackgen--post-LLL}
    In the post-shattering phase, events $E'_1(v)$ and $E'_2(v)$ occur \wp at most $\exp(-\Omega(\Delta^{1/2}))$.
\end{claim}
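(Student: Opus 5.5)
We handle $E'_1(v)$ and $E'_2(v)$ separately, conditioning throughout on the outcome of pre-shattering and the retractions. The post-shattering randomness is then fresh: each participating vertex of $S\setminus R$ independently activates with probability $3/4$ and tries a uniform color from $C_2$, which is disjoint from $C_1$.

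\textbf{Bounding $E'_1(v)$.} Fix $v\in N^{\le 3}(\vbl(\mathcal{A}'))$. The only neighbors of $v$ in $S\setminus R$ that can become colored in post-shattering are the participating ones. The pre-shattering colors of all vertices now in $S\setminus R$ were either never assigned or retracted, so every colored neighbor of $v$ in $S\setminus R$ is colored during post-shattering. Each participating neighbor is colored with probability at most $3/4$, independently across vertices, because activation and color choice are independent per vertex. The number of colored neighbors in $S\setminus R$ is therefore stochastically dominated by $\mathrm{Bin}(\deg_{S\setminus R}(v),3/4)$.
\begin{itemize}
\item If $\deg_{S\setminus R}(v)\ge \Delta/20$, a Chernoff bound shows that exceeding $(4/5)\deg_{S\setminus R}(v)$ has probability $\exp(-\Omega(\Delta))$.
\item If $\deg_{S\setminus R}(v)<\Delta/20$, the event is impossible, since there are fewer than $\Delta/20$ such neighbors at all.
\end{itemize}

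\textbf{Bounding $E'_2(v)$.} Fix $v\in N^{\le 1}(\vbl(\mathcal{A}'))$ with $\deg_S(v)\ge \Delta-3\sqrt\Delta$.

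First we show that $E_3(v)$ did not occur. If it had, we would have $v\in\mathcal{A}'$, its whole neighborhood would lie in $\vbl(\mathcal{A}')$, and it would be retracted. That case needs separate care (see below). So take $v$ with $E_3(v)$ not holding. Then $F[N(v)\cap(S\setminus R)]$ has at least $9\cdot10^5\Delta^{3/2}/8$ anti-edges when evaluated with respect to the pre-retraction $R$.

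Retractions only remove vertices from $R$, so $S\setminus R$ only grows. Hence at least $\overline m_0:=9\cdot10^5\Delta^{3/2}/8$ anti-edges remain in $F[N(v)\cap(S\setminus R)]$. All of these neighbors lie within distance $2$ of $\vbl(\mathcal{A}')$, because $v\in N^{\le1}(\vbl(\mathcal{A}'))$. Every vertex of $N(v)\cap(S\setminus R)$ therefore participates.

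Next we restrict to activated vertices. The activated set is a random subset sampled with probability $3/4$. By \cref{lem:prob-sparsity-subsampling}, it retains at least $(9/32)\overline m_0$ anti-edges with probability $1-\exp(-\Omega(\sqrt\Delta))$.

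We then apply \cref{lem:prob-slackgen} with $A$ the activated set, $C=C_2$ and $q=|C_2|=c-\Delta/2\ge \Delta/3$. This gives at least $\frac{(9/32)\overline m_0}{3\cdot10^4\, q}$ repeated colors. With $q\le \Delta/2+1$, this is roughly $\frac{9\cdot 9\cdot10^5}{256\cdot 3\cdot 10^4}\cdot 2\sqrt\Delta\approx 2.1\sqrt\Delta\ge 1.05\sqrt\Delta$. The failure probability is $\exp(-\Omega(\overline m_0/q))=\exp(-\Omega(\sqrt\Delta))$. Some care is needed to check these constants with the exact choices; if they are tight, the plan is to lower the activation probability or adjust the threshold in $E_3$.

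Colors repeated among neighbors in $S\setminus R$ remain repeated, since post-shattering colors come from $C_2$ and are disjoint from surviving pre-shattering colors. Retaining a color depends on all of a vertex's neighbors in the active set, which is exactly the setting of \cref{lem:prob-slackgen} applied to the activated set.

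\textbf{Main obstacle: the case where $E_3(v)$ held.} This is the case $v\in\vbl(\mathcal{A}')$. A cleaner route is to note that we need anti-edges in $N(v)$ among vertices with no surviving color. After retraction, $\vbl(E_3(v))\supseteq N(v)$ is uncolored and removed from $R$. So $N(v)\cap S\subseteq S\setminus R$, and the full $9\cdot10^5\Delta^{3/2}$ anti-edges of \cref{lem:structuralDecomposition}(\ref{part:decomp-S}) are available. The same argument then goes through with even better constants.

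The subtle points to verify are:
\begin{itemize}
\item the monotonicity of $S\setminus R$ under retraction in every case;
\item that every neighbor of $v$ lying in $S\setminus R$ is indeed within distance $2$ of $\vbl(\mathcal{A}')$, so that it participates.
\end{itemize}
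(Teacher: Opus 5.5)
Your proposal is correct and takes essentially the paper's approach. For $E'_1(v)$ it is a Chernoff bound via domination by the $3/4$-activated set. For $E'_2(v)$ it splits on $E_3(v)$: either the anti-edges in $F[N(v)\cap(S\setminus R)]$ only increase under retraction, or $N(v)\cap R=\emptyset$ and the full sparsity from \cref{lem:structuralDecomposition} is available; then \cref{lem:prob-sparsity-subsampling} with $p=3/4$ and \cref{lem:prob-slackgen} with $C=C_2$ finish the argument. Your hedge about the constants is unnecessary: the paper uses the cruder bound $q\le\Delta$ and still gets $\frac{(9/32)\cdot 9\cdot 10^5/8}{3\cdot 10^4}\approx 1.0547\ge 1.05$, and your $q\le\Delta/2+1$ gives roughly twice that.
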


\begin{proof}
The bound on $\Prob{E'_1(v)}$ follows directly with a Chernoff bound as each node only picks a color with probability $3/4$, that is, in expectation $3/4 \deg_{S\setminus R}(v)$ neighbors participate in the coloring process of the post-shattering phase, and the constant factor deviation (lower bounded by $\Delta/20)$ forbidden by $E'_1(v)$ is exponentially small in $\Omega(\Delta)$. 
    
    Let $v \in N^{\leq 1}(\vbl(\mathcal{A}')) \cap S$ be a vertex with $\deg_{S}(v) \geq \Delta - 3\sqrt{\Delta}$.
    We claim that, after retractions, vertex $v$ has at least $9 \cdot 10^5 \cdot\Delta^{3/2}/8$ anti-edges in $F[N(v) \cap N^{\leq 2}(\vbl(\mathcal{A}')) \cap (S \setminus R)]$, where $\mathcal{A}'$ is the set of the occurring events previously defined. If $E_3(v)$ occurred during pre-shattering, we have that $N(v) \cap R =\emptyset$ because its neighbors are removed from $R$ during the retraction step. Otherwise, then it had $9 \cdot 10^5 \cdot\Delta^{3/2}/8$ anti-edges in $F[N(v) \cap (S \setminus R)]$ before retractions because $E_3(v)$ does not occur. Removing vertices from $R$ can only increase the number of anti-edges in $F[N(v) \cap (S \setminus R)]$; hence, it also has $9 \cdot 10^5 \cdot\Delta^{3/2}/8$ anti-edges in $F[N(v) \cap N^{\leq 2}(\vbl(\mathcal{A}')) \cap (S \setminus R)]$ after retractions as $N(v) \cap S \subseteq N^{\leq 2}(\vbl(\mathcal{A}')) \cap S$.
    
    Consider now some vertex $v\in N^{\leq 1}(\vbl(\mathcal{A}')) \cap S$ and let us upper bound the probability of $E_2'(v)$.
    As we just argued, every such vertex $v$ has $9 \cdot 10^5\Delta^{3/2}/8$ anti-edges in $F[N(v) \cap (S \setminus R)]$ (and all vertices of $N(v) \cap (S \setminus R)$ are trying colors if activated). By \cref{lem:prob-sparsity-subsampling} (with $p = 3/4$), node $v$ has fewer than $(3/4)^2/2 \cdot 9 \cdot 10^5 \cdot\Delta^{3/2}/8$ \wp at most $\exp(-\Omega(\Delta^{1/2}))$.
    So, by \cref{lem:prob-slackgen}, the node $v$ has fewer than 
    \[ 
    \frac{(3/4)^2/2 \cdot 9 \cdot 10^5 \Delta^{3/2} / 8}{3\cdot 10^4 \cdot \Delta} \geq 1.05 \sqrt{\Delta}
    \] repeated colors in $N(v)$ \wp at most $\exp(-\Omega(\Delta^{1/2}))$. So the event $E'_2(v)$ occurs \wp at most $\exp(-\Omega(\Delta^{1/2}))$.
\end{proof}

And we can now conclude with the proof of our main result, which we restate here.
\LemShatteringSlackGen*
\begin{proof}
    Both coloring steps (\cref{line:slackgen-RCT,line:slackgen-post-shattering}) produce a proper coloring with disjoint sets of color, and hence the resulting partial coloring is proper.
    By \Cref{claim:slackgen-pre}, when $\Delta \geq (\log n)^{50}$, the set $\mathcal{A}'$ is empty with high probability. Hence, \cref{alg:slackgen} ends after $O(1)$ rounds for such values of $\Delta$ and the partial coloring produced verifies the claims of \cref{lem:alg-slackgen}. When $\Delta \leq (\log n)^{50}$, by \Cref{claim:slackgen-shattering}, the connected components of the dependency graph in the post-shattering instance have size at most $\poly(\log n)$. The probability of $E_{CC}(i)$, $E_1'(v)$, and $E'_2(v)$ is $\Delta^{-\omega(1)}$ respectively from \cref{lem:bnd-notbig}, the Chernoff Bound and \Cref{claim:slackgen--post-LLL}. Since $\vbl(E_i(v)) \subseteq N^{\leq 2}(v)$ and $\vbl(E_{CC}(i))$ consists of every vertex of $S$ within 2 hops from $A_i$ in $F$, the dependency degree of the LLL is $\poly(\Delta(F)) = \poly(\Delta)$. Hence, they form an LLL that can be solved in $\Ot{ \log^3 \log n }$ by \cref{thm:deterministicLLLLOCAL}.

    It remains to verify that the claimed properties indeed hold.
    \cref{def:notbig} is respected during both pre-shattering and post-shattering so Part (b) holds as well.

    \medskip\noindent\textit{Part (a).}
    All the vertices in $V \setminus N^{\leq 1}(\vbl(\mathcal{A}'))$ have at least $3 \sqrt{\Delta}$ colors that appear at least twice in $N(v) \cap S$ because $E_2(v)$ does not hold for such vertices and none of their neighbors retract their color. All vertices in $v\in N^{\leq 1}(\vbl(\mathcal{A}'))$ with $\deg_S(v) \geq \Delta - 3\sqrt{\Delta}$ have $1.05\sqrt{\Delta}$ repeated colors because $E'_2(v)$ does not hold after the post-shattering step. Overall, Part (a) of \cref{lem:alg-slackgen} holds for all vertices of $S$. 
    
    \medskip\noindent\textit{Part (c).}
    Vertices $v\in V \setminus N^{\leq 3}(\vbl(\mathcal{A}'))$ have at most $11\Delta/20$ colored neighbors because $E_1(v)$ does not hold and none of their neighbors gets colored in post-shattering. A vertex $v$ with $E_1(v) \in \mathcal{A}'$ has at most $4\Delta/5$ colored neighbors after post-shattering because all of its neighbors are uncolored at the beginning of \cref{line:slackgen-post-shattering} and $E_1'(v)$ does not hold. For every other vertex, i.e., $v\in N^{\leq 3}(\vbl(\mathcal{A}')) $ but $E_1(v)$ did not occur during pre-shattering, the number of colored neighbors is at most
    \[
    \deg_R(v)
    + \max\set{ (4/5)\deg_{S \setminus R}(v), ~\Delta/20 }
    \]
    because $E_1'(v)$ does not hold and only vertices of $R$ are colored during pre-shattering. If we have that $(4/5)\deg_{S \setminus R}(v) \leq \Delta/20$, then since $\deg_R(v) \leq (11/20)\Delta$, the number colored neighbors is at most $(11/20 + 1/20)\Delta < 19\Delta/20$. Otherwise, using once again that $\deg_R(v) \leq 11\Delta/20$, the colored degree is at most 
    \begin{align*}
    \deg_R(v) + (4/5)(\deg_{S}(v) - \deg_{R}(v)) 
    &\leq (4/5)\deg_S(v) + (1/5)\deg_R(v) \\
    &\leq (4/5 + 1/5 \cdot 11/20)\Delta 
    \leq (19/20)\Delta \ . \qedhere
    \end{align*}
\end{proof}

\section{Coloring Cliques}
\label{sec:cliques}
The goal of this section is to extend the coloring to the cliques in $A_H$ and $A_L$, respectively. In \cref{alg:highlevel}, the cliques of $A_H$ are colored first, and thus when coloring these nodes we need to ensure that CC is satisfied for all cliques in $A_L$. 
We prove the following:

\LemColoringCliques*

At a high level, we follow the outline of the approach of \cite{MR14}. In the first phase, we compute a defective coloring (i.e., a coloring with monochromatic edges) of the cliques to be colored by assigning a random permutation of the colors \emph{not} used in $\All_i$ to the vertices of $A_i$. About $\sqrt{\Delta}$ vertices of each clique have a conflict in that they are assigned the same color as an external neighbor
(see \cref{lem:sct}(a)).
In the second phase, we perform color swaps: each node with a conflict finds a partner within its own clique with whom it can swap colors to remove the conflict without introducing new ones.
This is where the CC property is crucially used.

Our Step~1 is largely similar to the first phase, with the difference of making the conflicts one-way directed in order to simplify the formulation in the shattering framework. 
We implement the second phase quite differently and do so in two steps (Steps 2 and 3). 
In our Step 2, we find for each conflicted node a set of candidates for swapping. These sets are designed so that the task of selecting a final swapping partner from the candidate sets is independent across different cliques.
In Step 3, we then choose the swap partners locally based on a bipartite matching and apply the swaps in parallel.

\subsection{Step 1: Synchronized Color Trial}
We compute for each clique $A_i$ in $A'$ a random permutation of the 
$|A_i|=c-|\All_i|$ colors not used by $\All_i$ (following \cite{MR14}) and assign them to the nodes of $A_i$. This introduces a limited amount of conflicts, or monochromatic edges.
We orient these monochromatic edges for easier shattering formulation: 
if $\{u,v\}$ is oriented from $u$ to $v$, it means that $v$ should change its color in Step~2.  

\begin{definition}
\label{def:unhappy}
    For a defective coloring $\gamma$ and an orientation of the monochromatic edges of $\gamma$, 
    we define $\unhappy_i$ as the set of vertices of each $A_i$ in $A'$ with an incoming monochromatic edge
    Let $\unhappy=\bigcup_i \unhappy_i$ and call a node \emph{unhappy} if it belongs to $\unhappy$.
\end{definition}

Given the upper bound on external degrees, we expect each clique to have $O(\sqrt{\Delta})$ unhappy nodes. We can achieve that \whp via an LLL formulation.

\begin{restatable}{lemma}{LemSct}
\label{lem:sct}
    \label{lem:SCT-cliques}
    There is a \LOCAL algorithm that computes a \textit{defective} coloring $\gamma$ of $A'\subseteq A$ and an orientation of its monochromatic edges such that
    \begin{enumerate}[label=(\alph*)]
    \item $|\unhappy_i| \le (10^8 + 1) \sqrt{\Delta}$ for all $A_i \in A'$, and 
    \item All cliques of $A'$ and all uncolored cliques not in $A'$ satisfy the CC property.
    \end{enumerate}
    The algorithm runs in a single round when $\Delta \ge (\log^{50} n)$ and in $\Ot{\log^3\log n}$-rounds otherwise.
\end{restatable}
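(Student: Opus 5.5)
The plan is to run the synchronized color trial inside our shattering framework of \cref{sec:our-shattering-framework}. The random variable of a clique $A_i\in A'$ is a uniformly random bijection $\pi_i$ from $A_i$ to the $|A_i|=c-|\All_i|$ colors not used by $\All_i$. We set $\gamma(v)=\pi_i(v)$ for $v\in A_i$. The orientation is fixed deterministically: a monochromatic edge $\{u,v\}$ with $u\notin A'$ (so $u$ is already permanently colored) is oriented towards $v$. A monochromatic edge between two cliques of $A'$ is oriented towards the endpoint in the clique of larger ID. Inside a clique there are no monochromatic edges, and none towards $\All_i$ by construction.

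Thus $v\in A_i$ is unhappy only if $\gamma(v)$ equals the color (tentative or permanent) of one of its at most $10^8\sqrt{\Delta}$ external neighbors. For part (a) I would fix $A_i$ and expose the permutations of the other cliques first, so that the external neighbors' colors are fixed. Each $v\in A_i$ then receives one of at most $10^8\sqrt{\Delta}$ forbidden colors with marginal probability at most $10^8\sqrt{\Delta}/|A_i| \le (10^8+o(1))/\sqrt{\Delta}$, using $|A_i|\ge c-10^8\sqrt{\Delta}$ and $c\ge\Delta-\sqrt{\Delta}$. The expected count is therefore at most $(10^8+o(1))\sqrt{\Delta}$. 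Concentration for a function of a random permutation follows from the standard permutation concentration inequality: swapping two values changes the count by at most $2$, and the count is certifiable with $O(1)$ entries per unit. This gives failure probability $\exp(-\Omega(\Delta^{1/4}))$, or, via Talagrand, $\exp(-\Omega(\sqrt{\Delta}))$ for a deviation of $\sqrt{\Delta}$.

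For part (b), vertices of $A'$ are the vertices that get colored. For an uncolored clique $A_j$ and a color $x$, I mark a vertex if its tentative color is $x$. Within one clique at most one vertex gets $x$. Across cliques the permutations are independent, so any $\ell$ marked vertices from distinct cliques are all marked with probability at most $(1/(c-10^8\sqrt{\Delta}))^{\ell}$. Two vertices from the same clique cannot both be marked. Taking $Q=30\Delta^{1/4}$ or $10^8\sqrt{\Delta}$ (the external degree bound, using \cref{lem:coloringCliques}-(\ref{part:color-cliques-uncolored}) for cliques outside $A'$), we have $1/(c-10^8\sqrt\Delta)\le 1/(Q\Delta^{1/5})$, so (P7.1) holds. \Cref{lem:bnd-notbig} then gives failure probability $\exp(-\Delta^{1/40})$ per color, and a union bound over colors handles each clique.

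When $\Delta\ge(\log n)^{50}$, a union bound finishes in one round. Otherwise, pre-shattering samples all $\pi_i$, and $\mathcal A'$ consists of the violated events $E_U(i)$ (too many unhappy vertices) and $E_{CC}(j)$. For each such event we retract the permutations of all cliques in $\vbl$ of those events, making them uncolored. In post-shattering, the retracted cliques resample their permutations. The events are $E_U(i)$ and $E_{CC}(j)$ for cliques within $O(1)$ hops, each with a fresh CC budget. The variables are per clique, so I would apply \cref{lem:our-shattering} on the graph whose vertices are cliques and $A'$-external vertices with clique adjacency. This graph has maximum degree $\poly(\Delta)$ and event weak-diameter $O(1)$, so the components have size $\poly(\log n)$.

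The main obstacle is that a post-shattering $E_U(i)$ must remain rare after pre-shattering fixes the colors of neighboring cliques. This holds because the bound above was conditional on arbitrary outside colors, and the orientation rule is deterministic. Resampling a clique, however, can create new incoming edges into an unretracted neighboring clique. I would handle this by including, as marginal events, $E_U(i')$ for every unretracted clique adjacent to a resampling clique, and by splitting the budget. Concretely, unhappy vertices in $A_{i'}$ caused by pre-shattering neighbors are at most $(10^8+1/2)\sqrt\Delta$, and those caused by resampled neighbors are at most $\sqrt\Delta/2$. The latter is rare since few neighbors resample, each hitting a given vertex's color with probability $O(1/\Delta)$. \Cref{thm:deterministicLLLLOCAL} then solves the components in $\Ot{\log^3\log n}$ rounds.
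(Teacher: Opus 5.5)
Your pre-shattering analysis matches the paper: per-clique permutations, bounding unhappy vertices by all conflicts with external neighbors under an arbitrary outside coloring, and \cref{lem:bnd-notbig} for CC. The gap is in how you protect unretracted cliques next to resampled ones. Your marginal event for an unretracted clique $A_{i'}$ says that at most $\sqrt{\Delta}/2$ of its vertices become unhappy because of resampled neighbors. This is rare only if a tiny fraction (of order $10^{-8}$) of the edges leaving $A_{i'}$ go to resampled cliques, and nothing guarantees that. Under your retraction rule, a single violated $E_U(k)$ or $E_{CC}(k)$ at a clique $A_k$ at distance two from $A_{i'}$ retracts every $A'$-clique adjacent to $A_k$. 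That can include essentially all cliques adjacent to $A_{i'}$, while $A_{i'}$ itself is not retracted.

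In that case each $v\in A_{i'}$ can have up to $10^8\sqrt{\Delta}$ external neighbors with fresh colors, oriented into $A_{i'}$ by your ID rule. Each such neighbor hits $\gamma(v)$ with probability about $1/\Delta$. So the expected number of newly unhappy vertices can be of order $10^8\sqrt{\Delta}$, far above $\sqrt{\Delta}/2$, and your marginal event then holds with probability close to $1$. The post-shattering LLL must be solvable on the realized retraction pattern, so you cannot argue that such patterns are unlikely. Enlarging the second budget does not help either. The contribution of unretracted neighbors is frozen at pre-shattering and may already be near $(10^8+\tfrac12)\sqrt{\Delta}$, so the total could approach $2\cdot 10^8\sqrt{\Delta}$ and (a) fails. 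Retracting $A_{i'}$ as well only pushes the boundary outward.

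The missing idea is to let the orientation depend on the retraction outcome, which is what the paper does. In post-shattering, every monochromatic edge between a clique of $A'\setminus A''$ and a resampled clique of $A''$ is oriented toward the resampled clique. Resampling then never creates an incoming edge into an unretracted clique, since retraction only deletes edges. So an unretracted clique's unhappy set can only shrink, and no marginal events are needed.

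The resampled cliques absorb these conflicts at no extra cost. Your bound for (a) already counts conflicts with all external neighbors, regardless of orientation ($\Temp_i$ in the paper), under an arbitrary fixed coloring of $F-A_i$. With this change, the rest of your plan goes through as in the paper: the fresh CC budget, shattering on the clique graph, and \cref{thm:deterministicLLLLOCAL}.
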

The proof is deferred to \cref{sec:proof-sct}.

\subsection{Step 2: Finding Safe Swaps}

We now compute a pool of potential partners for each unhappy vertex.
A linear fraction of the nodes in the clique satisfies a minimal requirement. We then cull the pool by probabilistic subsampling. Any candidate that could cause a conflict if some other swap were performed is then eliminated. This eliminates potential inter-clique conflicts.
In Step 3, we then select the actual swaps to avoid intra-clique conflicts.

A node $u$ is a \emph{swappable} candidate for a node $v$ (in the same $A_i$) if swapping their colors leaves them both conflict-free when the coloring of the rest of the graph is unchanged.
Let $\swap_v$ be the set of swappable candidates for $v$.

\begin{definition} 
    \label{def:swap}
    For each $v\in A_i\in A'$ the set $\swap_v$ consists of all nodes $u\in A_i$ that satisfy
    \begin{enumerate}[label=(\alph*)]
        \item $u \not\in \unhappy_i$,
        \item $\gamma(u)$ does not appear on an external neighbor of $v$, and
        \item $\gamma(v)$ does not appear on an external neighbor of $u$.
    \end{enumerate}
\end{definition}

\noindent
We next show that a constant fraction of the nodes of $A_i$ are swappable (for any given node $v$).
This lemma is the \emph{only place} where we use the CC property. Its proof is purely deterministic, given the CC property of the clique and the guarantees of \Cref{lem:sct}.

\begin{lemma}
\label{lem:swappable}
    All $v\in A_i\in A'$ have $|\swap_v|\geq \Delta/10$.
\end{lemma}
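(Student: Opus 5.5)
Fix $v\in A_i\in A'$. I will lower bound $|\swap_v|$ by starting from all of $A_i$ and subtracting the vertices that violate each of the three conditions of \cref{def:swap}. By \cref{lem:structuralDecomposition}-\ref{part:A-size}, $|A_i|\ge c-10^8\sqrt{\Delta}\ge \Delta-k_\Delta+1-10^8\sqrt{\Delta}\ge \Delta-(10^8+1)\sqrt{\Delta}$. The rest is a counting argument and uses no probability.

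\emph{Condition (a).} By \cref{lem:sct}(a), at most $(10^8+1)\sqrt{\Delta}$ vertices of $A_i$ are unhappy.

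\emph{Condition (b).} Here $\gamma(u)$ must not appear on an external neighbor of $v$. Inside $A_i$, the coloring $\gamma$ is a bijection onto the colors not used by $\All_i$, so distinct $u$ have distinct colors $\gamma(u)$. The number of $u$ failing (b) is therefore at most the number of distinct colors on external neighbors of $v$. By \cref{def:externalDegree}, this is at most the external degree, which \cref{lem:structuralDecomposition}-\ref{part:A-ext},\ref{part:AL-ext} bounds by $10^8\sqrt{\Delta}$. Neighbors in $\All_i$ are not external by definition, and their colors are excluded from the permutation anyway.

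\emph{Condition (c).} Here $\gamma(v)=x$ must not appear on an external neighbor of $u$. This is the step that needs CC, and I expect it to be the main obstacle: I must make sure the coloring "outside" $A_i$ matches what \cref{obs:CC} controls. Since $A_i\in A'$ satisfies CC (hypothesis 1, which \cref{lem:sct}(b) preserves), \cref{obs:CC} gives that at most $4\Delta/5$ vertices of $A_i$ have a neighbor outside $A_i\cup\All_i$ colored $x$, at any point in the coloring process.

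The subtlety is that external neighbors of $u$ may lie in other cliques of $A'$ that are simultaneously receiving tentative colors under $\gamma$. I would handle this by noting that CC is maintained for the step that produces $\gamma$ as well, since \cref{lem:sct}(b) gives a fresh per-step budget. The count in \cref{obs:CC} then covers all permanently and tentatively colored vertices outside $A_i\cup\All_i$, including $\Big_i^+$, which was already accounted for in that observation. At most one vertex of $\Big_i^+$ has color $x$, because $\Big_i^+$ is a clique.

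\emph{Combining.} Adding the three bounds,
\[
|\swap_v|\ge \Delta-(10^8+1)\sqrt{\Delta}-(10^8+1)\sqrt{\Delta}-10^8\sqrt{\Delta}-\tfrac45\Delta \ge \tfrac{\Delta}{10},
\]
where the last inequality holds for $\Delta$ sufficiently large. If \cref{obs:CC}'s slack were tight I would redo its computation: its budget is $O(\log\Delta\cdot\Delta^{37/40})+\tfrac34\Delta+10^8\sqrt{\Delta}$, which is $\tfrac34\Delta+o(\Delta)$, so there is ample room.
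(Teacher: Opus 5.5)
Your proof is correct and follows the paper's argument: you subtract from $|A_i|$ the unhappy vertices via \cref{lem:sct}(a), at most external-degree-many violators of (b) using injectivity of $\gamma$ on $A_i$, and at most $4\Delta/5$ violators of (c) via \cref{def:notbig} (preserved through Step~1 by \cref{lem:sct}(b)) together with \cref{obs:CC}. Your constants $(10^8+1)\sqrt{\Delta}$ and $10^8\sqrt{\Delta}$ are the ones the cited lemmas actually give, whereas the paper's proof writes $4\sqrt{\Delta}$ and $\sqrt{\Delta}$; the conclusion $|\swap_v|\ge\Delta/10$ holds for large $\Delta$ either way.
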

\begin{proof}
    We bound separately from above the number of nodes violating (a), (b) and (c) in \cref{def:swap}.
    \emph{Nodes violating (a):} By \Cref{lem:sct}(a) at most $4\sqrt{\Delta}$ vertices of $A_i$ are in $\unhappy_i$.
    
    \emph{Nodes violating (b):} The external degree of $v$ is bounded by $\sqrt{\Delta}$ (by \Cref{lem:structuralDecomposition}(3,5)), and hence at most $\sqrt{\Delta}$ vertices in $A_i$ have a color appearing in the external neighborhood of $v$.
    
    \emph{Nodes violating (c):} Note that CC holds for the clique $A_i$, as it held before Step 1 (recall \cref{lem:coloringCliques}(1)) and was maintained by Step 2 (by \cref{lem:sct}(b)). Hence, by \Cref{obs:CC} each color (here applied to $\gamma(v)$) appears on an external neighbor (recall, one outside of $A_i\cup \mathrm{All}_i$) of at most $4\Delta/5$ vertices in $A_i$.  
    
    \medskip\noindent
    Putting everything together,
    \[ 
    |\swap_v| \ge |A_i|-4\Delta/5 - 5 \sqrt{\Delta}\geq \Delta/10,
    \]
    using the lower bound on $|A_i|$ from \Cref{lem:structuralDecomposition}(a).
\end{proof}

While individual swaps can be done safely (with the sets $\swap_v$), pairs of swaps could introduce new conflicts.
Instead, we seek in this step to truncate the sets to obtain  
a \emph{safe} candidate system. 
Intuitively, a candidate is safe with respect to the candidates of other cliques if it can swap its color obliviously to what happens in other cliques without creating monochromatic edges. 
\begin{definition}
\label{def:badCandidate}
Given a collection $\fT = \set{ T_v \subseteq \swap_v : v\in \unhappy }$  of \emph{candidate sets} for each unhappy node, we call a candidate $u \in \swap_v$ \emph{unsafe for $v$} with regard to $\fT$ if one of the following holds:
\begin{enumerate}[label=$\roman*)$]
     \item $v$ has an external neighbor $w$ that has a candidate $w'\in T_w$ with $\gamma(w') = \gamma(u)$;
     \item $v$ has an external neighbor $w\in T_{w'}$ that is a candidate for a node $w'$ with $\gamma(w') = \gamma(u)$;
\item $u$ has an external neighbor $w\in T_{w'}$ that is a candidate for a node $w'$ with $\gamma(w') = \gamma(v)$.     
 \end{enumerate}
A candidate that is not unsafe is \emph{safe}. 

We say that $\fT$ is a \emph{safe candidate system} if the sets $T_v$ contain only safe candidates for $v$ for all $v\in\unhappy$. 
\end{definition}

\noindent
\Cref{def:badCandidate} has three instead of four cases, because if $u$ has an external neighbor $w$ that has a candidate $w'$ with $\gamma(w') = \gamma(v)$, then $w'$ is also an unsafe candidate for $w$. 
Note that whether a candidate is safe or not depends on the candidate sets of nodes in adjacent cliques. 

The most involved step of our algorithm for coloring cliques is the construction of a safe candidate system as described in \cref{lem:subsampling}. We defer the proof to \cref{sec:proof-clique-subsampling} to preserve the flow of the paper.

\begin{restatable}[Subsampling for safe candidates]{lemma}{LemSubsampling}
\label{lem:subsampling}
    There is a $\Ot{ \log^3\log n }$ \LOCAL algorithm that, \whp, computes a safe candidate system $\fT$ such that 
    \begin{enumerate}\item $|T_v| \ge \Delta^{17/40}/40$ for each $v\in \unhappy$,
\item Each $u\in A_i \setminus \unhappy_i$ belongs to at most $\Delta^{17/40}/80$ sets $T_v$ with $v\in \unhappy_i$, and      
        \item (Strong version of CC) For each uncolored $A_j \not\in A'$ and color $x$, at most $2\cdot \Delta^{37/40}$ vertices in $A_j$ have a neighbor outside of $A_j\cup \All_j\cup \Big_j^+$ that has a candidate of color $x$ or is a candidate for a node of color $x$.
    \end{enumerate}
\end{restatable}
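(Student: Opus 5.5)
The plan is to build the candidate sets by independent subsampling and then prune, wrapping everything in the shattering framework of \cref{sec:our-shattering-framework}. Each unhappy vertex $v$ independently includes each $u\in\swap_v$ in a tentative set $T^0_v$ with probability $p=\Delta^{-23/40}$. By \cref{lem:swappable}, $\Exp|T^0_v|\ge \Delta^{17/40}/10$. We then set $T_v$ to be the safe elements of $T^0_v$ with respect to $\fT^0=\{T^0_w\}$. Removing candidates only shrinks the sets that define unsafety, so the pruned system remains safe. The bad events, one per $v\in\unhappy$, per $u\in A_i\setminus\unhappy_i$, and per uncolored $A_j\notin A'$, are the negations of (1), (2), and (3).

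\textbf{Bounding the pre-shattering probabilities.} For (2), the vertex $u$ lies in at most $|\unhappy_i|\le(10^8+1)\sqrt{\Delta}$ tentative sets, each independently with probability $p$. The expectation is therefore $O(\Delta^{-3/40})$, and Chernoff with $\ell$-fold products gives that more than $\Delta^{17/40}/80$ occur with probability $\exp(-\Omega(\Delta^{17/40}))$.

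For (1), we first get $|T^0_v|\ge\Delta^{17/40}/20$ by Chernoff. Then we bound the number of unsafe candidates. Fix $v$ and the colors $\gamma$. Case i) and ii) are each driven by one of the at most $10^8\sqrt{\Delta}$ external neighbors $w$ of $v$, together with a candidate pair involving $w$. Such a pair forbids one color $\gamma(u)$, hence one vertex $u\in A_i$, since colors within $A_i$ are distinct. Each $w$ is an unhappy vertex with $|T^0_w|=O(\Delta^{17/40})$, or it is a candidate of some unhappy node in its clique, which by the (2)-type bound happens $O(1)$ times. So i) and ii) forbid only $O(\Delta^{0.925})$ of the $\Theta(\Delta)$ vertices of $\swap_v$. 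Their overlap with $T^0_v$ is then $O(\Delta^{0.925}p)=O(\Delta^{0.35})$, which is much smaller than $\Delta^{17/40}$. Concentration comes from the fact that $T^0_v$ is sampled independently of the other cliques' samples.

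Case iii) concerns $u$'s external neighbors $w$ that are candidates for a node colored $\gamma(v)$. This is exactly a CC-type count, for the color $\gamma(v)$, applied to the marking ``$w$ is a candidate of a node of color $x$''. Each $w$ is marked with probability at most $O(\sqrt{\Delta}\,p)$, since at most $O(\sqrt{\Delta})$ unhappy vertices can select it. Such markings satisfy (P7.1) with $Q=O(\sqrt{\Delta})$, as $\sqrt{\Delta}\,p\cdot Q\Delta^{1/5}\le 1$. So \cref{lem:bnd-notbig} gives that at most $\Delta^{37/40}$ vertices $u\in A_i$ are affected. Sampling these with probability $p$ leaves $O(\Delta^{0.35})$ in $T^0_v$. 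The same marking argument, applied also to ``$w$ has a candidate of color $x$'', proves (3) with budget $\Delta^{37/40}$ per phase. This uses the fresh-budget trick: pre- and post-shattering each get $\Delta^{37/40}$, giving the factor $2$.

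\textbf{Shattering and post-shattering.} All events have variables within $O(1)$ hops in $F$ and failure probability $\exp(-\Delta^{\Omega(1)})$. For $\Delta\ge(\log n)^{50}$, a union bound finishes the proof. Otherwise we retract the samples of every clique touched by a failed event, resample those cliques, and add marginal events for neighboring cliques whose safety could be destroyed by the new samples. We then apply \cref{lem:our-shattering} to get $\poly(\log n)$ components, and \cref{thm:deterministicLLLLOCAL} for $\Ot{\log^3\log n}$ rounds.

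\textbf{Main obstacle.} I expect the main difficulty to be this post-shattering step. Resampling in one clique can make candidates in an adjacent, already-settled clique unsafe. I would handle it as in Example~1. Resampled cliques use at most half of the sampling budget, and the events of every clique within distance $O(1)$ of a resampled clique are re-included in the post-shattering LLL. For those cliques, safety is re-checked against the union of the old and new candidate systems, so that the bounds in (1) still hold with the constants halved.
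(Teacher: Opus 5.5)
Your pre-shattering analysis of (1) is a sound alternative to the paper's. You condition on the samples of all other cliques, which fixes the set $F_v\subseteq A_i$ of vertices made unsafe for $v$ by rules i)--iii), so $|T^0_v\cap F_v|$ is binomial. The paper instead fixes the candidate set and uses the outside randomness via a read-$k$ bound (\cref{claim:b1Detail}).

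The genuine gap is the post-shattering step, and that is exactly where this difference matters. Your only pre-shattering events are the negations of (1)--(3), so the frozen outcome at the boundary between settled and resampled cliques is uncontrolled, and the post-shattering events need not be rare.

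\emph{Settled cliques next to resampled ones.} Take a settled clique $A_k$ adjacent to a resampled one. Now $T^{pre}_v$ is fixed, so your conditioning trick is unavailable and only the new outside samples are random. Via rule iii) the candidates in $T^{pre}_v$ are correlated through shared outside vertices. Nothing you impose prevents a single vertex $w$ of a resampled clique from being adjacent to a large fraction of $T^{pre}_v$; it may have $10^8\sqrt{\Delta}\gg\Delta^{17/40}$ neighbors in $A_k$. With probability about $p=\Delta^{-23/40}$, which is only polynomially small, $w$ is resampled as a candidate of the node colored $\gamma(v)$ and invalidates that whole fraction. That is far too likely for an LLL with dependency degree of order $\Delta^3$.

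\emph{Resampled cliques with settled neighbors.} For a resampled clique whose neighbors are settled, the part of $F_w$ coming from the frozen candidate sets is fixed. The fact that no (1)--(3) event failed nearby does not deterministically bound it. So resampling need not succeed with probability $1-\exp(-\Delta^{\Omega(1)})$, and for small $\Delta$ you cannot argue such configurations are absent w.h.p.\ in $n$.

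Re-including neighboring events and halving constants does not supply these bounds. What is missing are \emph{guard events} in pre-shattering. The paper adds $B_1'(i)$: no vertex of another clique of $A'$ has $\Delta^{1/10}$ neighbors in any $S_v$ with $v\in\unhappy_i$. This makes the read-$k$ bound usable with $k=\Delta^{1/10}$ for settled cliques. It also adds $B_4(i)$: $\swap_v$ contains $\Delta/20$ candidates unsafe w.r.t.\ $\fS$. Together with retracting every clique adjacent to a failing one, this ensures that a resampled clique either has all its neighbors resampled, or has at least $\Delta/20$ vertices of $\swap_v$ that are safe w.r.t.\ $\fT^{pre}$ to sample from.

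Separately, your verification of (P7.1) is wrong. With $Q=\Theta(\sqrt{\Delta})$ one has $\sqrt{\Delta}\,p\cdot Q\Delta^{1/5}=\Theta(\Delta^{5/8})$. Even with the correct marking probability $p$ (each clique has at most one node of color $x$) one gets $10^8\Delta^{1/8}>1$.
\begin{itemize}
\item For (3), the fix is the hypothesis of \cref{lem:coloringCliques} that uncolored cliques outside $A'$ have external degree at most $30\Delta^{1/4}$. With $Q=30\Delta^{1/4}$ and marking probability $2p$ one gets $60\Delta^{-1/8}\le 1$ for large $\Delta$. This is exactly why (3) is claimed only for $A_j\notin A'$.
\item For case iii) of (1), $A_i$ may lie in $A_H$, so \cref{lem:bnd-notbig} does not apply. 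Still, the number of affected $u\in A_i$ is at most $\sum_w Z_w d_w$, with independent indicators $Z_w$ of probability at most $p$ and weights $d_w\le 10^8\sqrt{\Delta}$. Its mean is $O(p\Delta^{3/2})=O(\Delta^{37/40})$. Chernoff applied to $\sum_w Z_w d_w/(10^8\sqrt{\Delta})$ gives $O(\Delta^{37/40})$ with probability $1-\exp(-\Omega(\Delta^{17/40}))$, which is all your argument needs.
\item ``A vertex is a candidate $O(1)$ times'' holds only with polynomially small failure probability. Use the bound $\Delta^{17/40}/80$ from (2) instead; it still gives $O(\Delta^{37/40})$ forbidden vertices for rule ii).
\end{itemize}
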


\subsection{Step 3: Local Matching}
We finally show how to turn a safe candidate system (\cref{lem:subsampling}) into a proper coloring, concluding the proof of \cref{lem:coloringCliques}. 
We first observe that safeness implies that the color selection becomes a fully local problem within each clique.

\begin{observation}
    Suppose an unhappy node $v$ performs a swap of colors with one of its safe candidates $u \in T_v$. Afterwards, there are no incoming monochromatic edges to $v$ (nor $u$), independent of any swaps performed in other cliques.

    Further, if all unhappy nodes perform swaps with safe candidates and if the swaps are disjoint, then the resulting coloring is proper.
    \label{obs:safe}
\end{observation}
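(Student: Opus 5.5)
The plan is to check directly which monochromatic edges can exist after the swaps. We compare the swapped coloring $\gamma'$ against $\gamma$, using \cref{def:swap} for the conditions on the swapping pair and \cref{def:badCandidate} for the interaction with swaps in other cliques. Start with the first claim. Fix an unhappy $v\in A_i$ that swaps with $u\in T_v$, so that $\gamma'(v)=\gamma(u)$ and $\gamma'(u)=\gamma(v)$.

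\textbf{Internal edges.} Inside $A_i$, $\gamma$ restricted to $A_i$ is a permutation of the colors not used on $\All_i$, so a transposition keeps $A_i$ rainbow. Every vertex of $\All_i$ is adjacent to all of $A_i$. Its color is never in the permutation, so it cannot conflict with either $u$ or $v$.

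\textbf{External edges at $v$.} Take an external neighbor $w$ of $v$ (outside $A_i\cup\All_i$). If $w$ is not moved, its color is one already present when the swap sets were computed, so \cref{def:swap}(b) gives $\gamma(w)\neq\gamma(u)$. Suppose instead $w$ is moved by a swap in another clique.
\begin{itemize}
\item If $w$ is an unhappy node swapping with $w'\in T_w$, its new color is $\gamma(w')$, and $\gamma(w')\neq\gamma(u)$ because $u$ would otherwise be unsafe for $v$ by case i).
\item If $w\in T_{w'}$ is a candidate swapping with an unhappy $w'$, its new color is $\gamma(w')$, and $\gamma(w')\neq\gamma(u)$ by case ii).
\end{itemize}

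\textbf{External edges at $u$.} The symmetric argument handles $u$. A static neighbor is covered by \cref{def:swap}(c). A neighbor moved as a candidate is covered by case iii). For a neighbor $w$ that is unhappy and moves to $\gamma(w')$ with $w'\in T_w$, we use the remark after \cref{def:badCandidate}. If $\gamma(w')=\gamma(v)$, then $w'$ would be unsafe for $w$, because $w$ has the external neighbor $u\in T_v$ and $u$ is a candidate for a node of color $\gamma(w')$. This is case iii) applied to the pair $(w,w')$, and it contradicts safeness of $\fT$.

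Hence no new incoming conflict arises at $v$ or $u$, whatever the other cliques do. The step I expect to be the main obstacle is this bookkeeping: pinning down, case by case, which of i)--iii) (including the symmetric use of iii)) excludes each way a neighbor's color can change. I also need to make sure that "external" correctly excludes $\All_i$ everywhere.

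For the second claim, assume every unhappy node swaps with a safe candidate and the swaps within each clique are disjoint (a matching). Consider any edge $\{x,y\}$.
\begin{itemize}
\item If both endpoints lie in the same $A_i$, disjointness means the swaps compose to a permutation of $A_i$'s palette, so no conflict.
\item If one endpoint is in $\All_i$, the palettes are disjoint as argued above.
\item If neither endpoint moved, the edge was monochromatic under $\gamma$ only if it was oriented into an unhappy node. That node does move, so this case cannot occur, and otherwise $\gamma$ was already proper on the edge.
\item If at least one endpoint moved, the first claim applied to that endpoint shows the edge is not monochromatic.
\end{itemize}
The remaining vertices outside $A'$ were properly colored beforehand and keep their colors, so the resulting coloring is proper.
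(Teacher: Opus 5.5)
Your proof is correct and takes the same route as the paper: a direct check of each neighbor of $v$ and $u$ against \cref{def:swap} and \cref{def:badCandidate}. It is in fact more complete, since the paper spells out only the unhappy-neighbor case via (i) and leaves the moved-candidate case (ii) and the whole $u$-side as ``similar.'' One small correction: in the unhappy-neighbor subcase at $u$, the condition you describe is case (ii) of \cref{def:badCandidate} for the pair $(w,w')$, not case (iii). That condition is that $w$ has the external neighbor $u\in T_v$, which is a candidate for the node $v$ of color $\gamma(w')$.
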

\begin{proof}
    For the first claim, consider an external neighbor $w$ of $v$ and let $A_j$ be its clique. Since $u$ was swappable for $v$, $\gamma(u) \ne \gamma(w)$, so if $u$ did not perform a swap, there is no conflict. Suppose then that $w$ swapped its color with one of its safe candidates $w'$. Then $\gamma(w') \ne \gamma(u)$, by \cref{def:badCandidate}(i), so again there is no conflict. The case of conflicts with $u$ is similar.

    For the second claim, observe that by the first claim and since all unhappy nodes perform swaps, all monochromatic edges between cliques are eliminated and no new ones are introduced. The disjointness criteria -- that no node participates in more than one swap -- ensures that no conflicts occur within the cliques.
\end{proof}

The remaining task -- deciding on swap partners -- then reduces to a local bipartite matching problem.
Consider the bipartite graph with vertex bipartition $L = \unhappy_i$ and $R = A_i \setminus \unhappy_i$, with an edge between each $v\in L$ and $u\in R$ if and only if $u \in T_v$. 
By \cref{lem:subsampling}(1,2), each node in $L$ has at least $\Delta^{17/40}/40$ incident edges, while each node in $R$ has at most $\Delta^{17/40}/80$ incident edges. It follows that for every subset $S \subseteq L$, $|N(S)| \ge |S|$. Then, by Hall's theorem there is a matching $M$ that saturates $L$. 
We perform exactly the color swaps given by the matching $M$, which yields a proper coloring by \cref{obs:safe}. 

To conclude with the proof of \cref{lem:coloringCliques}, it remains to argue that the CC property is maintained for the remaining uncolored cliques.
This follows for Step 1 by \cref{lem:sct}(b) and for Step 3 by \cref{lem:subsampling}(3).

\subsection{Proof of \texorpdfstring{\cref{lem:sct}}
{Lemma~\ref{lem:sct}}}
\label{sec:proof-sct}

\LemSct*

\begin{algorithm}
    \caption{Synchronized Color Trial on $A_i$\label{alg:sct}}
    \Input{A collection of uncolored cliques $A'' \subseteq A'$}
    \Output{A coloring $\gamma$ of all the $A_i \in A''$ and an orientation of its monochromatic edges}
    
    Each $A_i \in A''$ samples a permutation $\pi_i$ of the colors \emph{not} already used on $\All_i$\;

    Order the vertices of $A_i$ as $v_1, v_2, \ldots, v_{|A_i|}$ arbitrarily and let $\gamma(v_j)$ be the color $\pi_i(j)$

    Orient the resulting monochromatic edges $\set{u,v}$ as follows: {\begin{itemize}[noitemsep]
        \item if $v\in A_i\in A'$ and $u$ is not in a clique of $A'$, orient the edge towards the vertex $v$, and vice versa;
        \item if $v\in A_i\in A'$ and $u\in A_j\in A'$, orient the edge from $u$ to $v$ if $i < j$ and in the opposite direction otherwise.
    \end{itemize}}
\end{algorithm}
Note that due to \Cref{lem:structuralDecomposition}(b), $|\All_i|=c-|A_i|$, and hence a color is available for each node in $A_i$.
In this process, each clique corresponds to exactly one random variable. Recall that a vertex is unhappy (\cref{def:unhappy}) if it has at least one incoming monochromatic edge. We consider two bad events for each $A_i$: define
\begin{enumerate}[noitemsep,topsep=0em,leftmargin=1.5cm]
    \item[$E_a(i)$] : $\unhappy_i$ contains more than $10^8 \sqrt{\Delta}$ vertices; and 
    
    \item[$E_b(i)$] : $A_i$ is adjacent to a clique of $A'$ or to an uncolored clique not in $A'$ for which the CC property was broken.
\end{enumerate}
Clearly, the events $E_a(i)$ and $E_b(i)$ depend on the random permutation of $A_i$ and of every $A_j$ adjacent to $A_i$. Let us argue that they are rare events, no matter what the coloring outside $A_i$ is. We emphasize that \Cref{claim:sct-ev} bounds the probability of both events for pre-shattering and post-shattering because it assumes the coloring of $F - A_i$ is adversarial.

\begin{claim}
    \label{claim:sct-ev}
    Let $A_i \in A$ be an uncolored clique, and let the coloring of $F - A_i$ be arbitrary. 
    When we run \cref{alg:sct} with an $A''$ that contains $A_i$, then events $E_a(i)$ and $E_b(i)$ occur with probability at most $\exp(-\Omega(\Delta^{1/40}))$.
\end{claim}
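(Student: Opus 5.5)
We bound the two events separately, conditioning throughout on an arbitrary coloring of $F - A_i$. The permutations of the other cliques in $A''$ are independent of $\pi_i$, so by fixing them we may treat everything outside $A_i$ as adversarial. The only randomness left is then $\pi_i$, a uniform bijection between $A_i$ and the $|A_i|$ colors not used on $\All_i$.

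\textbf{Bounding $E_a(i)$.} A vertex $v_j\in A_i$ can be unhappy only if $\gamma(v_j)=\pi_i(j)$ equals the color of one of its external neighbors. By \cref{def:externalDegree} and \cref{lem:structuralDecomposition}(\ref{part:A-ext},\ref{part:AL-ext}), $v_j$ has at most $10^8\sqrt{\Delta}$ such neighbors. Neighbors in $\All_i$ are harmless, since their colors are excluded from the permutation. Let $X=\sum_j Y_j$, where $Y_j$ indicates that $\pi_i(j)$ lies in the forbidden set $F_j$ of $v_j$, with $|F_j|\le 10^8\sqrt{\Delta}$. Since $|A_i|\ge c-10^8\sqrt{\Delta}\ge \Delta/2$ by \cref{lem:structuralDecomposition}(\ref{part:A-size}), we get $\Pr[Y_j=1]\le 2\cdot 10^8/\sqrt{\Delta}$. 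Hence $\E[X]\le 10^8\sqrt{\Delta}\cdot(1-\Omega(1))$, after a slightly more careful count using $|A_i|\approx\Delta$; otherwise we absorb the constant into the threshold. Concentration comes from the standard fact that the indicators $Y_j$ under a uniform random permutation are negatively associated (or via a martingale over the sequential exposure of $\pi_i$). This gives a Chernoff-type bound
\[
\Pr\bigl[X>10^8\sqrt{\Delta}\bigr]\le \exp(-\Omega(\sqrt{\Delta}))\le\exp(-\Omega(\Delta^{1/40})).
\]
Alternatively, a union bound over all $\ell=10^8\sqrt{\Delta}$-subsets works: the probability that a fixed $\ell$ vertices all hit their forbidden sets is at most $\prod_{t<\ell} |F_j|/(|A_i|-t)$, which yields the same bound.

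\textbf{Bounding $E_b(i)$.} Fix an adjacent clique $A_j$ (either in $A'$, or uncolored and not in $A'$) and a color $x$. We apply \cref{lem:bnd-notbig}. Take as sets the external neighborhoods (excluding $\All_j\cup\Big_j^+$) of the vertices of $A_j$ that meet $A_i$, and mark a vertex $u\in A_i$ if $\gamma(u)=x$. At most one vertex of $A_i$ receives $x$, so the property (P7.1) that $\ell$ vertices are all marked holds trivially for $\ell\ge2$. For $\ell=1$ the marking probability is $1/|A_i|\le 2/\Delta\le 1/(Q\Delta^{1/5})$ with $Q=U\le 10^8\sqrt{\Delta}$. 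Vertices of $A_j$'s external neighborhood coming from other cliques are fixed by the adversarial conditioning. Here the CC budget of this coloring step is what must be controlled, and only the contribution of $A_i$'s random permutation is random, since the other cliques' contributions are accounted for by their own events in the LLL. Therefore the probability that the budget $\Delta^{37/40}$ for $(j,x)$ is exceeded is at most $\exp(-\Delta^{1/40})$. A union bound over the at most $\Delta$ colors and the $\poly(\Delta)$ adjacent cliques gives $\poly(\Delta)\exp(-\Delta^{1/40})=\exp(-\Omega(\Delta^{1/40}))$.

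\textbf{Main obstacle.} The delicate point is the second event. The CC count for $A_j$ aggregates permutations of several neighboring cliques simultaneously, while the claim conditions on an adversarial outside coloring. I expect to argue that, given the others, one clique contributes at most a single vertex per color; then apply \cref{lem:bnd-notbig} jointly, noting that different cliques' permutations are independent and that within a clique the all-marked probability for $\ell\ge 2$ marks of the same color is zero. This makes (P7.1) hold for the joint randomness, and conditioning on the permutations of cliques other than $A_i$ is only needed for $E_a(i)$.
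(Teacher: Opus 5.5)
Two steps need repair.

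\textbf{Concentration for $E_a(i)$.} This step is not justified as you state it. The indicators $Y_j=\mathbf{1}[\pi_i(j)\in F_j]$ are not negatively associated when the $F_j$ differ. With $N=|A_i|$ and $a\neq b$, we have $\Pr[\pi_i(1)=a,\ \pi_i(2)=b]=1/(N(N-1))>1/N^2$, so the Chernoff bound does not follow.

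Your fallbacks do not close the gap either:
\begin{itemize}
\item A plain bounded-differences martingale over the $N\approx\Delta$ exposures only controls deviations of order $\sqrt{\Delta}$ with constant probability. That is the scale of the mean itself, so it gives nothing exponentially small.
\item Your subset union bound gives $\Pr[X\ge\ell]\le\bigl(eNq/(\ell(N-\ell))\bigr)^{\ell}$ with $q=10^8\sqrt{\Delta}$. This is small only for $\ell\gtrsim eq$, so it does not give ``the same bound''.
\end{itemize}

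The slack $\E[X]\le(1-\Omega(1))\,10^8\sqrt{\Delta}$ is also not available. The bound $\E[X]=\sum_j|F_j|/N\le 10^8\sqrt{\Delta}$ is tight when every vertex has a full external neighborhood with distinct colors. The threshold must therefore sit one deviation above the mean, which is why \cref{lem:sct}(a) reads $(10^8+1)\sqrt{\Delta}$.

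The paper gets this from McDiarmid's permutation inequality (\cref{lem:mcdiarmid}), which is sensitive to the small mean. A transposition of $\pi_i$ changes $|\Temp_i|$ by at most $2$ (recall $\unhappy_i\subseteq\Temp_i$), and $|\Temp_i|\ge s$ is certified by $s$ positions. Taking $t=\sqrt{\Delta}$ then gives $\exp(-\Omega(\sqrt{\Delta}))$.

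Your union bound with a threshold such as $3\cdot 10^8\sqrt{\Delta}$ is a valid alternative. It suffices downstream, since \cref{lem:swappable} only needs $O(\sqrt{\Delta})$ unhappy vertices, but it proves a weaker constant than stated.

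\textbf{The argument for $E_b(i)$.} The argument in your main text fails; your closing paragraph is the correct replacement and is exactly the paper's proof.

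Suppose the permutations of the other cliques in $A''$ are fixed adversarially. Then the CC count of an adjacent $A_j$ for color $x$ may already exceed $\Delta^{37/40}$, and $E_b(i)$ holds with probability $1$. Meanwhile $\pi_i$ contributes at most one $x$-colored vertex. That vertex lies outside $\All_j\cup\Big_j^+$, so it touches fewer than $2\Delta^{9/10}$ vertices of $A_j$. Applying \cref{lem:bnd-notbig} to $A_i$'s marks alone therefore controls nothing about the aggregate count.

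Appealing to the other cliques' ``own events'' is circular. Those events bound the same aggregate counts; there is no per-clique share of the budget.

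What you must do is apply \cref{lem:bnd-notbig} on the joint product space of all $\pi_\ell$ with $A_\ell\in A''$. Take as sets the external neighborhoods (minus $\All_j\cup\Big_j^+$) of all vertices of $A_j$, not only those meeting $A_i$. Then (P7.1) holds for three reasons:
\begin{itemize}
\item A fixed $u\in A_\ell$ receives $x$ with probability at most $1/|A_\ell|\le 2/\Delta\le 1/(10^8\sqrt{\Delta}\,\Delta^{1/5})$ for large $\Delta$.
\item Two marks in the same clique are mutually exclusive.
\item Different cliques are independent.
\end{itemize}

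Consequently, the ``arbitrary coloring of $F-A_i$'' in the claim can only refer to vertices not colored in the current step. Conditioning on the other permutations is legitimate for $E_a(i)$ only.
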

\begin{proof}
    Let $\Temp_i\subseteq A_i$ consist of the nodes of $A_i$ that have a color conflicting with an  external neighbor. Clearly, $\unhappy_i\subseteq \Temp_i$.
The colors are drawn from a set of size $|A_i| \ge \Delta/2$ and at most $10^8\sqrt{\Delta}$ of them conflict with those of external neighbors (\cref{lem:structuralDecomposition}(3,5)). Hence, each node has conflict w.p.\ at most $2 \cdot 10^8/\sqrt{\Delta}$, so 
the expected size of $\Temp_i$ is at most $2 \cdot 10^8\sqrt{\Delta}$.
    As Molloy and Reed argue in Lemma 39 of \cite{MR14} by applying McDiarmid's inequality (\cref{lem:mcdiarmid}, with $c=r=1$), $|\Temp_i|$ is highly concentrated. 
Specifically, $E_a(i)$ occurs \wp at most $4\exp(-(\sqrt{\Delta})^2/(128\cdot (10^8+1) \sqrt{\Delta})) \leq \exp(-\Delta^{1/3})$ for $\Delta$ sufficiently large. 

    Let us now bound $\Prob{E_b(i)}$. Fix an arbitrary color $x$ and a clique $A_j$. 
    Due to \Cref{lem:structuralDecomposition} the external degree of $A_i$ is bounded by $10^8\sqrt{\Delta}$. 
    For a fixed external neighbor $u \in A_\ell \in A''$ of $A_j$, the probability that it receives color $x$ in the random permutation of $A_j$ is $1/|A_j| \leq 2/\Delta$ (by \cref{lem:structuralDecomposition}(b)). Furthermore, at most one vertex in each clique is assigned color $x$ and the random permutation for different cliques are independent. Hence, we can apply \Cref{lem:bnd-notbig} with $Q=10^8\sqrt{\Delta}$ as it bounds external degrees from above (by \cref{lem:structuralDecomposition}(3,5)), and get that $A_j$ does not satisfy CC for color $x$ with probability at most $\exp(-\Delta^{1/40})$. By union bound on all colors and cliques adjacent to $A_i$, the event $E_b(i)$ occurs \wp at most $\Delta^{3/2+1}\cdot \exp(-\Delta^{1/40}) \leq \exp(-\Omega(\Delta^{1/40}))$.
\end{proof}

To ensure that \cref{lem:sct}(a,b) hold even when $\Delta$ is small, we employ the shattering framework (\cref{lem:our-shattering}).

\begin{algorithm}
    \caption{Coloring Algorithm for \cref{lem:sct}\label{alg:sct-shattering}}
    Run \cref{alg:sct} on every $A_i \in A'$\;

    Let $A''$ be the set of cliques $A_i \in A'$ for which $E_a(i)$ or $E_b(i)$ occurs\;

    Uncolor every vertex in cliques $A_i \in A''$\label{line:sct-retract}\;

    Solve the following LLL using \cref{thm:deterministicLLLLOCAL}:\;\label{line:sct-post}
    
    \nonl \textbf{Random Process:} Run \cref{alg:sct} on every $A_i \in A''$ and orient monochromatic edges between $A' \setminus A''$ and $A''$ toward $A''$

    \nonl \textbf{Bad Events:} $E_a(i)$ and $E_b(i)$ for all $i\in A''$
    
\end{algorithm}

\begin{proof}[Proof of \cref{lem:sct}]
    We run \cref{alg:sct-shattering} and show that properties (a) and (b) hold after the post-shattering phase. When $\Delta \geq (\log n)^{50}$, it follows directly from \Cref{claim:sct-ev} and the union bound that $A'' = \emptyset$, and hence the algorithm ends in $O(1)$ rounds and is correct with high probability.

    We argue now that the random process in \cref{line:sct-post} of \cref{alg:sct-shattering} along with events $E_a(i)$ and $E_b(i)$ form an LLL whose dependency graph $\Gev$ has $\poly(\log n)$-sized connected components. The events $E_a(i)$ and $E_b(i)$ depend on the random choices of at most $10^9\Delta^{3/2}$ neighboring almost-cliques, hence the dependency degree of the LLL is at most $10^{18}\Delta^3$. By \Cref{claim:sct-ev}, every event occurs \wp $\Delta^{-\omega(1)}$, regardless of the coloring produced by the earlier steps of \cref{alg:sct-shattering}. Thus, the set of events described in \cref{line:sct-post} indeed describes an LLL. We next want to apply \cref{lem:our-shattering} to bound the size of connected components.

    Consider the variable graph $H$ of this random process: it has a vertex (i.e., a variable) $v_i$ for each $A_i \in A'$ and vertices $v_i$ and $v_j$ are connected iff an edge connects $A_i$ and $A_j$. The collections $\mathcal{A}$ and $\mathcal{B}$ from \cref{lem:our-shattering} are all 1-hop neighborhoods in $H$, i.e., all the $\set{v_i} \cup N_H(v_i)$ for $A_i\in A'$. Define $\cstB = 0$ so that the set $\mathcal{B}'$ in \cref{lem:our-shattering} includes the 1-hop neighborhood of every $v_i$ with $A_i \in A''$. This is a superset of $A''$ (the set of cliques for which we have an event in post-shattering), so it suffices to bound from above the sizes of the connected components induced by the events corresponding to sets of $\mathcal{B}'$.
    \cref{part:our-shattering-diam,part:our-shattering-involved} with $c_1 = 2$ and $c_2 = 1$ are direct from the choice of sets $\set{v_i} \cup N_H(v_i)$ because it has diameter at most 2 and every $v_i$ is included only in the sets of its neighbors. \cref{part:our-shattering-vbl} is direct from \cref{alg:sct}. By \Cref{claim:sct-ev}, each $A_i$ is included in $A''$ (i.e., a bad event occurred and $A_i$ uncolored its vertices) with probability at most $\exp(-\Omega(\Delta^{1/40}))$.
    The degree of a $v_i$ is at most $10^9 \Delta^{3/2}$, because every vertex in $A_i$ may be adjacent to up to $10^9\sqrt{\Delta}$ vertices in other cliques. It implies that for $\Delta$ sufficiently large, we can pick a constant $c_4 > 3c_1(4\cstB + 16) + c_2 + \cstB + 1 = O(1)$ such that each set of $\mathcal{A}$ is kept in $\mathcal{A}'$ with probability at most $\Delta(H)^{-c_4}$, i.e., \cref{part:our-shattering-pre-prob}. Therefore \cref{lem:our-shattering} applies and, \whp, we can bound the size of the largest component in the dependency graph of the LLL of \cref{line:sct-post} by some $\poly(\log n)$.
    Let us conclude the proof by proving that the resulting coloring indeed satisfies (a) and (b). For (b), it simply follows from the fact that after retractions (\cref{line:sct-retract}), the \cref{def:notbig} is preserved, and since events $E_b(i)$ are avoided in post-shattering, \cref{def:notbig} is also preserved by this coloring step (with a fresh budget). To see why (a) holds, observe that every $A \notin A''$ satisfies (a) after \cref{line:sct-retract} (otherwise $E_a(i)$ occurs and $A_i$ is included in $A''$).
    Importantly, if a monochromatic edge $\set{u,v}$ appears where $v$ was colored by \cref{line:sct-post} and $u$ belongs to a clique of $A'$ that did not participate in post-shattering (not in $A''$), we orient the edge from $u$ to $v$. So, the coloring of a clique $A_j$ that participates in the post-shattering phase can never increase the size of $\unhappy_i$ of an adjacent clique $A_i$ that does not participate in the post-shattering phase. Finally, a clique $A_i \in A''$ colored by \cref{line:sct-post} satisfies (a) at the end of the algorithm, as the event $E_a(i)$ would otherwise occur in the post-shattering LLL.
\end{proof}

\subsection{Proof of \texorpdfstring{\cref{lem:subsampling}}{Lemma~\ref{lem:subsampling}}}
\label{sec:proof-clique-subsampling}
\LemSubsampling*

\noindent
Our approach is based on the following random procedure: 
for each node $v\in \unhappy$, 
\begin{enumerate}
\item each node $u\in \swap_v$ samples itself into a set $S_v$ with probability $p = \Delta^{-23/40}$; then
\item all candidates $u\in S_v$ that are safe for $v$ w.r.t.\ $\fS = \set{ S_v: v\in \unhappy}$ join $T_v$,
\item output the safe candidate system $\fT = \set{ T_v : v\in \unhappy }$.
\end{enumerate}

\noindent
The sampling is done independently for different nodes $v, v' \in \unhappy$, that is, a node $u$ may be contained in some $S_v$ but not in $S_{v'}$. We emphasize that, when $\Delta \leq \poly(\log n)$, that process does not succeed everywhere; hence, we use (again) the shattering technique. The pre-shattering step will produces large enough sets $T_v^{pre}$ of safe candidates for most cliques. The post-shattering step then computes sets $T_v$ for the remaining cliques based on the same LLL and \cref{thm:deterministicLLLLOCAL}.

Before proving \cref{lem:subsampling}, we prove the following technical claim --- that we will use both in pre- and post-shattering --- to bound the number of unsafe candidates.
The set of cliques $A'' \subseteq A'$ will be all cliques of $A'$ in pre-shattering, and, in post-shattering, all the cliques for which we made retractions. \Cref{claim:b1Detail} shows that a set $S$ of candidates contains only a constant fraction of unsafe candidates with regard to candidates sampled by neighboring cliques.

Note that the random process in \Cref{claim:b1Detail} depends only on the outside randomness, i.e., the set $S$ is fixed deterministically. We also emphasize that the success probability depends on the degree $k$ of the outside vertices into $S$. We abuse notation slightly and write $A' \setminus A_i$ for $A' \setminus \set{A_i}$.

\begin{claim}
\label{claim:b1Detail}
    Let $A'' \subseteq A'$ be a subset of the cliques we wish to color. Suppose that every node in $Swappable_w$ for some $w \in \unhappy_i$ with $A_i \in A''$ joins the candidate set $S_w$ w.p.\ $p < 10^{-10}\Delta^{-1/2}$. Consider a fixed $v \in A_i$ and $S \subseteq \swap_v$, and call $k$ the maximum number of neighbors in $S$ that vertices $w \in A_j \in A' \setminus A_i$ can have, i.e., $k = \max_{ w \in A_j \in A' \setminus A_i } |N(w) \cap S|$. 
    Then, 
    $S$ contains fewer than $|S|/20$ unsafe candidates for $v$ w.r.t.\ $\fS = \set{S_w : w\in\unhappy}$ 
    \wp at least $1 - \exp\paren*{ - \Omega\paren*{ |S| / k } }$.
\end{claim}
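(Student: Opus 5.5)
We treat each $u\in S$ as a potential unsafe candidate. By \cref{def:badCandidate}, $u$ is unsafe for $v$ only if some \emph{witness} certifies this. A witness is a sampled event from the outside randomness of one of three forms: (i) some $w'\in S_w$ with $\gamma(w')=\gamma(u)$ for an external neighbor $w$ of $v$; (ii) an external neighbor $w$ of $v$ lies in some $S_{w'}$ with $\gamma(w')=\gamma(u)$; (iii) an external neighbor $w$ of $u$ lies in some $S_{w'}$ with $\gamma(w')=\gamma(v)$. Only cliques other than $A_i$ are involved, so the randomness is independent of the (fixed) set $S$.

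\textbf{Cases (i) and (ii).} These depend only on the external neighborhood $N_{ext}(v)$, which has size at most $10^8\sqrt{\Delta}$. Each external neighbor $w\in A_j$ produces sampled candidates in $S_w$ (when $w$ is unhappy). In the other direction, $w$ is sampled into $S_{w'}$ for the unhappy $w'$ in its own clique, of which there are $O(\sqrt{\Delta})$ by \cref{lem:sct}(a). So the total number of sampled events touching $N_{ext}(v)$ is a sum of independent Bernoullis with mean $O(\sqrt{\Delta}\cdot\Delta\cdot p)+O(\sqrt{\Delta}\cdot\sqrt{\Delta}\cdot p)$. Each such event kills at most one $u\in S$, namely the unique vertex of $A_i$ with the matching color, since $\gamma$ is injective on $A_i$. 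Hence the number of $u\in S$ killed by (i)/(ii) is at most the number of sampled events. With $p<10^{-10}\Delta^{-1/2}$, its mean is at most $10^{-1}\Delta\cdot 10^{-?}$, which is small compared to $|S|$ whenever $|S|$ is a constant fraction of $\Delta$. For small $|S|$ this needs care: restrict attention to the events whose color lands in $\gamma(S)$. Each external neighbor $w$ has $|\swap_w\cap\gamma^{-1}(\gamma(S))|\le |S|$ relevant candidates, so the mean is at most $10^8\sqrt{\Delta}\,|S|\,p\le 10^{-2}|S|$. The unhappy-$w'$ side is handled similarly. A Chernoff bound then gives fewer than $|S|/40$ such kills with probability $1-\exp(-\Omega(|S|))$.

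\textbf{Case (iii).} Here I expect the main difficulty, since the dependence on $k$ appears. Fix the color $x=\gamma(v)$. In each other clique $A_j$ there is at most one vertex $w'_j$ of color $x$. The bad event for $u$ is that some external neighbor $w\in A_j$ of $u$ is sampled into $S_{w'_j}$. Let $Y_w$ be the indicator that $w$ is sampled into $S_{w'_j}$; these are independent across $w$ with probability at most $p$. Then
\[
\#\{u\in S \text{ unsafe via (iii)}\}\;\le\;\sum_{w} Y_w\,|N(w)\cap S|.
\]
The sum ranges over external vertices $w$, each weighted by at most $k$. Its expectation is at most $p\sum_w |N(w)\cap S|\le p\,|S|\,10^8\sqrt{\Delta}\le 10^{-2}|S|$, by double counting against external degrees. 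Applying Chernoff to the scaled variables $Y_w|N(w)\cap S|/k\in[0,1]$, the sum exceeds $|S|/40$ with probability at most $\exp(-\Omega(|S|/k))$.

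A union bound over the two parts gives fewer than $|S|/20$ unsafe candidates with probability $1-\exp(-\Omega(|S|/k))$, using $k\ge1$. The points to check are the weighted Chernoff step in case (iii) and the bookkeeping that external neighbors outside cliques of $A''$ contribute nothing, since they do not sample.
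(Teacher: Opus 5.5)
Your proof is correct, and it follows the paper's skeleton: the same split into rules (i)/(ii) versus rule (iii), and the same first-moment estimate of order $10^8\sqrt{\Delta}\,p$ per candidate. The concentration steps differ.

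For (i)/(ii), the paper uses per-candidate indicators $X_u$ and argues they are independent across $u\in S$ because the candidates have distinct colors. You instead count the sampled witness bits whose relevant color lies in $\gamma(S)$. The two are essentially equivalent. Your claim that these witnesses are independent Bernoullis, each killing at most one candidate, does hold: by condition (b) of $\swap_v$, no external neighbor of $v$ carries a color of $\gamma(S)$, so no sampled pair can witness both (i) and (ii).

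The real difference is in (iii). The paper views the $Y_u$ as a read-$k$ family over the independent bits ``$w$ is sampled for the vertex colored $\gamma(v)$ in its clique'' and applies the read-$k$ inequality. You dominate the number of (iii)-unsafe candidates by the weighted sum $\sum_w Y_w\,|N(w)\cap S|$, whose independent terms lie in $[0,k]$, and apply the ordinary Chernoff bound after scaling by $1/k$.

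Your route is more elementary, and it makes the role of $k$ transparent: $k$ is simply the range of one summand. The weighted sum overcounts candidates with several sampled external neighbors. This is harmless, since only an upper tail is needed and the mean bound is the same. The read-$k$ route gives two-sided concentration of the exact count, which the claim does not need.

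One cosmetic point: delete your first, unrestricted estimate with the placeholder exponent. It only helps when $|S|=\Omega(\Delta)$, but the claim is also applied with $S=S_v$ of size $\Theta(\Delta^{17/40})$. The restriction to colors in $\gamma(S)$ is the argument actually needed.
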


\begin{proof}
Recall from \cref{def:badCandidate} that a node $u \in S$ can be an unsafe candidate for $v$ for three reasons. We first bound the number of candidates that are bad due to $(i)$ or $(ii)$. Then, we analyze the number of unsafe candidates due to $(iii)$.

\medskip\noindent
\textit{Counting unsafe candidates due to rule $(i)$ or $(ii)$:} 
For $u\in S$, let $X_u$ be the indicator random variable equal to one iff either $(i)$ or $(ii)$ in \Cref{def:badCandidate} holds w.r.t.\ the sets of candidates $\set{ S_w }_w$. 
To have $X_u = 1$ by rule $(i)$, node $v$ must have an external neighbor $w\in\unhappy$ with a candidate colored $\gamma(u)$ and sampled in $S_w$. To have $X_u = 1$ by rule $(ii)$, node $v$ must have an external neighbor in some $S_w$ where $w$ is colored $\gamma(u)$. For a given external neighbor, at most one of those can occur, each w.p.\  at most $2p$; thus, $\Exp[X_u] \leq 10^8\sqrt{\Delta} \cdot p < 1/100$ (recall that by \Cref{lem:structuralDecomposition}, node $v$ has at most $10^8\sqrt{\Delta}$ external neighbors). The $\set{ X_u }_{u\in A_i}$ are independent because each $X_u$ depends only on the randomness of nodes colored $\gamma(u)$, and all of the nodes of $A_i$ are colored differently. We apply the Chernoff Bound to the sum of the $X_u$ over $u \in S$ and obtain that the probability of having more than $|S|/20$ nodes removed is bounded from above by $\exp(- \Omega( |S| ))$.

\medskip\noindent
\textit{Counting unsafe candidates due to rule $(iii)$:} 
For $u\in S$, let $Y_u$ be the indicator random variable equal to one iff $(iii)$ in \Cref{def:badCandidate} holds for $u$ w.r.t.\ sets $\fS$. We have that $Y_u = 1$ only if there exists an external neighbor $w \in A_j \in A''$ of $u$ that was sampled in the candidate set of the node colored $\gamma(v)$ in $A_j$ (if any). Hence, by union bound, $\Exp[ Y_u ] \leq p \cdot 10^8\sqrt{\Delta} < 1/100$. Note that $Y_u$ and $Y_{u'}$ can be dependent for $u \neq u'\in S$ as $u$ and $u'$ may have the same external neighbor $w$ that is a candidate for a node colored $\gamma(v)$. However, the variables $\set{Y_u: u\in S}$ form a read-$k$ family for $k = \max_{w \in A_j \in A'' \setminus A_i } |N(w)\cap S|$, w.r.t.\ the independent boolean variables $Z_w$ equal to one iff $w$ was sampled in the candidate set of the node colored $\gamma(v)$ in its clique. Each variable $Z_w$ influences only the $Y_u$ for which the edge $\set{u,w}$ exists, which amounts to at most $k$ variables definition of $k$. Hence, the read-$k$ bound implies that
\[
\Prob*{ \sum_{u\in S} Y_u > |S|/20 } 
\leq \Prob*{ \abs*{ \sum_{u\in S} Y_u - \Exp\left[ \sum_{u\in S} Y_u \right] } > |S| / 30 }
\leq \exp\paren*{ - \Omega( |S| / k ) }
\]

\medskip\noindent
Overall, the number of unsafe candidates for $v$ in $S$ is bounded from above by the sum of $X_u + Y_u$ over $u\in S$. By union bound, the sum of the $X_u$ and the sum of the $Y_u$ are both smaller than $|S|/20$ w.p.\ at least $1 - 2\exp(- \Omega( |S| / k ) )$; hence the claim.
\end{proof}

\noindent
We are now ready to stick the analysis above together to prove \Cref{lem:subsampling}.

\paragraph{Pre-shattering.}
For each node $v\in \unhappy$, each node $u\in \swap_v$ samples itself into the set $S_v$ with probability $p = \Delta^{-23/40}$.
Denote by $\fS = \set{S_v : v\in \unhappy}$ the randomly generated collection of candidate sets.
Let us define a some bad events for every $A_i \in A'$:
\begin{enumerate}[leftmargin=1.5cm]
    \item[$B_1(i)$:] Some $v \in \unhappy_i$ has fewer than $\Delta^{17/40}/20$ safe candidates w.r.t. $\fS$;

    \item[$B_1'(i)$:] Some $w\notin A_j \in A' \setminus A_i$ has at least $\Delta^{1/10}$ neighbors in some $S_v$ where $v\in \unhappy_i$;

    \item[$B_2(i)$:] Some $u \in A_i \setminus\unhappy$ belongs to more than $\Delta^{17/40}/80$ sets in $\fS$;

    \item[$B_3(i)$:] For some uncolored $A_j \not\in A'$ adjacent to $A_i$ and color $x$, more than $\Delta^{37/40}$ vertices in $A_j$ have a neighbor outside of $A_j \cup \All_j \cup \Big_j^+$ that either has a candidate of color $x$ or is a candidate for a node of color $x$;
    
    \item[$B_4(i)$:] The set $\swap_v$ for some $v\in \unhappy_i$ contains $\Delta/20$ unsafe candidates w.r.t. $\fS$.
\end{enumerate}

\noindent
It should be clear that those events depend on the colors and random decisions within $O(1)$ distance. Let us now argue that they occur with probability at most $\exp(-\Delta^{1/40})$.

\begin{claim}
\label{claim:b1}
    Event $B_1(i)$ and $B_1'(i)$ occur with probability at most $\exp(-\Delta^{1/40})$.
\end{claim}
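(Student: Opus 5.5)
The plan is to handle $B_1'(i)$ first and then use it to control the dependencies needed for $B_1(i)$.

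\textbf{Event $B_1'(i)$.} Fix $v\in\unhappy_i$ and a vertex $w$ outside $A_i$. The number of neighbors of $w$ in $S_v$ is a sum of independent indicators, one for each $u\in N(w)\cap\swap_v$, and each indicator has probability $p=\Delta^{-23/40}$. Since $|N(w)\cap A_i|\le 10^9\Delta$, the expectation is at most $10^9\Delta^{17/40}$. This is larger than the threshold $\Delta^{1/10}$, so a plain Chernoff bound is not enough. The first thing to try is to use that such a $w$ lies in a clique $A_j\in A'$, i.e.\ $w$ is an external neighbor of vertices of $A_i$. Reading $B_1'(i)$ in this way, I would use the bound $O(\sqrt\Delta)$ on external degrees, from \cref{lem:structuralDecomposition}(3,5), together with a counting argument: the vertices of $A_j$ adjacent to many vertices of $A_i$ are few. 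I expect that $w$ has at most $O(\Delta^{9/10})$ neighbors in $A_i$, since otherwise $w\in\Big_i^+$ or $w\in\All_i$, which are not clique vertices. Then $\E|N(w)\cap S_v|\le O(\Delta^{9/10}\cdot\Delta^{-23/40})=O(\Delta^{13/40})$, which still exceeds $\Delta^{1/10}$. So some threshold or parameter slack has to be located. If the intended statement is only that $w$ has at most $\Delta^{1/10}$ neighbors in $S_v$ \emph{conditioned on typical behavior}, the fallback is a Chernoff tail of the form $\Pr[X\ge t]\le 2^{-t}$ for $t\ge 6\E X$. This gives failure probability $\exp(-\Omega(\Delta^{1/10}))$ once the threshold is set above $6\E X$. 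A union bound over at most $\Delta$ choices of $v$ and $\poly(\Delta)$ choices of $w$ then yields $\exp(-\Delta^{1/40})$.

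\textbf{Event $B_1(i)$.} Fix $v\in\unhappy_i$. First, $|S_v|$ is a binomial variable with mean $p|\swap_v|\ge \Delta^{17/40}/10$ by \cref{lem:swappable}. By Chernoff, $|S_v|\ge\Delta^{17/40}/15$ except with probability $\exp(-\Omega(\Delta^{17/40}))$. Next, condition on $S_v=S$. The unsafe status of candidates in $S$ is determined only by the samplings of other cliques, which are independent of $S_v$. So \cref{claim:b1Detail} applies with this fixed $S$; its hypothesis $p<10^{-10}\Delta^{-1/2}$ holds because $23/40>1/2$. It shows that fewer than $|S|/20$ elements of $S$ are unsafe, with failure probability $\exp(-\Omega(|S|/k))$. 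Here $k$ is the maximum number of neighbors in $S$ of an outside clique vertex. On the complement of $B_1'(i)$ we have $k<\Delta^{1/10}$, and then the failure probability is $\exp(-\Omega(\Delta^{17/40-1/10}))=\exp(-\Omega(\Delta^{13/40}))$. On that event, $v$ has at least $(19/20)|S|\ge\Delta^{17/40}/20$ safe candidates. Finally, a union bound over the at most $\Delta$ vertices $v\in\unhappy_i$, adding $\Pr[B_1'(i)]$, gives total probability at most $\exp(-\Delta^{1/40})$ for $\Delta$ large.

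The main obstacle is the step for $B_1'(i)$: controlling $k$, the maximal overlap between a single external vertex's neighborhood and a sampled candidate set. This needs the structural bound that external clique vertices see few vertices of $A_i$, and the exponent bookkeeping with $p=\Delta^{-23/40}$ must be checked carefully.
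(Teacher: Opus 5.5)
\textbf{There is a gap, and it is in $B_1'(i)$.} Your treatment of $B_1(i)$ is fine and matches the paper. You apply Chernoff to $|S_v|$, condition on $S_v$ using the independence from other cliques' sampling, and apply \cref{claim:b1Detail} with $k<\Delta^{1/10}$ on $\overline{B_1'(i)}$.

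For $B_1'(i)$, however, you never establish the claim with the stated threshold $\Delta^{1/10}$. You bound $|N(w)\cap A_i|$ only by $2\Delta^{9/10}$, using $w\notin\Big_i^+\cup\All_i$. That gives $\E|N(w)\cap S_v|=O(\Delta^{13/40})$, which is far above $\Delta^{1/10}$. You then suggest moving the threshold above $6\E X$. That changes the event $B_1'(i)$ rather than proving the claim.

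\textbf{The missing observation.} The vertex $w$ is itself a clique vertex, $w\in A_j\in A'\setminus\{A_i\}$. This is the correct reading of $B_1'(i)$, as you guessed; it is also how \cref{claim:b1Detail} defines $k$. So the external-degree bound applies to $w$ directly. By \cref{lem:MR-structural-decomposition}(b), every vertex of $A_j$ has at most $10^8\sqrt{\Delta}$ neighbors in $F-A_j$; equivalently, use \cref{lem:structuralDecomposition}(3,5), noting that $A_i$ is disjoint from $A_j\cup\All_j$ because $\All_j\subseteq B$. Hence $|N(w)\cap A_i|\le 10^8\sqrt{\Delta}$, and
$\E|N(w)\cap S_v|\le 10^8\Delta^{1/2-23/40}=10^8\Delta^{-3/40}=o(1)$.

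\textbf{Completing the bound.} Now the tail bound you already mention applies: $\Pr[X\ge t]\le 2^{-t}$ for $t\ge 6\E X$. With $t=\Delta^{1/10}$ it gives failure probability $\exp(-\Omega(\Delta^{1/10}))$ for each pair $(v,w)$. There are $O(\sqrt{\Delta})$ choices of $v\in\unhappy_i$ and $O(\Delta^{3/2})$ clique vertices $w$ adjacent to $A_i$. A union bound over these pairs yields $\Pr[B_1'(i)]\le\exp(-\Delta^{1/40})$.

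No counting argument about how many vertices of $A_j$ see many vertices of $A_i$ is needed. What matters is the per-vertex bound on $w$'s neighbors outside its own clique, and that is exactly what the structural decomposition provides.
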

\begin{proof}
We argue about a given $v\in \unhappy_i$. The claim then follows by union bound over all such $v$.

We first argue that the probability on the randomness of $S_v$ that $S_v$ either (1) contains fewer than $\Delta^{17/40}/20$ nodes, or (2) has $\Delta^{1/10}$ edges to some $w\in A_j \in A' \setminus A_i$ is small. 
Fix a node $v\in A_i$, and recall that $|\swap_v|\geq \Delta/10$ (\Cref{lem:swappable}). Hence, the expected size of $S_v$ is $p|\swap_v|=\Delta^{17/40}/10$. By Chernoff, we obtain that $|S_v|$ has fewer than $\Delta^{17/40}/19$ w.p.\ at most $\exp(-\Omega(\Delta^{17/40}))$. Meanwhile, a node $w\notin A_j\in A' \setminus A_i$ has at most $10^8\sqrt{\Delta}$ neighbors in $A_i$ and thus $o(1)$ expected neighbors in $S_v$. By Chernoff, node $w$ has more than $\Delta^{1/10}$ neighbors in $S_v$ w.p.\ at most $\exp(-\Omega( \Delta^{1/10}))$. By union bound on all $v\in\unhappy_i$ and $w$ adjacent to $A_i$, we get that $\Prob{ B_1'(i) } \leq 10^{16}\Delta^{2} \exp\paren*{ -\Omega(\Delta^{1/10}) } \leq \exp\paren*{ - \Delta^{1/40} }$. 

To bound the probability of $B_1(i)$, it suffices to bound the probability of $B_1(i) \cap \overline{ B_1'(i) } \cap \overline{E(i)}$, where $E(i)$ is the event that some $S_v$ with $v\in \unhappy_i$ contains fewer than $\Delta^{17/40}/19$ nodes. Consider any given realization of the sets $S_v$ for $v\in \unhappy_i$ such that $\overline{ B_1'(i) } \cap \overline{E(i)}$ holds; then the random process is the one described in \Cref{claim:b1Detail} with $A'' = A' \setminus A$ and $S = S_v$ for some fixed $v\in \unhappy_i$. Since $S_v$ is such that $\overline{B_1'(v)} \cap \overline{E(i)}$ holds, it verifies the assumption of \Cref{claim:b1Detail} with $k = \Delta^{1/10}$, and thus $S_v$ contains at most $|S_v|/20$ unsafe candidates. Under $\overline{E(i)}$, it implies that $S_v$ contains at least $\Delta^{17/40}/20$ safe candidates. By union bound, the probability (on the randomness of $S_w$ for $w \notin \unhappy \setminus \unhappy_i$) that some $v$ has too few safe candidates is 
\[ |\unhappy_i| \exp\paren*{ - \Omega\paren*{ \frac{ \Delta^{17/40} }{ \Delta^{1/10 } } } } \le \exp(-\Omega( \Delta^{13/40} )) \ . \]
Since this holds for any realization of $\set{S_v: v\in \unhappy_i}$ where $\overline{B_1'(i)} \cap \overline{E(i)}$ holds, we get a bound on the probability of $B_1(v)$ as
$
\Prob{ B_1(i) } 
\leq \Prob{ B_1(i) \cap \overline{B_1'(i)} \cap \overline{E(i)} } + \Prob{ B_1'(i) } + \Prob{ E(i) }
\leq \exp( - \Delta^{1/40}  )$.
\end{proof}

\begin{claim}
\label{claim:b2}
    Event $B_2(i)$ occurs with probability at most $\exp(-\Delta^{1/40})$.
\end{claim}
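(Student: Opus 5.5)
Fix $A_i \in A'$ and a vertex $u \in A_i \setminus \unhappy_i$; at the end we take a union bound over the at most $\Delta$ choices of $u$. By construction, $u$ belongs to $S_v$ (for $v \in \unhappy_i$) only if $u \in \swap_v$. Each such membership is decided by an independent coin with success probability $p = \Delta^{-23/40}$. The sampling is independent across different unhappy nodes $v$, so the indicators $\I{u \in S_v}$ for $v \in \unhappy_i$ are mutually independent Bernoulli$(p)$ variables (or identically zero when $u\notin\swap_v$). Let $Z_u = \sum_{v\in\unhappy_i}\I{u\in S_v}$. I read ``sets in $\fS$'' in $B_2(i)$ as sets $S_v$ with $v\in\unhappy_i$, matching \cref{lem:subsampling}(2). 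Membership in a set $S_w$ with $w$ in another clique is impossible anyway, because $\swap_w \subseteq A_j$ for $w \in A_j$.

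By \cref{lem:sct}(a) we have $|\unhappy_i| \le (10^8+1)\sqrt{\Delta}$, which gives
\[
\E[Z_u] \le (10^8+1)\sqrt{\Delta}\cdot \Delta^{-23/40} = (10^8+1)\Delta^{-3/40} = o(1),
\]
while the threshold is $t = \Delta^{17/40}/80$. We are asking for a deviation to a value polynomially larger than the mean. For this I would use the Chernoff tail bound in the form $\Pr[Z \ge t] \le 2^{-t}$ whenever $t \ge 6\E[Z]$, which follows from the multiplicative Chernoff bound for sums of independent Bernoulli variables. Alternatively, a direct union bound over $t$-subsets of $\unhappy_i$ works: it gives $\binom{|\unhappy_i|}{t}p^t \le (e|\unhappy_i|p/t)^t$. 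Either way, $\Pr[Z_u > \Delta^{17/40}/80] \le \exp(-\Omega(\Delta^{17/40}))$.

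A union bound over the at most $|A_i|\le\Delta$ vertices $u$ then gives $\Pr[B_2(i)] \le \Delta\exp(-\Omega(\Delta^{17/40})) \le \exp(-\Delta^{1/40})$ for $\Delta$ large. I expect no step to be a real obstacle. The only point needing care is checking that the coins for distinct $v$ really are independent, so that Chernoff applies. The bound also holds conditioned on any realization of $\gamma$ and of the sets $\swap_v$, since these are fixed before the sampling; this is what lets the same argument serve the post-shattering phase as well.
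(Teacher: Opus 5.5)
Your proof is correct and follows the paper's argument. The paper likewise observes that a vertex lies in only $\Delta^{-23/40}|\unhappy_i| = o(1)$ sets in expectation and concludes with a Chernoff bound; you just make explicit the details it leaves implicit: independence of the coins across distinct $v$, the large-deviation form of Chernoff, and the final union bound over $u \in A_i$.
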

\begin{proof}
    In expectation, a vertex is sampled into  $\Delta^{-23/40}|\unhappy_i|=o(1)$ sets of $\fS$. The claim follows via a Chernoff bound.
\end{proof}
\begin{claim}
\label{claim:b3}
    Event $B_3(i)$ occurs with probability at most $\exp(- \Omega(\Delta^{1/40}) )$
\end{claim}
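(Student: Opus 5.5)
The plan is to reduce $B_3(i)$ to \cref{lem:bnd-notbig}, exactly as in \cref{lem:rct-cc-post-prob} and \cref{claim:sct-ev}, and then take a union bound over the at most $\poly(\Delta)$ uncolored cliques $A_j\notin A'$ adjacent to $A_i$ and the at most $\Delta$ colors $x$.

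Fix such an $A_j$ and a color $x$. For each $y\in A_j$, define the set $N_y$ of external neighbors of $y$ outside $A_j\cup\All_j\cup\Big_j^+$. By \cref{part:color-cliques-uncolored} of \cref{lem:coloringCliques}, each $N_y$ has size at most $Q:=30\Delta^{1/4}$. Since vertices of $\Big_j^+$ are excluded, no vertex lies in more than $2\Delta^{9/10}$ sets, and there are $|A_j|\le\Delta$ sets. Restrict attention to vertices $w$ lying in cliques of $A'$, since only these can have or be candidates. Call $w$ \emph{marked} if either $w$ has a candidate of color $x$, or $w$ is a candidate for the (unique, if any) vertex of color $x$ in its own clique.

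Next I would bound the marking probability. Let $A_\ell$ be the clique of $w$. The event ``$w$ has a candidate of color $x$'' requires $w\in\unhappy_\ell$ and the unique vertex of color $x$ in $A_\ell$ to be sampled into $S_w$, which happens with probability at most $p$. The event ``$w$ is a candidate for the color-$x$ vertex'' is a single sampling event, also with probability at most $p$. So $w$ is marked with probability at most $2p=2\Delta^{-23/40}$. We need $2p\le 1/(Q\Delta^{1/5})=\Delta^{-9/20}/30$, and indeed $23/40>18/40$, so this holds for large $\Delta$.

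For property (P7.1), the marking of $w$ is determined by sampling coins indexed by pairs (unhappy node, swappable node) within $w$'s clique, and distinct $w$'s use disjoint coins. The marks are therefore independent, so $\ell$ given vertices are all marked with probability at most $(2p)^\ell$. Then \cref{lem:bnd-notbig} gives failure probability at most $\exp(-\Delta^{1/40})$ per pair $(A_j,x)$. The union bound over $\le 10^{9}\Delta^{3/2}\cdot\Delta$ pairs yields $\exp(-\Omega(\Delta^{1/40}))$.

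The main subtlety is that the sets $\fS$ are taken over the unhappy sets $\unhappy$ and over $\swap_v$, which are themselves determined by $\gamma$. I would condition on $\gamma$ being fixed after Step 1, so that only the sampling coins are random; the resulting bound holds for every $\gamma$.
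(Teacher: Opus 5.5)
Your proposal is correct and follows essentially the same route as the paper: fix an uncolored $A_j\notin A'$ adjacent to $A_i$ and a color $x$, and apply \cref{lem:bnd-notbig} with $Q=30\Delta^{1/4}$ (from the external-degree assumption of \cref{lem:coloringCliques}), using that each external vertex is marked independently with probability $O(\Delta^{-23/40})\le 1/(Q\Delta^{1/5})$; then take a union bound over the $\poly(\Delta)$ pairs $(A_j,x)$. Your explicit checks that distinct vertices use disjoint sampling coins (giving (P7.1)) and that $23/40>9/20$ fill in details the paper leaves as ``easy to verify''.
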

\begin{proof}
    Fix a clique $A_j$ (either in $A'$ or uncolored) and a color $x$. For a fixed external neighbor $u\in A_\ell \neq A_j$, the vertex with color $x$ in $u$'s clique (if any) gets sampled into $u$'s candidate set $S_u$ or samples $u$ into its candidate set with probability at most $p \leq \Delta^{-23/40}$. On the other hand, by assumption in \cref{lem:coloringCliques}, the external degrees of uncolored cliques not in $A'$ is at most $30\Delta^{1/4}$ (recall that in this step, we do not need to guarantee that CC is maintained by cliques of $A'$, in particular, cliques of $A_H$ do not need to verify \cref{lem:subsampling}(3)). And since the sampling is independent, it is easy to verify that (P7.1) holds with $Q = 30\Delta^{1/4}$, thus that $B_3(i)$ occurs because of clique $A_j$ with probability at most $\exp(-\Delta^{1/40})$ by \cref{lem:bnd-notbig}. By union bound on the colors and the at most $10^8\Delta^{3/2}$ cliques adjacent to $A_i$, we obtain the desired probability bound.
\end{proof}

\begin{claim}
    Event $B_4(i)$ holds w.p. at most $\exp(-\Omega(\Delta^{1/2}))$.
\end{claim}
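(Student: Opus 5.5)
The plan is to apply \cref{claim:b1Detail} directly, with $S = \swap_v$ for each fixed $v \in \unhappy_i$, and then take a union bound over the at most $(10^8+1)\sqrt{\Delta}$ unhappy vertices of $A_i$ (\cref{lem:sct}(a)). The sampling probability $p = \Delta^{-23/40}$ is well below $10^{-10}\Delta^{-1/2}$ for $\Delta$ large, so the hypothesis of \cref{claim:b1Detail} on $p$ holds. We take $A'' = A'$, since in pre-shattering every clique of $A'$ samples.

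\textbf{Step 1: size of $S$.} By \cref{lem:swappable}, $|\swap_v| \ge \Delta/10$. If $|S|$ were larger, this would only help, because the conclusion of the claim is stated relative to $|S|$. To get exactly the threshold $\Delta/20$, I would restrict to a fixed subset $S\subseteq\swap_v$ of size $\lceil\Delta/10\rceil$ only if needed. Cleaner still is to apply the claim to all of $\swap_v$: it yields fewer than $|\swap_v|/20$ unsafe candidates in $\swap_v$. I expect $|\swap_v| \le |A_i| \le c \le \Delta$, so $|\swap_v|/20 \le \Delta/20$, which gives exactly $\overline{B_4(i)}$ for this $v$.

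\textbf{Step 2: the read parameter $k$.} \cref{claim:b1Detail} gives failure probability $\exp(-\Omega(|S|/k))$, where $k$ is the maximum number of neighbors in $S$ of any vertex $w$ in another clique of $A'$. Any such $w$ is an external neighbor of the vertices of $A_i$ it is adjacent to. We need a bound on how many vertices of $A_i$ it can see. Here I use that $w \notin \All_i$ (because $w$ lies in a clique) and that $w \notin \Big_i^+$ (because $\Big_i^+ \subseteq B_L\cup B_H$). The definition of $\Big_i^+$ in \cref{lem:structuralDecomposition}(c) then gives that $w$ has fewer than $2\Delta^{9/10}$ neighbors in $A_i$, hence $k < 2\Delta^{9/10}$. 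Consequently $|S|/k \ge (\Delta/10)/(2\Delta^{9/10}) = \Omega(\Delta^{1/10})$.

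\textbf{Step 3 and the main obstacle.} The remaining issue is the gap between this rate and the stated rate $\exp(-\Omega(\Delta^{1/2}))$. If only the $\Big_i^+$ bound is used, the union bound over $v$ gives $O(\sqrt{\Delta})\exp(-\Omega(\Delta^{1/10}))$. That is still $\le \exp(-\Delta^{1/40})$, which suffices for the shattering application, but it is weaker than the claim as stated. To reach $\Delta^{1/2}$, I would try to sharpen $k$ to $O(\sqrt{\Delta})$ using that $w$ itself lies in a clique $A_j$. Then every neighbor of $w$ in $A_i$ is an external neighbor of $w$. By \cref{lem:structuralDecomposition}(3,5), $w$ has at most $10^8\sqrt{\Delta}$ neighbors outside $A_j\cup\All_j$, and $A_i\cap \All_j=\emptyset$ since $\All_j\subseteq B$. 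Hence $k \le 10^8\sqrt{\Delta}$, so $|S|/k = \Omega(\sqrt{\Delta})$, and the union bound over the $O(\sqrt{\Delta})$ choices of $v$ costs only a lower-order factor. The step to check carefully is that the read-$k$ parameter in \cref{claim:b1Detail} ranges only over vertices of cliques of $A'$, the $Z_w$ variables. That is how the claim defines $k$, so this sharper bound should go through.
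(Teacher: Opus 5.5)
Your proposal is correct and, once you sharpen $k$ in Step 3, it is essentially the paper's proof. You apply the unsafe-candidate claim with $S=\swap_v$ and $k\le 10^8\sqrt{\Delta}$, coming from the external-degree bound of the neighboring clique, combine it with $|\swap_v|\ge \Delta/10$, and take a union bound over the $O(\sqrt{\Delta})$ unhappy vertices of $A_i$. The $\Big_i^+$ detour is unnecessary, and using $|\swap_v|\le |A_i|<\Delta$ to pass from $|\swap_v|/20$ to $\Delta/20$ is the right way to close a step the paper leaves implicit. Your alternative of restricting to a subset of $\swap_v$ would not work, since it says nothing about unsafe candidates in the rest of $\swap_v$.
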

\begin{proof}
    Fix some $v \in \unhappy_i$. Recall that every $w\in A_j \in A' \setminus A_i$ has at most $k := 10^8\sqrt{ \Delta }$ external neighbors, thus at most $k$ neighbors in $\swap_v$.
    Hence \Cref{claim:b1Detail} with $S = \swap_v$ and $k$ as above implies that $\swap_v$ contains more than $|\swap_v|/20$ unsafe candidates w.r.t.\ sets $\set{ S_w : w \in \unhappy }$ with probability at most $\exp(-\Omega(|\swap_v|/k)) \leq \exp( -\Omega(\Delta^{1/2}))$, where the last inequality uses that $|\swap_v| \geq \Delta/10$ (\cref{lem:swappable}). The claim follows by union bound on all $v\in \unhappy_i$.
\end{proof}

\paragraph{Retractions \& Candidates $T^{pre}_v$.}
If $\Delta \geq (\log n)^{50}$, w.h.p., none of the bad events occur and we are done. We henceforth assume that $\Delta \leq (\log n)^{50}$. To find enough safe candidates even in cliques where bad events occur, we retract some sets of candidates.
Let $A'' \subseteq A'$ be the set of cliques $A_i\in A'$ for which
\begin{enumerate}
    \item some bad event $B_j(i)$ for $j\in[4]$ occurred, or
    \item $A_i$ is incident to some $A_{i'}$ for which a bad event $B_j(i')$ occurred.
\end{enumerate}
For all the cliques $A_i \in A' \setminus A''$ and $v\in\unhappy_i$, let $T^{pre}_v$ be the subset of $S_v$ containing all the safe candidates w.r.t.\ the sets $S_v$ sampled during pre-shattering.
Recall that removing candidates does not turn safe candidates into unsafe ones; hence the sets $T^{pre}_v$ contain only safe candidates w.r.t.\ $\fT^{pre} = \set{T_v^{pre}: v\in\unhappy_i, A_i \in A' \setminus A''}$. We say that the sets $S_v$ for $v\in A_i \in A''$ were retracted as we henceforth ignore them. 

\paragraph{Post-shattering.}
To compute candidates for sets of $A''$ we consider the LLL induced by the same random sampling process as before except that it runs only in $A''$: every vertex $u\in \swap_v$ for $v\in \unhappy_i$ with $A_i \in A''$ samples itself into a set $S_v'$ with probability $p = \Delta^{-23/40}$. We call the sets of candidates $S_v'$ rather than $S_v$ to emphasize that they were sampled during post-shattering. The random collection of candidate sets thereby produced is called $\fS'$. Consider the following bad events: for each $A_i \in A''$ or $A_i$ adjacent to some $A_j \in A''$,
\begin{enumerate}[leftmargin=1.5cm]
    \item[$P_1(i)$:] Some $v\in \unhappy_i$ has fewer than $\Delta^{17/40}/40$ safe candidates w.r.t. $\fS' \cup \fT^{pre}$;
\end{enumerate}
and for each $A_i \in A''$, let
\begin{enumerate}[leftmargin=1.5cm]
    \item[$P_2(i)$:] Some $u \in A_i \setminus \unhappy_i$ is a candidate for more than $\Delta^{17/40}/80$ nodes in $\unhappy_i$;

    \item[$P_3(i)$:] For some uncolored $A_j \not\in A'$ adjacent to $A_i$ and color $x$, more than $2 \cdot \Delta^{37/40}$ vertices in $A_j$ have a neighbor outside of $A_j \cup \All_j \cup \Big_j^+$ that either has a candidate of color $x$ or is a candidate for a node of color $x$.
\end{enumerate}

\noindent
We emphasize that the safety of the candidates is w.r.t.\ $\fT^{pre}$ obtained from pre-shattering and $\fS'$ sampled in post-shattering.
Let us first argue that $P_1(i)$, $P_2(i)$ and $P_3(i)$ are rare events even in the presence of the fixed pre-shattering sets $S_v$.

\begin{claim}
    For $A_i \in A''$ or $A_i$ adjacent to some $A_j \in A''$, the event $P_1(i)$ occurs w.p.\ at most $\exp(- \Omega( \Delta^{1/40}) )$.
\end{claim}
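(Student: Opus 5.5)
We have to show that $P_1(i)$ is rare for every $A_i \in A''$ and also for every $A_i$ adjacent to some clique of $A''$. We split into these two cases and reduce each to \cref{claim:b1Detail}. There, the randomness is only the post-shattering sampling $\fS'$ in the cliques of $A''$. The sets $\fT^{pre}$ are fixed, and we treat them as an adversarial but already safe background. Safety of a candidate $u$ for $v$ is defined by rules (i)--(iii) of \cref{def:badCandidate}. So $u$ can become unsafe w.r.t.\ $\fS' \cup \fT^{pre}$ for two reasons: it was already unsafe w.r.t.\ $\fT^{pre}$, or it becomes unsafe because of the newly sampled sets $S'_w$ with $w$ in cliques of $A''$.

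\textbf{Case 1: $A_i \in A''$.} Fix $v \in \unhappy_i$. We first control the part coming from $\fT^{pre}$. We would like $\swap_v$ to contain few candidates that are unsafe w.r.t.\ $\fT^{pre}$. For cliques adjacent to $A_i$ that are not retracted, $B_4$ did not occur for the relevant cliques, since otherwise $A_i$'s neighbours would have been in $A''$. Concretely, I expect to use that $\overline{B_4(i)}$ or the corresponding bound holds with respect to the subsets $T^{pre} \subseteq S$. Since removing candidates only removes unsafety, at most $\Delta/20$ elements of $\swap_v$ are unsafe w.r.t.\ $\fT^{pre}$. If $B_4(i)$ itself failed, this argument breaks. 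In that case I would instead reapply \cref{claim:b1Detail} to the fixed family $\fT^{pre}$ deterministically, which is the delicate spot discussed below. Let $W\subseteq\swap_v$ be the remaining candidates; then $|W| \ge \Delta/10-\Delta/20 = \Delta/20$.

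Next we sample. $S'_v$ samples $W$ with probability $p$, so by Chernoff $|S'_v\cap W| \ge \Delta^{17/40}/21$ except with probability $\exp(-\Omega(\Delta^{17/40}))$. As in \cref{claim:b1}, with failure probability $\exp(-\Omega(\Delta^{1/10}))$ every outside vertex has at most $\Delta^{1/10}$ neighbours in $S'_v$. Conditioned on these sets, we apply \cref{claim:b1Detail} to $S = S'_v\cap W$, with $A''$ the retracted cliques and $k=\Delta^{1/10}$. At most $|S|/20$ elements of $S$ become unsafe due to $\fS'$, with failure probability $\exp(-\Omega(\Delta^{13/40}))$. This leaves at least $\Delta^{17/40}/40$ safe candidates. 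A union bound over $v\in\unhappy_i$ then gives the bound of $\exp(-\Omega(\Delta^{1/40}))$.

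\textbf{Case 2: $A_i\notin A''$ but adjacent to $A''$.} Here $T_v = T^{pre}_v$ is fixed. Because $A_i\notin A''$, no bad event occurred at $A_i$, so $\overline{B_1(i)}$ holds. Hence $T^{pre}_v$ contains at least $\Delta^{17/40}/20$ candidates that are safe w.r.t.\ $\fS$, and therefore also w.r.t.\ $\fT^{pre}$. Similarly $\overline{B_1'(i)}$ holds, so every outside vertex has fewer than $\Delta^{1/10}$ neighbours in $T^{pre}_v$. Applying \cref{claim:b1Detail} with $S=T^{pre}_v$ and $k=\Delta^{1/10}$, fewer than $|S|/20$ of these candidates become unsafe due to $\fS'$, with failure probability $\exp(-\Omega(\Delta^{13/40}))$. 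This leaves at least $\Delta^{17/40}/40$ safe candidates. We again take a union bound over $v\in \unhappy_i$.

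\textbf{Main obstacle.} The hardest step is Case 1: we must justify that $\fT^{pre}$ creates only few unsafe candidates in $\swap_v$ when $A_i$ is itself retracted. The probability statement of \cref{claim:b1Detail} was obtained for random pre-shattering sets, so it does not directly apply to a fixed family. I would try to use the events $B_4(j)$ of the non-retracted neighbouring cliques, whose candidate sets produce the unsafety. Each unsafe rule refers to a candidate set $T_w$ of an external neighbour's clique. That clique is not retracted, so its $B_4$ and $B_1'$ guarantees limit how many elements of $\swap_v$ it can spoil. If this accounting fails, the fallback is to weaken the target to a smaller constant fraction, which still leaves at least $\Delta^{17/40}/40$ candidates after sampling.
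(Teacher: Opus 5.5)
Your Case~2 and the $\overline{B_4(i)}$ half of Case~1 match the paper's proof. In the latter you use monotonicity of unsafety to get at least $\Delta/20$ candidates in $\swap_v$ that are safe w.r.t.\ $\fT^{pre}$, then sample and apply \cref{claim:b1Detail} with $k=\Delta^{1/10}$. The gap is the other subcase of Case~1, where $A_i\in A''$ because $B_4(i)$ occurred; you leave it open, and your suggested remedies do not close it.
\begin{itemize}
\item \cref{claim:b1Detail} is a statement about fresh, independent sampling. It says nothing about an already realized family. In this subcase the realized pre-shattering sample is exactly one for which some $\swap_v$ had at least $\Delta/20$ unsafe candidates w.r.t.\ $\fS$, so no ``deterministic'' reapplication can hold.
\item The events $B_4(j)$ and $B_1'(j)$ of neighbouring cliques bound unsafety in $A_j$'s own $\swap$ sets. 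They do not bound how much $A_j$'s candidate sets spoil $\swap_v$.
\item Weakening the target constant does not help, because along your route you have no bound at all on the damage.
\end{itemize}

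The missing observation is clause~2 in the definition of $A''$. If $B_4(i)$ occurs, then every clique adjacent to $A_i$ also lies in $A''$, so none of them contributes a set to $\fT^{pre}$. Each of rules (i)--(iii) in \cref{def:badCandidate} refers to a candidate set $T_w$, or a membership $w\in T_{w'}$, where $w$ is an external neighbour of $v$ or of $u$. Hence these rules only involve cliques adjacent to $A_i$. So no element of $\swap_v$ is unsafe w.r.t.\ $\fT^{pre}$, and you may take $W=\swap_v$ with $|W|\ge \Delta/10$. Your sampling argument, which is the argument of \cref{claim:b1}, then goes through unchanged; this is how the paper handles the subcase. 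Your own sentence ``since otherwise $A_i$'s neighbours would have been in $A''$'' is this observation in contrapositive form; you only needed to apply it to the case you thought was broken. With it, $|W|\ge\Delta/20$ holds in both subcases and the rest of Case~1 is uniform.
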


\begin{proof}
    Suppose first that $A_i \in A''$. If $B_4(i)$ holds, then $A_i$ and all the adjacent cliques belong to $A''$. Hence the same argument as in \Cref{claim:b1} (for $B_1(i)$) implies that $P_1(i)$ holds with probability at most $\exp(-\Omega(\Delta^{1/40}))$. Otherwise, suppose that $\overline{ B_4(i) }$ holds, hence for all $v\in\unhappy_i$, the sets $\swap_v$ contain at least $\Delta/20$ safe candidates w.r.t.\ $\fT^{pre}$. We follow the same analysis as for \Cref{claim:b1}, but we only have half as many vertices to pick from. For a fixed $v\in \unhappy_i$, w.p. at least $1 - \exp(-\Omega(\Delta^{17/40}))$, at least $\Omega(\Delta^{17/40})$ nodes $u\in \swap_v$ that are safe candidates w.r.t.\ $\fT^{pre}$ and get sampled into $S_v'$. Also, w.p. at least $1 - 10^8\Delta^{3/2}\exp(-\Omega( \Delta^{1/10} ))$, every $w \in A_j \in A'' \setminus A_i$ has fewer than $\Delta^{1/10}$ neighbors in $S_v'$. Now, \Cref{claim:b1Detail} with $S$ the set of safe candidates (w.r.t. $\fT^{pre}$) in $S_v'$ and $k = \Delta^{1/10}$ implies that w.p.\ $1 - \exp( - \Omega(\Delta^{17/40}/k) ) \geq 1 - \exp(-\Omega(\Delta^{1/40}))$. Overall, when $A_i \in A''$, the event $P_1(i)$ holds w.p. at most $\exp( -\Omega( \Delta^{1/40} ) )$.

    \medskip\noindent
    Suppose now that $A_i \notin A''$ but is adjacent to some $A_j\in A''$. Fix some $v\in \unhappy_i$. Since $A_i \notin A''$, the event $B_1'(i)$ does not hold, thus every $w$ in some $A_j \in A''$ has fewer than $\Delta^{1/10}$ neighbors in $T^{pre}_v$. By \Cref{claim:b1Detail} with $S = T_v^{pre}$, the set $T_v^{pre}$ contains at most $|T_v^{pre}|/20$ unsafe candidates w.r.t.\ $\fS'$ w.p. at least $1 - \exp(-\Omega(\Delta^{17/40}/\Delta^{1/10})) = 1 - \exp(-\Omega(\Delta^{13/40}))$. Since $T_v^{pre}$ contained at least $\Delta^{17/40}/20$ nodes (otherwise $B_1(i)$ would hold and $A_i\in A''$), the set $T_v^{pre}$ contains at least $(1 - 1/20)\Delta^{17/40}/20 \geq \Delta^{17/40}/40$ safe candidates w.r.t. sets $\fT^{pre} \cup \fS'$ with probability at least $1 - \exp(-\Omega( \Delta^{13/40} ))$. 
    The claim follows by union bound over all $v\in\unhappy_i$.
\end{proof}

The bound on the probability of $P_2(i)$ and $P_3(i)$ follows the same argument as in pre-shattering. Note that we allow the post-shattering phase to have a fresh budget.

\begin{claim}
    Let $A_i\in A''$. Then $P_2(i)$ and $P_3(i)$ each occur w.p.\ at most $\exp(-\Delta^{1/40})$.
\end{claim}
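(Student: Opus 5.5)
The plan is to mirror the pre-shattering arguments of \Cref{claim:b2,claim:b3}. The post-shattering sampling only changes which candidate sets are fresh: the sets $S'_v$ for $v\in\unhappy_i$ with $A_i\in A''$ are sampled independently with the same probability $p=\Delta^{-23/40}$. The fixed sets $\fT^{pre}$ either do not enter the events at all, or are covered by the fresh budget.

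\textbf{Bounding $P_2(i)$.} Fix $A_i\in A''$ and $u\in A_i\setminus\unhappy_i$. Only the post-shattering sets $S'_v$ with $v\in\unhappy_i$ are relevant, since all pre-shattering sets of $A_i$ were retracted. The indicators $\I{u\in S'_v}$ over $v\in\unhappy_i$ are independent with mean at most $p$. By \Cref{lem:sct}(a), $|\unhappy_i|\le(10^8+1)\sqrt{\Delta}$, so the expected number of such sets containing $u$ is at most $(10^8+1)\Delta^{1/2-23/40}=o(1)$. A Chernoff bound in the large-deviation regime, with threshold $\Delta^{17/40}/80$ far exceeding the mean, gives failure probability $\exp(-\Omega(\Delta^{17/40}))$. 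A union bound over the at most $\Delta$ vertices $u$ yields at most $\exp(-\Delta^{1/40})$ for large $\Delta$.

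\textbf{Bounding $P_3(i)$.} Fix an uncolored $A_j\notin A'$ adjacent to $A_i$ and a color $x$. We split the count into two contributions: external neighbors that have a candidate of color $x$ (or are candidates for a node of color $x$) coming from $\fT^{pre}$, and those coming from $\fS'$. The $\fT^{pre}$ contribution is at most $\Delta^{37/40}$, because $\fT^{pre}\subseteq\fS$ on cliques outside $A''$, and $B_3$ failed for those cliques. Here we need that no clique adjacent to $A_j$ whose sets survived had $B_3$ occur. For the $\fS'$ contribution, we mark an external neighbor $w$ of $A_j$ (outside $\All_j\cup\Big_j^+$) if the unique node of color $x$ in $w$'s clique samples $w$, or is sampled into $S'_w$. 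This has probability at most $2p$. Since each clique contains at most one node of color $x$ and samplings are independent, (P7.1) holds with $Q=30\Delta^{1/4}$, using \cref{lem:coloringCliques}-(\ref{part:color-cliques-uncolored}), because $30\Delta^{1/4}\cdot\Delta^{1/5}\ll\Delta^{23/40}/2$. No vertex lies in more than $2\Delta^{9/10}$ sets since $\Big_j^+$ is excluded. So by \Cref{lem:bnd-notbig}, the fresh contribution exceeds $\Delta^{37/40}$ with probability at most $\exp(-\Delta^{1/40})$. Summing the two contributions gives the $2\Delta^{37/40}$ budget. A union bound over the $\le c$ colors and $\le 10^8\Delta^{3/2}$ adjacent cliques costs a $\poly(\Delta)$ factor, which is absorbed, strictly giving $\exp(-\Omega(\Delta^{1/40}))$.

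\textbf{Main obstacle.} The delicate point is justifying that the pre-shattering contribution stays within $\Delta^{37/40}$ for every relevant $A_j$. The $B_3$ events are indexed by $A_i$, not by $A_j$, so we need $B_3(i')$ to have been checked for each clique $A_{i'}$ adjacent to $A_j$ whose candidates remain in $\fT^{pre}$. I would argue this directly from the definition of $B_3$: it quantifies over all uncolored $A_j$ adjacent to $A_i$, and pre-shattering was run on all of $A'$. Alternatively, one can simply count the pre-shattering budget as a separate coloring step, in the spirit of \cref{obs:CC}.
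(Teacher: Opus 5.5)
Your proof is correct and follows the paper's route. The paper simply defers to \Cref{claim:b2,claim:b3}: Chernoff for $P_2(i)$, and \Cref{lem:bnd-notbig} with $Q=30\Delta^{1/4}$ and a fresh budget for $P_3(i)$. Your explicit split of the $P_3$ count into a $\fT^{pre}$ part and an $\fS'$ part is the same accounting the paper uses when it later verifies \cref{lem:subsampling}(3). The $\fT^{pre}$ part is certified by $B_3$ not occurring at any surviving clique adjacent to $A_j$, since $B_3$ counts with respect to all of $\fS$. As you note, the union bound over colors and cliques yields $\exp(-\Omega(\Delta^{1/40}))$ rather than literally $\exp(-\Delta^{1/40})$; this is also all that \Cref{claim:b3} gives, and it suffices for the LLL.
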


\begin{proof}
    See \Cref{claim:b2,claim:b3}.
\end{proof}

As for the synchronized color trial, we can see the post-shattering random process as having one random variable for each clique $A_i \in A''$.
For each event $P_1(i)$, $P_2(i)$, and $P_3(i)$, its set of variables corresponds to the clique $A_i$ and each neighboring clique of $A''$. Therefore the dependency degree of those events is at most $10^{16}\Delta^{3}$. Since each event occurs with probability at most $\exp(-\Omega(\Delta^{1/40}))$, it indeed defines an LLL that can be solved by \cref{thm:deterministicLLLLOCAL}.

The shattering argument showing that the connected components of the dependency graph have size at most $\poly(\log n)$ is identical to that of \cref{lem:sct}, so we do not repeat it here. It follows that computing the sets $\fS'$ such that none of the events $P_1(i)$, $P_2(i)$, or $P_3(i)$ occur takes $\Ot{ \log^3 N } = \Ot{ \log^3 \log n }$ rounds. 

The algorithm outputs the safe candidate system $\fT = \set{ T_v : v\in\unhappy }$ where $T_v$ is the set of safe candidates for $v$ in $T_v^{pre}$ w.r.t.\ $\fT^{pre} \cup \fS'$ if $v\in A_i \in A' \setminus A''$ (it succeeded in pre-shattering), and $T_v$ is the set of safe candidates for $v$ in $S'_v$ w.r.t. $\fT^{pre} \cup \fS'$ if $v\in A_i \in A''$ (it retracted its candidates after pre-shattering). This choice of candidate sets satisfies \cref{lem:subsampling}(3) because after retractions in the pre-shattering, none of the $B_3(i)$ holds, and after post-shattering, none of the $P_3(i)$ hold. \cref{lem:subsampling}(1,2) hold after retractions for every clique $A_i \notin A''$ because $B_1(i)$ and $B_2(i)$ do not occur and retracting candidates in neighboring cliques cannot make them occur. If $A_i$ has neighboring cliques in $A''$, the event $P_1(i)$ is included in post-shattering so \cref{lem:subsampling}(1) continues to hold after selecting sets $\fS'$. For $A_i \in A''$, \cref{lem:subsampling}(1,2) hold after post-shattering because neither of $P_1(i)$ nor $P_2(i)$ can occur.
\qedsymbol

 \section{Ultrafast Coloring High-Degree Graphs}
\label{sec:hideg}

In this section, we show that the coloring can be computed in $O(\log^* n)$ rounds when $\Delta = \Omega(\log^{50} n)$, rather than $O(\log \Delta \cdot \poly\log\log n)$ many rounds when using the iterative approach described in \cref{sec:colorwithmuchslack} \footnote{We have not tried to optimize the constant exponent "50"}.

Several parts of our algorithm actually run in $O(1)$ rounds when $\Delta$ is large (e.g., when $\Delta \geq \log^{50} n$) simply because the probability that each bad event from our LLLs occurs is $\poly(n)\exp(-\Omega(\Delta^{1/40}) \leq \poly(n)\exp(-\Omega(\log^{5/4} n)) \leq 1/\poly(n)$.
This applies to the coloring of the cliques (see \alg{ColorCliques} from \cref{sec:cliques}), slack generation (see \alg{SlackGeneration} from \cref{sec:slackGeneration}), and vertex splitting (\cref{lem:ds} from \cref{sec:sparse}).

The remaining hurdle is that the algorithm \alg{ColorWithMuchSlack} uses $O(\log \Delta)$ iterations.
In this section, we explain how it can be replaced by
an algorithm of \cite{HKNT22}
for the $(\deg+1)$-list-coloring problem. 
It runs in $O(\log^* n)$ rounds when $\Delta > (\log n)^3$. 
Recall that the input to \alg{ColorWithMuchSlack} is a $\pious$ subgraph $H$. 

\ThmHighDegree*

If it was not for the CC constraint, one could simply run the $(\deg+1)$-list-coloring algorithm of \cite{HKNT22} to extend the coloring to $H$. Given an instance of $(\deg+1)$-list-coloring, it is not possible in general to ensure that the CC constraint will be maintained. Here, we use that, thanks to \cref{prop:Pi}, the nodes of $H$ have lists significantly larger than their degree.

In more detail, we observe that all the coloring steps of \cite{HKNT22} are based on three forms of randomized color trials that we have already considered:
\begin{itemize}
    \item Random color trials (RCT), when nodes try a random color from their palette,
    \item Multi-color trials (MCT), when nodes try multiple colors from their palette, and
    \item Synchronized color trials (SCT), when the nodes of an almost-clique are assigned a permutation of the colors in a palette. 
\end{itemize}
Using \cref{prop:Pi} and \cref{lem:bnd-notbig}, we can show that each of the $O(\log^* n)$ steps of \cite{HKNT22} maintain the CC property with high probability.

\begin{proof}[Proof of \cref{thm:d1LC}]
One technical issue needs to be resolved before applying the algorithm of \cite{HKNT22}. We have lower bounds on the palettes of nodes but not explicit lower bounds on their \emph{degrees}. 
To address this, we construct the graph $H'$ from $H$ by adding $|L(v)| - (\deg_{H}(v)+1)$ dummy vertices incident to each $v$ each with two arbitrary colors. 
Note that every vertex of $H$ has degree at least $U \cdot \Delta^{0.22}$ in $H'$ (by \cref{prop:Pi}(b)).

The first step of \cite{HKNT22} computes (deterministically) a vertex partition of $H'$ into sets $V_{sp}, C_1, C_2, \ldots, C_q$ with the properties described in \cite[Lemma 4.2]{AA20}. The algorithm then colors $H'$ in two steps (see Algorithm 8 in \cite{HKNT22}): first color $H'[V_{sp}]$ with \cite[Algorithm 4]{HKNT22}, and the color $H'[C_1 \cup \ldots \cup C_q]$ with \cite[Algorithm 5]{HKNT22}.

Both algorithms begin with a random color trial --- for the sparse vertices, only some vertices try a random color. See \cite[Algorithm 3]{HKNT22}.
Since every vertex of $H$ samples one color out of a list of $U\Delta^{0.22}$ while the external degree of uncolored cliques is upper bounded by $U$ (by \cref{prop:Pi}), \cref{lem:bnd-notbig} and union bound implies that CC is not maintained for some clique with probability at most $n\Delta\exp(-\Delta^{1/40}) < 1/\poly(n)$. Note that the dummy nodes do not verify \cref{prop:Pi}, as their lists are of size two, but they are not adjacent to the cliques of $F$.

To extend the coloring to $H[V_{sp}]$, \cite{HKNT22} uses an algorithm called \alg{SlackColor} consisting of $O(\log^* n)$ iterations of of MCT. See \cite[Lemma 1]{HKNT22} and \cite[Algorithm 10]{HKNT22} with $s_{\min} = U\Delta^{0.22} \geq \Delta^{0.47}$ and $\kappa = 1/2$.
More precisely, each vertex of $V_{sp}$ (not dummy vertices) picks up to $\log n \le \Delta^{1/50}$ colors u.a.r.\ from their list. By \cref{prop:Pi}, all palettes contain at least $U\Delta^{0.22}$ colors. Hence, any given color is picked with probability proportional to at most 
\[ 
\frac{\log n}{U\Delta^{0.22}}
\leq \frac{1}{U\Delta^{0.22 - 1/50}}
\leq \frac{1}{U\Delta^{1/5}}\ . \] 
Hence, we can apply \cref{lem:bnd-notbig} as before to reason that each step of MCT maintains CC with probability at least $1 - \exp( - \Omega(\sqrt{s_{\min}}) ) - \Delta\exp(-\Omega(s_{\min})) \geq 1 - 1/\poly(n)$.

The analysis of slack color for nodes of $H[C_1 \cup \ldots \cup C_q]$ is the same as above. We emphasize that we do not need the put-aside set, Step 3 and 7 in \cite[Algorithm 5]{HKNT22}, because \cref{prop:Pi} ensures that all nodes of $H$ have more than $\log^2 n$ colors in their lists. After slack generation, \cite[Algorithm 5]{HKNT22} runs a synchronized color trial before it calls \alg{SlackColor} on (most) nodes. See \cite[Algorithm 7]{HKNT22}. In this algorithm, (most of) the vertices in each $C_i$ receive a random color from a selected leader. As such, the probability that any given color is tried by a vertex in some $C_i$ is $O( 1/|C_i| )$. Since the vertices have $\deg(v) = |L(v)| - 1 \geq U\Delta^{0.22}$, where the equality holds because of the dummy vertices and the inequality because of \cref{prop:Pi}, an almost-clique $C_i$ contains at least $\Omega(U\Delta^{0.22})$ vertices. And thus the CC constraint is maintained by the SCT as every vertex tries gets some color $x$ with probability at most $O(1/|C_i|) \leq O(\frac{1}{U\Delta^{0.22}}) \leq \frac{1}{ U\Delta^{1/5} }$.
\end{proof}

\printbibliography

\appendix

\section{Graph Decomposition by Molloy-Reed}
\label{app:decomposition}
Based on Molloy and Reed \cite{MR14}, Bamas and Esperet give the following structural decomposition of the graph $F$. 
\begin{lemma}[Lemma 4.5 of \cite{BamasE19}, based on Lemma 12 in~\cite{MR14}]\label{lem:12}
    \label{lem:MR-structural-decomposition}
We can construct from $H$ in $O(1)$ rounds of \LOCAL a graph $F$ of maximum degree at most $10^9\Delta$ (such
that a $c$-coloring of $H$ can be deduced from any $c$-coloring of $F$
in $O(1)$ rounds) and find a partition of the vertices of $F$ into $S, B,
A_1,\ldots, A_t$ such that: 
\begin{itemize}
\item[(a)] Every $A_i$ is a clique with
  $c-10^8\sqrt{\Delta}\le|A_i|\le c$. 
\item[(b)] Every vertex of $A_i$ has at most $10^8\sqrt{\Delta}$ neighbors in $F - A_i$.
\item[(c)] There is a set $\mathrm{All}_i \subseteq B$ of $c - |A_i|$ vertices which are adjacent to all of $A_i$. Every
other vertex of $F - A_i$ is adjacent to at most $\tfrac34\Delta + 10^8\sqrt{\Delta}$ vertices of $A_i$. 
  \item[(d)] Every vertex of $S$ either has fewer than $\Delta-3\sqrt{\Delta}$ neighbors in $S$ or has at least $900\Delta^{3/2}$ non-adjacent pairs of neighbors within $S$.
\item[(e)] Every vertex of $B$ has fewer than $c-\sqrt{\Delta}+9$
  neighbors in $F - \bigcup_j A_j$.
\item[(f)] If a vertex $v\in B$ has at least $c-\Delta^{3/4}$ neighbors
  in $F-\bigcup_j  A_j$, then there is some $i$ such that: $v$ has at most $c- \sqrt{\Delta}+9$ neighbors in $F -A_i$ and every vertex of $A_i$ has at most $30\Delta^{1/4}$ neighbors in $F - A_i$.
\item[(g)] For every $A_i$, every two vertices outside of $A_i \cup \mathrm{All}_i$ which have at least $2\Delta^{9/10}$ neighbors in $A_i$ are joined by an edge of $F$. 
\end{itemize}
\end{lemma}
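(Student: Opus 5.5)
The statement is a distributed restatement of Molloy and Reed's structural lemma (Lemma~12 of \cite{MR14}), so the plan is not to rebuild their 60-page argument. Instead I would check that each step of their construction of $F$ from $H$ is determined by $O(1)$-hop information, and then read off properties (a)--(g) from their analysis. I would proceed in three stages: a sparse/dense decomposition of $H$, a local modification of each dense part into a true clique with an attached set $\mathrm{All}_i$, and a verification of (a)--(g) together with the $O(1)$-round lifting of colorings.

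\emph{Stage 1: sparse/dense split.} Call a vertex \emph{dense} if it has at least $\Delta-3\sqrt{\Delta}$ neighbors and fewer than $C\Delta^{3/2}$ non-adjacent pairs of neighbors, for a suitable absolute constant $C$; otherwise call it \emph{sparse}. Whether a vertex is dense depends only on its $2$-hop ball. By the standard almost-clique argument, dense vertices that share many neighbors group into almost-cliques $D_1,\dots,D_t$, each of size $\Delta\pm O(\sqrt{\Delta})$ and with $O(\sqrt\Delta)$ external degree. Membership in an almost-clique is determined within $O(1)$ hops, e.g.\ via the relation ``$u,v$ dense, adjacent, and $|N(u)\cap N(v)|\ge \Delta-O(\sqrt\Delta)$''. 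Sparse vertices with few dense neighbors go to $S$, which gives (d); the constant $900$ is obtained by adjusting $C$ and accounting for the few removed dense neighbors.

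\emph{Stage 2: turning almost-cliques into cliques.} This is where the hypotheses $c\ge\Delta-k_\Delta+1$ and ``every $N[v]$ is $c$-colorable'' are used. Following \cite{MR14}, I would show that each $D_i$ contains a clique $A_i$ of size at least $c-10^8\sqrt{\Delta}$. The remaining vertices of $D_i$, which have many non-neighbors inside, are moved to $B$. When $|A_i|<c$, the $c-|A_i|$ missing colors are forced in any coloring of a neighborhood. In that case the construction of $F$ adds or identifies vertices so that a set $\mathrm{All}_i$ of exactly $c-|A_i|$ vertices is complete to $A_i$, which gives (a) and (c). Since every gadget lives inside the $O(1)$-ball of $D_i$, a $c$-coloring of $F$ maps back to one of $H$ in $O(1)$ rounds. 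Blowing up vertices costs at most a constant factor in degree, which explains the $10^9\Delta$ maximum degree.

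\emph{Stage 3: the remaining properties.} Property (b) follows from the external-degree bound of almost-cliques. Property (e) follows by counting: a vertex of $B$ misses many vertices of every clique it touches, because otherwise it would have been absorbed into that clique. For (g), two vertices outside $A_i\cup\mathrm{All}_i$ with at least $2\Delta^{9/10}$ neighbors in $A_i$ are forced to receive distinct colors, so $F$ may add the edge between them. For (f), I would use that a vertex with at least $c-\Delta^{3/4}$ non-clique neighbors must be nearly complete to one $A_i$, whose own external degree then drops to $30\Delta^{1/4}$.

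I expect Stage~2 to be the main obstacle. The reason is the precise interplay with $k_\Delta$: proving that $A_i$ is a genuine clique of size $\ge c-O(\sqrt\Delta)$ and that the $\mathrm{All}_i$ gadget preserves $c$-colorability both rest on the $(k+1)(k+2)\le\Delta$ counting in \cite{MR14}. For that step I would rely on \cite[Lemma~12]{MR14} verbatim and only verify its locality, as \cite{BamasE19} do.
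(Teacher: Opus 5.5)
This matches the paper's route: the paper gives no proof of this lemma and simply imports it as Lemma~4.5 of \cite{BamasE19}, the $O(1)$-round implementation of \cite[Lemma~12]{MR14}, so your plan of deferring to those sources for the construction of $F$ and for (a)--(g) is exactly what the paper does. Do not present your Stage~1--3 glosses as justifications, though, because some of them are wrong: (e) is an upper bound on the neighbors \emph{outside} the cliques (for a vertex of degree near $\Delta$ it says that a $B$-vertex has \emph{enough} clique neighbors, roughly $\sqrt{\Delta}$), so an argument that $B$-vertices miss many clique vertices points the wrong way; and a dense set of $H$ need not contain any clique of size $c-O(\sqrt{\Delta})$ (e.g.\ $K_{\Delta+1}$ minus a perfect matching, whose clique number is $(\Delta+1)/2$), so Stage~2 cannot simply find $A_i$ inside each $D_i$, and such regions must be transformed when $F$ is built.
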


Given \Cref{lem:MR-structural-decomposition}, the proof of \Cref{lem:structuralDecomposition} is relatively straightforward. 

\lemStructDecomposition*
\begin{proof}
\Cref{lem:12} decomposes the graph $F$ into $S$,$B$, a collection of cliques $A=\bigcup_j A_j$, and mentions a set $\mathrm{All}_i\subseteq B$ for each clique $A_i$. The more fine-grained decomposition of this lemma follows the algorithm of \cite{MR14}. Let $B_H\subseteq B$ be the vertices with at most $c-\Delta^{3/4}$ neighbors in $F\setminus \bigcup_j A_j$. The set $B_L$ equals $B\setminus B_H$. $A_H$ contains all cliques $A_i\in A$ such that each vertex of $A_i$ has at least $30\Delta^{1/4}$ neighbors outside of $A_i\cup \mathrm{All}_i$. $\AL$ equals $A\setminus A_H$.  We now prove each of the properties separately. 

\begin{enumerate}
    \item $S$: The decomposition of the graph $F$ in \Cref{lem:12} leaves a crucial property open, namely that the vertices in $S$, the sparse vertices, have a degree of at most $\Delta$ in their induced subgraph. Property \Cref{lem:12}(d) only specifies that a vertex in $S$ has a degree of at most $\Delta-3\sqrt{\Delta}$ or has at least $900\Delta^{3/2}$ non-edges in its neighborhood. But for the upcoming coloring step of coloring the vertices in $S$ it is crucial that their degree is also bounded above by $\Delta$ in any case. However, in their proof of \Cref{lem:12} in \cite{Reed98} the authors clearly state that this property also applies. 

    We next reason why we can claim $9.9\cdot10^5\cdot \sqrt{\Delta}$ non-edges in the neighborhood of vertices in $S$, instead of only $900\sqrt{\Delta}$ as claimed in \Cref{lem:MR-structural-decomposition}. The structural decomposition is defined in Lemma 12 of \cite{MR14}. 
It begins with the decomposition from \cite{MolloyReedBook} where nodes of $S$ are $d$-sparse for $d = 10^6\sqrt{\Delta}$, meaning that their neighborhood has at most $\binom{\Delta}{2} - d\Delta$ edges. As they claim in the proof of Lemma 12(d), the only part where they add edges in $F[S]$ that did not exist in $G$ is in Modification 2 (see Section 5.3, page 161). Importantly, this step adds edges only between big-neighbors. In the proof of Lemma 12 (section 5.3, page 163), Molloy and Reed argue that $v\in S$ with more than $\Delta - 3\sqrt{\Delta}$ neighbors in $S$ is not in any $Big_i$ set.

    Now, the argument is two-fold. If $u\in N(v) \cap S$ belongs to some $Big_i$, Molloy and Reed reason that $u$ is incident to at least $\Delta^{9/10}/2$ non-edges in $F[N(v) \cap S]$. Hence, if $v$ has at least $\alpha\Delta^{6/10}$ big neighbors, it also has at least $(\alpha/2)\Delta^{3/2}$ anti-edges in $F[N(v) \cap S]$, where $\alpha = 2 \cdot 10^6$. On the other hand, if it has fewer than $\alpha\Delta^{6/10}$ big neighbors, its induced neighborhood gained at most $(\alpha\Delta^{6/10})^2 \leq \alpha^2 \Delta^{12/10}=o(\Delta^{3/2})$ edges. As such, it has sparsity at least $d - o(\Delta^{1/2}) \geq 9.9 \cdot 10^5$.
    
    \item By definition, vertices in $B_H$ have at most $c-\Delta^{3/4}$ neighbors in $F\setminus \bigcup_j A_j$. Hence, for each $v\in B_H$ at most $x(v)=c-\Delta^{3/4}-deg_{B_H}(v)$ neighbors are in $S$ and already colored. Thus, the list of $v$ when processing $B_H$ is at least of size $|L(v)|\geq c-x(v)=deg_{B_H}+\Delta^{3/4}$.
    \item Follows from \Cref{lem:12} (b). 
\item For each $v\in B_L$, Lemma 12(f) guarantees that there is some $A_i\in A$ such that $v$ has at most $c-\sqrt{\Delta}+9$ neighbors outside of $A_i$. Lemma 12(f) even states that $A_i\in \AL$. In other words $v$ has at most $x(v)=c-\sqrt{\Delta}+9-deg_{B_L}(v)$ nodes outside of $A_i\in \AL$ and $B_L$. As all nodes in $\AL$ and $B_L$ are uncolored, we obtain $|L(v)| \geq c-x(v)\geq \deg_{B_L}(v) +\sqrt{\Delta}-9$.
\item The upper bound on the external degree follows from the definition of $\AL$.
\end{enumerate}
Additionally, we prove the second list of properties:
\begin{enumerate}[label=\alph*)]
\item This is identical to \Cref{lem:12} (a).
\item This follows from \Cref{lem:12} (c).
\item The set $\Big_i^+$ forms a clique due to \Cref{lem:12} (g). The upper bound of the degree into $A_i$ from a vertex in $\Big_i^+$ follows from \Cref{lem:12} (c). 
\end{enumerate}
\end{proof}

\section{Concentration Inequalities}

\begin{theorem}[Janson's inequality] 
    \label{thm:janson}
    Let $S\subseteq V$ be a random subset formed by sampling each $v \in V$ independently with probability $p$. Let $\mathcal{A}$ be any collection of subsets of $V$. For each $A \in \mathcal{A}$, let $I_A := \I{ A \subseteq S }$ be an indicator variable for the event that $A$ is contained in $S$. Let $f := \sum_{A \in \mathcal{A}} I_A$ and $\mu := \E[f]$. Define
    \begin{equation}
        \label{eq:jansonK}
        K := \frac{1}{2}\sum_{A, B \in \mathcal{A}, A \cap B \neq \emptyset} \E[I_A I_B]
    \end{equation}
    Then, for any $0 \le t \le \E[f]$, 
    $$
        \Prob{ f \le \E[f] - t } \le \exp \left(-\frac{t^2}{2 \mu + K}\right)
    $$
\end{theorem}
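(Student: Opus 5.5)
We follow the standard Laplace-transform (Boppana--Spencer) proof of Janson's inequality and track the diagonal and off-diagonal contributions separately, so that the denominator comes out as ``$2\mu$ plus a correlation term''. Write $\Delta_o := \sum_{A\neq B,\, A\cap B\neq\emptyset} \E[I_AI_B]$ for the sum over ordered pairs of distinct intersecting sets. Since the diagonal pairs $A=B$ contribute $\sum_A \E[I_A]=\mu$ to the sum in \eqref{eq:jansonK}, we have $K = (\mu+\Delta_o)/2$.

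\textbf{Step 1 (derivative of the log-Laplace transform).} For $\lambda\ge 0$ let $\Psi(\lambda)=\E[e^{-\lambda f}]$, so that $-\Psi'(\lambda)=\sum_{A}\E[I_A e^{-\lambda f}]$. Fix $A$ and split $f = Y_A + Z_A$, where $Y_A=\sum_{B\cap A\neq\emptyset} I_B$ (this includes $B=A$) and $Z_A$ is the sum over the remaining $B$. Conditioned on $I_A=1$, the variable $Z_A$ is a decreasing function of the unconditioned coordinates, and $I_A$ and $e^{-\lambda Z_A}$ depend on disjoint sets of coordinates. Using Jensen's inequality on $e^{-\lambda Y_A}$ under the measure conditioned on $A\subseteq S$, together with Harris/FKG for $e^{-\lambda Z_A}$, one obtains
\[
\E[I_Ae^{-\lambda f}] \ge \E[I_A]\,\Psi(\lambda)\exp\Big(-\lambda\,\E[Y_A\mid I_A=1]\Big).
\]
Summing over $A$ and applying Jensen's inequality once more (weights $\E[I_A]/\mu$) gives
\[
-(\log\Psi)'(\lambda)\ge \mu\, e^{-\lambda(\mu+\Delta_o)/\mu}.
\]

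\textbf{Step 2 (integrate and optimize).} Using $e^{-x}\ge 1-x$ and integrating from $0$ to $\lambda$ gives $\log\Psi(\lambda)\le -\lambda\mu+\tfrac{\lambda^2}{2}(\mu+\Delta_o)$. Markov's inequality applied to $e^{-\lambda f}$ then yields $\Pr[f\le\mu-t]\le \exp(-\lambda t+\tfrac{\lambda^2}{2}(\mu+\Delta_o))$. Choosing $\lambda = t/(\mu+\Delta_o)$ gives $\exp(-t^2/(2\mu+2\Delta_o))$. The condition $t\le\mu$ keeps $\lambda$ in the regime where these estimates are valid.

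\textbf{Step 3 (matching the stated denominator) --- main obstacle.} It remains to compare $2\mu+2\Delta_o$ with $2\mu+K=\tfrac52\mu+\tfrac12\Delta_o$. When $\Delta_o\lesssim\mu$, the bound from Step 2 is at least as strong as the stated one, and the claim follows. When $\Delta_o\gg\mu$, it does not follow. Before finalizing, I would therefore test the stated constant on the extremal ``sunflower'' collection $\mathcal{A}=\{\{x,y_i\}:i\le m\}$. There $\Pr[f=0]\ge 1-p$, while with $t=\mu$ the stated bound gives roughly $e^{-2p}<1-p$ for large $m$ and small $p$. This indicates that the correlation term must carry a constant at least $2\Delta_o$, which is what Step 2 delivers; with $K=(\mu+\Delta_o)/2$ this is the form $\exp(-t^2/(4K))$.

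My plan is therefore to establish the bound of Step 2 and then check how the lemma is invoked. Its only use is for \cref{lem:prob-sparsity-subsampling}, where only $\exp(-\Omega(t^2/(\mu+\Delta_o)))$ is needed, so any constant in the denominator suffices there. If the literal constant is required, the first thing I would try is to reinterpret $K$ as the sum over ordered pairs without the factor $\tfrac12$, and then re-examine the sunflower example under that convention.
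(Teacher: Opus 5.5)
\paragraph{There is no proof in the paper to compare against.} The paper does not prove this statement; it quotes it in the appendix as a standard concentration tool. Your assessment is correct, and it is the main point here: the inequality as printed is false.

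With $K$ as defined, $K=(\mu+\Delta_o)/2$, or $K=\Delta_o/2$ if one excludes the diagonal. Your sunflower example refutes the statement under either reading. We have $\Pr[f=0]\ge 1-p$, while $t^2/(2\mu+K)=2mp^2/(5+(m-1)p)\to 2p$. Concretely, take $p=0.1$, $m=10^4$ and $t=\mu=100$. The claimed bound is $e^{-0.199}\approx 0.82$, but the true probability is at least $0.9$.

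Your Steps 1--2 are the standard Boppana--Spencer/Janson argument. The Harris step is applied under the product measure conditioned on $A\subseteq S$, and Jensen is used twice. This correctly yields $\Pr[f\le \mu-t]\le \exp(-t^2/(2(\mu+\Delta_o)))=\exp(-t^2/(4K))$, which is what the statement should say.

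\paragraph{Two small corrections.}
\begin{enumerate}
\item The remark that $t\le\mu$ keeps $\lambda$ in a valid regime is unnecessary. Since $e^{-x}\ge 1-x$ holds for all real $x$, Step 2 works for every $\lambda\ge 0$. The condition $t\le \mu$ only matters because the event is empty otherwise.
\item Your fallback of dropping the factor $\tfrac12$ would not rescue the statement. The resulting denominator $3\mu+\Delta_o$ is still too small when $\Delta_o\gg\mu$. The sunflower alone does not detect this: it gives exponent $\to p$, and $e^{-p}>1-p$. Instead take $N$ disjoint copies of your sunflower with $p$ fixed and small and $mp\to\infty$. Then $\mathrm{Var}(f)=N(1-p)(mp^2+m^2p^3)\approx (1-p)(\mu+\Delta_o)$. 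For $t=c\sqrt{\mathrm{Var}(f)}$ and $N$ large (so $t\le\mu$), the CLT gives a tail of order $e^{-c^2/2}$. This contradicts the claimed bound of roughly $e^{-(1-p)c^2}$ once $c$ is large. So the factor $2(\mu+\Delta_o)$ from your Step 2 is essentially the right one.
\end{enumerate}

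\paragraph{The downstream use survives.} Your check is right. In \cref{lem:prob-sparsity-subsampling} one has $\mu+\Delta_o\le p^2\overline{m}+2p^3\overline{m}\Delta$. With $t=p^2\overline{m}/2$ and $p\Delta\ge 8$, the correct bound gives $\exp(-p^2\overline{m}/(8+16p\Delta))\le \exp(-p\overline{m}/(17\Delta))$. This replaces the paper's $\exp(-p\overline{m}/(5\Delta))$. It is still $\exp(-\Omega(p\overline{m}/\Delta))$, which is all the slack-generation claims use.
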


We use the classic Chernoff Bound on sums of independent random variables. See \cite{Doerr2020} or \cite{DP09} for details.
\begin{proposition}[{Chernoff bounds}]\label{lem:chernoff}
Let $X_1, \ldots, X_n$ be a family of independent random variables with values in $[0,1]$, and let $X = \sum_{i \in [n]} X_i$.
Suppose $\mu_{\mathsf{L}} \leq \Exp[X] \leq \mu_{\mathsf{H}}$, then
\begin{align}
  \text{for all } \delta \geq 0, \text{ we have that }
  &&
  \Prob*{ X > (1+\delta) \mu_{\mathsf{H}} }
& \le \exp\paren*{-\frac{\delta^2}{2 + \delta}\mu_{\mathsf{H}}}\ .
  \label{chernoff-mult-upperbound}
\\
\text{for all } \delta \in(0,1), \text{ we have that }
  &&
  \Prob*{ X < (1-\delta) \mu_{\mathsf{L}} }
& \le \exp\paren*{-\frac{\delta^2}{2}\mu_{\mathsf{L}}}\ .
  \label{chernoff-mult-lowerbound}
\end{align}
\end{proposition}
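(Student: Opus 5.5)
The plan is the standard exponential-moment (Chernoff--Hoeffding) method: bound the moment generating function of $X$ using only independence and the range $[0,1]$, apply Markov's inequality to $e^{\lambda X}$, optimise $\lambda$, and finish with an elementary logarithmic inequality in $\delta$. The two tails are handled symmetrically, with $\lambda>0$ for \eqref{chernoff-mult-upperbound} and $\lambda<0$ for \eqref{chernoff-mult-lowerbound}. The sign of $e^\lambda-1$ is exactly what lets us replace $\Exp[X]$ by $\mu_{\mathsf{H}}$ or $\mu_{\mathsf{L}}$, respectively.

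\emph{Step 1 (MGF bound).} Fix $\lambda\in\mathbb{R}$ and write $p_i=\Exp[X_i]$. Since $x\mapsto e^{\lambda x}$ is convex, on $[0,1]$ we have $e^{\lambda x}\le 1+(e^\lambda-1)x$. Hence
\[
\Exp\bigl[e^{\lambda X_i}\bigr]\le 1+(e^\lambda-1)p_i\le \exp\bigl((e^\lambda-1)p_i\bigr).
\]
By independence, $\Exp[e^{\lambda X}]\le \exp((e^\lambda-1)\Exp[X])$. For $\lambda>0$ the coefficient $e^\lambda-1$ is positive, so this is at most $\exp((e^\lambda-1)\mu_{\mathsf{H}})$. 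For $\lambda<0$ the coefficient is negative, so it is at most $\exp((e^\lambda-1)\mu_{\mathsf{L}})$.

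\emph{Step 2 (Markov and optimisation).} For the upper tail, Markov's inequality gives
\[
\Prob{X>(1+\delta)\mu_{\mathsf{H}}}\le \exp\bigl((e^\lambda-1)\mu_{\mathsf{H}}-\lambda(1+\delta)\mu_{\mathsf{H}}\bigr).
\]
Choosing $\lambda=\ln(1+\delta)$ yields $\exp(-\mu_{\mathsf{H}}[(1+\delta)\ln(1+\delta)-\delta])$. The case $\delta=0$ is trivial. For the lower tail, set $\lambda=-t$ with $t=-\ln(1-\delta)>0$ and apply Markov to $e^{-tX}$:
\[
\Prob{X<(1-\delta)\mu_{\mathsf{L}}}\le \exp\bigl((e^{-t}-1)\mu_{\mathsf{L}}+t(1-\delta)\mu_{\mathsf{L}}\bigr)=\exp\bigl(-\mu_{\mathsf{L}}[\delta+(1-\delta)\ln(1-\delta)]\bigr).
\]

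\emph{Step 3 (elementary inequalities).} It remains to show two inequalities.

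For the upper tail we need $(1+\delta)\ln(1+\delta)-\delta\ge \delta^2/(2+\delta)$. I will use $\ln(1+x)\ge 2x/(2+x)$ for $x\ge 0$. Both sides vanish at $0$, and comparing derivatives gives $\frac{1}{1+x}\ge \frac{4}{(2+x)^2}$, which is equivalent to $x^2\ge 0$. Substituting,
\[
(1+\delta)\cdot\frac{2\delta}{2+\delta}-\delta=\frac{\delta(2+2\delta-2-\delta)}{2+\delta}=\frac{\delta^2}{2+\delta}.
\]

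For the lower tail we need $\delta+(1-\delta)\ln(1-\delta)\ge\delta^2/2$ on $(0,1)$. The Taylor expansion gives
\[
(1-\delta)\ln(1-\delta)=-\delta+\sum_{k\ge2}\frac{\delta^k}{k(k-1)},
\]
and every term in the sum is nonnegative. Keeping only the $k=2$ term gives the claim.

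The only step needing any care is Step 3, where the exact form $\delta^2/(2+\delta)$ must be matched. The derivative comparison above settles it. Everything else is routine.
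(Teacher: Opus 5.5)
Your proof is correct. The paper gives no proof of this proposition; it only points to Doerr and to Dubhashi--Panconesi, and your exponential-moment argument is the standard proof given there: the convexity bound on the MGF, the sign of $e^\lambda-1$ to pass to $\mu_{\mathsf{H}}$ or $\mu_{\mathsf{L}}$, then $\ln(1+\delta)\ge 2\delta/(2+\delta)$ and the series for $(1-\delta)\ln(1-\delta)$. The one implicit assumption is $\mu_{\mathsf{L}}\ge 0$ in the last lower-tail step, where you multiply by $-\mu_{\mathsf{L}}$. If $\mu_{\mathsf{L}}<0$, the event $\{X<(1-\delta)\mu_{\mathsf{L}}\}$ is empty because $X\ge 0$, so that case holds trivially.
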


Let $Y_1, \ldots, Y_n$ be boolean random variables (with values in $\set{0,1}$). We say they form a \emph{read-$k$ family} if they can be expressed as a function of independent random variables $X_1, \ldots, X_m$ such that each $X_j$ influences at most $k$ variables $Y_i$. More formally, there are sets $P_i \subseteq [m]$ for each $i \in [n]$ such that: (1) for each $i\in[n]$, the variable $Y_i$ is a function of $\set{X_j: j \in P_i}$ and (2) $|\set{i: j\in P_i}| \le k$ for each $j \in [m]$.

\begin{proposition}[read-$k$ bound, \cite{GLSS15}]
\label{lem:k-read}
Let $Y_1, \ldots, Y_n$ be a read-$k$ family of boolean variables, and let $Y$ be their sum.
Then, for any $\delta > 0$,
\[
  \Prob{ \abs*{Y - \Exp{Y}} > \delta n}
  \le 2\exp\paren*{-\frac{2\delta^2 n}{k}} \ .
\]
\end{proposition}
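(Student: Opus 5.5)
The statement is the read-$k$ inequality of \cite{GLSS15}, which the paper uses as a black box. If I were to reprove it, I would follow the Gavinsky--Lovett--Saks--Srinivasan route. Step one is a product inequality for read-$k$ families. Step two converts it into a tail bound by a random-subset trick. The lower tail then follows by symmetry.

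\emph{Step 1 (product bound).} I would show that for any $I\subseteq[n]$,
\[
\Prob*{ \textstyle\bigwedge_{i\in I} Y_i=1 } \le \prod_{i\in I}\Prob{Y_i=1}^{1/k}.
\]
Each $Y_i$ is a function of $\set{X_j : j\in P_i}$, and every coordinate $X_j$ lies in at most $k$ of the sets $P_i$. So this is an instance of Finner's generalized H\"older inequality for product measures:
\[
\Exp\Bigl[\textstyle\prod_i f_i\Bigr]\le \prod_i \|f_i\|_{k}
\qquad\text{for } f_i = f_i(X_{P_i}).
\]
Applied to the boolean $f_i=Y_i$, it gives $\|Y_i\|_k=\Prob{Y_i=1}^{1/k}$. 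Alternatively, I would prove it by induction on $m$ via H\"older in one coordinate at a time, or by an entropy argument via Shearer's lemma.

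\emph{Step 2 (upper tail).} Let $p_i=\Exp Y_i$, let $p=\frac1n\sum_i p_i$, and set $m=(p+\delta)n$. Draw a random index set $T$ that contains each $i$ independently with probability $q$. Given $Y$ with $\sum_i Y_i\ge m$, the event $T\subseteq\set{i:Y_i=1}$ has probability at least $(1-q)^{n-m}$. Averaging over $T$ and applying Step 1 gives
\[
\Prob{Y\ge m}\,(1-q)^{n-m}\le \Exp_T\prod_{i\in T}p_i^{1/k}=\prod_i\bigl(1-q+q\,p_i^{1/k}\bigr).
\]
By concavity of $t\mapsto t^{1/k}$, each factor is at most $(1-q+qp_i)^{1/k}$. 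By AM--GM, the whole product is at most $(1-q+qp)^{n/k}$.

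\emph{Main obstacle.} The main difficulty is matching exponents so that the final bound is $\exp(-\mathrm{D}(p+\delta\,\|\,p)\,n/k)$, with $\mathrm{D}$ the binary relative entropy. The left-hand factor carries $(1-q)^{n-m}$ without the $1/k$ in the exponent, so a naive choice of $q$ loses. I would first try reparametrizing: pick $q$ so that $(1-q)$ appears to the power $1/k$ on both sides. If that fails, I would apply the subset trick to the $k$-th power of the indicator, as GLSS do, and then optimize $q$ exactly as in the Chernoff--Hoeffding proof. Once the relative-entropy bound is in hand, Pinsker's inequality $\mathrm{D}(p+\delta\,\|\,p)\ge 2\delta^2$ yields $\exp(-2\delta^2 n/k)$.

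\emph{Step 3 (lower tail and conclusion).} The variables $1-Y_i$ form a read-$k$ family with respect to the same $X_j$ and sets $P_i$. Applying Step 2 to them bounds $\Prob{Y\le (p-\delta)n}$ by the same quantity. A union bound over the two tails gives the factor $2$ in the stated inequality.
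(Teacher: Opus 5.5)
The paper does not prove this proposition; it cites it from \cite{GLSS15}. Taken on its own terms, your outline has a genuine gap exactly at the ``main obstacle'' you flag, and tuning $q$ cannot remove it. Once you pass through the indicator conjunction bound of Step~1, the inequality you derive in Step~2 is itself too weak to give any exponent of the form $n\cdot(\cdot)/k$.

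To see this, take $n=kN$ and let the $Y_i$ be $k$ copies of each of $N$ independent Bernoulli$(p)$ bits. Your inequality then reads $\Pr[Y\ge m](1-q)^{n-m}\le(1-q+qp^{1/k})^{n}$. The best bound it can yield is $\inf_q(1-q)^{-(n-m)}(1-q+qp^{1/k})^{n}$. This equals $e^{-n\,\mathrm{D}(m/n\,\|\,p^{1/k})}$ when $m/n>p^{1/k}$, and equals $1$ otherwise. For $p=1/4$, $k=2$, $\delta=0.15$ you therefore get nothing, while the proposition asserts $2e^{-0.0225n}$.

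After your concavity/AM--GM relaxation the situation is worse. The bound $(1-q)^{-(n-m)}(1-q+qp)^{n/k}$ is nondecreasing in $q$, hence trivial, whenever $\delta\le(1-p)(1-1/k)$. Your reparametrization idea would need $1-q+qp^{1/k}\le\bigl((1-q)^k+(1-(1-q)^k)p\bigr)^{1/k}$, and this is false. For $k=2$, the square of the right side minus the square of the left side equals $2q(1-q)(p-\sqrt p)\le 0$. Finally, ``applying the subset trick to the $k$-th power of the indicator'' is vacuous, since $Y_i^k=Y_i$.

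The missing idea is to apply the generalized H\"older (Finner) inequality of Step~1 not to the indicators, but to $f_i=1-q+qY_i$. In other words, average over $T$ \emph{before} taking $k$-norms. Since $T$ has independent coordinates, $\Exp_T\Pr[\bigwedge_{i\in T}Y_i=1]=\Exp[\prod_i f_i]$. Each $f_i$ is a function of $X_{P_i}$, and $f_i^k=1-q'+q'Y_i$ with $1-q':=(1-q)^k$, so $\|f_i\|_k=(1-q'+q'p_i)^{1/k}$. As $(1-q)^{n-m}=(1-q')^{(n-m)/k}$, this gives
\[
\Pr[Y\ge m]\le\Bigl[(1-q')^{-(n-m)}\textstyle\prod_i(1-q'+q'p_i)\Bigr]^{1/k}.
\]
This is the $k$-th root of the usual independent-case Chernoff--Hoeffding expression. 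AM--GM and optimizing over $q'\in[0,1)$ then give $e^{-n\,\mathrm{D}(p+\delta\,\|\,p)/k}$.

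Two equivalent routes also work. One is $\Exp[e^{\lambda Y}]\le\prod_i\Exp[e^{k\lambda Y_i}]^{1/k}$ followed by Markov. The other is entropy-based: condition on the tail event, then combine the relative-entropy form of Shearer's lemma with convexity of $\mathrm{D}$. With Step~2 replaced in this way, your Pinsker step, the complement trick for the lower tail, and the final union bound all go through as written.
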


\begin{lemma}[Azuma-Hoeffding]\label{lem:BEPS}
    Let $Z = Z_1 + \ldots + Z_n$ be the sum of n random variables and $X_0,\ldots, X_n$ be a sequence, where $Z_i$ is uniquely determined by $X_0,\ldots, X_i$. Let $\mu_i = \Exp[Z_i \given X_0,\ldots, X_{i-1}]$, $\mu = \sum_i \mu_i$, and $a_i \leq Z_i \leq a'_i$. Then
    $$
    \Prob{ {Z \geq \mu + t} }, \Prob{ Z \leq \mu - t } \leq \exp{\left(-\frac{t^2}{2\sum_i(a'_i - a_i)^2}\right)}.
    $$
\end{lemma}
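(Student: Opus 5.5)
We would prove \cref{lem:BEPS} by the standard exponential-moment (Chernoff--Hoeffding) argument applied to the martingale difference sequence
\[
D_i := Z_i - \mu_i, \qquad i\in[n],
\]
where $\mu_i = \Exp[Z_i \given X_0,\ldots,X_{i-1}]$. Since $Z_i$ is determined by $X_0,\ldots,X_i$ and $\mu_i$ by $X_0,\ldots,X_{i-1}$, each $D_i$ is determined by $X_0,\ldots,X_i$ and satisfies $\Exp[D_i \given X_0,\ldots,X_{i-1}] = 0$. Moreover $\sum_i D_i = Z - \mu$. Note that $\mu$ itself may be random; the events $\set{Z\ge \mu+t}$ and $\set{Z\le \mu-t}$ are exactly $\set{\sum_i D_i \ge t}$ and $\set{\sum_i D_i \le -t}$, so no special care is needed on this point. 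Throughout we take the bounds $a_i, a'_i$ to be constants, or at least determined by $X_0,\ldots,X_{i-1}$; this is the natural reading, and it is what our applications (where $Z_i$ are indicators) use.

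The first step is a conditional Hoeffding lemma. Conditioned on $X_0,\ldots,X_{i-1}$, the variable $D_i$ has mean zero and lies in the interval $[a_i-\mu_i,\, a'_i-\mu_i]$, whose width is $c_i := a'_i-a_i$. Hoeffding's lemma therefore gives, for every $\lambda\in\mathbb{R}$,
\[
\Exp\!\left[e^{\lambda D_i} \given X_0,\ldots,X_{i-1}\right] \le \exp\!\left(\lambda^2 c_i^2/8\right).
\]
This rests on convexity of $x\mapsto e^{\lambda x}$ on the interval, followed by a Taylor bound on the logarithm of the resulting two-point moment generating function. We regard this as the only genuinely analytic step.

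The second step peels off the differences one at a time using the tower property. Conditioning on $X_0,\ldots,X_{n-1}$, then on $X_0,\ldots,X_{n-2}$, and so on, yields
\[
\Exp\!\left[e^{\lambda \sum_i D_i}\right] \le \exp\!\left(\lambda^2 \sum_i c_i^2 / 8\right).
\]
Markov's inequality then gives, for $\lambda>0$,
\[
\Prob\!\left[\textstyle\sum_i D_i \ge t\right] \le \exp\!\left(-\lambda t + \lambda^2 \sum_i c_i^2/8\right).
\]
Choosing $\lambda = 4t/\sum_i c_i^2$ produces the bound $\exp\!\left(-2t^2/\sum_i c_i^2\right)$, which is stronger than the claimed $\exp\!\left(-t^2/(2\sum_i c_i^2)\right)$. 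The lower tail follows by applying the same argument to $-D_i$.

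We do not expect a real obstacle here. The main point requiring care is to verify that $\mu_i$ is measurable with respect to the past, so that conditioning is legitimate. The other point is to note that the interval containing $D_i$ has width exactly $c_i$ even though it is shifted by the random quantity $\mu_i$. If $a_i,a'_i$ are themselves past-measurable, Hoeffding's lemma is simply applied conditionally, and the factor $\exp(\lambda^2 c_i^2/8)$ should be bounded by a deterministic quantity before iterating; this holds when the $c_i$ are constants, as in all our uses.
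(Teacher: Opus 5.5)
Your proof is correct. The paper states this Azuma--Hoeffding bound in its appendix without proof, and your argument (conditional Hoeffding lemma for the martingale differences $Z_i-\mu_i$, the tower property, then Markov's inequality) is the standard proof; it even yields the sharper bound $\exp\bigl(-2t^2/\sum_i (a'_i-a_i)^2\bigr)$, and reading $a_i,a'_i$ as deterministic covers the paper's only use of the lemma, where the $Z_i$ are indicators in the proof of \cref{lem:ev}.
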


We use the following Talagrand inequality. A function $f(x_1, \dots, x_n)$ is called $c$-\textit{Lipschitz} iff the value of any single $x_i$ affects $f$ by at most $c$. Additionally, $f$ is $r$-\textit{certifiable} if for every $x = (x_1, \dots, x_n)$, (1) there exists a set of indices $J(x) \subseteq [n]$ such that $|J(x)| \le r \cdot f(x)$, and (2) if $x'$ agrees with $x$ on the coordinates in $J(x)$, then $f(x') \ge f(x)$. 
\begin{lemma}[Talagrand's Inequality \cite{MR14}]
    \label{lem:talagrand}
    Let $X_1, \dots, X_n$ be independent random variables and $f(X_1, \dots, X_n)$ be a $c$-Lipschitz, $r$-certifiable function. For any $b \ge 1$:
    $$\Prob{ |f-\E[f]| > b + 60c\sqrt{r \E[f]} } \le 4 \exp{\left(-\frac{b^2}{8c^2r\E[f]}\right)}$$
\end{lemma}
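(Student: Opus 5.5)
The plan is to derive the statement from Talagrand's convex-distance inequality. I would first prove concentration of $f$ around a median $M$, and then move from the median to the mean; the additive term $60c\sqrt{r\E[f]}$ will absorb the error of that move. The one external tool is the convex-distance inequality for product spaces: for any measurable $A\subseteq\Omega$ and $t\ge 0$,
\[
\Prob{A}\cdot\Prob{ d_T(X,A)\ge t }\le \exp(-t^2/4),
\]
where $d_T(x,A)=\sup_{\|\alpha\|_2=1,\alpha\ge 0}\inf_{y\in A}\sum_{i:x_i\neq y_i}\alpha_i$.

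\textbf{Step 1 (combinatorial link).} Fix $s\ge 0$ and $t\ge 0$, and let $A=\set{y : f(y)\le s-t}$. Take any $x$ with $f(x)\ge s$, and let $J=J(x)$ be a certificate, so $|J|\le r f(x)$. Put the uniform weight $\alpha_i=1/\sqrt{|J|}$ on $J$ and $0$ elsewhere. For $y\in A$, define $x'$ to agree with $x$ on $J$ and with $y$ off $J$. Certifiability gives $f(x')\ge f(x)\ge s$. The point $x'$ differs from $y$ in at most $k=|\set{i\in J: x_i\neq y_i}|$ coordinates, so $c$-Lipschitzness gives $f(y)\ge s-ck$. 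Since $f(y)\le s-t$, this forces $k\ge t/c$. Hence
\[
d_T(x,A)\ge \frac{t}{c\sqrt{|J|}}\ge \frac{t}{c\sqrt{r f(x)}}.
\]
This bound decreases in $f(x)$, so I would restrict attention to the relevant range of $f(x)$, just above $s$.

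\textbf{Step 2 (tails around the median).} Combining Step 1 with the convex-distance inequality:
\begin{itemize}
\item \emph{Lower tail.} Take $s=M$ and restrict to $x$ with $M\le f(x)\le 2M$. Handling the mass with $f\ge 2M$ separately by a dyadic decomposition over $f\in[2^jM,2^{j+1}M]$ gives $\Prob{f\le M-t}\le 2\exp(-t^2/(8c^2rM))$.
\item \emph{Upper tail.} Take $A=\set{f\le M}$ and $s=M+t$. This gives $\Prob{f\ge M+t}\le 2\exp(-t^2/(4c^2r(M+t)))$.
\end{itemize}

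\textbf{Step 3 (median versus mean).} Integrating these tails gives $|\E[f]-M|\le C c\sqrt{r M}+O(c^2 r)$ for an explicit constant $C$. Since $f\ge 0$ and $M\le 2\E[f]$ by Markov, this becomes $|\E[f]-M|\le 30c\sqrt{r\E[f]}$ once constants are tracked. The case $\E[f]$ tiny, with the additive $c^2r$ term, is covered by $b\ge1$ and the $b$ in the deviation.

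\textbf{Step 4 (recentering).} Substitute $t=b+30c\sqrt{r\E[f]}$ into the median tails, and replace $M$ by at most $2\E[f]$ (up to the $t$ term). Each one-sided probability is then at most $2\exp(-b^2/(8c^2r\E[f]))$, which gives the stated $4\exp(\cdot)$.

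The main obstacle is the bookkeeping in Steps 2–4. The certificate size scales with $f(x)$ rather than with $M$, so the upper tail has $M+t$ in the denominator. Getting exactly the constants $60$ and $8$ requires care with the dyadic splitting and with the regime $b\gtrsim \E[f]$. In that regime I would simply use the weaker bound the dyadic argument gives and check that it still dominates $4\exp(-b^2/(8c^2r\E[f]))$.
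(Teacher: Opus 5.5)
The paper does not prove this lemma; it quotes it from Molloy and Reed. Your route (convex-distance inequality, concentration about a median $M$, then a median-to-mean shift) is the standard textbook derivation.

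\textbf{The gap.} It lies in the regime you postpone to the end of Step~4. For $b\gtrsim \E[f]$ you intend to ``check that [the weaker bound] still dominates $4\exp(-b^2/(8c^2r\E[f]))$''. That check cannot succeed, because the statement as written is false there. Your upper tail $2\exp(-t^2/(4c^2r(M+t)))$ with $t\approx b\gg M$ only decays like $\exp(-\Theta(b/(c^2r)))$, and the true tail can be as large as $\exp(-O(b\log b))$.

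\emph{Counterexample.} Take $f=\sum_{i=1}^n X_i$ with $X_i$ i.i.d.\ Bernoulli$(1/n)$ and $n\to\infty$. This $f$ is $1$-Lipschitz and $1$-certifiable (certificate $\{i: x_i=1\}$), with $\E[f]=1$. For $b=100$ the threshold is $160$, and $\Pr(f\ge 162)\to e^{-1}/162!\approx e^{-667}$, whereas the claimed bound is $4e^{-1250}$. The same construction with $\E[f]=10^{-3}$ and $b=1$ gives $\Pr(f\ge 3)\approx 1.7\cdot 10^{-10}$ against $4e^{-125}$. This also refutes your Step~3 remark that tiny $\E[f]$ is covered by $b\ge 1$.

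What is true is the version restricted to $0\le b\le\E[f]$ (as in Molloy and Reed's book), or a version with $\E[f]+b$ in the denominator, as in the McDiarmid inequality stated right after it. This does not hurt the paper. Its only application, in the slack-generation proposition, has deviations of order $\overline{m}/q$ around means at most $\overline{m}/q$, and there the $\E[f]+b$ form gives the same $\exp(-\Omega(\overline{m}/q))$.

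\textbf{Repairs under $1\le b\le\E[f]$.} With this restriction your plan goes through, once you fix the following points.
\begin{enumerate}
\item \emph{Small $\E[f]$.} The additive $O(c^2r)$ term from Step~3 is harmless. If $\E[f]\le 8\ln 4\cdot c^2r$, the right-hand side is at least $4e^{-\E[f]/(8c^2r)}\ge 1$; otherwise $c^2r=O(c\sqrt{r\E[f]})$.
\item \emph{Upper tail in Step~4.} Use $M\le\E[f]+a$ with $a=30c\sqrt{r\E[f]}$, rather than $M\le 2\E[f]$. With $t=b+a$ you get $M+t\le 2(\E[f]+a)$. Then $(b+a)^2\E[f]\ge b^2(\E[f]+a)$ reduces to $(2b+a)\E[f]\ge b^2$, which holds since $b\le\E[f]$. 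With only $M\le 2\E[f]$ you lose the constant $8$ when $b$ is of order $\E[f]$.
\item \emph{Step~1 and the dyadic decomposition.} The dyadic decomposition is unnecessary. In Step~1, keep the value $f(x)$ rather than the threshold $s$. For $A=\{f\le M-t\}$ with $t\le M$, this gives $d_T(x,A)\ge (f(x)-M+t)/(c\sqrt{rf(x)})\ge t/(c\sqrt{rM})$ for every $x$ with $f(x)\ge M$, because the middle expression is nondecreasing in $f(x)$. So $\Pr(f\le M-t)\le 2\exp(-t^2/(4c^2rM))$ follows directly from $\Pr(f\ge M)\ge 1/2$. The same monotonicity is what actually justifies the $M+t$ in your upper tail.
\end{enumerate}
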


McDiarmid \cite{McDiarmid1989} extended Talagrand's inequality to the setting where $X$ depends also on permutations.

\begin{lemma}[McDiarmid's Inequality {\cite[Section 3.1]{MR14}}]
\label{lem:mcdiarmid}
Let $X$ be a non-negative random variable determined by independent trials $T_1, \ldots, T_n$ and independent permutations $\Pi_1, \ldots, \Pi_m$. Suppose that for every set of possible outcomes of the trials and permutations, we have:
\begin{enumerate}
    \item changing the outcome of any of the trials affects the outcome by at most $c$,
    \item interchanging two elements in any one permutation can affect $X$ by at most $c$, and
    \item for each $s \geq 0$, if $X \geq s$ then there exists a set of at most $rs$ choices whose outcomes certify that $X \geq s$.
\end{enumerate}
Then for any $t > 0$, we have
\[
\Prob{ \abs*{ |X - \Exp{X}| } > t }
\leq 4\exp\paren*{ - \frac{t^2}{128c^2 r ( \Exp{X} + t )} }
\]
\end{lemma}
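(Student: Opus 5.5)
The statement is quoted from \cite[Section 3.1]{MR14}, who in turn rely on McDiarmid \cite{McDiarmid1989}. I would rebuild it by the usual route to Talagrand-type bounds: first prove a convex-distance inequality on the mixed product space, then derive median concentration from Lipschitzness and certifiability, and finally pass from the median to the mean.

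\emph{Step 1 (convex distance on the mixed space).} Let $\Omega=\prod_{i}\Omega_{T_i}\times\prod_{j}S_{n_j}$ with the product of the trial laws and uniform measures on the symmetric groups. For $\omega\in\Omega$ and $A\subseteq\Omega$, define Talagrand's convex distance $d_T(\omega,A)$. The coordinates are the trial outcomes and, for each permutation $\Pi_j$, the individual positions $\Pi_j(1),\ldots,\Pi_j(n_j)$. I would prove
\[
\Pr(A)\cdot\Exp\bigl[\exp(d_T(\omega,A)^2/16)\bigr]\le 1 .
\]
The proof is by induction on the number of factors. Trial factors are handled exactly as in Talagrand's product-space induction. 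For a symmetric-group factor, I would run the inner induction on $n_j$: condition on the value $\Pi_j(n_j)$, and compare the resulting coset of $S_{n_j-1}$ with a reference coset through the transposition that moves between them. This transposition changes only two coordinates, which is where the looser constant $16$ (rather than $4$) comes from.

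\emph{Step 2 (from distance to concentration).} Let $m$ be a median of $X$, and set $A=\{X\le m-t\}$. Take an outcome $\omega$ with $X(\omega)\ge m$. By hypothesis~3 there is a certificate $J$ of at most $r\,X(\omega)$ choices. Any $\omega'\in A$ must then differ from $\omega$ in choices of $J$ whose combined effect is at least $t$. Each trial changes $X$ by at most $c$ (hypothesis~1). Each permutation choice can be repaired by one interchange, costing at most $c$ (hypothesis~2), and an interchange touches at most two positions. Weighting the coordinates of $J$ uniformly then gives
\[
d_T(\omega,A)\ge \frac{t}{2c\sqrt{r\,X(\omega)}}.
\]
The usual level-set argument, applied around $s$ with $X\ge s$, yields
\[
\Pr(X\le m-t)\,\Pr(X\ge m)\le \exp\bigl(-t^2/(64c^2 r m)\bigr),
\]
and symmetrically $\Pr(X\ge m+t)\le 2\exp\bigl(-t^2/(64c^2 r(m+t))\bigr)$.

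\emph{Step 3 (median to mean).} Integrate the tail bounds to get $|\Exp X-m|=O(c\sqrt{r\,\Exp X})$. Then absorb this shift into $t$. When $t$ is large compared with $c\sqrt{r\,\Exp X}$ the shift costs a constant factor in the exponent; when $t$ is smaller the claimed bound exceeds $1$ and is trivial. Adjusting constants gives the factor $128$ and the leading constant $4$.

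\emph{Main obstacle.} I expect Step~1 to be the hard part, specifically the symmetric-group case. The coset-coupling step has to be done carefully so that a single interchange is charged as $O(1)$ coordinates, with Talagrand's convexity/H\"older trick applied with the right weights. Step~2 is also delicate in one respect: an interchange that repairs one certified position may alter a second position outside the certificate $J$. To handle this, I would charge every interchange to its certified endpoint and use hypothesis~2, since certificates only need to be preserved and not $X$ itself.
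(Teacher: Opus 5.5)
The paper does not prove this lemma. It imports it from \cite[Section~3.1]{MR14}, where it is McDiarmid's extension of Talagrand's inequality to independent permutations. Your outline follows that original route (convex distance with constant $16$ on the mixed product space, then median concentration from Lipschitzness and certifiability, then median to mean), so the structure is right. The gap is in the constant bookkeeping, which is the whole content of the explicit bound $4\exp(-t^2/(128c^2r(\Exp X+t)))$; as written it does not reach $128$.

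\textbf{The factor $2$ in Step~2 is spurious and costs the constant.} The two-coordinate cost of a transposition is already paid for by the $16$ in Step~1. In Step~2 the weights sit only on the certificate $J$, and $J$ should certify $X\ge m$, so $|J|\le rm$; a certificate of size $rX(\omega)$ gives no uniform bound on $\{X\ge m\}$. Each interchange is then charged to the one certified position it repairs. This is exactly the charging in your last paragraph: the position an interchange disturbs holds the value being moved in, so it is never a repaired or initially agreeing position of $J$. This gives $d_T(\omega,A)\ge t/(c\sqrt{rm})$. Hence $\Pr(X\le m-t)\le 2e^{-t^2/(16c^2rm)}$, and by certifying $X\ge m+t$, also $\Pr(X\ge m+t)\le 2e^{-t^2/(16c^2r(m+t))}$. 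The $128$ is exactly $16\cdot 4\cdot 2$: once $|\Exp X-m|\le t/2$, apply these bounds at $t/2$ and use $m\le 2\Exp X$ (Markov, since $X\ge 0$) to get $m+t/2\le 2(\Exp X+t)$. Starting from your exponent $t^2/(64c^2rm)$, the same computation gives $512$, not $128$.

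\textbf{The median--mean shift is also understated.} The upper tail has $m+t$ in the denominator, so integrating it gives $|\Exp X-m|\le C_1c\sqrt{rm}+C_2c^2r$, not $O(c\sqrt{r\Exp X})$. The additive $c^2r$ term is what matters when $\Exp X$ is small. Your two-regime absorption must also be checked with explicit numbers. You need $|\Exp X-m|\le t/2$ for every $t$ with $t^2>128\ln 4\cdot c^2r(\Exp X+t)$, since that is exactly where the claimed bound drops below $1$. With constant $16$ this does hold, but only if you integrate with some care, for instance by capping the tail probabilities at $1/2$. Plain Gaussian integrals combined with $m\le 2\Exp X$ leave a window $t=\Theta(c\sqrt{r\Exp X})$ where the target bound is already nontrivial but your estimate does not yet certify that the shift is at most $t/2$.
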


\section{Slack Generation}
\label{app:slackgen}

\lemSlackGeneration*
\begin{proof}
    Let $N_A(v) = N(v) \cap A$ for short. 
    Let $X \subseteq \binom{N_A(v)}{2}$ be the set of non-edges, where $|X| = \overline{m}$. Let $\chi(w)$ be the random color chosen by $w$, and let $\varphi(w)$ be the possible permanent color, or $\varphi(w)=\bot$ in case of a conflict. 
    
    Let $Z$ be the number of colors $\psi \in [q]$ such that there exists a non-edge $\{u,w\} \in X$ with $\chi(u)=\chi(w)=\psi$, and \emph{for all} such non-edges, $\varphi(u)=\varphi(w)=\psi$, i.e. all the nodes retain the color (see below for a formal definition with quantifiers). Say that a non-edge $\{u,w\} \in X$ is \textit{successful} if $\varphi(u)=\varphi(w) \neq \bot$, and no node in $N_A(v) \setminus \{u,w\}$ picks the same color.
    Let $Y_{\{u,w\}}$ be an indicator function for the event that $\{u,w\}$ is successful. We have $\E[Z] \ge \sum_{\{u,w\} \in X} \Exp[Y_{\{u,w\}}]$, since each non-edge with $Y_{\{u,w\}}=1$ counts towards $Z$. The probability of a non-edge being successful is at least 
    \begin{equation*}
        \Prob{ Y_{\{u,w\}} = 1 }
        \ge \frac{1}{q} \paren*{ \frac{q - 1}{q} }^{2\Delta - 2} \paren*{ \frac{q - 1}{q} }^{\Delta - 2} 
        \ge \frac{1}{q} \left(1 - \frac{1}{q}\right)^{3\Delta} 
    \end{equation*}
    where $u$ and $w$ select the same color w.p. $1/q$; with probability $(\frac{q - 1}{q})^{2\Delta - 2}$ none of the neighbors of $u$ nor $v$ choose that particular color, and $(\frac{q - 1}{q})^{\Delta - 2}$ is the probability that no other neighbors of $v$ choose that color. Using that $q \geq \Delta/3$ and that $\Delta \geq 30$, we can lower bound this probability as 
    \begin{equation*}
        \Prob{ Y_{\set{u,w}} = 1}
        \geq \frac{ 1 }{q} \exp\paren*{ -\frac{3\Delta}{q - 1} } 
        \geq \frac{e^{-10}}{q}
        \tag{using that $1 - x \geq \exp\paren*{ -\frac{1}{x - 1}}$ for all $x > 1$}
    \end{equation*}
    which gives
    \[
    \E[Z] \ge \sum_{\{u,w\} \in X} \Exp[ Y_{\{u,w\}} ] \ge e^{-10} \cdot \overline{m}/ q \ .
    \]

    Next, we show that $Z$ is concentrated around its mean. 
    Let $T$ be the number of colors that are randomly chosen in \trycolor by both nodes of at least one non-edge in $X$. Let $D$ be the number of colors that are chosen in \trycolor by both nodes of at least one non-edge in $X$, but are not retained by at least one of them. Formally, 
    \begin{alignat*}{2}
        T&:= \# \text{ colors } \psi \text{ s.t. } &&\exists \{u,w\} \in X: \chi(u)=\chi(w)=\psi \\
        D &:= \# \text{ colors } \psi \text{ s.t. } &&\big(\exists \{u,w\} \in X: \chi(u)=\chi(w)=\psi \big) \;\wedge \\
        & &&\big(\exists \{u,w\} \in X: (\chi(u)=\chi(w)=\psi) \wedge (\varphi(u) = \bot  \vee  \varphi(w) = \bot)\big) \\
        Z&:= \# \text{ colors } \psi \text{ s.t. } &&\big(\exists \{u,w\} \in X: \chi(u)=\chi(w)=\psi \big) \;\wedge \\
        & &&\big(\forall \{u,w\} \in X: (\chi(u)=\chi(w)=\psi) \implies (\varphi(u)=\varphi(w)=\psi)\big)
    \end{alignat*}
    We have $Z=T-D$, since $D$ counts the colors where the implication in the definition of $Z$ fails. 

    We upper bound $\Exp[T]$ (which implies the same bound for $D$, as $D \le T$). For a fixed color $c$, the probability that both $u,w$ pick $c$ is at most $1/q^2$. By union bound, the probability that $c$ is picked by at least one non-edge is at most $\overline{m} / q^2$. There are $q$ colors, so $\Exp[T] \le q \cdot \overline{m} / q^2 = \overline{m} / q$. 

    The functions $T$ and $D$ are $r$-certifiable with $r=2$ and $r=3$, respectively. See the appendix and \Cref{lem:talagrand} for the definition of an $r$-certifiable function. $T$ and $D$ are both $2$-Lipschitz: whether a node is activated, and which color it picks affects the outcome by at most $c=2$. 
    We apply \Cref{lem:talagrand} with $b = \Exp[Z] / 10 - 60c\sqrt{r \cdot \Exp[T]}$ (which is non-negative for $\overline{m}/q \geq c_0$ and $c_0$ a large enough constant):
    \begin{align*}
        \Prob{ |T-\Exp[T]| \ge b + 60 c \sqrt{r \cdot \Exp[T]} }
        &= \Prob{ |T-\Exp[T]| \ge \Exp[Z] / 10 } \\
        &\le 4\exp\left(-\frac{\big(\Exp[Z] / 10 - 60c\sqrt{r \cdot \Exp[T]}\big)^2}{8c^2 r \Exp[T]}\right) \\
        &\le \exp\left(-\Theta(1) \left(\frac{\Exp[Z]^2}{\Exp[T]} - \frac{\Exp[Z]}{\sqrt{\E [T]}} + O(1)\right)\right) \\
        &\le \exp\left(-\Omega(\overline{m}/q)\right)
    \end{align*}
    In the last inequality, we used that $\E[Z] \ge e^{-10} \overline{m}  / q$ and $\Exp[T] \le \overline{m} / q$. The same concentration bound applies for $D$, meaning that $\Prob{ |D-\Exp[D]| \ge \E[Z] / 10 } \le \exp\left(-\Omega(\overline{m}/q)\right)$. By union bound, neither of the events $|T - \Exp[T]| \ge \E[Z]/10$ and $|D - \Exp[D]| \ge \E[Z]/10$ occur with probability at least $1 - 2\exp\left(-\Omega(\overline{m}/q)\right)$. 
    Hence, 
    \[
    Z = T-D 
    \ge \E [T] - \E[Z]/10 - (\E [D] + \E [Z] / 10) 
    = (4/5) \cdot \E [Z] \ge \frac{ \overline{m} / q }{ 3 \cdot 10^{4} } \ ,
    \]
    with probability at least $1-\exp\left(-\Omega(\overline{m}/q)\right)$.
\end{proof}

\PropSparsityPreserving*
\begin{proof}
    Let $\overline{E}$ be the set of non-edges in $H[N(v)]$. For each non-edge $e \in \overline{E}$, define an indicator variable $I_e$ for the event that the non-edge $e$ is preserved in $H[N(v) \cap A]$. Call $Z$ the number of anti-edges in $H[N(v) \cap A]$, i.e., we have $\E[Z] = \sum_{e \in \overline{E}} \E[I_e] = \overline{m} p^2$. We can bound the quantity \cref{eq:jansonK} as  
    \begin{align*}
        K = \frac{1}{2}\sum_{e,e' \in \overline{E}, e \cap e' \neq \emptyset} \E[I_e I_{e'}] \le \frac{1}{2} \overline{m} (2 \Delta - 2) p^3 \le \overline{m} \Delta p^3 \ ,
    \end{align*}
    where the first inequality uses that endpoints of $e$ have maximum degree $\Delta$.
    Using \Cref{thm:janson} with $t=\E[Z]/2$, we can compute $\Prob{ Z \le p^2\overline{m} / 2 } = \Prob{ Z \le \E[f] - t }$, where 
    \begin{align*}
    \Prob{ Z \le \E[f] - t }
        \le \exp \left(-\frac{t^2}{2 \E[f] + K} \right) 
        &\le \exp \left(-\frac{p^4 \overline{m}^2 / 4}{2p^2 \overline{m} + p^3\overline{m} \Delta } \right)  \\
        &= \exp \left(-\frac{p^2 \overline{m}}{8 + 4p \Delta } \right) 
        \le \exp \left(-\frac{p \overline{m}}{5\Delta } \right) 
    \end{align*}
    using that $p\Delta\ge 8$ by assumption.
  \end{proof}
\end{document}